\newcommand{\Su}{S}
\begin{document}
\title{The Small Field Parabolic Flow for Bosonic Many--body Models:\\
        \Large Part 1 --- Main Results and Algebra}

\author{Tadeusz Balaban}
\affil{\small Department of Mathematics \authorcr
       Rutgers, The State University of New Jersey \authorcr
       tbalaban@math.rutgers.edu\authorcr
       \  }

\author{Joel Feldman\thanks{Research supported in part by the Natural 
                Sciences and Engineering Research Council 
                of Canada and the Forschungsinstitut f\"ur 
                Mathematik, ETH Z\"urich.}}
\affil{Department of Mathematics \authorcr
       University of British Columbia \authorcr
       feldman@math.ubc.ca \authorcr
       http:/\hskip-3pt/www.math.ubc.ca/\squig feldman/\authorcr
       \  }

\author{Horst Kn\"orrer}
\author{Eugene Trubowitz}
\affil{Mathematik \authorcr
       ETH-Z\"urich \authorcr
       knoerrer@math.ethz.ch, trub@math.ethz.ch \authorcr
       http:/\hskip-3pt/www.math.ethz.ch/\squig knoerrer/}


\maketitle

\begin{abstract}
\noindent
This paper is a contribution to a program to see symmetry breaking in a
weakly interacting many Boson system on a three dimensional lattice at 
low temperature.  It is part of an analysis of the ``small field''  approximation 
to the ``parabolic flow'' which exhibits the formation of a ``Mexican hat''
potential well. Here we state the main result of this analysis, outline the 
strategy of the proof, which uses a renormalization group flow, and 
perform the first, algebraic, part of a renormalization group step.

\end{abstract}

\newpage
\tableofcontents

\newpage
\section{Introduction}

\subsection{The model and dominant contributions to the effective action}\label{sectINTmodel}

An interacting many Boson system on a three--dimensional lattice in 
thermodynamic equilibrium is characterized by
a single particle ``kinetic energy'' operator $\bh$ on this lattice\footnote{The most commonly used $\bh=-\sfrac{1}{2m}\Delta$,  where $\Delta$ is the lattice Laplacian.}, a translation invariant two--body potential, 
$\bv$ that describes the particle--particle interaction, the temperature $T>0$ and the chemical potential $\mu$. 
This paper is a contribution to a program that is to provide a 
mathematically rigorous investigation of the partition and correlation functions of such a gas of bosons. For simplicity, we assume that the underlying lattice has been scaled to be the unit lattice $\bbbz^3$.  
We also assume that both the two--body potential, $\bv$ and the kernel of 
$\bh$ decay exponentially in the distance between their arguments, 
and that $\bv$ is the kernel of a strictly positive operator.

It is a standard strategy for the investigation of such a system to 
consider it as a limit of the correponding systems with a periodic cutoff 
$L_\sp$, as this infrared cutoff tends to infinity.
The system with  periodic cutoff $L_\sp$ is defined on the finite lattice
$$
X=\bbbz^3/L_\sp\bbbz^3
$$ 
and is characterized by the periodizations\footnote
{The periodization of a translation invariant function 
$ f(\xi_1,\cdots,\xi_n)$ on ${({\sst\bbbz}^3)}^n$ is the function 
on $X^n$ that maps $(\bx_1,\cdots,\bx_n) \in X^n$ to 
$\sum\limits_{\atop{\xi_2,\cdots \xi_n}{[\xi_i]=\bx_i}} f(\xi_1,\cdots,\xi_n) $,
where $\xi_1$ is an arbitrary point in ${\sst\bbbz}^3$ whose class $[\xi_1]$ 
in $X$ equals $\bx_1$. The periodization of an operator is the operator whose kernel is the periodization of the given operator.} 
$h$ of $\bh$ and $v$ of $\bv$. 

In previous papers we started an investigation of the partition function 
of the periodized system
$$
\Tr e^{-\sfrac{1}{kT} (H-\mu N)}
$$
where $H$ is the second quantized Hamiltonian and $N$ is the number operator.
In \cite{fnlint1,fnlint2}, we represented this partition function in terms of  
coherent state functional integrals (see also \cite{NO}) and then, 
in \cite{UV}, using 
``decimation'', controlled the ``temporal ultraviolet limit'' to obtain 
the following representation for the partition function. (The precise 
hypotheses are specified in \cite[\S2]{UV}.)

There exists a constant $\th>0$ and a function
$\,I_\theta(\al_* ,\be)\,$ of two complex valued fields $\,\al_*\,$ and
$\,\be$ on $X$ such that
\begin{align}\label{eqnTHuvoutput}
&\hskip-7pt\Tr\, e^{-\sfrac{1}{kT}\,(H-\mu N)} =
\int \hskip-28pt\prod_{\hskip22pt\lower5pt\hbox{$\sst\tau\in\th\bbbz
\cap(0,\nicefrac{1}{kT}]$}}\hskip-22pt 
\Big[ \smprod_{\bx\in X} 
\sfrac{ d\al_\tau(\bx)^\ast\wedge d\al_\tau(\bx)}{2\pi \imath} \,
                      e^{-\al_\tau(\bx)^\ast \al_\tau(\bx)}\Big]
I_\theta(\al_{\tau-\th}^\ast ,\al_\tau) 
\end{align}
One can write $\,I_\theta\,$ as the sum of a dominant part 
$\,I_\theta^{(SF)}\,$, called the \emph{pure small field contribution},  
and terms, indexed by proper subsets of $X$,  which are nonperturbatively 
small\footnote{We call a function nonperturbatively small, if it is
of order $O(e^{-1/\|v\|^\veps})$ for some norm on $v$ and some $\veps>0$.
A precise bound is given in \cite[Theorem 2.18]{UV}},
exponentially in the size of the subsets. The properties of the 
function $\,I_\theta^{(SF)}\,$
are reviewed in \S\ref{sectSZrewriteModel} and \S\ref{sectSZrewriteOutput}.

We want to control the integrals in the representation \eqref{eqnTHuvoutput} 
of the partition function
uniformly in small temperature $T$ and lattice size $L_\sp$ to rigorously
establish the phase transition in the many particle system of bosons, when
the chemical potential $\,\mu\,$ lies sufficiently above a certain 
critical value. This phase transition is intimately related to the 
formation of a ``mexican hat'' shaped potential well in the effective 
action. See, for example, \cite{Col} and \cite[\S19]{Wein} for an introduction
to symmetry breaking in general, and \cite{AGD,BOG,FW,PS} as 
general references to Bose-Einstein condensation. See \cite{Ben,CG,LSSY,Sei} 
for other mathematically rigorous work on the subject.

In this paper, we replace the function $\,I_\th\,$ in \eqref{eqnTHuvoutput} 
by $\,I_\theta^{(SF)}\,$, that is, we study
\begin{equation}\label{eqnTHsmallfieldoutput}
\int \hskip-28pt\prod_{\hskip22pt\lower6pt\hbox{$\sst\tau\in\th\bbbz
\cap(0,\nicefrac{1}{kT}]$}}\hskip-22pt 
\Big[ \smprod_{\bx\in X} 
\sfrac{ d\al_\tau(\bx)^\ast\wedge d\al_\tau(\bx)}{2\pi \imath} \,
                      e^{-\al_\tau(\bx)^\ast \al_\tau(\bx)}\Big]
I_\theta^{(SF)}(\al_{\tau-\th}^\ast ,\al_\tau) 
\end{equation}
Using this model, we exhibit the mechanism that leads to the onset of
the potential well. A full fledged large field/small field analysis of \eqref{eqnTHuvoutput} will be  performed later.

To simplify the discussion, we assume that $\,L_\sp$ and 
$\,L_\tp=\sfrac{1}{\th k T}\,$ are powers of some  odd natural number 
$\,L>2\,$. 
After rescaling the ``temporal lattice'' $\th\bbbz/\sfrac{1}{kT}\bbbz$
to $\bbbz/\sfrac{1}{\th kT}\bbbz$, \eqref{eqnTHsmallfieldoutput} can be viewed as 
a functional integral over fields on the lattice
\begin{equation*}
\cX_0 
= \big( \bbbz / L_\tp\bbbz\big) \times 
  \big(\bbbz^3 / L_\sp\bbbz^3 \big)
\end{equation*}
For a point $\,x=(x_0,\bx) \in \cX_0\,$, we call $\,x_0\,$ its time and
$\,\bx\,$ its spatial component. The ``real inner product'' for functions $\,f,g\,$ on $\cX_0$ is
$\,\<f,g\>_0 = \smsum_{x\in\cX_0} f(x)g(x)\,$.

In Proposition  \ref{propSZprepforblockspin}, we show that, up to 
a normalization constant,   the integral  \eqref{eqnTHsmallfieldoutput} can be written in the form
\begin{equation}\label{eqnHTstartingpoint}
   \int \Big[ \smprod_{x\in \cX_0} 
        \sfrac{ d\psi(x)^\ast\wedge d\psi(x)}{2\pi \imath}\Big] \,
  e^{\cA_0(\psi^*,\psi) }\chi_0(\psi)
\end{equation}
where
\begin{equation}\label{eqnHTaIn}
\cA_0(\psi_*,\psi) = -\<\psi_*,\,D_0\psi\>_0
       -\cV_0(\psi_*,\psi)
       +\mu_0 \<\psi_*,\,\psi\>_0
       +\cR_0(\psi_*,\psi) 
       +\cE_0(\psi_*,\psi)
\end{equation}
and
\begin{itemize}[leftmargin=*, topsep=2pt, itemsep=2pt, parsep=0pt]
\item
$
D_0=\bbbone - e^{-\oh_0} -e^{-\oh_0} \partial_0
\,$ with $\oh_0 = \th\oh$ and  $\,\partial_0$  the forward time derivative 
$\,
(\partial_0f)(x_0,\bx) = f(x_0+1,\bx) - f(x_0,\bx)
,$
\item
$
\cV_0(\psi_*,\psi)=\half\smsum\limits_{x_1,\cdots,x_4\in\cX_0}\hskip-10pt   
                  V_0(x_1,x_2,x_3,x_4)\,  
                  \psi_*(x_1) \psi(x_2)\psi_*(x_3) \psi(x_4)
$ 
is a quartic monomial whose translation invariant kernel $V_0$ is 
determined by $\,v\,$ and $\,\oh\,$. 
It is invariant under $x_1\leftrightarrow x_3$ and under 
$x_2\leftrightarrow x_4$.
Its average 
 $\rv_0=\smsum\limits_{x_2,x_3,x_4\in\cX_0}  V_0(0,x_2,x_3,x_4)$ is positive.
The kernel $V_0(x_1,x_2,x_3,x_4)$ is the spatial periodization of a 
translation invariant, exponentially decaying kernel $\bV_0$
on $\big((\bbbz/L_\tp\bbbz) \times \bbbz^3 \big)^4$.
\item $\mu_0 $ is close to $\th \mu$.
\item $\cR_0(\psi_*,\psi)$ and $\cE_0(\psi_*,\psi)$ are 
perturbatively small, particle--number preserving functions.  For 
the different characteristics and roles of $\cR_0$ and $\cE_0$, 
see Proposition  \ref{propSZprepforblockspin} and Theorem  
\ref{thmTHmaintheorem}.
\item $\,\chi_0(\psi)\,$ is a ``small field cut off function''.
\end{itemize}
$\,\oh_0\,$ acts only on the spatial variables of a function of 
$\,x=(x_0,\bx) \in \cX_0\,$. Observe that $\,\psi^*\,$ denotes the complex conjugate of the field $\psi$,
while $\psi_*$ and $\psi$ are treated as two independent complex valued
fields on $\cX_0$.

\noindent
More details, including precise estimates, are given in 
Proposition  \ref{propSZprepforblockspin}.

For a constant field $\,\psi(x) =\psi\,$, the dominant part, 
$-\cA_0(\psi^*,\psi)\,$, of (minus) the action  in \eqref{eqnHTstartingpoint},
reduces to
\begin{equation*}
\cV_0(\psi^*,\psi) -\mu_0 \<\psi^*,\,\psi\>_0
= |\cX_0| \big[ \half \rv_0 |\psi|^4 -\mu_0 |\psi|^2 \big]
= |\cX_0| \big[ \half \rv_0 \big(|\psi|^2 -\sfrac{\mu_0}{\rv_0} \big)^2 - \sfrac{\mu_0^2}{2\rv_0}\big]
\end{equation*}
and has a potential well. If $\mu_0$ is of the order of $\rv_0$, this well
is quite shallow. Using block spin transformations, as in 
\cite{KAD,BalLausane,GK}, (see Definition \ref{defHTblockspintr}
below) we will successively perform parts of the integral and show that the effective action after these block spin
transformations has a much better developed potential well. 
We expect that the result of this paper will be the starting point for an analysis 
that is adapted to the symmetry breaking caused by the degenerate ground state 
in this potential well.

We believe (see the discussion before  \cite[Lemma \lemOSImustar]{PAR2})
that the scenario described above holds whenever the chemical potential 
$\mu$ is bigger than some critical value that, to leading order in $\bv$, should be
\begin{equation}\label{eqnHTdefcriticalmu}
 2  \int_{\bbbr^3 /2\pi\bbbz^3}  \sfrac{d^3\bk}{(2\pi)^3} 
\sfrac{\hat\bv(\bZ)+\hat \bv(\bk) }{ e^{\hat\bh(\bk)/kT}-1}
\end{equation}
where $\hat\bh(\bk)$ and $\hat \bv(\bk)$ are the Fourier transforms 
of $\bh(\bx,\bZ)$ and $\bv(\bx,\bZ)$. See \eqref{eqnINTmustardef}, 
Lemma \ref{lemSZexplicitmustar} and Corollary  \ref{corSZstepzeroConditions}.
Also observe that \eqref{eqnHTdefcriticalmu} converges to zero as $\be\rightarrow\infty$.
In this paper we assume
$\bv$ is small and that $\mu$ is bigger than \eqref{eqnHTdefcriticalmu}
by a number that is at least a norm of $\bv$ raised to a power that 
is a bit bigger than one. For details see \eqref{eqnINTmustardef}.

After $n$ block spin (and scaling) transformations, the partition function
will be represented by a functional integral on the 
lattice\footnote{We shall define a family of lattices 
$\cX_j^{(n)}$ in Definition \ref{defHTbackgrounddomaction}.}
\begin{equation*}
\cX_0^{(n)} =  \big( \bbbz \times \bbbz^3 \big) / 
 \big(\sfrac{L_\tp}{L^{2n}}\bbbz \times \sfrac{L_\sp}{L^n}\bbbz^3 \big)
\end{equation*}
where $L>2$ is the odd natural number chosen above.
The asymmetry in the time and spatial variables arises from the 
``parabolic scaling'' of Definition \ref{defHTscaling}, below.

\begin{definition}[Blockspin Transformation]\label{defHTblockspintr}
Fix a nonnegative even function $\,q\,$ in $L^1(\bbbz\times\bbbz^3)$.
\begin{enumerate}[label=(\alph*), leftmargin=*]
\item
For a field $\psi$ on $\,\cX_0^{(n)}\,$ define the ``averaged '' field 
$\,Q\psi\,$ on 
\begin{equation*}
\cX_{-1}^{(n+1)} =  \big( L^2 \bbbz \times L\bbbz^3 \big) / 
 \big(\sfrac{L_\tp}{L^{2n}}\bbbz \times \sfrac{L_\sp}{L^n}\bbbz^3 \big)
\end{equation*}
by
\begin{equation*}
(Q\psi)(y) = \smsum_{x\in \bbbz\times\bbbz^3} q(x) \psi(y+[x])
\end{equation*}
where $\,[x]\,$ denotes the class of $\,x\in\bbbz\times\bbbz^3\,$ in the
quotient space $\,\cX_0^{(n)}\,$.

\item
If $\,F(\psi_*,\psi)\,$ is a function of complex valued fields
$\,\psi_*,\psi\,$ on $\,\cX_0^{(n)}\,$, we define the 
\emph{block spin transform}
of $\,F\,$ (with respect to $q$ and a constant $a>0$) as the function 
\begin{equation*}
(\bbbt F)(\th_*,\th)
=\sfrac{1}{N^{(n)}_\bbbt}\hskip-2pt\int\hskip-2pt\Big[\hskip-3pt
 \prod_{x\in\cX_0^{(n)}}\hskip-3pt\sfrac{d\psi(x)^*\wedge d\psi(x)}{2\pi i}\Big]
e^{-a L^{-2}\< \th_*\!-Q\psi^*,\th-Q\psi\>_{-1}}\,F(\psi^*,\psi)
\end{equation*}
of the fields $\,\th_*,\th\,$ on $\,\cX_{-1}^{(n+1)}\,$.
Here, for any two fields $\,f,g\,$ on $\,\cX_{-1}^{(n+1)}\,$
\begin{equation*}
\< f,g\>_{-1} = L^5 \smsum_{y\in \cX_{-1}^{(n+1)}} f(y) g(y)
\end{equation*}
and the normalization constant is
$\,
N^{(n)}_\bbbt
= \int\Big[
 \prod_{y\in\cX_{-1}^{(n+1)}}\hskip-3pt
  \sfrac{d\th(y)^*\wedge d\th(y)}{2\pi i}\Big]
e^{-a L^{-2}\< \th^*,\th\>_{-1}}
\,$. We choose $a=1$.
\end{enumerate}
\end{definition}

\begin{remark}\label{remHTpropblockspin}
\begin{enumerate}[label=(\alph*), leftmargin=*]
\item
As
\begin{equation*}
1=\sfrac{1}{N^{(n)}_\bbbt}\hskip-2pt\int\hskip-2pt\Big[\hskip-5pt
 \prod_{y\in\cX_{-1}^{(n+1)}}\hskip-3pt
  \sfrac{d\th(y)^*\wedge d\th(y)}{2\pi i}\Big]
e^{-a L^{-2}\< \th^*\!-Q\psi^*,\th-Q\psi\>_{-1}}
\end{equation*}
for all functions $\,F(\psi_*,\psi)\,$  fields on $\,\cX_0^{(n)}\,$
\begin{equation*}
\int\hskip-2pt\Big[\hskip-5pt
 \prod_{x\in\cX_0^{(n)}}\hskip-3pt
   \sfrac{d\psi(x)^*\wedge d\psi(x)}{2\pi i}\Big]
   F(\psi^*,\psi)
= \int\hskip-2pt\Big[\hskip-5pt
 \prod_{y\in\cX_{-1}^{(n+1)}}\hskip-3pt
   \sfrac{d\th(y)^*\wedge d\th(y)}{2\pi i}\Big]
  (\bbbt F)(\th^*,\th)
\end{equation*}

\item
We make a specific choices of $q$ in Definition
\ref{defHTbasicnorm}.d. The main result, Theorem \ref{thmTHmaintheorem},
will apply for all sufficiently large $L$.
\end{enumerate}
\end{remark}

The lattice $\cX_{-1}^{(n+1)}$ is coarser than the unit lattice $\cX_0^{(n)}$. 
We choose to scale it back to a unit lattice.

\begin{definition}[Scaling]\label{defHTscaling}
Let $\bbbl$ be the linear isomorphism
\begin{equation*}
\bbbl:   \cX_0^{(n+1)} \rightarrow \cX_{-1}^{(n+1)}  
\qquad\qquad  (x_0,\bx) \mapsto (L^2x_0,L\bx)
\end{equation*}
For a field $\,\th\,$ on $\cX_{-1}^{(n+1)} $, we define the scaled field 
\begin{equation*}
(\bbbs\th)(x)= L^{3/2}\,\th\big(\bbbl x\big)
\end{equation*}
on $\,\cX_0^{(n+1)}\,$. For a function $\,F(\th_*,\th)\,$ of fields on
$\cX_{-1}^{(n+1)} $, we define the function $\,(\bbbs F)(\Psi_*,\Psi)\,$
of fields on $\,\cX_0^{(n+1)}\,$ by
\begin{equation*}
(\bbbs F)(\Psi_*,\Psi)
   = F\big(\bbbs^{-1}\Psi_*,\bbbs^{-1}\Psi)
\end{equation*}
\end{definition}

\begin{remark}\label{remHTbasicremarkonscaling}
\begin{enumerate}[label=(\alph*), leftmargin=*]
\item 
For any function $\,\cF(\th_*,\th)\,$ of fields on
$\cX_{-1}^{(n+1)} $
\begin{equation*}
\int\hskip-2pt\Big[\hskip-5pt
 \prod_{y\in\cX_{-1}^{(n+1)}}\hskip-3pt
   \sfrac{d\th(y)^*\wedge d\th(y)}{2\pi i}\Big]
   \cF(\th^*,\th)
= \sfrac{1}{L^{3|\cX_0^{(n+1)}|}} \int\hskip-2pt\Big[\hskip-5pt
 \prod_{x\in \cX_0^{(n+1)}}\hskip-3pt
   \sfrac{d\Psi(x)^*\wedge d\Psi(x)}{2\pi i}\Big]
  (\bbbs F)(\Psi^*,\Psi)
\end{equation*}

\item 
The exponents $\sfrac{3}{2}$ of $L^{3/2}\,\th\big(\bbbl x\big)$ and 
$2$ of $L^2x_0$ in the definition of $\bbbs$ have been chosen so that
\begin{equation*}
\<\psi^*,\partial_0\psi\>_0
= \<\bbbs^{-1}\psi^*,\partial_0(\bbbs^{-1}\psi)\>_{-1}
\qquad
\<\psi^*,\De\psi\>_0
= \<\bbbs^{-1}\psi^*,\De(\bbbs^{-1}\psi)\>_{-1}
\end{equation*}
That is, $\<\psi^*,\partial_0\psi\>_0$ and $\<\psi^*,\De\psi\>_0$ 
are ``marginal''. Because the time derivative $\partial_0$ is first order
while the spatial Laplacian $\De$ is second order, we refer to 
Definition \ref{defHTscaling} as ``parabolic scaling''.
\end{enumerate}
\end{remark}

Applying Remarks \ref{remHTpropblockspin}.a and \ref{remHTbasicremarkonscaling}.a to
\eqref{eqnHTstartingpoint} we see that, for any natural number $n$ with 
$\,L^n \le \min\{\sqrt{L_\tp},\,L_\sp\}\,$,
\begin{equation}\label{eqnHTmultiBS}
   \int \Big[ \smprod_{x\in \cX_0} 
        \sfrac{ d\psi(x)^\ast\wedge d\psi(x)}{2\pi \imath}\Big] \,
           e^{\cA_0(\psi^*,\psi) }
= \sfrac{1}{\tilde\cZ_n}
\int\hskip-2pt\Big[\hskip-5pt
 \prod_{x\in\cX_0^{(n)}}\hskip-3pt
   \sfrac{d\psi(x)^*\wedge d\psi(x)}{2\pi i}\Big]
   \big[(\bbbs\bbbt)^n \big(e^{\cA_0 } \big)\big](\psi^*,\psi)
\end{equation}
with $\tilde \cZ_n=\smprod_{j=1}^n L^{3|\cX_0^{(j)}|}$.
In this paper we argue that, for $\,n\le \np\,$ (with $\np$ specified
in Definition \ref{defHTbasicnorm}.b), the function 
$\,\big[(\bbbs\bbbt)^n \big(e^{\cA_0  } \big)\big](\psi^*,\psi)\,$ has
--- up to errors which can reasonably be expected
to be nonperturbatively small (see \cite{PARL} or 
\cite[\S2.2.2]{LH}) --- 
a logarithm whose dominant term, $A_n$, described in Definition  
\ref{defHTbackgrounddomaction}.b below, exhibits a much deeper potential
well. See \eqref{eqnINTdeepwell}.

The representation \eqref{eqnHTmultiBS} of the partition function
is built by iterating block spin transformations.
We use stationary phase to analyze each block spin integral, as in 
Definition \ref{defHTblockspintr}.b, in the integrand of \eqref{eqnHTmultiBS}. 
It is then natural to express the (dominant part) of the integrand in
terms of the composition of the stationary phase critical fields for the
various block spin transformations. We shall call this composition the
``background field''. The definition that we are about to give for the
background field does not appear to have anything to do with compositions.
The ``composition law'' Proposition \ref{propBGAomnibus}.b shows that the
background field is indeed a composition of critical fields.

\begin{definition}[Background field and dominant
part of the action]\label{defHTbackgrounddomaction}
\ 
\begin{enumerate}[label=(\alph*), leftmargin=*]
\item
For $\,j\ge -1\,$ and $\,n\ge 0\,$ define the lattices
\begin{equation*}
\cX_{j}^{(n)} 
= \big(\sfrac{1}{L^{2j}}\bbbz \big/ \sfrac{L_\tp}{L^{2(n+j)}}\bbbz\big)
   \times \big(\sfrac{1}{L^{j}}\bbbz^3 \big/
                      \sfrac{L_\sp}{L^{(n+j)}}\bbbz^3 \big)
\end{equation*}
The subscript in $\cX_{j}^{(n)}$ determines the ``coarseness'' of the lattice
--- nearest neighbour points are a distance $\sfrac{1}{L^{2j}}$ apart in
the time direction and a distance $\sfrac{1}{L^j}$ apart in
spatial directions. The superscript in $\cX_{j}^{(n)}$ determines the number
of points in the lattice --- $|\cX_{j}^{(n)}|=|\cX_0|/L^{5n}$ for all $j$.
On $\,\cX_j^{(n)}\,$, we use the integral notation 
$\
\int_{X^{(n)}_j} du = \sfrac{1}{L^{5j}} \smsum_{u\in\cX_j^{(n)} }
\ $.
The maps
\begin{equation*}
\bbbl : \cX_j^{(n)}\rightarrow \cX_{j-1}^{(n)}
    \qquad  \qquad
  (u_0,\bu) \mapsto (L^2u_0, L\bu)
\end{equation*}
are linear isomorphisms. We routinely view $\cX_j^{(n)}$ as a sublattice
of $\cX_{j+k}^{(n-k)}$, for each $1\le k\le n$. We also abbreviate
$\cX_n^{(0)}$ by $\cX_n$. 

\centerline{\null\hskip0.4in\includegraphics{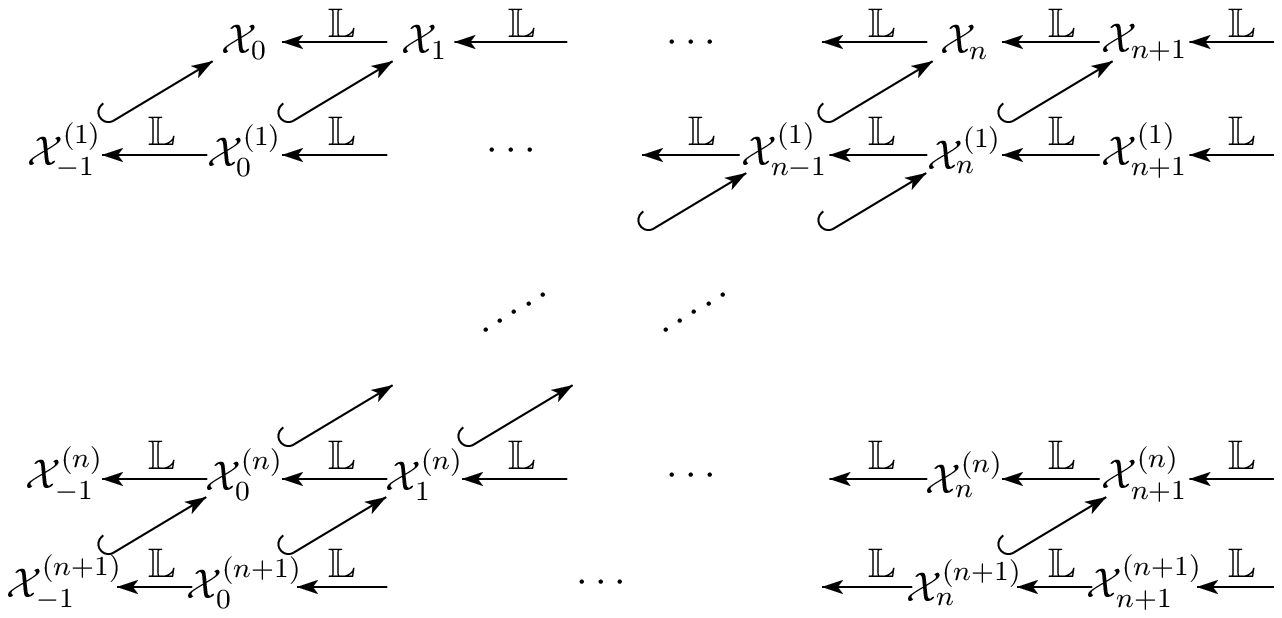}}

We denote by $\,\cH_j^{(n)}\,$ the space of fields on $\cX_j^{(n)}$,
endowed with the {\it real} inner product
\begin{equation*}
\< \al_1,\al_2\>_j
= \int_{X^{(n)}_j} \al_1(u) \,\al_2(u) \ du
\end{equation*}
Set $\cH_n=\cH_n^{(0)}$.
For a field $\,\al\in\cH_j^{(n)}\,$,
define  the field $\,\bbbl_*(\al) \in \cH_{j-1}^{(n)}\,$  by 
$\,
\bbbl_*(\al)(\bbbl u) = \al(u)
\,$,
and the field $\,Q^{(j)}\al \in \cH_{j-1}^{(n+1)}\,$ by
\begin{equation*}
Q^{(j)}\al = \bbbl_*^{-j}\,Q \,\bbbl_*^j\, \al
\end{equation*}
Set
\begin{equation*}
Q_n = Q^{(1)} \circ \cdots \circ Q^{(n)}: 
          \cH_n=\cH_n^{(0)} \rightarrow \cH_0^{(n)} 
\end{equation*}
$Q_j$ is an iterated averaging operation that maps 
the space $\,\cH_j^{(n-j)}\,$ of fields on the fine lattice $\,\cX_j^{(n-j)}\,$
to the space $\,\cH_0^{(n)}\,$ of fields on the unit lattice $\,\cX_0^{(n)}\,$.

\centerline{\includegraphics{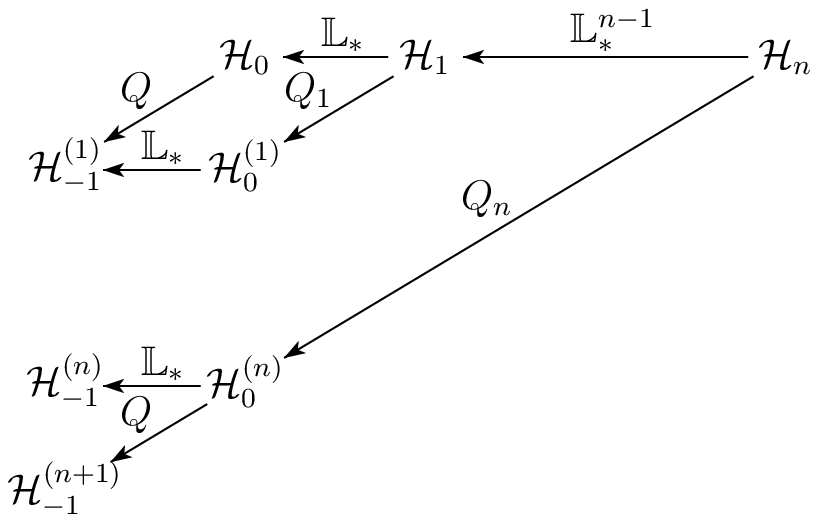}}

\noindent The ``horizontal'' operators $\bbbl_*$ and $\bbbl_*^{n-1}$ are
isomorphisms.

The operator $D_n$ on fields $\phi\in \cH_n$ on $\,\cX_n\,$ is defined
by
\begin{equation*}
D_n = L^{2n}\ \bbbl_*^{-n} \, D_0\, \bbbl_*^n
\end{equation*}

\item
For $\,\mu \in \bbbc\,$, 
fields $\,\psi_*,\psi \in \cH_0^{(n)}\,$ and $\,\phi_*,\phi\in\cH_n$,
and a quartic monomial $\cV$ in the fields $\phi_*,\phi$,
define, for $n\ge 1$,
\begin{equation}\label{eqnHTAndef}
\begin{split}
A_n(\psi_*,\psi,\phi_*,\phi,\mu,\cV)
&=\<\psi_*-Q_n\phi_*,\fQ_n\big(\psi-Q_n\phi\big)\>_0
 +\int_{\cX_n} \phi_*(u)\,(D_n\phi)(u)\,du \\
&\hskip0.5in     + \cV(\phi_*,\phi)
        - \mu\int_{\cX_n} \phi_*(u)\,\phi(u)\,du\\
& =\<\psi_*-Q_n\phi_*,\fQ_n\big(\psi-Q_n\phi\big)\>_0
 +\< \phi_*,\,D_n\phi\>_n \\
&\hskip0.5in     + \cV(\phi_*,\phi)
-\mu \< \phi_*,\,\phi\>_n
\end{split}
\end{equation}
where 
\begin{equation*}
\fQ_n=\begin{cases}
          a\big(\bbbone
             +\smsum\limits_{j=1}^{n-1}\sfrac{1}{L^{2j}}Q_jQ_j^*\big)^{-1} 
             &\text{if $n\ge 2$}\\
             \noalign{\vskip0.05in}
            a\bbbone &\text{if $n=1$}          
       \end{cases}
\end{equation*}
and $Q_j^*$ denotes the transpose of $Q_j$ with respect to the 
``real'' inner product $\,\< f_1,f_2\>_j\,$.

For $n=0$, set
\begin{align*}
A_0(\psi_*,\psi,\mu,\cV)
& = \< \psi_*,\,D_0\psi\>_0  + \cV(\psi_*,\psi)
-\mu \< \psi_*,\,\psi\>_0\\
\end{align*}
In the case $n=0$, we shall use
$\cV=\cV_0$ and $\mu=\mu_0$ so that $A_0$ is the dominant part of the action 
$\cA_0$ of \eqref{eqnHTaIn}.
For $n\ge 0$ we shall use, in $A_n$, a quartic monomial $\cV$ that is a perturbation of
\begin{equation*}
\cV_n^{(u)}(\phi_*,\phi)=\half\int_{\cX_n^4} du_1 \cdots du_4\  
                  V_n^{(u)}(u_1,u_2,u_3,u_4)\,  
                  \phi_*(u_1) \phi(u_2)\phi_*(u_3) \phi(u_4)
\end{equation*}
where
\begin{equation*}
V_n^{(u)}(u_1,u_2,u_3,u_4)
= L^{14 n}\, V_0(\bbbl^n u_1,\bbbl^n u_2,\bbbl^n u_3,\bbbl^n u_4)
\end{equation*}
is the kernel $V_0$ rescaled\footnote{For the origin of  the $L^{14 n}$
see \cite[Lemma \lemSAscaletoscale]{PAR2}.} 
to scale $n$. 
For $\mu$ we will use a 
``renormalized chemical potential'' $\mu_n$ which will be described 
Theorem \ref{thmTHmaintheorem}.

\item
In Proposition \ref{propHTexistencebackgroundfields} and 
\cite[Proposition \propBGEphivepssoln]{BGE},
we solve, for $n\ge 1$, the \emph{background field equations} 
\begin{equation*}
\sfrac{\partial\hfill}{\partial\phi_*}A_n(\psi_*,\psi,\phi_*,\phi,\mu,\cV)=
\sfrac{\partial\hfill}{\partial\phi}A_n(\psi_*,\psi,\phi_*,\phi,\mu,\cV)=0
\end{equation*}
We show that, for $\psi_*,\psi$ sufficiently near $0\in\cH_0^{(n)}$,
                  $\mu\in\bbbc$ sufficiently small, and
                  $\cV$ sufficiently small, 
there are fields
\begin{equation*}
\phi_{*n}(\psi_*,\psi,\mu,\cV),\ \phi_{n}(\psi_*,\psi,\mu,\cV)
\end{equation*}
such that 
$\,
\phi_*=\phi_{*n}(\psi_*,\psi,\mu,\cV)\ ,\ 
\phi= \phi_{n}(\psi_*,\psi,\mu,\cV)
\,$ 
solves these equations.
The maps\footnote{We routinely use the ``optional $*$''
notation $\al_{(*)}$ to denote ``$\al_*$ or $\al$''. 
}
 $(\psi_*,\psi,\mu,\cV)\mapsto \phi_{(*)n}(\psi_*,\psi,\mu,\cV)$ 
are analytic and are uniquely determined by this property. 
They are called the \emph{background fields}.
\end{enumerate}
\end{definition}

We shall show that for all $0\le n\le \np$, with the number $\np$ specified in 
Definition \ref{defHTbasicnorm}.b, below, the dominant contribution to 
$\,\big[(\bbbs\bbbt)^n \big(e^{\cA_0 } \big)\big](\psi^*,\psi)\,$
is of the form
\begin{equation*}
\exp\Big\{- A_n\big(\psi_*,\psi,
\ \phi_{*n}(\psi_*,\psi,\mu_n,\cV_n), \phi_{n}(\psi_*,\psi,\mu_n,\cV_n),
\ \mu_n,\cV_n \big) 
+p_n(\psi^*,\psi)\Big\}\chi_n(\psi^*,\psi)
\end{equation*}
with a number  $\,\mu_n \,$, which we call 
the \emph{renormalized chemical potential at scale $n$},
a quartic monomial $\cV_n$ close to the monomial $\cV_n^{(u)}$ of Definition \ref{defHTbackgrounddomaction}.b, which we call the 
\emph{renormalized interaction at scale $n$},
a ``perturbative correction'' $\,p_n(\psi^*,\psi)\,$,
and a ``small field cut off function''
$\,\chi_n\,$ which is discussed in \cite{PARL}. 
The renormalized chemical 
potential $\mu_n$ will grow with $n$ like $L^{2n}(\mu_0-\mu_*)$; 
see Theorem \ref{thmTHmaintheorem}.

For constant, not too big, fields $\,\psi(x) =\psi\,$ and $\psi_*(x)=\psi^*\,$, 
the background field $\,\phi_n\,$ is again constant, again obeys
$\,\phi_{*n}=\phi_n^*\,$ and is approximately  equal to $\,\psi\,$.
See \cite[Remark \remBGEcnstFld]{BGE}. So
$\,
\<\psi_*-Q_n\phi_*,\fQ_n\big(\psi-Q_n\phi\big)\>_0 \approx0
\,$, 
and, in this case, the dominant part of the effective
action $\,A_n(\psi_*,\psi,\ \phi_{*n}, \phi_{n},\ \mu_n,\cV_n) \,$ 
\begin{equation}\label{eqnINTdeepwell}
\sfrac{1}{L^{-5n}|\cX_0|} A_n \approx  
\half\sfrac{\rv_0}{L^n} |\psi|^4-\mu_n |\psi|^2 
= \half \sfrac{\rv_0}{L^n} 
\big(|\psi|^2 - L^{n}\sfrac{\mu_n}{\rv_0} \big)^2 
 - L^{n}\sfrac{\mu_n^2}{2\rv_0}
\end{equation}
has a much better developed  potential well. 

\subsection{ The stationary phase approximation}\label{sectINTstatPhase}

We want to argue that for all $1\le n\le \np$, and small fields 
$\psi$, a good approximation to
$\,\big[(\bbbs\bbbt)^n \big(e^{\cA_0 } \chi_0\big)\big](\psi^*,\psi)\,$
is of the form 
\begin{equation}\label{eqnHTbasicformintegr}
\exp\Big\{- A_n\big(\psi^*,\psi, \phi_{*n}, \phi_n,\,\mu_n,\cV_n \big)
+p_n(\psi^*,\psi)
\Big\}\chi_n(\psi^*,\psi)
\end{equation}
with the background fields 
\begin{equation*}
\phi_{*n} = \phi_{*n}(\psi^*,\psi,\mu_n,\cV_n)
\qquad
\phi_{n} = \phi_{n}(\psi^*,\psi,\mu_n,\cV_n)
\end{equation*}
as in Definition \ref{defHTbackgrounddomaction},  
with the renormalized chemical potential $\mu_n\,$ and the renormalized
interaction
$\cV_n$ as above,
and with a ``perturbative correction'' $\,p_n(\psi_*,\psi)\,$
which is an analytic function of the small fields $\,\psi_*,\,\psi\,$.
To substantiate this claim, we will prove, that up to errors which can
be expected to be nonperturbatively small, 
\begin{equation*}
(\bbbs\bbbt)\Big(
\exp\Big\{- A_n\big(\psi^*,\psi, \phi_{*n},\phi_n,\,\mu_n,\cV_n \big)
+p_n(\psi^*,\psi)
\Big\} \chi_n(\psi^*,\psi) \Big)
\end{equation*}
is again of the form \eqref{eqnHTbasicformintegr}, with $n$ replaced by $n+1$.

When $n\ge 1$, application of the block spin transformation to the 
function \eqref{eqnHTbasicformintegr} leads to the integral
\refstepcounter{equation}\label{eqnHTfirstintblockspin}
\begin{equation}
\sfrac{1}{N^{(n)}_\bbbt}
\hskip-2pt  \Big[ \hskip-3pt \prod_{x\in\cX_0^{(n)}}  
\int \hskip -3pt
 \sfrac{d\psi(x)^*\wedge d\psi(x)}{2\pi i}\Big]\
e^{
-a L^{-2}\< \th_*\!-Q\psi^*,\th-Q\psi\>_{-1}
- A_n(\psi^*,\psi, \phi_{*n},\phi_n,\,\mu_n ,\cV_n)
+p_n(\psi^*,\psi)
}\chi_n
\tag{\ref{eqnHTfirstintblockspin}.a}\end{equation}
Similarly, when $n=0$, application of the block spin transformation to 
the function $e^{\cA_0}\chi_0$ leads to the integral
\begin{equation}
\sfrac{1}{N^{(0)}_\bbbt}\Big[ \hskip-3pt \prod_{x\in\cX_0}  
\int \sfrac{d\psi(x)^*\wedge d\psi(x)}{2\pi i}\Big]\
e^{
-a L^{-2}\< \th_*\!-Q\psi^*,\th-Q\psi\>_{-1}
- A_0(\psi^*,\psi,\,\mu_0,\cV_0 )
+\cR_0(\psi^*,\psi)
+\cE_0(\psi^*,\psi)
}\chi_0
\tag{\ref{eqnHTfirstintblockspin}.b}\end{equation}

We compute the dominant contributions to the integrals 
(\ref{eqnHTfirstintblockspin}.a,b) by a 
``stationary phase'' type calculation. The first step is to 
calculate the approximate critical point of the integrand.
In Proposition \ref{propHTexistencecriticalfields}, below, and 
Proposition \ref{propBGAomnibus}.a,
we prove that the critical field equations\footnote
{When, $n=0$, drop the arguments $\phi_{*n}$, $\phi_n$ from $A_n$.}
\begin{equation}\label{eqnHTcriticalfield}
\begin{split}
\nabla_{\psi_*}\big\{
   \sfrac{a}{L^2}\< \th_*\!-Q\psi_*,\th-Q\psi\>_{-1}
  +A_n(\psi_*,\psi,\phi_{*n},\phi_n,\mu_n,\cV_n)\big\}
&=0\\
\nabla_\psi\big\{
   \sfrac{a}{L^2}\< \th_*\!-Q\psi_*,\th-Q\psi\>_{-1}
  +A_n(\psi_*,\psi,\phi_{*n},\phi_n,\mu_n,\cV_n)\big\}
&=0
\end{split}
\end{equation}
have a solution 
\begin{equation*}
\psi_{*n}(\th_*,\th,\mu_n,\cV_n)\ , \qquad 
\psi_{n}(\th_*,\th,\mu_n,\cV_n) 
\end{equation*}
for $(\th_*,\th)$ in a neighbourhood of the origin in 
$\,\cH_{-1}^{(n+1)}\times \cH_{-1}^{(n+1)}\,$.

Typically, $\psi_{*n}(\th_*,\th,\mu_n,\cV_n)$ is \emph{not} 
the complex conjugate of $\psi_n(\th_*,\th,\mu_n,\cV_n)$, 
even when $\th_*=\th^*$.
Therefore we consider the integral \eqref{eqnHTfirstintblockspin}
as the integral of the holomorphic differential form
\begin{equation}\label{eqnHTfirstdiffform}
\sfrac{1}{N^{(n)}_\bbbt}
e^{
-a L^{-2}\< \th_*\!-Q\psi_*,\th-Q\psi\>_{-1}
- A_n(\psi_*,\psi, \phi_{*n},\phi_n,\,\mu_n,\cV_n )
+p_n(\psi_*,\psi)
}
 \prod_{x\in\cX_0^{(n)}}   \sfrac{d\psi_*(x)\wedge d\psi(x)}{2\pi i}
\end{equation}
over part of the real $2|\cX_0^{(n)}|$--dimensional set
$
\set{ (\psi_*, \psi)}{ \psi_*=\psi^*}
$
in the complex space $\,\cH_0^{(n)}\times\cH_0^{(n)}\,$.
The change of variables 
\begin{align}\label{eqnHTfirstchangeintvar}
\psi_*&=\psi_{*n}(\th_*,\th,\mu_n,\cV_n)+\de\psi_*\qquad
\psi=\psi_n(\th_*,\th,\mu_n,\cV_n)+\de\psi
\end{align}
maps the domain of integration to an appropriate subset, $I_n(\th_*,\th)$ of
\begin{align*}
\big\{(\de\psi_*,\de\psi) \in \cH_0^{(n)} \times \cH_0^{(n)} \,\big|\, 
  \de\psi_*=\de\psi^* +\psi_n(\th_*,\th,\mu_n,\cV_n)^*
         \!-\psi_{*n}(\th_*,\th,\mu_n,\cV_n) \big\}
\end{align*}
We write the integral \eqref{eqnHTfirstintblockspin} as 
\begin{equation*}
\int_{I_n(\th_*,\th)} \tilde\om_n
\end{equation*}
where $\,\tilde\om_n\,$ is the holomorphic differential form obtained from
\eqref{eqnHTfirstdiffform} through the substitution \eqref{eqnHTfirstchangeintvar}.
The leading part of $\,\tilde\om_n\,$ is
$
e^{-\<\de \psi_*,\, (\sfrac{a}{L^2}Q^*Q+\De^{(n)}) \de\psi\>} 
\prod\limits_{x\in\cX_0^{(n)}}   \sfrac{d\de\psi_*(x)\wedge d\de\psi(x)}{2\pi i}
$
where
\begin{equation}\label{eqnHTden}
\De^{(n)}=\left.
   \begin{cases}
      \big(\bbbone+\fQ_n Q_n D_n^{-1} Q_n^*\big)^{-1}\fQ_n & \text{if $n\ge 1$}\\
      \noalign{\vskip0.05in}
      D_0       &\text{if $n=0$}
      \end{cases}\right\}
    :\cH_0^{(n)}\rightarrow\cH_0^{(n)}
\end{equation}
See Lemma \ref{lemSTdeA} and \cite[Lemma \lemBSdeltaAalernew]{BlockSpin}. 
In \cite[Corollary \:\corPOCsquareroot]{POA} we show that the operator
$\,(\sfrac{a}{L^2}Q^*Q+\De^{(n)})\,$
is invertible. To diagonalize the quadratic form in the resulting Gaussian 
integral, let $\,D^{(n)}\,$ be an operator square root of 
\begin{equation}\label{eqnHTcn}
C^{(n)}=(\sfrac{a}{L^2}Q^*Q+\De^{(n)})^{-1} 
\end{equation}
Denote by 
$\,\om_n( \th_*,\th,\mu_n;p_n) \,$ 
the differential form (in the fields $\,\ze_*,\ze\,$ on $\,\cX_0^{(n)}\,$) 
obtained from  $\tilde \om_n$ through the second substitution
\begin{equation*}
\de\psi_*=D^{(n)*} \ze_*\qquad
\de\psi = D^{(n)} \ze 
\end{equation*}
Note, again, that $\,D^{(n)*}\,$ is the transpose of $\,D^{(n)}\,$.
As in \cite{UV}, \cite[Appendix A]{CPS} and \cite[\S2.2.1]{LH}
we construct a $\,(2|\cX_0^{(n)}|+1)$--dimensional set $\cY$ whose 
boundary consists of
\begin{itemize}[leftmargin=*, topsep=2pt, itemsep=0pt, parsep=0pt]
\item
$\
\set{(\ze_*,\ze)}{ (D^{(n)*} \ze_*, D^{(n)} \ze) \in I_n(\th_*,\th)}
\ $
\item
$\
B_n=\set{(\ze_*,\ze)}{ \ze_*=\ze^*\ ,\ |\ze(x)| \le r_n \ {\rm for\ all \ }x \in \cX_0^{(n)}}
\ $
\item
 components on which we would expect 
$\,\om_n( \th_*,\th,\mu_n;p_n) \,$ to be nonperturbatively small.
\end{itemize}
Here, $\,r_n\,$ behaves like one over a very small power of a norm of $\cV_n$ . 
See Definition \ref{defHTbasicnorm}.c. Applying Stokes Theorem to the 
holomorphic -- and hence closed -- differential form, we expect,
as  in part (c) of \cite[\S2.2.2]{LH}, that the difference between
$
\bbbt\Big(
e^{- A_n(\psi^*,\psi, \phi_{*n},\phi_n,\, \mu_n,\cV_n)
+p_n(\psi^*,\psi)}\chi_n \Big)
$
and
$
\int_{B_n} \om_n( \th_*,\th,\mu_n;p_n)
$
is nonperturbatively small. See \cite[Step 3]{PARL}.

\begin{definition}[Approximate Blockspin Transformation]\label{defHTapproximateblockspintr}
Let $\,F(\psi_*,\psi)\,$ be an analytic function of complex valued fields
$\,\psi_*,\psi\,$ on $\,\cX_0^{(n)}\,$.
The approximate blockspin transform at scale $n$ of $\,F\,$
(with respect to $q$, the constant $a>0$ and the radius $r_n$
and the chemical potential $\mu$ and quartic interaction $\cV$) is
\begin{align*}
&(\bbbt_n^{(SF)} F)(\th_*,\th;\mu,\cV) \\
&=\sfrac{1}{\tilde N^{(n)}_\bbbt} \Big[\hskip-6pt
 \prod_{x\in\cX_0^{(n)}}  
\int\limits_{|\ze(x)|\le r_n} \hskip -9pt
       \sfrac{d\ze(x)^*\wedge d\ze(x)}{2\pi i}\Big]
e^{-a L^{-2}\< \th_*\!-Q\psi_*,\th-Q\psi\>_{-1}}\,F(\psi_*,\psi)
\bigg|_{\atop{\psi_* = \psi_{*n}(\th_*,\th,\mu,\cV) + D^{(n)*}\ze^*}
             {\psi = \psi_{n}(\th_*,\th,\mu,\cV) + D^{(n)}\ze \ \ \ }
       }
\end{align*}
where
$\,\tilde N^{(n)}_\bbbt = \sfrac{1}{\det C^{(n)}} N^{(n)}_\bbbt\,$.
\end{definition}

As said above, we expect the difference between
\begin{equation*}
\bbbt\Big(
e^{- A_n(\psi_*,\psi, \phi_{*n},\phi_n,\,\mu_n,\cV_n)
+p_n(\psi_*,\psi)} \chi_n\Big)
\quad {\rm and} \quad
\bbbt_n^{(SF)}\Big(
e^{- A_n(\psi_*,\psi, \phi_{*n},\phi_n,\,\mu_n,\cV_n)
+p_n(\psi_*,\psi)}\Big)
\end{equation*}
to be nonperturbatively small.

Our main result, Theorem \ref{thmTHmaintheorem}, is a representation for
\begin{equation}\label{eqnHTwhatistoberepresented}
\Big( (\bbbs \bbbt_{n-1}^{(SF)}) \circ(\bbbs \bbbt_{n-2}^{(SF)})\circ
   \cdots \circ (\bbbs \bbbt_0^{(SF)}) \Big)
\Big(e^{\cA_0 } \Big)
\end{equation}
where the starting point 
$\,e^{\cA_0 } \,$ is the output 
\eqref{eqnHTstartingpoint} of the ultraviolet flow,
and $n\le \np$.

\subsection{The perturbative corrections}\label{sectINTpertCor}

As said before, we shall show that, for $n\le \np$,  \eqref{eqnHTwhatistoberepresented} 
has a logarithm, whose dominant term is of the form
$\,-A_n(\psi^*,\psi, \phi_{*n},\phi_n,\,\mu_n,\cV_n)\,$
with renormalized chemical potential $\,\mu_n\,$  and a renormalized quartic interaction $\cV_n$ close to $\cV_n^{(u)}$. We will write the (perturbative) correction to it in
the form
\begin{equation*}
\cR_n\big(\phi_{*n}(\psi_*,\psi,\mu_n,\cV_n),
\phi_n(\psi_*,\psi,\mu_n,\cV_n)\big)
+\cE_n(\psi_*,\psi)
\end{equation*}
where $\,\phi_{*n},\phi_n\,$ are the background fields of Definition
\ref{defHTbackgrounddomaction}, $\,\cR_n(\phi_*,\phi)\,$ is a low degree
polynomial in fields on the fine lattice $\cX_n$, and $\cE_n(\psi_*,\psi)$
is an analytic function on a neighborhood of the origin in $\,\cH_0^{(n)}\times\cH_0^{(n)}\,$.
Some motivation for the need to distinguish between ``high degree'' and 
``low degree'' monomials and for our choice of the particular form of the
``low degree monomials'' is provided in 
\cite[Remark \remOSFdrelmotivation]{PAR2}.

\begin{remark}\label{remINTwhybackground}
We choose to express the ``low degree''
parts, $A_n$ and $\cR_n$, of the effective action as functions of the background
field $\phi_{(*)n}$, which are in turn functions of $\psi_{(*)}$, rather
than directly as functions of $\psi_{(*)}$. Here is a brief motivation.
During the course of each renormalization
group step we perform an integral over   $\psi$. To do so, we make the change
of variables, $\psi=\psi_n+\de\psi$, (see \eqref{eqnHTfirstchangeintvar} ),
where $\psi_n$ is a critical field. The leading part of the critical field is a 
linear operator, which is not particularly small, acting on the field
``$\psi$ of the next scale'' (which is a scaled version of $\th$). See
Proposition \ref{propHTexistencecriticalfields}, below. If we
simply substitute this leading part into a monomial in $\psi_{(*)}$ of
degree $p$, we again get a monomial of degree $p$, but our bound 
on the kernel of the monomial can grow because of the linear 
operator. On the other hand if we substitute the full critical field into 
a monomial in $\phi_{(*)n}(\psi_*,\psi)$, we get, by the
composition law Proposition \ref{propBGAomnibus}, followed by the appropriate
scaling, the monomial in $\phi_{(*)n+1}(\psi_*,\psi)$
with the identical kernel.

On the other hand, we choose to express the ``high degree''
part, $\cE_n$, of the effective action directly as a function of $\psi_{(*)}$.
If we were to express it, instead, through the background 
field $\phi_{(*)n}$, it would be defined on the fine lattice $\cX_n$
but would only be translation invariant with respect to the unit sublattice
$\cX_0^{(n)}$. This would complicate the process of localization
and renormalization.

\end{remark}

The functions $\,\cE_n(\psi_*,\psi)\,$ and $\,\cR_n(\phi_*,\phi)\,$  will
depend on the fields in their arguments both directly and through partial
derivatives of the fields. To make this precise, we write
\begin{align}\label{eqnTHdefexpandedstates}
\tilde \cH_j^{(n)} 
      &= \set{ \tilde\al = \big(\al,\{\al_\nu \}_{\nu=0,1,2,3}\big) }
                     { \al,\al_\nu\in \cH_j^{(n)}}
\end{align}

We shall write
\begin{equation*}
\cE_n(\psi_*,\psi)
=\tilde\cE_n\big((\psi_*, \{\partial_\nu\psi_*\}), (\psi,\{\partial_\nu\psi\})
          \big)
\end{equation*}
with an analytic function $\tilde \cE_n$  on a neighbourhood of the origin in
$\,\tilde\cH_0^{(n)}\times \tilde\cH_0^{(n)}\,$. In the next subsection we describe
how we measure the size of $\tilde \cE_n$.
Similarly we shall write
\begin{equation*}
\cR_n(\phi_*,\phi)
=\tilde\cR_n\big((\phi_*,\{\partial_\nu\phi_*\}), (\phi,\{\partial_\nu\phi\})
          \big)
\end{equation*}
with a polynomial $\,\tilde \cR_n\,$ on 
$\,\tilde\cH_n^{(0)}\times \tilde\cH_n^{(0)}\,$.
Here, for a field $\al$ on $\cX_j^{(n)}$ and $\nu=0,1,2,3\,$ we define 
the forward derivative by
\begin{equation*}
(\partial_\nu \al)(x) = 
\begin{cases}
    L^{2j} \big( \al\big(x+ \sfrac{1}{L^{2j}} e_0\big)-\al(x)\big)
         &\text{if $\nu=0$} \\
    L^j \big( \al\big(x+ \sfrac{1}{L^j} e_\nu\big)-\al(x)\big)
         &\text{if $\nu=1,2,3$}
\end{cases}
\end{equation*}
where $e_\nu$ is a unit vector in the $\nu^{\rm th}$ direction.
To make this precise we use the
\begin{definition}[Monomial type]\label{defINTmonomialtype}
For a vector $\vp=(p_u,p_0,p_\sp)$ of 
nonnegative integers, a monomial of type $\vp$ in the fields 
$\tilde\al_*,\tilde\al\in\tilde\cH_j^{(n-j)}$ is a function of the form
\begin{equation*}
\int_{\cX_j^{(n-j)}} du_1\cdots du_p\ M(u_1,\cdots,u_p)
             \smprod_{\ell=1}^p\al_{\si_\ell}(u_\ell)
\end{equation*}
where each $\al_{\si_\ell}$ is one of 
$\al_*,\al, \big\{\al_{*\nu},\al_\nu\big\}_{\nu=0}^3$ but with
\begin{itemize}[leftmargin=*, topsep=2pt, itemsep=0pt, parsep=0pt]
\item 
the number of $\al_{\si_\ell}$'s that are $\al_*$ or $\al$ being $p_u$ and 
\item 
the number of $\al_{\si_\ell}$'s  that are $\al_{*0}$ or $\al_0$ being $p_0$ and
\item
the number of $\al_{\si_\ell}$'s that are $\al_{*\nu}$ or $\al_\nu$ for 
some $1\le\nu\le 3$ being $p_\sp$.
\end{itemize}
In the monomial above there are $p_u$ undifferentiated fields, 
$p_0$ fields corresponding to time derivatives and 
$p_\sp$  fields corresponding to space derivatives. The subscript 
$u$ stands for ``undifferentiated'' and the subscript $\sp$ stands for 
``spatial''. 

\noindent
A polynomial of type $\vp$ is a sum of monomials of type $\vp$.
\end{definition}

\noindent
$\,\tilde\cR_n\,$ will be a sum, over $\vp \in \fD$, 
of polynomials, $\tilde\cR_n^{(\vp)}$,  of type $\vp$ in the fields 
$\tilde\phi_*,\tilde\phi\in\tilde\cH_n^{(0)}$ where
\begin{equation}\label{eqnINTfDdef}
\fD=\big\{(1,1,0)\,,\,
          (0,1,1)\,,\,
          (0,0,2)\,,\,
          (6,0,0)\big\}
\end{equation}
The motivation for this choice of $\fD$ is provided in 
\cite[Remark \remOSFdrelmotivation]{PAR2}.
 
In the next subsection we  describe
how we measure the size of the kernels in this representation.

\subsection{ Norms for measuring the size of the perturbative corrections}\label{sectINTnorms}

Let $\cX$ be any lattice that is equipped with a metric $d$ and a 
``cell volume'' $\vol$. As an example, the lattice $\cX_j^{(n-j)}$ has $\vol=\sfrac{1}{L^{5j}}$.
The following Definition describes how we measure the size of the kernels 
$\,R_n^{(\vp)}\,$ as above.

\begin{definition}\label{defHTkernelnorm}
Let $\,f(u_1,\cdots,u_r)\,$ be a function on $\,\cX^r\,$. For a mass 
$\,\fm\ge 0\,$ we set
\begin{equation*}
\|f\|_\fm
=\max_{i=1\cdots,r} \max_{u_{i}}
 \int du_1 \cdots du_{i-1}\,du_{i+1} \cdots du_r\ |f(u_1,\cdots,u_r)|\, e^{\fm\tau(u_1,\cdots,u_r)}
\end{equation*}
where \emph{the tree length} $\,\tau(u_1,\cdots,u_r)\,$ is the minimal length 
of a tree in $\,\cX\,$ that has  $\,u_1,\cdots,u_r\,$ among its vertices, and
$\int du\ g(u)=\vol\sum_{u\in\cX} g(u)$.
\end{definition}

As in \cite{GK2}, our perturbative corrections are analytic functions of the
fields.
The following definitions describe how we measure the size of complex valued
analytic functions of fields, like  $\cE_n(\psi_*,\psi)$ and
$\tilde\cE_n\big((\psi_*, \{\psi_{*\nu}\}), (\psi,\{\psi_\nu\})
          \big)
$. The norms in the following definition are special cases of the norms
in \cite[Definition 2.6]{CPC}.

\begin{definition}\label{defHTabstractnorm}
\ 
\begin{enumerate}[label=(\alph*), leftmargin=*]
\item 
For a field $\,\al\,$ on $\,\cX\,$ and 
$\,\vec x = (x_1,\cdots,x_r) \in \cX^r\,$ we set
$\ \al(\vec x) = \smprod_{i=1}^r \al(x_i)\,$.

\item 
A power series  $\,\cF\,$  in the fields $\,\al_1,\cdots,\al_s$, 
on $\cX\,$ has a unique expansion
\begin{equation*}
\cF(\al_1,\cdots,\al_s)
      = \sum_{r_1,\cdots,r_s\ge 0}\vol^{r_1+\cdots+r_s}\hskip-5pt  
         \sum_{\atop{\vec x_i\in\cX^{r_i}}{ 1\le i\le s}}
            f_{r_1,\cdots,r_s}\big(\vec x_1,\cdots,\vec x_s\big)\,
            \smprod_{i=1}^s \al_i(\vec x_i)
\end{equation*}
where the coefficients $f\big(\vec x_1,\cdots,\vec x_s\big)$ are invariant
under permutations of the components of each vector $\vec x_i$.
\item 
For each choice of ``weights'' $\ka_1,\cdots,\ka_s>0$,
for the fields $\al_1,\cdots,\al_s$, we define \emph{the norm of $\cF$ with
mass $\fm$ and weights $\ka_1,\cdots,\ka_s>0$} to be
\begin{equation*}
 \sum_{r_1,\cdots,r_s\ge 0}
  \big\| f_{r_1,\cdots,r_s}\big(\vec x_1,\cdots,\vec x_s\big)\big\|_\fm\ \
  \smprod_{i=1}^s \ka_i^{r_i}
\end{equation*}
\end{enumerate}
\end{definition}

Similarly,  Definition \ref{defDEFkrnel} describes how we measure the size of analytic
maps like the background field map $(\psi_*,\psi)\mapsto
\phi_n(\psi_*,\psi,\mu_n,\cV_n)$. 

\subsection{The Starting Point Setup}\label{sectINTstartPoint}

We shall state our results in terms of an abstraction of the
output of the temporal ultraviolet limit outlined following
\eqref{eqnHTstartingpoint} and \eqref{eqnHTaIn}. We assume that we are given
a mass $m>0$, positive odd integers $L_\tp$ and $L_\sp$, 
a small real number $\eps>0$, and
\begin{itemize}[leftmargin=*, topsep=2pt, itemsep=0pt, parsep=0pt]
\item
 a kinetic energy operator
\begin{equation*}
\bh_0 = \nabla^*\bH\nabla
\end{equation*}
where $\bH$ is a real, translation invariant, reflection invariant, 
strictly positive definite operator on the space, $L^2\big({(\bbbz^3)}^*)$, 
of functions  on the set, ${(\bbbz^3)}^*$, of nearest neighbor bonds of 
the lattice $\bbbz^3$. The operator 
$\nabla:L^2\big(\bbbz^3\big)\rightarrow L^2\big({(\bbbz^3)}^*\big)$ is given by
\begin{equation*}
\big(\nabla f\big)\big(\<\bx,\by\>\big) = f(\by)-f(\bx)
\end{equation*}
We assume that the kernel of $\bH$ is exponentially decaying with 
$\|\bh_0\|_{2m}$ finite.
\item
a kernel $\bV_0(x_1,x_2,x_3,x_4)$
on $\big((\bbbz/L_\tp\bbbz) \times \bbbz^3 \big)^4$ 
that is invariant under $x_1\leftrightarrow x_3$ and under 
$x_2\leftrightarrow x_4$ and under the symmetry group $\fS$
of Definition \ref{defSYfullSymmetry}. We assume that its average 
\begin{equation*}
\rv_0=\sum_{x_2,x_3,x_4\in\bbbz^3} \bV_0(0,x_2,x_3,x_4)>0
\qquad\text{and}\qquad
\fv_0= 2\|\bV_0\|_{2m}
\end{equation*}
are sufficiently small.
\item
a real chemical potential $\mu_0$ obeying
\begin{equation*}
 \mu_* +  \fv_0^{\sfrac{4}{3} -16\eps}  \le \mu_0  
                       \le    \fv_0^{\sfrac{8}{9} +\eps}
\end{equation*}
where\footnote{We show, in Lemma \ref{lemSZexplicitmustar},
that, to leading order, $\mu_*$ is $\th$ times \eqref{eqnHTdefcriticalmu}.}
\begin{equation}\label{eqnINTmustardef}
\mu_*=2\int_{{((\bbbz/L_\tp\bbbz) \times \bbbz^3)}^3} dx_1 \cdots dx_3\  
              \bV_0(0,x_1,x_2,x_3)\ \bD_0^{-1}(x_3,0)
\end{equation} 
with $\bD_0=\bbbone - e^{-\bh_0} -e^{-\bh_0} \partial_0\,$. 
\end{itemize}
The periodized versions, on the lattice
$
\cX_0 = \big( \bbbz \times \bbbz^3 \big) / 
                          \big(L_\tp\bbbz \times L_\sp\bbbz^3 \big)
$,
of $\bh_0$, $\bV_0$, $\bD_0$ are denoted $\oh_0$, $V_0$ and $D_0$,
respectively. We also assume that we are given
\begin{itemize}[leftmargin=*, topsep=2pt, itemsep=0pt, parsep=0pt]
\item an $\fS$ invariant polynomial 
$\,\tilde\cR_0(\tilde\psi_*,\tilde\psi)=\sum_{\vp\in\fD}
               \tilde\cR_0^{(\vp)}(\tilde\psi_*,\tilde\psi)\,$
on  $\,\tilde\cH_0\times \tilde\cH_0\,$.  
Each $\tilde\cR_0^{(\vp)}$ is a polynomial of type $\vp$ with a
real valued kernel\footnote{That is, 
$\overline{\tilde\cR_0^{(\vp)}((\psi_*, \{\psi_{*\nu}\}), (\psi,\{\psi_\nu\}) )}
= \tilde\cR_0^{(\vp)}((\psi_*^*, \{\psi_{*\nu}^*\}), (\psi^*,\{\psi_\nu\}^*) )$. }
that obeys the bound
\begin{align*}
\big\|\tilde\cR_0^{(\vp)}\big\|_m
 &\le \fr_{\vp}(0)
 = \begin{cases}\fv_0^{2-\eps} &\text{if  $\vp =(6,0,0)$} \\
                         \noalign{\vskip0.05in}
                                     \fv_0^{1-4\eps} &\text{otherwise}
    \end{cases}
\end{align*}
\item
$\cE_0(\psi_*,\psi)$ 
is an $\fS$ invariant, particle--number preserving function with real valued 
kernels and with $\cE_0(0,0)=0$, that
\begin{itemize}[leftmargin=*, topsep=2pt, itemsep=0pt, parsep=0pt]
\item
 is of degree at least four both in $\psi_*$ and in $\psi$, and 
\item 
whose norm 
\begin{equation*}
\|\cE_0\|^{(0)} \le \fv_0^\eps 
\end{equation*}
where $\|\ \cdot\ \|^{(0)}$ is the norm with mass $m$ which associates
the weight $\,\ka(0)=\sfrac{1}{\fv_0^{1/3-\eps}}\,$ to the fields 
$\,\psi_*, \,\psi\,$.
\end{itemize}
\end{itemize}
We set
\begin{equation}\label{eqnHTaZero}
\cA_0(\psi_*,\psi) = -\<\psi_*,\,D_0\psi\>_0
       -\cV_0(\psi_*,\psi)
       +\mu_0 \<\psi_*,\,\psi\>_0
       +\cR_0(\psi_*,\psi) 
       +\cE_0(\psi_*,\psi)
\end{equation}
where
\begin{align*}
\cV_0(\psi_*,\psi) & = \half\int_{\cX_0^4} dx_1 \cdots dx_4\  
                  V_0(x_1,x_2,x_3,x_4)\,  
                  \psi_*(x_1) \psi(x_2)\psi_*(x_3) \psi(x_4)
\\
\cR_0(\psi_*,\psi) 
&= \tilde\cR_0\big((\psi_*,\{\partial_\nu\psi_*\}),(\psi,\{\partial_\nu\psi\})
          \big)
\end{align*}

Under reasonable conditions on the various parameters,
the small field part of the output of the temporal ultraviolet limit,
described following \eqref{eqnTHsmallfieldoutput}, satisfies these conditions.
This is stated in more detail and proven in 
Corollary \ref{corSZstepzeroConditions}.

\subsection{The main results}\label{sectINTmainResults}

We start by defining a number of parameters that will be used in the
statement of the main results.

\begin{definition}\label{defHTbasicnorm}
\ 
\begin{enumerate}[label=(\alph*), leftmargin=*]
\item
Set
\begin{align*}
\ka(n)&=\sfrac{L^{\eta n}}{\fv_0^{1/3-\eps}} &
\quad &{\rm with} \quad & \eta&=\sfrac{1}{2}+\sfrac{1}{3}\sfrac{\log\fv_0}{\log(\mu_0-\mu_*)}
\\
\ka'(n)&=\sfrac{L^{\eta' n}}{\fv_0^{1/3-\eps}} &
\quad &{\rm with} \quad & \eta'&=\sfrac{3}{2}-\sfrac{\log\fv_0}{\log(\mu_0-\mu_*)}-\eps
\\
\fe_\fl(n)&=L^{\eta_\fl n}\,\fv_0^{\sfrac{1}{3}-2\eps } &
\quad &{\rm with} \quad &
\eta_\fl&=\big(\frac{2}{3}-4\eps\big)\,\sfrac{\log\fv_0}{\log(\mu_0-\mu_*)}
\end{align*}
With the notation of Definition \ref{defHTabstractnorm}, we define the 
norm $\,\|\tilde \cE\|^{(n)},$ of an analytic function
$\ 
\tilde \cE(\tilde\psi_*,\tilde\psi)
\ $,
as the norm with mass $m$ which associates
the weight $\,\ka(n)\,$ 
to the fields $\,\psi_*, \,\psi\,$, and the weight
$\,\ka'(n)\,$ to the fields $\,\psi_{\nu *},\psi_\nu\,$, $\,\nu=0,\cdots,3\,$.
Similarly, we define $\,\|\tilde \cE\|_m$ as the norm with mass $m$ 
which associates the weight 1 to all fields. The parameter $\fe_\fl(n)$
will be used as an upper bound on the size of the output of the 
fluctuation integral in the $n^{\rm th}$ step.

\item
The number of steps we perform in this paper, using the ``parabolic flow'', 
is the largest integer $\np$ such that 
\begin{equation*}
L^{2\np}(\mu_0-\mu_*)\le \big(\sfrac{ \fv_0}{L^{\np}} \big)^{5\eps}
\iff L^{(2+5\eps)\np}(\mu_0-\mu_*) \le \fv_0^{5\eps}
\end{equation*}

\item
For the radius of integration of the fluctuation variables in Definition
\ref{defHTapproximateblockspintr}, we choose
\begin{equation*}
r_n = \sfrac{1}{4} \ka_\fl(n+1) \qquad {\rm where}\ 
\ka_\fl(n)= \big( \sfrac{L^n}{\fv_0} \big)^{\eps/2}
\end{equation*}

\item 
To specify the averaging profile $q$ of Definition 
\ref{defHTblockspintr}.a we fix\footnote{The reasons for requiring
that $\fq\ge 4$ are discussed in \cite[Remark \remPBOqfour]{POA}.}
 a (small) even natural number $\fq\ge 4$ and
denote by $1_{\sq}(x)$ the characteristic
function of the rectangle $\big[-\sfrac{L^2-1}{2}, \sfrac{L^2-1}{2} \big]
\times \big[-\sfrac{L-1}{2}, \sfrac{L-1}{2} \big]^3$ in 
$\bbbz\times\bbbz^3$.  Set
\begin{equation*}
q=\sfrac{1}{L^{5\fq}} 
\overbrace{ 1_{\sq}*1_{\sq}*\cdots*1_{\sq} }^{\fq\ {\rm times}}
\end{equation*}
to be the convolution of $1_{\sq}$ with itself $\fq$ times, normalized
to have integral one. Properties of $q$ are discussed in
\cite[\S\secPOblockSpin]{POA}.
\end{enumerate}
\end{definition}

\begin{remark}\label{remHTbasicnorm}
By construction 
\begin{align*}
\sfrac{3}{4}+8\eps < \sfrac{\log\fv_0}{\log(\mu_0-\mu_*)}< \sfrac{9}{8}-\eps
 \qquad\qquad
 \sfrac{3}{4}+2\eps  < \eta < \sfrac{7}{8} -\sfrac{\eps}{3}
 \qquad\qquad
 \sfrac{3}{8}  < \eta' < \sfrac{3}{4} -8\eps
\end{align*}
Also, by Definition \ref{defHTbasicnorm}.b and the condition 
$\mu_0-\mu_*\ge \fv_0^{\frac{4}{3}-16\eps}$ of \S \ref{sectINTstartPoint},
$$
L^{\np}\le \sfrac{1}{\fv_0^{\frac{2}{3}-8\eps}}
$$
\end{remark}

For the approximate block spin transformations to
be well defined, we need to make sure that the background fields of Definition
\ref{defHTbackgrounddomaction}.c and the critical fields of 
\eqref{eqnHTcriticalfield} are defined for small fields. The main 
technical work here is to show that
the linearized  equations for the background field 
(see \cite[Definition \defBSbackfld.a and (\eqnBSbckgndequ')]{BlockSpin})
are solvable. This is guaranteed by

\begin{theorem}[Green's Functions]\label{HTthminvertibleoperators}
There are constants  $\mu_{\rm up}, m_0>0$ and $\Gam_\op$, that are
independent of $n$ and $L$, such that the following hold.
Let $\,0\le n\le \np$, $0\le m\le\sfrac{m_0}{2}$ and 
$\,|\mu|\le\mu_{\rm up}\,$. The operators
$\,
D_n+Q_n^*\fQ_n Q_n
\,$
and
$\,
D_n+Q_n^*\fQ_n Q_n-\mu
\,$
on $\,\cH_0^{(n)}\,$
are invertible. We set
\begin{equation*}
S_n=(D_n+Q_n^*\fQ_n Q_n)^{-1} \qquad
S_n(\mu)=(D_n+Q_n^*\fQ_n Q_n-\mu)^{-1} 
\end{equation*}
Then
\begin{align*}
\|S_n\|_{2m}\,,\, \|S_n(\mu) \|_{2m}& \le \Gam_\op
\end{align*}
\end{theorem}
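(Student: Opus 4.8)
The plan is to Fourier--transform, where $B_n:=D_n+Q_n^*\fQ_n Q_n$ becomes block diagonal with each block a complex diagonal plus a positive rank--one matrix; to prove a lower bound on each block uniform in $n,L$ (and in the infrared cut--offs $L_\tp,L_\sp$) by a Sherman--Morrison / Schur--complement computation; to upgrade this to the weighted kernel bound by a Combes--Thomas contour shift using the exponential decay of the kernels of $D_n,Q_n,\fQ_n$; and to handle the $-\mu$ term by a Neumann series.

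\textbf{Step 1 (block decomposition).} $D_n$ and $\fQ_n$ commute with all translations of $\cX_n$, resp.\ $\cX_0^{(n)}$, and $Q_n$ intertwines the translations of $\cX_n$ by the coarse sublattice $\cX_0^{(n)}\subset\cX_n$ with those of $\cX_0^{(n)}$; hence $B_n$ on $\cH_n$ commutes with the coarse translations and, in the Fourier basis of $\cH_n$, is block diagonal, one block $M_p$ per momentum $p$ of $\cX_0^{(n)}$, spanned by the $L^{5n}$ momenta $\kappa$ of $\cX_n$ with $\kappa\equiv p\pmod{2\pi}$. On a block, $D_n$ acts as $\mathrm{diag}\bigl(\hat D_n(\kappa)\bigr)_{\kappa\equiv p}$ with $\hat D_n(\kappa)=L^{2n}\hat D_0\bigl(\tfrac{\kappa_0}{L^{2n}},\tfrac{\vec\kappa}{L^{n}}\bigr)$, $\hat D_0(k_0,\bk)=1-e^{-\th\hat\oh(\bk)}e^{ik_0}$ (from $D_n=L^{2n}\bbbl_*^{-n}D_0\bbbl_*^{n}$ and the parabolic scaling), while $Q_n^*\fQ_n Q_n$ acts as the positive rank--one matrix $\hat\fQ_n(p)\,v_p v_p^{\,*}$, $v_p=(\hat Q_n(\kappa))_{\kappa\equiv p}$; here $\hat\fQ_n(p)$ and $\hat Q_n(\kappa)$ are real because $\fQ_n=(\bbbone+\text{positive})^{-1}$ and $q$ are real and even.

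\textbf{Step 2 (uniform invertibility of the blocks).} Two facts drive this. Since $\hat\oh(\bk)\ge0$, the values of $\hat D_0$ lie in the disc $\{|z-1|\le1\}$, so $\Re\,\hat D_0(k)\ge\tfrac12|\hat D_0(k)|^2\ge0$ and therefore $\Re\bigl(1/\hat D_n(\kappa)\bigr)\ge0$ for $\kappa\neq0$; and since $\bH$ is strictly positive and $\nabla$ kills only constants, $\hat\oh(\bk)\ge c_0|\bk|^2$ near $0$, so by the parabolic scaling there are fixed $c_1,K>0$ with $K<\pi$, independent of $n,L,L_\tp,L_\sp$, such that $|\hat D_n(\kappa)|\ge c_1$ whenever $|\kappa_0|\ge K$ or $|\vec\kappa|\ge K$. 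Fix $p$. If $p$ is not small, every $\kappa\equiv p$ has $|\hat D_n(\kappa)|\ge c_1$, so $D_p:=\mathrm{diag}(\hat D_n(\kappa))_{\kappa\equiv p}$ has $\|D_p^{-1}\|\le1/c_1$, and the Sherman--Morrison formula
\begin{equation*}
M_p^{-1}=D_p^{-1}-\frac{\hat\fQ_n(p)\;D_p^{-1}v_p\,v_p^{\,*}D_p^{-1}}{1+\hat\fQ_n(p)\,v_p^{\,*}D_p^{-1}v_p}
\end{equation*}
gives a bounded inverse, because $\Re\bigl(v_p^{\,*}D_p^{-1}v_p\bigr)=\sum_\kappa|(v_p)_\kappa|^2\,\Re\hat D_n(\kappa)/|\hat D_n(\kappa)|^2\ge0$ forces the denominator to have real part $\ge1$, while $\hat\fQ_n(p)\le1$ and $\|v_p\|^2=\sum_{\kappa\equiv p}|\hat Q_n(\kappa)|^2\le C$ by \cite[\S\secPOblockSpin]{POA}. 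If $p$ is small, the only possibly small diagonal entry is $\hat D_n(p)$ (every other $\kappa\equiv p$ differs from $p$ by a nonzero multiple of $2\pi$, hence has a component $\ge 2\pi-K>K$); split off the coordinate $\kappa=p$. The remaining $(L^{5n}-1)$--dimensional block is invertible with bounded inverse by the case just treated, and its Schur complement inside $M_p$ collapses, via a second Sherman--Morrison, to a scalar whose real part is at least a positive constant times $|\hat Q_n(p)|^2$ — using $\Re\hat D_n(p)\ge0$, the bound $1\le\Re(\text{denominator})\le C'$ as above, and, crucially, that $|\hat Q_n(p)|$ is bounded below since $p$ is small and the averaging profile of $Q_n$ has $O(1)$ physical size (its first Fourier zero sits at a momentum $\asymp1/\fq$), again by \cite[\S\secPOblockSpin]{POA}. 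So $\|M_p^{-1}\|\le C$ uniformly, whence $B_n$ is invertible with $\|B_n^{-1}\|_{\rm op}\le C$.

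\textbf{Step 3 (weighted bound, the $-\mu$ term, and the obstacle).} The kernels of $D_n$, $Q_n$, $\fQ_n$ all decay exponentially at a scale--independent rate: that of $D_0=\bbbone-e^{-\oh_0}-e^{-\oh_0}\partial_0$ because $\|\bh_0\|_{2m}<\infty$, hence of $D_n$ a fortiori; that of $Q_n$ because $q$ has compact support; that of $\fQ_n=(\bbbone+\sum_{j<n}L^{-2j}Q_jQ_j^*)^{-1}$ by \cite[\S\secPOblockSpin]{POA}. Hence the block symbols of $B_n$ extend analytically to a strip $|\Im\kappa|\le m_0$; rerunning Step~2 there keeps the Sherman--Morrison denominators' real parts $\ge1$ and $\|M_p^{-1}\|$ bounded, so $B_n^{-1}$ has a kernel decaying at rate $m_0$, and a Combes--Thomas conjugation of $B_n$ by $e^{2m\ell}$ with $\ell$ a $1$--Lipschitz weight yields $\|S_n\|_{2m}\le\Gam_\op$ for $0\le 2m\le m_0$. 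As $\|\cdot\|_{2m}$ is submultiplicative under composition and $\|\bbbone\|_{2m}=1$, writing $D_n+Q_n^*\fQ_n Q_n-\mu=B_n(\bbbone-\mu S_n)$ gives, for $|\mu|\le\mu_{\rm up}:=1/(2\Gam_\op)$, the convergent series $S_n(\mu)=\sum_{k\ge0}\mu^k S_n^{k+1}$ with $\|S_n(\mu)\|_{2m}\le2\Gam_\op$; enlarging $\Gam_\op$ completes the proof. The main obstacle is everything packed into ``uniformly in $n,L,L_\tp,L_\sp$'': that each block carries at most one small--$\hat D_n$ mode, that $\hat Q_n$ is bounded below on the attendant small--momentum region with a scale--independent constant, that $\sum_{\kappa\equiv p}|\hat Q_n(\kappa)|^2$ stays bounded, and that $\fQ_n$'s kernel decays at a scale--independent rate — the structural facts about the block--spin profile $q$ (and the reason $\fq\ge4$) collected in \cite[\S\secPOblockSpin]{POA} — together with checking that all of this survives continuation to the strip $|\Im\kappa|\le m_0$.
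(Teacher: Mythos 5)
The paper does not actually prove this theorem in the text you were given: its entire ``proof'' is the citation to \cite[Proposition \POGmainpos]{POA}, so there is no in-paper argument to compare yours against. Your route --- block-diagonalizing $D_n+Q_n^*\fQ_n Q_n$ over the unit-lattice momenta, treating $Q_n^*\fQ_n Q_n$ as a rank-one perturbation per block via Sherman--Morrison, and importing the structural facts about $q$, $\hat Q_n$, $\hat\fQ_n$ from \cite[\S\secPOblockSpin]{POA} --- is the natural one given the translation invariance, and your real-momentum analysis (at most one small diagonal entry per block, denominators with real part $\ge 1$, a lower bound on $\hat Q_n(p)$ for the relevant $p$) is essentially sound. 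One parenthetical is off: the first zeros of $\hat Q_n$ do not sit at momentum $\asymp 1/\fq$; they sit at the folded momenta $\equiv 0 \bmod 2\pi$, and what you actually need (and \cite{POA} supplies) is a lower bound on all of the unit Brillouin zone, each factor of the composed symbol being evaluated inside its first lobe. This slip does not affect the structure of your argument.

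Two steps are glossed in ways that matter precisely for the uniformity in $n$ and $L$ that the theorem asserts. First, ``rerunning Step~2'' in the strip $|\Im\kappa|\le m_0$ does not preserve the positivity facts verbatim: $\Re\big(1/\hat D_n\big)$ and the Sherman--Morrison denominators are then only bounded below by $1-O(m_0)$, and $\Re\hat D_n(p)$ itself can become negative of order $m_0$ near the zero mode, so $m_0$ has to be chosen small \emph{after} the constants $c_1$, the bound on $\sum_{\kappa\equiv p}\hat Q_n(\kappa)^2$ and the lower bound on $\hat\fQ_n(p)\hat Q_n(p)^2$; this is fixable but it is the content of the uniformity claim, not a formality. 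Second, and more seriously, a generic Combes--Thomas argument starting from the $L^2(\cX_n)$ operator bound does not deliver the weighted $\ell^1$--$\ell^\infty$ bound $\|S_n\|_{2m}\le\Gam_\op$ uniformly in $n$: matrix elements extracted from the operator bound carry a factor $\vol_n^{-1}=L^{5n}$, and the subsequent sum over the fine lattice does not absorb it. One must produce kernel bounds with the correct volume scaling, e.g.\ by expressing the kernel of $S_n$ through the analytic block symbols $M_p^{-1}$ and using analyticity in $p$ for decay between unit cells together with the explicit within-block structure for the fine-lattice behaviour; this is where the bulk of the work in \cite{POA} lies, and your sketch names it as an obstacle rather than resolving it. The Neumann-series treatment of the $-\mu$ term, using submultiplicativity of $\|\cdot\|_{2m}$, is fine once the bound on $\|S_n\|_{2m}$ is in hand.
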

\noindent This theorem is proven in \cite[Proposition \POGmainpos]{POA}.

\begin{proposition}[Background fields]\label{propHTexistencebackgroundfields}
Let $\,1\le n\le \np\,$. Let $\mu$ be a complex number and $\cV(\phi_*,\phi)$
be a quartic monomial with 
$
\|\cV\|_{2m}\ka(n)^2+|\mu|
$
sufficiently small. Then there are analytic maps 
\begin{equation*}
(\psi_*,\psi)\mapsto 
\phi_{*n}(\psi_*,\psi,\mu,\cV),\ \phi_{n}(\psi_*,\psi,\mu,\cV)
\end{equation*}
to $\cH_n$, that are defined for all 
   $(\psi_*,\psi)\in\cH_0^{(n)}\times \cH_0^{(n)}\,$
obeying $\big|\psi_*(x)\big|\,,\,\big|\psi(x)\big| \le\ka(n)$
for all $x\in \cX_0^{(n)}$, and that fulfill the background field equations 
\begin{equation*}
\sfrac{\partial\hfill}{\partial\phi_*}A_n(\psi_*,\psi,\phi_*,\phi,\mu,\cV)=
\sfrac{\partial\hfill}{\partial\phi}A_n(\psi_*,\psi,\phi_*,\phi,\mu,\cV)=0
\end{equation*}
Furthermore 
\begin{align*}
\phi_{*n}(\psi_*,\psi,\mu,\cV)
     &=S_n(\mu)^* Q_n^* \fQ_n\,\psi_*
      +\phi_{*n}^{(\ge 3)}(\psi_*,\psi,\mu,\cV)\\
\phi_n(\psi_*,\psi,\mu,\cV)
     &=S_n(\mu)Q_n^* \fQ_n\,\psi
      +\phi_n^{(\ge 3)}(\psi_*,\psi,\mu,\cV)
\end{align*} 
with analytic maps $\phi_{*n}^{(\ge 3)}$ and $\phi_n^{(\ge 3)}$ that 
are of degree at least three in $(\psi_*,\psi)$. 
\end{proposition}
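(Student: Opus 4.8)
The plan is to rewrite the background field equations as a fixed point equation whose linear part is inverted by Theorem~\ref{HTthminvertibleoperators} and whose nonlinear part is a small perturbation, and then to solve it by a contraction argument in a Banach space of analytic field maps.

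\emph{Reduction.} Differentiating the explicit formula \eqref{eqnHTAndef} for $A_n$, and using that $\fQ_n$ is symmetric (it is a function of the operators $Q_jQ_j^*$) and that $\cV$ is a quartic monomial, the equation $\sfrac{\partial\hfill}{\partial\phi_*}A_n=0$ reads
\begin{equation*}
-Q_n^*\fQ_n\big(\psi-Q_n\phi\big)+D_n\phi+\sfrac{\partial\hfill}{\partial\phi_*}\cV(\phi_*,\phi)-\mu\phi=0
\end{equation*}
and $\sfrac{\partial\hfill}{\partial\phi}A_n=0$ reads
\begin{equation*}
-Q_n^*\fQ_n\big(\psi_*-Q_n\phi_*\big)+D_n^*\phi_*+\sfrac{\partial\hfill}{\partial\phi}\cV(\phi_*,\phi)-\mu\phi_*=0
\end{equation*}
where $D_n^*$, $Q_n^*$ are the transposes with respect to the real inner products. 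Since, once $|\mu|$ is small, $|\mu|\le\mu_{\rm up}$ and $m\le\sfrac{m_0}{2}$, Theorem~\ref{HTthminvertibleoperators} supplies $S_n(\mu)=(D_n+Q_n^*\fQ_nQ_n-\mu)^{-1}$ with $\|S_n(\mu)\|_{2m},\,\|S_n(\mu)^*\|_{2m}\le\Gam_\op$, and the two equations are equivalent to the system
\begin{align*}
\phi&=S_n(\mu)\Big[Q_n^*\fQ_n\,\psi-\sfrac{\partial\hfill}{\partial\phi_*}\cV(\phi_*,\phi)\Big]\\
\phi_*&=S_n(\mu)^*\Big[Q_n^*\fQ_n\,\psi_*-\sfrac{\partial\hfill}{\partial\phi}\cV(\phi_*,\phi)\Big]
\end{align*}
As $\cV$ is quartic, $\sfrac{\partial\hfill}{\partial\phi_{(*)}}\cV$ is cubic in $(\phi_*,\phi)$, so the right side is ``linear in $\psi_{(*)}$ plus a cubic correction in $(\phi_*,\phi)$''.

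\emph{Contraction.} Fix $\ka=\ka(n)$ and let $\mathcal B$ be the Banach space of pairs of analytic maps $(\psi_*,\psi)\mapsto(\phi_*,\phi)\in\cH_n\times\cH_n$ on the polydisc $\set{(\psi_*,\psi)}{|\psi_*(x)|,|\psi(x)|\le\ka\text{ for all }x\in\cX_0^{(n)}}$, measured with mass $m$ and weight $\ka$ on the fields $\psi_*,\psi$ (Definition~\ref{defHTabstractnorm}, and Definition~\ref{defDEFkrnel} for the map norm). Let $\Phi:\mathcal B\to\mathcal B$ send $(\phi_*,\phi)$ to the right side of the system above, keeping $(\psi_*,\psi)$ as variables, and let $\Lambda=\big(S_n(\mu)^*Q_n^*\fQ_n\,\psi_*\,,\,S_n(\mu)Q_n^*\fQ_n\,\psi\big)$ be its $(\phi_*,\phi)$-independent linear part. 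The iterated averaging operators $Q_n$, $Q_n^*$ and the operator $\fQ_n$ are bounded uniformly in $n\le\np$ and $L$ in the relevant norms (see \cite[\S\secPOblockSpin]{POA}), so $\|\Lambda\|\le\mathrm{const}$. Composing the cubic monomial $\sfrac{\partial\hfill}{\partial\phi_{(*)}}\cV$ with maps of norm $\le\rho$ gives, by multiplicativity of the norm, a contribution bounded by $\mathrm{const}\,\|\cV\|_{2m}\,\rho^3$, and the difference of two such by $\mathrm{const}\,\|\cV\|_{2m}\,\rho^2\,\|\de\phi\|$. With $\rho=2\|\Lambda\|$, the map $\Phi$ sends the ball of radius $\rho$ in $\mathcal B$ into itself and is a contraction there as soon as $\mathrm{const}\,\|\cV\|_{2m}\ka(n)^2$ is small --- precisely the hypothesis that $\|\cV\|_{2m}\ka(n)^2+|\mu|$ be small enough.

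\emph{Conclusion and leading term.} Banach's fixed point theorem gives a unique fixed point $(\phi_{*n},\phi_n)\in\mathcal B$; the Picard iterates starting from $\Lambda$ are analytic and converge uniformly on the polydisc, so $(\phi_{*n},\phi_n)$ is analytic and defined on all of $\set{(\psi_*,\psi)}{|\psi_*(x)|,|\psi(x)|\le\ka(n)}$. It solves the background field equations by construction, and uniqueness in $\mathcal B$ is immediate. Finally, since the nonlinear part of $\Phi$ is cubic in $(\phi_*,\phi)$ while $\Lambda$ is linear in $(\psi_*,\psi)$, every Picard iterate agrees with $\Lambda$ through second order in $(\psi_*,\psi)$; hence $\phi_n=S_n(\mu)Q_n^*\fQ_n\,\psi+\phi_n^{(\ge3)}$ with $\phi_n^{(\ge3)}=-S_n(\mu)\sfrac{\partial\hfill}{\partial\phi_*}\cV(\phi_{*n},\phi_n)$ of degree $\ge3$ in $(\psi_*,\psi)$, and symmetrically for $\phi_{*n}$, which is the claimed expansion.

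\emph{Main obstacle.} The real work is the norm bookkeeping in the contraction step: verifying that composition with the quartic monomial $\cV$ and with the non-small operators $Q_n$, $\fQ_n$, $S_n(\mu)$ is compatible with the weighted analytic norm, uniformly in $n\le\np$ and for all large $L$, so that the loss produced by the $Q_n$'s is absorbed by the smallness of $\|\cV\|_{2m}\ka(n)^2+|\mu|$. This is exactly why the weight $\ka(n)$ of Definition~\ref{defHTbasicnorm}.a is chosen as it is; the invertibility input Theorem~\ref{HTthminvertibleoperators} is the other essential ingredient, but it is quoted.
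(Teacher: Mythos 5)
Your proposal is correct and takes essentially the same route as the paper: the paper defers the proof to \cite[Proposition~\propBGEphivepssoln]{BGE}, which it explicitly describes as a contraction mapping argument, and your fixed-point reformulation $\phi_{(*)}=S_n(\mu)^{(*)}\big[Q_n^*\fQ_n\psi_{(*)}-\nabla_{\phi_{(\ast)}}\cV\big]$ is precisely the system the paper records in \eqref{eqnBGAbgeqns}--\eqref{eqnBGAbgeqnsB}, with Theorem~\ref{HTthminvertibleoperators} supplying the inversion and the degree-$\ge3$ expansion following from the cubic nonlinearity as you say. You rightly identify that the substantive work is the uniform-in-$n,L$ norm bookkeeping, which is exactly what is carried out in \cite{BGE}.
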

\noindent
This Proposition, with more details and bounds, is proven in 
\cite[Proposition \propBGEphivepssoln]{BGE}, using a contraction mapping argument.

\begin{proposition}[Critical fields]\label{propHTexistencecriticalfields}
Let $\,0\le n< \np\,$. Let $\mu$ be a complex number and $\cV(\phi_*,\phi)$
be a quartic monomial with 
$
\|\cV\|_{2m}\ka(n+1)^2+|\mu|
$
sufficiently small. Then there are analytic maps 
\begin{equation*}
(\th_*,\th)\mapsto 
\psi_{*n}(\th_*,\th,\mu,\cV),\ \psi_{n}(\th_*,\th,\mu,\cV)
\end{equation*}
to $\cH_0^{(n)}$, that are defined for all 
   $(\th_*,\th)\in\cH_{-1}^{(n+1)}\times \cH_{-1}^{(n+1)}\,$
obeying $\big|\th_*(y)\big|\,,\,\big|\th(y)\big| <\sfrac{\ka(n+1)}{L^{3/2}}$
for all $y\in \cX_{-1}^{(n+1)}$, and that fulfill the the critical 
field equations
\begin{align*}
   \left. \atop{
   \nabla_{\psi_*}\big\{
   \sfrac{a}{L^2}\< \th_*\!-\!Q\psi_*,\th\!-\!Q\psi\>_{-1}
  \!+\!A_n\big(\psi_*,\psi,\phi_*,\phi,\mu,\cV\big)
      \big|_{\phi_{(*)}=\phi_{(*)n}(\psi_*,\psi,\mu,\cV)}\big\}\!=0} 
   {\nabla_\psi\big\{
              \sfrac{a}{L^2}\< \th_*\!-\!Q\psi_*,\th\!-\!Q\psi\>_{-1}
       \!+\!A_n\big(\psi_*,\psi,\phi_*,\phi,\mu,\cV\big)
          \big|_{\phi_{(*)}=\phi_{(*)n}(\psi_*,\psi,\mu,\cV)}\big\}
    \!=0}\!
\right\}&\,\text{if $n\ge 1$}\\
\noalign{\vskip0.05in}
\left.\atop{
    \nabla_{\psi_*}\big\{
   \sfrac{a}{L^2}\< \th_*\!-Q\psi_*,\th-Q\psi\>_{-1}
     +A_0\big(\psi_*,\psi,\mu,\cV\big)\big\} =0}
        {\nabla_\psi\big\{
   \sfrac{a}{L^2}\< \th_*\!-Q\psi_*,\th-Q\psi\>_{-1}
  +A_0\big(\psi_*,\psi,\mu,\cV\big)\big\}=0}\!
   \right\}&\,\text{if $n=0$}
\end{align*}
Furthermore 
\begin{align*}
\psi_{*n}(\th_*,\th,\mu,\cV)
&= \sfrac{a}{L^2} C^{(n)}(\mu)^*Q^*\,\th_*
    + \psi_{*n}^{(\ge 3)}(\th_*,\th,\mu,\cV)
\\
\psi_n(\th_*,\th,\mu,\cV)
&= \sfrac{a}{L^2} C^{(n)}(\mu)Q^*\,\th
   + \psi_n^{(\ge 3)}(\th_*,\th,\mu,\cV)
\end{align*}
where\footnote{By \cite[Remark \remBSedA.a]{BlockSpin} 
$\De^{(n)}(0)$ is the $\De^{(n)}$ of \eqref{eqnHTden} and consequently
$C^{(n)}(0)$ is the $C^{(n)}$ of \eqref{eqnHTcn}.}
\begin{align*}
C^{(n)}(\mu)&=\big(\sfrac{a}{L^2}Q^*Q+\De^{(n)}(\mu)\big)^{-1} \\
\De^{(n)}(\mu) &= \begin{cases}
          \fQ_n -\fQ_n\,Q_n S_n(\mu) Q_n^* \fQ_n &\text{if $n\ge 1$}\\
          D_0-\mu & \text{if $n=0$}
          \end{cases}
\end{align*}
are well defined operators.  Both $\psi_{*n}^{(\ge 3)}$  and 
$\psi_n^{(\ge 3)}$ are of degree at least three in $(\th_*,\th)$.
\end{proposition}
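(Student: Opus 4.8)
The plan is to reduce the critical field equations \eqref{eqnHTcriticalfield} to a fixed point equation whose linear part is already the asserted leading term, and then to solve it by a contraction mapping argument, exactly as in the proof of Proposition~\ref{propHTexistencebackgroundfields}. The first point is that, because the background fields $\phi_{(*)n}(\psi_*,\psi,\mu,\cV)$ of Proposition~\ref{propHTexistencebackgroundfields} satisfy $\sfrac{\partial}{\partial\phi_*}A_n=\sfrac{\partial}{\partial\phi}A_n=0$, the chain rule contributions coming from the $(\psi_*,\psi)$--dependence of $\phi_{(*)n}$ drop out when one differentiates $A_n\big(\psi_*,\psi,\phi_{*n},\phi_n,\mu,\cV\big)$ with respect to $\psi_{(*)}$; hence $\nabla_{\psi_*}$, resp.\ $\nabla_\psi$, of that expression equals the partial gradient of $A_n$ in its explicit $\psi_*$, resp.\ $\psi$, slot. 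By \eqref{eqnHTAndef}, for $n\ge1$ that explicit dependence sits only in the term $\<\psi_*-Q_n\phi_*,\fQ_n(\psi-Q_n\phi)\>_0$; using that $\fQ_n$ is self--adjoint and writing $Q^*$ for the transpose of $Q$, the equations \eqref{eqnHTcriticalfield} become
\begin{align*}
-\sfrac{a}{L^2}Q^*\big(\th-Q\psi\big)+\fQ_n\big(\psi-Q_n\phi_n(\psi_*,\psi,\mu,\cV)\big)&=0\\
-\sfrac{a}{L^2}Q^*\big(\th_*-Q\psi_*\big)+\fQ_n\big(\psi_*-Q_n\phi_{*n}(\psi_*,\psi,\mu,\cV)\big)&=0
\end{align*}
For $n=0$ there is no background field and, by the formula for $A_0$, the terms $\fQ_n\big(\psi-Q_n\phi_n\big)$, $\fQ_n\big(\psi_*-Q_n\phi_{*n}\big)$ are replaced by $(D_0-\mu)\psi+\nabla_{\psi_*}\cV(\psi_*,\psi)$ and $(D_0^*-\mu)\psi_*+\nabla_\psi\cV(\psi_*,\psi)$, with $D_0^*$ the transpose of $D_0$.

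Next I would insert the decompositions $\phi_n=S_n(\mu)Q_n^*\fQ_n\psi+\phi_n^{(\ge 3)}$ and $\phi_{*n}=S_n(\mu)^*Q_n^*\fQ_n\psi_*+\phi_{*n}^{(\ge 3)}$ of Proposition~\ref{propHTexistencebackgroundfields}. For $n\ge1$ the identity $\fQ_n\big(\bbbone-Q_nS_n(\mu)Q_n^*\fQ_n\big)=\fQ_n-\fQ_nQ_nS_n(\mu)Q_n^*\fQ_n=\De^{(n)}(\mu)$ turns the first equation into $\big(\sfrac{a}{L^2}Q^*Q+\De^{(n)}(\mu)\big)\psi=\sfrac{a}{L^2}Q^*\th+\fQ_nQ_n\phi_n^{(\ge 3)}$, and taking transposes turns the second into $\big(\sfrac{a}{L^2}Q^*Q+\De^{(n)}(\mu)^*\big)\psi_*=\sfrac{a}{L^2}Q^*\th_*+\fQ_nQ_n\phi_{*n}^{(\ge 3)}$; for $n=0$ the same holds with $\De^{(0)}(\mu)=D_0-\mu$ and remainders $-\nabla_{\psi_*}\cV$, $-\nabla_\psi\cV$. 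Since $C^{(n)}(\mu)=\big(\sfrac{a}{L^2}Q^*Q+\De^{(n)}(\mu)\big)^{-1}$, these read
\begin{align*}
\psi&=\sfrac{a}{L^2}C^{(n)}(\mu)\,Q^*\th+C^{(n)}(\mu)\,\fQ_nQ_n\,\phi_n^{(\ge 3)}(\psi_*,\psi,\mu,\cV)\\
\psi_*&=\sfrac{a}{L^2}C^{(n)}(\mu)^*Q^*\th_*+C^{(n)}(\mu)^*\,\fQ_nQ_n\,\phi_{*n}^{(\ge 3)}(\psi_*,\psi,\mu,\cV)
\end{align*}
(with the analogous right hand sides for $n=0$): a fixed point system in $(\psi_*,\psi)$ exhibiting exactly the asserted linear parts. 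Since the remainders are of degree $\ge 3$ in $(\psi_*,\psi)$ --- they are $\phi_{(*)n}^{(\ge 3)}$ for $n\ge1$ and $-\nabla\cV$ (which is cubic) for $n=0$ --- and the fixed point $\psi_{(*)n}$ vanishes at $\th_{(*)}=0$, a degree count in $(\th_*,\th)$ shows that $\psi_n-\sfrac{a}{L^2}C^{(n)}(\mu)Q^*\th$ and $\psi_{*n}-\sfrac{a}{L^2}C^{(n)}(\mu)^*Q^*\th_*$ are of degree at least three, which is the remaining claim.

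Two ingredients are needed to run the contraction. First, $C^{(n)}(\mu)$ must be a well defined operator, bounded uniformly in $0\le n<\np$ and $|\mu|\le\mu_{\rm up}$: $\De^{(n)}(\mu)$ makes sense because $S_n(\mu)=(D_n+Q_n^*\fQ_nQ_n-\mu)^{-1}$ exists and satisfies $\|S_n(\mu)\|_{2m}\le\Gam_\op$ by Theorem~\ref{HTthminvertibleoperators}, and the invertibility of $\sfrac{a}{L^2}Q^*Q+\De^{(n)}(\mu)$, with a uniform bound on the inverse, follows by perturbing the $\mu=0$ statement of \cite[Corollary \corPOCsquareroot]{POA} (which also furnishes a uniform bound on $C^{(n)}=C^{(n)}(0)$), using that $S_n(\mu)-S_n=O(|\mu|)$ uniformly in $n$ by the resolvent identity and Theorem~\ref{HTthminvertibleoperators}. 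Second, one needs quantitative control --- with respect to $\|\cdot\|_{2m}$ and the sup bound $|\th_{(*)}(y)|<\sfrac{\ka(n+1)}{L^{3/2}}$ --- of the remainder maps $C^{(n)}(\mu)\fQ_nQ_n\phi_n^{(\ge 3)}$, $C^{(n)}(\mu)^*\fQ_nQ_n\phi_{*n}^{(\ge 3)}$ and of their first differences on a suitable ball in $\cH_0^{(n)}\times\cH_0^{(n)}$; for $n\ge1$ this is provided by the bounds accompanying Proposition~\ref{propHTexistencebackgroundfields} (proved in \cite[Proposition \propBGEphivepssoln]{BGE}; note that smallness of $\|\cV\|_{2m}\ka(n+1)^2+|\mu|$ forces smallness of $\|\cV\|_{2m}\ka(n)^2+|\mu|$, so that the latter Proposition applies), and for $n=0$ by the elementary estimate of $\nabla\cV$ in terms of $\|\cV\|_{2m}\ka(n+1)^2$. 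Granting these, the right hand side maps the ball into itself and is a contraction there, so the fixed point $\big(\psi_{*n}(\th_*,\th,\mu,\cV),\psi_n(\th_*,\th,\mu,\cV)\big)$ exists, is unique, and --- being a uniform limit of the analytic contraction iterates --- is analytic in $(\th_*,\th)$, and likewise in $\mu$ and $\cV$. The main obstacle is the first ingredient: obtaining invertibility of $\sfrac{a}{L^2}Q^*Q+\De^{(n)}(\mu)$ with a bound on $C^{(n)}(\mu)$ that does not deteriorate with $n$ --- $Q^*Q$ degenerates on the subspace of $\cH_0^{(n)}$ on which the averaging $Q$ is small, and one must check that $\De^{(n)}(\mu)$ remains uniformly nondegenerate there. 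This is where the Green's function estimates of Theorem~\ref{HTthminvertibleoperators} and the block spin machinery of \cite{POA} carry the burden.
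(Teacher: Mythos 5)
Your derivation is correct: the envelope argument eliminates the chain-rule terms, and inserting the decomposition of $\phi_{(*)n}$ from Proposition~\ref{propHTexistencebackgroundfields} turns the critical-field equations into the fixed-point system $\psi_{(*)}=\sfrac{a}{L^2}C^{(n)}(\mu)^{(*)}Q^*\th_{(*)}+C^{(n)}(\mu)^{(*)}\fQ_nQ_n\phi_{(*)n}^{(\ge3)}(\psi_*,\psi,\mu,\cV)$, which a contraction mapping as you sketch would solve; but this is a genuinely different route from the paper's. The paper runs no second fixed-point argument in $(\psi_*,\psi)$ at all. Instead Proposition~\ref{propBGAomnibus} simply writes down the closed formula $\psi_{(*)n}=\big(\sfrac{a}{L^2}Q^*Q+\fQ_n\big)^{-1}\big\{\sfrac{a}{L^2}Q^*\th_{(*)}+\fQ_nQ_n\check\phi_{(*)n+1}\big\}$ in terms of the $(n+1)$--scale background field $\check\phi_{(*)n+1}$, which already exists by Proposition~\ref{propHTexistencebackgroundfields} (scaled), and verifies that this solves the critical-field equations by pure algebra, delegated to \cite{BlockSpin}. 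The trade-off: your approach re-derives existence from scratch and still relies on the $\phi_{(*)n}^{(\ge3)}$ estimates of \cite{BGE} together with new bounds on $C^{(n)}(\mu)$, so it largely duplicates contraction-type work that has already been carried out for the background fields; the paper's algebraic route is more economical, and it simultaneously yields the composition rule Proposition~\ref{propBGAomnibus}.b identifying $\check\phi_{(*)n+1}(\th_*,\th)$ with $\phi_{(*)n}\big(\psi_{*n}(\th_*,\th),\psi_n(\th_*,\th)\big)$ --- a relation that is central to the renormalization group iteration and does not fall out of your direct contraction.
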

\noindent
This Proposition, with more details and bounds,
is proven in Proposition \ref{propBGAomnibus} and
\cite[Proposition \propCFpsisoln and Remark \remCFhatpsitopsi]{BGE}.

\begin{definition}\label{defINTrelmonomial} 
The ``scaling/weight relevant'' monomials are those of type 
\begin{align*}
&\vp\in\fD_\rel 
   = \fD\cup \Big\{(2,0,0)\ ,\  (1,0,1)\ ,\  (3,0,1),\  (4,0,0)\Big\} \\
&\hskip0.4in = \Big\{(2,0,0)\,,\,(1,1,0)\,,\,(1,0,1)\,,\,
          (0,1,1)\,,\,(0,0,2)\,,\,(4,0,0)\,,\,
          (3,0,1)\,,\,(6,0,0)\Big\}
\end{align*}
The motivation for this choice of $\fD_\rel$ is provided in 
\cite[Remark \remOSFdrelmotivation]{PAR2}.
\end{definition}

\begin{theorem}\label{thmTHmaintheorem}
Assume that the parameter $\eps$ of \S\ref{sectINTstartPoint} is sufficiently
small, that the parameter $L$ of Definition  \ref{defHTblockspintr} is 
sufficiently large, and that $\fv_0$ is sufficiently small, 
depending on $\eps$ and $L$.  
There exists, for each  $1\le n\le \np$,
\begin{itemize}[leftmargin=*, topsep=2pt, itemsep=2pt, parsep=0pt]
\item 
a number $\,\mu_n\,$ with 
          $\,\big|\mu_n-L^{2n}(\mu_0-\mu_*)\big|\le \fv_0^{1-\eps}
                   +L^{2n}\fv_0^{\sfrac{4}{3}-15\eps}\,$

\item a quartic interaction $\,\cV_n(\phi_*,\phi)\,$ 
with $\,\|\cV_n-\cV_n^{(u)}\|_{2m}\le \sfrac{1}{L^n}\fv_0^{\sfrac{5}{3}-7\eps}$

\item a $\fS$ invariant polynomial 
$\,\tilde\cR_n(\tilde\phi_*,\tilde\phi)=\sum_{\vp\in\fD}
               \tilde\cR_n^{(\vp)}(\tilde\phi_*,\tilde\phi)\,$
on  $\,\tilde\cH_n\times \tilde\cH_n\,$.
Each $\tilde\cR_n^{(\vp)}$ is a polynomial of type $\vp$
that obeys the bound
\begin{align*}
\big\|\tilde\cR_n^{(\vp)}\big\|_m
 &\le  \begin{cases}
   \fv_0^{2-6\eps} \,L^{-4n} 
    &  \text{if $\vp = (6,0,0)$}
\\ \noalign{\vskip0.1in}
    \fv_0^{1-6\eps}
    &  \text{if $\vp = (1,1,0),\ (0,1,1),\ (0,0,2)$}
\end{cases}
\end{align*}

\item
an $\fS$ invariant analytic function  
$\,
\tilde\cE_n(\tilde\psi_*,\tilde\psi)
\,$
with $\tilde\cE_n(0,0)=0$, whose power series expansion does not
contain scaling/weight relevant monomials 
and which has norm
\begin{equation*}
\|\tilde\cE_n\|^{(n)} \le \fv_0^\eps
\end{equation*}

\item 
and a normalization constant $\cZ_n$
\end{itemize}
such that
\begin{align*}
&
\Big( (\bbbs \bbbt_{n-1}^{(SF)}) \circ(\bbbs \bbbt_{n-2}^{(SF)})\circ
   \cdots \circ (\bbbs \bbbt_0^{(SF)}) \Big)
\Big(e^{\cA_0} \Big)
\\
&\hskip 2cm
= \sfrac{1}{\cZ_n}\exp\Big\{- A_n(\psi_*,\psi, \phi_{*n}, \phi_n,\,\mu_n,\cV_n)
+\cR_n(\phi_{*n},\phi_n)+\cE_n(\psi_*,\psi)
\Big\}
\end{align*}
with the $\,A_n\,$ and the background fields 
$$
\phi_{*n} = \phi_{*n}(\psi_*,\psi,\mu_n,\cV_n)
\qquad
\phi_{n} = \phi_{n}(\psi_*,\psi,\mu_n,\cV_n)
$$
as in Definition \ref{defHTbackgrounddomaction}, and
\begin{align*}
\cR_n(\phi_*,\phi )
&= \tilde\cR_n\big((\phi_*,\{\partial_\nu\phi_*\}),(\phi,\{\partial_\nu\phi\})
          \big)
\\
\cE_n(\psi_*,\psi )
&= \tilde\cE_n\big((\psi_*,\{\partial_\nu\psi_*\}),(\psi,\{\partial_\nu\psi\})
          \big)
\end{align*}
Here, for each $0\le j\le n-1$, $\bbbt_j^{(SF)}$ is the approximate
block spin transformation of Definition \ref{defHTapproximateblockspintr} 
with chemical potential $\mu=\mu_j$ and quartic interaction $\cV_j$.
\end{theorem}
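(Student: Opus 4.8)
The plan is to prove Theorem~\ref{thmTHmaintheorem} by induction on $n$, one inductive step being a single application of $\bbbs\,\bbbt_n^{(SF)}$. The base case $n=0$ is essentially definitional: by \eqref{eqnHTaZero} the starting point $e^{\cA_0}$ already has the asserted form, with $\cZ_0=1$, with no background field (the averaging $Q_0$ is the identity, so $A_0$ has no $\phi$ arguments), and with $\mu_0,\cV_0,\tilde\cR_0,\tilde\cE_0$ meeting the $n=0$ versions of the stated bounds --- this being exactly the Starting Point Setup of \S\ref{sectINTstartPoint}, which in turn is verified for the genuine ultraviolet output in Corollary~\ref{corSZstepzeroConditions}. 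So suppose the representation holds at some scale $n$ with $0\le n<\np$; we must construct $\mu_{n+1},\cV_{n+1},\cR_{n+1},\cE_{n+1},\cZ_{n+1}$ and verify their bounds.

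\textbf{The algebraic step (carried out in this paper).} By Definition~\ref{defHTapproximateblockspintr}, applying $\bbbt_n^{(SF)}$ to $\exp\{-A_n(\psi_*,\psi,\phi_{*n},\phi_n,\mu_n,\cV_n)+\cR_n(\phi_{*n},\phi_n)+\cE_n(\psi_*,\psi)\}$ amounts to the change of variables $\psi_{(*)}=\psi_{(*)n}(\th_*,\th,\mu_n,\cV_n)+D^{(n)(*)}\ze_{(*)}$ about the critical field of Proposition~\ref{propHTexistencecriticalfields}, followed by an integration over $\ze$ with $|\ze(x)|\le r_n$. Since $(\psi_{*n},\psi_n)$ solves the critical field equations, the $\ze$-linear term drops; the $\ze$-quadratic term is $\<\ze_*,\ze\>$ by the choice of $D^{(n)}$ as an operator square root of $C^{(n)}$ (Theorem~\ref{HTthminvertibleoperators} and \cite{POA}), up to a $\mu_n$-dependent perturbation measuring the difference between $\De^{(n)}(\mu_n)$ and $\De^{(n)}$; and the value of $\sfrac{a}{L^2}\<\th_*-Q\psi_*,\th-Q\psi\>_{-1}+A_n$ \emph{at} the critical field equals, after the scaling $\bbbs$, precisely $A_{n+1}(\th_*,\th,\phi_{*n+1},\phi_{n+1},\mu_{n+1},\cV_{n+1})$ --- with the \emph{next}-scale background fields and with $\cR$-type kernels carried forward unchanged --- by the composition law Proposition~\ref{propBGAomnibus}.b, which is the whole reason (Remark~\ref{remINTwhybackground}) for expressing $A_n$ and $\cR_n$ through the background fields rather than directly through $\psi_{(*)}$. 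Combining the change of variables, the identification and diagonalization of the quadratic form, and this passage to $A_{n+1}$ is the algebraic content of this paper; see Lemma~\ref{lemSTdeA} and \cite{BlockSpin}.

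\textbf{The fluctuation integral, localization, renormalization, scaling (deferred to later parts).} One then performs the fluctuation integral $\int_{|\ze(x)|\le r_n}\prod_x\sfrac{d\ze(x)^*\wedge d\ze(x)}{2\pi i}\;e^{-\<\ze_*,\ze\>}\,(\cdots)$, expanding the exponentiated remainder --- the higher $\ze$-orders of $A_n$ and of $\cR_n(\phi_{*n},\phi_n)$, the function $\cE_n$, and the quadratic perturbation above --- and integrating term by term, with the kernels bounded in the norms of Definitions~\ref{defHTkernelnorm}--\ref{defHTabstractnorm}. The pure Gaussian factor updates $\cZ_n\mapsto\cZ_{n+1}$; the remaining connected pieces assemble into an analytic function of $(\th_*,\th)$ to which a \emph{localization} operator is applied, splitting it into the part spanned by the ``scaling/weight relevant'' monomial types $\vp\in\fD_\rel$ of Definition~\ref{defINTrelmonomial} and an irrelevant remainder. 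The relevant part is \emph{renormalized}: its low-degree pieces are reabsorbed by updating $\mu_n\mapsto\mu_{n+1}$ and $\cV_n\mapsto\cV_{n+1}$ (and the corresponding background fields), and, for the types $\vp\in\fD$, by adding the corresponding monomials to $\tilde\cR_n$ to form $\tilde\cR_{n+1}$; the precise allocation among the types $(2,0,0),(1,0,1),(4,0,0),(3,0,1)$ and $\fD$ is the content of \cite{PAR2}. The irrelevant remainder, together with the part of $\cE_n$ not reabsorbed, becomes $\tilde\cE_{n+1}$, which by construction still contains no monomials of type $\vp\in\fD_\rel$. Finally $\bbbs$ is applied: by Remark~\ref{remHTbasicremarkonscaling}.b the kinetic terms are marginal, while parabolic scaling multiplies $\mu$ by $L^2$, sends $\cV_n^{(u)}$-type kernels to $\cV_{n+1}^{(u)}$ (the factor $L^{14n}$ of Definition~\ref{defHTbackgrounddomaction}.b being absorbed), contracts the irrelevant monomials, and replaces the weights $\ka(n),\ka'(n)$ by $\ka(n+1),\ka'(n+1)$. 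Balancing these $L$-powers against the $\fv_0$-powers of the inductive hypothesis yields the claimed estimates --- in particular $\mu_{n+1}\approx L^2\mu_n$, hence the $L^{2(n+1)}(\mu_0-\mu_*)$ asymptotics, $\|\cV_{n+1}-\cV_{n+1}^{(u)}\|_{2m}\le \sfrac{1}{L^{n+1}}\fv_0^{5/3-7\eps}$, the $\tilde\cR_{n+1}^{(\vp)}$ bounds, and $\|\tilde\cE_{n+1}\|^{(n+1)}\le\fv_0^\eps$. Throughout, $\fS$-invariance, reality of kernels, particle-number preservation, $\tilde\cE_{n+1}(0,0)=0$ and degree $\ge 4$ in each of $\psi_*,\psi$ pass to the next scale because every operation used is $\fS$-equivariant and preserves these features.

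\textbf{Main obstacle.} The crux is that $\mu_n$ is the lone \emph{growing} coupling, $\mu_n\sim L^{2n}(\mu_0-\mu_*)$. The background- and critical-field constructions (Propositions~\ref{propHTexistencebackgroundfields},~\ref{propHTexistencecriticalfields}) and the invertibility of $D_n+Q_n^*\fQ_nQ_n-\mu$ (Theorem~\ref{HTthminvertibleoperators}) all demand that $\|\cV\|_{2m}\ka(n)^2+|\mu|$ be small, and the field weight $\ka(n)=\sfrac{L^{\eta n}}{\fv_0^{1/3-\eps}}$ itself grows with $n$; so both the small-field region and the margin in every contraction argument shrink as $n$ increases. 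This is precisely why $\np$ is fixed (Definition~\ref{defHTbasicnorm}.b) so that $L^{2\np}(\mu_0-\mu_*)\le(\sfrac{\fv_0}{L^{\np}})^{5\eps}$ remains small: the flow can only be iterated while the relevant coupling stays controlled. A second delicate point is that the chemical-potential counterterm peeled off at each step must be tuned so that the accumulated $\mu_n$ has \emph{exactly} leading term $L^{2n}(\mu_0-\mu_*)$ with $\mu_*$ as in \eqref{eqnINTmustardef}; this pins down the normalization of $\mu_*$ and is where Lemma~\ref{lemSZexplicitmustar} enters. Finally, the genuine analytic heart is the non-accumulation of $\cE_n$ --- showing that the $L$-contraction of irrelevant monomials beats the growth of $\ka(n),\ka'(n)$, the feedback from $\cR_n$, and the $r_n$-dependence of the fluctuation integral, so that $\|\tilde\cE_n\|^{(n)}$ stays bounded by $\fv_0^\eps$ for all $n\le\np$; this is the main technical burden, deferred to \cite{PARL}.
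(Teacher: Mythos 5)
Your architecture is the paper's architecture: the proof is indeed by induction on $n$, one inductive step being $\bbbs\,\bbbt_n^{(SF)}$, with the algebraic content (change of variables around the critical field, diagonalization of the $\ze$-quadratic form via $D^{(n)}$, passage to $\check A_{n+1}$ via the composition law Proposition~\ref{propBGAomnibus}.b) carried out here in Proposition~\ref{propSTmainProp} and Corollary~\ref{corSTmainCor}, and the fluctuation integral, localization, renormalization and scaling deferred to \cite{PAR2}. Your identification of the growing coupling $\mu_n$ as the chief obstacle, and of $\np$ as the horizon on which it stays controlled, is also accurate. One small imprecision: what the composition law yields after scaling is $A_{n+1}$ with arguments $L^2\mu_n$ and $\bbbs\cV_n$, not $\mu_{n+1}$ and $\cV_{n+1}$ --- the latter are produced only afterwards, by the renormalization step you correctly describe in the following paragraph.

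The genuine gap is in the claim that ``balancing these $L$-powers against the $\fv_0$-powers of the inductive hypothesis yields the claimed estimates.'' The bounds displayed in the statement of Theorem~\ref{thmTHmaintheorem} do \emph{not} close under induction. To take the simplest instance: if one only knows $|\mu_n - L^{2n}(\mu_0-\mu_*)|\le \fv_0^{1-\eps}+L^{2n}\fv_0^{4/3-15\eps}$ and the step produces $\mu_{n+1}\approx L^2\mu_n + O(\text{per-step error})$, then the $\fv_0^{1-\eps}$ piece gets multiplied by $L^2$ and the putative induction loses a factor of $L^2$ at every step. The same difficulty recurs for $\cV_n$ and for $\tilde\cR_n^{(\vp)}$. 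This is exactly why the paper, in Remark~\ref{remHTpreciseinduction}, replaces the clean bounds by strictly stronger, cumulative ones: the reference chemical potential $\mu_n^*$ (which itself absorbs a scale-dependent tadpole-type correction) and the telescoping bounds \eqref{eqnHTinductiveVmuestimates}--\eqref{eqnHTinductiveRestimates} with the functions $\fr_\vp(n,C)$ and the convergent products $\Pi^n_\ell(C)$. The induction is closed on \emph{those} hypotheses, and only afterwards --- via Corollary~\ref{corPARmunvn}, Lemma~\ref{lemPARcompradan}.d and the companion estimates in Appendix~\ref{appIneq} --- are the theorem's simple bounds deduced. Without this strengthening there is no inductive proof, so this is a missing idea rather than an omitted detail. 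Relatedly, the role of Lemma~\ref{lemSZexplicitmustar} is just to express the $\mu_*$ of \eqref{eqnINTmustardef} in the Fourier-transform form \eqref{eqnHTdefcriticalmu} for the purpose of Corollary~\ref{corSZstepzeroConditions}; the mechanism by which $\mu_*$ becomes the correct subtraction in the flow is the identification of $\mu_*$ as the limit of the $\mu_n^*$, established in \cite[Appendix \appMustar]{PAR2}.
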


\begin{remark}\label{remHTpreciseinduction}
Theorem \ref{thmTHmaintheorem} will be proven by induction on $n$. 
The proof runs over \S\ref{chapSTstrategy}--\cite[\S\chapRen]{PAR2}, 
and is completed at the end of \cite[\S\chapRen]{PAR2}.
In the inductive proof for the estimates on $\mu_n$, $\cV_n$ and $\tilde\cR_n$, 
we will prove  slightly stronger estimates that are more suited for the 
induction.
Set $\mu^*_0=\mu_0$ and, for $n\ge 1$,
\begin{equation*}
\mu_n^* = L^{2n}\mu_0
 -\sfrac{2}{|\cX_0^{(n)}|}
\int_{\cX_n^4} du_1 \cdots du_4\  
    V_n^{(u)}(u_1,u_2,u_3,u_4)\ S_n(u_4,u_1)
\end{equation*}
Then we will show that there are constants $\CC_{\de\cV}$ and $\CC_\cR$
such that, for all $0\le n\le \np$,
\begin{equation}\label{eqnHTinductiveVmuestimates}
\begin{split}
|\mu_n-\mu_n^*| &\le L^{2n}\,\fv_0^{1-8\eps}
     \smsum_{\ell=1}^n \sfrac{1}{L^{(2-3\eps)\ell}}
        \big[\fv_0^{\sfrac{1}{3}-5\eps} +L^{2\ell}(\mu_0-\mu_*)\big]
\\
{\|\cV_n-\cV_n^{(u)}\|}_{2m}
&\le \sfrac{\CC_{\de\cV}}{L^n} 
    \smsum_{\ell=1}^n \sfrac{L^\ell }{\ka(\ell)^4}\fe_\fl(\ell-1)
\end{split}
\end{equation}
and 
\begin{equation}\label{eqnHTinductiveRestimates} 
\big\|\tilde\cR_n^{(\vp)}\big\|_m \le \fr_{\vp}(n,\CC_\cR)
\end{equation}
where
\begin{align*}
\fr_{\vp}(n,C)
=\begin{cases}
 \Pi^n_0(C)\ \fr_{\vp}(0)
 +C \sum\limits_{\ell=1}^{n} 
   \sfrac{ \fe_\fl(\ell-1) }{\ka(\ell) \ka'(\ell)}
    \Pi^n_\ell(C) & \text{if $\vp =(1,1,0)$} \\
 \sfrac{1}{L^n}\Pi^n_0(C)\ \fr_{\vp}(0)
 +\sfrac{C}{L^n} \sum\limits_{\ell=1}^{n} 
   L^{\ell}\sfrac{\fe_\fl(\ell-1) }{\ka'(\ell)^2}
    \Pi^n_\ell(C) &\text{if $\vp =(0,1,1)$} \\
 \Pi^n_0(C)\ \fr_{\vp}(0)
 +C \sum\limits_{\ell=1}^{n} 
   \sfrac{ \fe_\fl(\ell-1) }{\ka'(\ell)^2}
    \Pi^n_\ell(C) &\text{if $\vp =(0,0,2)$}\\
 \sfrac{1}{L^{4n}}\Pi^n_0(C)\ \fr_{\vp}(0)
 +\sfrac{C}{L^{4n}} \sum\limits_{\ell=1}^{n} 
   L^{4\ell}\sfrac{\fe_\fl(\ell-1) }{\ka(\ell)^6}
    \Pi^n_\ell(C) &\text{if $\vp =(6,0,0)$}
\end{cases} 
\end{align*}
with
\begin{align*}
\Pi^n_\ell(C)
=\prod_{j=\ell+1}^n\Big(1+C\sfrac{\fe_\fl(j-1)}{\ka(j)^2}\Big)
\end{align*}
More precisely, we shall show that there are constants $\CC_{\de\cV}$ 
and $\CC_\cR$ such that if \eqref{eqnHTinductiveVmuestimates} and 
\eqref{eqnHTinductiveRestimates} are valid for some $n\ge 0$, they are also
valid for $n+1$. Observe that \eqref{eqnHTinductiveVmuestimates} and 
\eqref{eqnHTinductiveRestimates} are trivially satisfied when $n=0$.
See 
\cite[Lemma \lemRENmunppties]{PAR2}, for $|\mu_n-\mu^*_n|$,
\cite[Lemma \lemRENreninteraction]{PAR2}, for the construction of 
           $\CC_{\de\cV}$,
and \cite[Lemma \lemRENcRcE]{PAR2}, for the construction of $\CC_\cR$.

\noindent
That \eqref{eqnHTinductiveVmuestimates} and \eqref{eqnHTinductiveRestimates}
imply the bounds on $\mu_n$, $\cV_n$ and $\tilde\cR_n$ of Theorem
\ref{thmTHmaintheorem} is proven in Corollary \ref{corPARmunvn}
and Lemma \ref{lemPARcompradan}.d.

\end{remark}

We are grateful for some very useful comversations with 
Martin Lohmann and Serena Cenatiempo.

\subsection{Outline}\label{sectINToutline}

\noindent
The rest of this paper contains the more algebraic parts of the proof of
Theorem \ref{thmTHmaintheorem}. The more analytical part of the proof is
given in \cite{PAR2}. 
Here is a more detailed outline of these two papers, 
as well as an indication of their connections to \cite{PARL,BGE,POA}.
\begin{itemize}[leftmargin=*, topsep=2pt, itemsep=2pt, parsep=0pt]
\item 
In Appendix \ref{appSZrewrite}, we review the results of 
\cite{UV} and rewrite the main output of that paper in the form
of  \S\ref{sectINTstartPoint}.

\item
\S\ref{chapSCscaling} provides a number of simple preparatory
results regarding the interaction of the scaling operation
of Definition \ref{defHTscaling}  with objects that
will be encountered during the course of the construction. 

\item
Various algebraic properties of the background and critical
fields, such as the ``composition rule'', are reviewed in 
\S \ref{chapBGAbackgroundAlg}.

\item
The algebraic steps of the application of one
block spin transformation, $\bbbt_n^{(SF)}$, and subsequent 
scaling, $\bbbs$, leading up to the formulation of the 
``fluctuation integral'', are performed in \S\ref{chapSTstrategy}.

\item
The fluctuation integral is evaluated and bounded in 
\cite[\S\chapOSFfluct]{PAR2}.

\item
In \cite[\S\chapRen]{PAR2}, the output of the fluctuation integral 
is reorganized to complete the inductive proof of 
Theorem \ref{thmTHmaintheorem}. Part of the reorganization is the 
renormalization of the chemical potential and the interaction.
\medskip

\item
The translation and reflection symmetries we use
are discussed in Appendix \ref{appSYsymmetry}.


\item
A large number of inequalities relating our 
weight factors and various other parameters are proven in Appendix
\ref{appIneq}. In particular we prove that the detailed inequalities
of Remark \ref{remHTpreciseinduction} imply the simple bounds of 
Theorem \ref{thmTHmaintheorem}.
 
\item
Localization operations and decompositions, that are used
in the renormalization of the chemical potential, are discussed
in \cite[Appendix \appLocal]{PAR2}.

\item
In \cite[Appendix \appMustar]{PAR2} we identify the $\mu_*$
of \eqref{eqnHTdefcriticalmu} as the limit of the $\mu_n^*$'s 
of Remark \ref{remHTpreciseinduction}.

\item
The effect of scaling on the norms used in this paper is
discussed in \cite[Appendix \appSCscaling]{PAR2}. This is used to identify 
``scaling relevant and irrelevant'' monomials. See 
Definition \ref{defINTrelmonomial} and 
\cite[Remark \remOSFdrelmotivation]{PAR2}.

\medskip
\item
In \cite{PARL}, we give reasons, on a handwaving
level, why we expect that the errors introduced by approximating 
the blockspin transformation $\bbbt$ by the small field blockspin
transformation $\bbbt^{(SF)}$ are nonperturbatively small.

\medskip
\item
Estimates on the background and critical fields are
crucial for our construction. They are proven in \cite{BGE}. 
The upper bounds of that paper involve a number of
constants $\GGa_1, \GGa_2, \cdots$ that are all independent of $L$ 
and the scale index $n$. In \cite[Convention \convBGEconstants]{BGE}, we define $\GGa_\bg$ to be the maximum of the $\GGa_j$'s. We shall refer only
to $\GGa_\bg$, as opposed to the $\GGa_j$'s, in the main body of this paper.

\medskip
\item
The estimates in this paper, and in particular, the bounds on the background
and critical fields, depend heavily on bounds on various linear operators
like the averaging operators of Definitions \ref{defHTbackgrounddomaction}.a
and \ref{defHTbasicnorm}.d, the covariances \ref{remHTbasicnorm}, 
and the Green's functions of Theorem \ref{HTthminvertibleoperators}. 
Such bounds have been proven in \cite{POA}. They involve constants $\Gam_1, \Gam_2, \cdots$ that are 
all independent of $L$ and $n$. In \cite[Convention \convPOconstants]{POA}, 
we define $\Gam_\op$ to be the maximum of the $\Gam_j$'s. Again, 
we shall refer only to $\Gam_\op$, as opposed to the $\Gam_j$'s, in 
this paper.
\end{itemize}

\medskip
\noindent Here are the conventions that we use in naming the various
constants that appear in this paper.
\begin{itemize}[leftmargin=*, topsep=2pt, itemsep=2pt, parsep=0pt]
\item
The constants $\Gam_\op$ and $\mu_{\rm up}$ were defined in
\cite[Convention \convPOconstants\ and Proposition \POGmainpos]{POA}, 
respectively.  They are independent of $n$ and $L$.

\item 
The constants $\GGa_j$'s and their maximum $\GGa_\bg$
and the constants $\rrho_j$, and their minimum $\rrho_\bg$ are defined
in \cite{BGE}. 
They are independent of $n$ and $L$.

\item
The constants $\CC_{\de\cV}$, $\CC_\cR$, $\CC_\fl$,
and $\CC_\ren$ are the more important $n$ and $L$ independent constants
of the main body of this paper. They depend only on $\Gam_\op$, $\GGa_\bg$, $\rrho_\bg$ and $m$.

\item The constants $\LLa_{\de\mu}$, $\LLa_j$ and $\LLa'_j$ are 
independent of $n$, but depend on $L$.

\item 
The constants $\cc_\loc$, $\cc_A$, $\cc_\Om$, $\cc_{\de\cV}$,
$\GGa_\Phi$, $\cc_\gar$, $\cc_{\mu_*}$ and $\cc_j$ are the less important $n$ 
and $L$ independent constants.
They depend only on $\Gam_\op$, $\GGa_\bg$, $\rrho_\bg$ and $m$.
\end{itemize}

\newpage
\section{Scaling}\label{chapSCscaling}

We extend Definition \ref{defHTscaling} to
\begin{definition}[Scaling]\label{defSCscaling}
\ 
\begin{enumerate}[label=(\alph*), leftmargin=*]
\item
As in Definition  \ref{defHTbackgrounddomaction}.a, let $\bbbl$ be 
the linear isomorphism
\begin{equation*}
\bbbl : \cX_j^{(k)}\rightarrow \cX_{j-1}^{(k)}
    \qquad , \qquad
  (u_0,\bu) \mapsto (L^2u_0, L\bu)
\end{equation*}
For a field $\,\al\,$ on $\cX_{j-1}^{(k)} $, we define the scaled fields 
\begin{equation*}
(\bbbs\al)(u)= L^{3/2}\,\al\big(\bbbl u\big)\qquad
(\bbbs_\nu\al)(u)= \left.\begin{cases}L^{7/2}& \text{if $\nu=0$}\\
                                 L^{5/2}& \text{if $\nu\in\{1,2,3\}$}
                          \end{cases}\right\}
                     \,\al\big(\bbbl u\big)
\end{equation*}
on $\,\cX_j^{(k)}\,$. For $\tilde\al = 
               \big(\al,\{\al_\nu \}_{\nu=0}^3\big)\in\tilde\cH_{j-1}^{(k)}$
as in \eqref{eqnTHdefexpandedstates}, we define 
\begin{equation*}
\bbbs\tilde\al= \big(\bbbs\al,
            \{\bbbs_\nu\al_\nu \}_{\nu=0}^3\big)\in\tilde\cH_j^{(k)}
\end{equation*}

\item
For a complex valued function $\,F(\al_*,\al)\,$ of fields on
$\cX_{j-1}^{(k)} $, we define the function $\,(\bbbs F)(\be_*,\be)\,$
of fields on $\,\cX_j^{(k)}\,$ by
\begin{equation*}
(\bbbs F)(\be_*,\be)
   = F\big(\bbbs^{-1}\be_*,\bbbs^{-1}\be)
\end{equation*}
Similarly, for a function $\,\tilde F(\tilde \al_*,\tilde \al)\,$ on subset
of $\tilde\cH_{j-1}^{(k)}\times \tilde\cH_{j-1}^{(k)}$, we define the function 
$\,(\bbbs\tilde F)(\tilde \be_*,\tilde \be)\,$ on a corresponding subset
of $\tilde\cH_{j}^{(k)}\times \tilde\cH_{j}^{(k)}$  by
\begin{equation*}
(\bbbs \tilde F)(\tilde \be_*,\tilde \be)
   = \tilde F\big(\bbbs^{-1}\tilde\be_*,\bbbs^{-1}\tilde\be)
\end{equation*}
\end{enumerate}
\end{definition}

\begin{remark}\label{remSCscaling}
\ 
\begin{enumerate}[label=(\alph*), leftmargin=*]
\item  
The definition of $\bbbs$, acting on $\cH_{j-1}^{(k)}$, 
can be rephrased, using the notation $\bbbl_*$ of 
Definition  \ref{defHTbackgrounddomaction}.a, as
$
\bbbs =L^{3/2}\bbbl_*^{-1}
$.
In particular conjugation with $\bbbs$ is the same as conjugation with
$\bbbl_*^{-1}$.

\item 
The definition of $\bbbs_\nu$ is motivated by
\begin{equation*}
\bbbs_\nu\partial_\nu=\partial_\nu \bbbs\qquad 0\le\nu\le 3
\end{equation*}
If $\,\tilde F(\tilde \al_*,\tilde \al)\,$ is a function on a subset
of $\tilde\cH_{j-1}^{(k)}\times \tilde\cH_{j-1}^{(k)}$ and
\begin{equation*}
F(\al_*,\al)=\tilde F\big((\al_*,\{\partial_\nu\al_*\})\,,\,
                           (\al,\{\partial_\nu\al\})\big)
\end{equation*}
then
\begin{equation*}
(\bbbs F)(\be_*,\be)=(\bbbs\tilde F)        
                    \big((\be_*,\{\partial_\nu\be_*\})\,,\,
                            (\be,\{\partial_\nu\be\})\big)
\end{equation*}

\item
For $\al,\al'\in \cH_{j-1}^{(k)}$,
\begin{equation*}
\<\bbbs\al,\bbbs\al'\>_j
= L^{-2}\<\al,\al'\>_{j-1}
\end{equation*}

\item
The inverse map
$
\bbbs^{-1}:\cH_j^{(k)}\rightarrow\cH_{j-1}^{(k)}
$
is given by
$
(\bbbs^{-1}\be)(u)= L^{-3/2}\be(\bbbl^{-1}u)
$.
The adjoint $\bbbs^*=L^{-2}\bbbs^{-1}$, by part (c).

\item
By Definition \ref{defSCscaling}.b,
\begin{equation*}
\int\Big[\hskip-2pt
     \prod_{u\in\cX_j^{(k)}}\hskip-5pt\sfrac{d\be(u)^*\wedge d\be(u)}
                                                                  {2\pi i}\Big]
     (\bbbs F)(\be^*,\be)
=N^{(k)}_\bbbs\int\Big[\hskip-2pt
     \prod_{v\in\cX_{j-1}^{(k)}}\hskip-5pt\sfrac{d\al(v)^*\wedge d\al(v)}
                                                                 {2\pi i}\Big]
     F(\al^*,\al)
\end{equation*}
where the normalization constant $N^{(k)}_\bbbs=\big(L^3\big)^{|\cX_j^{(k)}|}
=\big(L^3\big)^{|\cX_{j-1}^{(k)}|}$.

\item 
For a complex valued function $\,F(\al_*,\al)\,$ of fields on
$\cX_{j-1}^{(k)} $,
\begin{equation*}
\sfrac{\partial\hfill}{\partial \be(u)} (\bbbs F)(\be_*,\be)
=L^{-3/2}\sfrac{\partial\hfil F\hfil}{\partial \al(\bbbl u)}
        \big(\bbbs^{-1}\be_*,\bbbs^{-1}\be\big)
\end{equation*}

\item
Let $A:\cH_{j-1}^{(k)}\rightarrow\cH_{j-1}^{(k)}$ be a linear operator with
kernel $A(\ \cdot\ ,\ \cdot\ )$. Then the kernel of 
$\bbbs A\bbbs^{-1}:\cH_j^{(k)}\rightarrow\cH_j^{(k)}$ is
\begin{equation*}
\big(\bbbs A\bbbs^{-1}\big)(u,u')
=L^5\,A\big(\bbbl u,\bbbl u'\big)
\end{equation*}

\item
Let
\begin{equation*}
\tilde\cM\big(\,(\al_*,\{\al_{*\nu}\})\,,\,
                            (\al,\{\al_\nu\})\,\big)
=\int_{\cX_{j-1}^{(k)}} dv_1\cdots dv_n\ M(v_1,\cdots,v_n)
             \smprod_{\ell=1}^n\al_{\si_\ell}(v_\ell)
\end{equation*}
be a monomial of degree $n$. Here each $\al_{\si_\ell}$ is one of 
$\al_*,\al,
          \big\{\al_{*\nu},\al_\nu\big\}_{\nu=0}^3$.
We denote by  
\begin{itemize}[leftmargin=*, topsep=2pt, itemsep=2pt, parsep=0pt]
\item 
$n_u$, the number of $\al_{\si_\ell}$'s that is either 
$\al_*$ or $\al$ and 
\item
$n_0$, the number of $\al_{\si_\ell}$'s that is either 
$\al_{*0}$ or $\al_0$ and 
\item
$n_\sp$, the number of $\al_{\si_\ell}$'s that is one of 
$\big\{\al_{*\nu},\al_\nu\big\}_{\nu=1}^3$.
\end{itemize}
Then
\begin{equation*}
\big(\bbbs\tilde\cM\big)\big(\,(\be_*,\{\be_{*\nu}\})\,,\,
                            (\be,\{\be_\nu\})\,\big)
=\int_{\cX_j^{(k)}} du_1\cdots du_n\  M^{(s)}(u_1,\cdots,u_n)
             \smprod_{\ell=1}^n\be_{\si_\ell}(u_\ell)
\end{equation*}
has kernel
\begin{equation*}
 M^{(s)}(u_1,\cdots,u_n)
=L^{\frac{7}{2}n_u +\frac{3}{2}n_0+\frac{5}{2}n_\sp}  
    M(\bbbl u_1,\cdots,\bbbl u_n)            
\end{equation*}

\end{enumerate}
\end{remark}
\begin{definition}\label{defSCacheck}
Let $n\ge 1$. 
The dominant contribution, $A_n$, to the effective action
was defined in Definition \ref{defHTbackgrounddomaction}.b.
Its scaled version is
\begin{equation*}
\check A_n(\th_*,\th,\check\phi_*,\check\phi,\mu,\cV)
=(\bbbs^{-1}A_n)(\th_*,\th,\check\phi_*,\check\phi,L^2\mu,\bbbs\cV)
=A_n(\bbbs\th_*,\bbbs\th,\bbbs\check\phi_*,\bbbs\check\phi,L^2\mu,\bbbs\cV)
\end{equation*}
where $\th_*,\th\in \cH_{-1}^{(n)}$, 
$\check\phi_*,\check\phi\in \cH_{n-1}^{(0)}$,
$\mu\in\bbbc$ and $\cV$ is a quartic monomial in the fields $\phi_*,\phi$.

\end{definition}

\begin{lemma}\label{lemSCacheckOne}
\ \begin{enumerate}[label=(\alph*), leftmargin=*]
\item  For each $n\ge 0$, $\cV^{(u)}_n=\bbbs^n\cV_0$,
            $D_n=L^{2n}\,\bbbs^n D_0\bbbs^{-n}$ and
            $Q^{(n)}=\bbbs^n Q\bbbs^{-n}$.

\item  Set, for each $n\ge 1$,  $\check Q_n = \bbbs^{-1} Q_n\bbbs$
and $\check \fQ_n=\sfrac{1}{L^2}\bbbs^{-1} \fQ_n\bbbs$. Then
$\check Q_n=QQ_{n-1}$ (with $Q_0=\bbbone$) and 
\begin{equation*}
\check \fQ_n
= \begin{cases}
   \sfrac{a}{L^2}\bbbone & \text{if $n=1$}\\
   \noalign{\vskip0.05in}
   \big(\sfrac{L^2}{a}\bbbone+Q\,\fQ_{n-1}^{-1}Q^*\big)^{-1} & \text{if $n\ge 2$}
    \end{cases}
\end{equation*}

\item
For all $n\ge 1$,
\begin{align*}
\check A_n(\th_*,\th,\check\phi_*,\check\phi,\mu,\cV)
&=\<\th_*-QQ_{n-1}\check\phi_*\,,\,
           \check\fQ_n\big(\th-QQ_{n-1}\check\phi\big)\>_{-1}
 +      \< \check\phi_*,\,D_{n-1}\check\phi\>_{n-1} \\
&\hskip2in     + \cV(\check\phi_*,\check\phi)
-\mu \< \check\phi_*,\,\check\phi\>_{n-1}
\end{align*}
In particular
$
\check A_1(\th_*,\th,\psi_*,\psi,\mu,\cV)
= \sfrac{a}{L^2}\< \th_*\!-Q\psi_*,\th-Q\psi\>_{-1} + A_0(\psi_*,\psi,\mu,\cV)
$
\end{enumerate}
\end{lemma}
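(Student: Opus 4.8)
The plan is to establish (a), (b), (c) in that order; each is a purely algebraic statement, proved by unwinding the definitions and then feeding in the elementary scaling identities collected in Remark~\ref{remSCscaling}. The one computation that genuinely needs care is the resolvent identity for $\check\fQ_n$ in part~(b).

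\emph{Part (a).} By Remark~\ref{remSCscaling}.a, $\bbbs=L^{3/2}\bbbl_*^{-1}$, so conjugation by $\bbbs$ coincides with conjugation by $\bbbl_*^{-1}$. Since Definition~\ref{defHTbackgrounddomaction}.a gives $D_n=L^{2n}\bbbl_*^{-n}D_0\bbbl_*^{n}$ and $Q^{(n)}=\bbbl_*^{-n}Q\,\bbbl_*^{n}$, this immediately yields $D_n=L^{2n}\bbbs^{n}D_0\bbbs^{-n}$ and $Q^{(n)}=\bbbs^{n}Q\,\bbbs^{-n}$. For $\cV_n^{(u)}=\bbbs^{n}\cV_0$ I would iterate Remark~\ref{remSCscaling}.h: $\cV_0$ is a quartic monomial with $n_u=4$, $n_0=n_\sp=0$, so one application of $\bbbs$ multiplies its kernel by $L^{\frac72\cdot4}=L^{14}$ and replaces each argument $u_i$ by $\bbbl u_i$; after $n$ steps the kernel is $L^{14n}V_0(\bbbl^{n}u_1,\cdots,\bbbl^{n}u_4)=V_n^{(u)}$, which is exactly the kernel of $\cV_n^{(u)}$.

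\emph{Part (b).} Writing $Q_n=Q^{(1)}\circ\cdots\circ Q^{(n)}$ and inserting $Q^{(j)}=\bbbs^{j}Q\,\bbbs^{-j}$ from part~(a), the intermediate powers of $\bbbs$ telescope to give $Q_n=(\bbbs Q)^{n}\bbbs^{-n}$, hence $\check Q_n=\bbbs^{-1}Q_n\bbbs=Q(\bbbs Q)^{n-1}\bbbs^{-(n-1)}=Q\,Q_{n-1}$, with $Q_0=\bbbone$; the case $n=1$ is read off directly. For $\check\fQ_n$ I would substitute the explicit formula of Definition~\ref{defHTbackgrounddomaction}.b into $\check\fQ_n=\sfrac1{L^2}\bbbs^{-1}\fQ_n\bbbs$. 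Using $\bbbs^{*}=L^{-2}\bbbs^{-1}$ (Remark~\ref{remSCscaling}.d) one checks that $\bbbs^{-1}Q_j^{*}\bbbs=(\bbbs^{-1}Q_j\bbbs)^{*}=\check Q_j^{\,*}$, so $\bbbs^{-1}Q_jQ_j^{*}\bbbs=\check Q_j\check Q_j^{\,*}=QQ_{j-1}Q_{j-1}^{*}Q^{*}$. Since $Q_0=\bbbone$ the $j=1$ term equals $\sfrac1{L^2}QQ^{*}$, while for the remaining terms $\sum_{k=1}^{n-2}\sfrac1{L^{2k}}Q_kQ_k^{*}=a\,\fQ_{n-1}^{-1}-\bbbone$ by the definition of $\fQ_{n-1}$; collecting everything gives $\bbbs^{-1}\bigl(\bbbone+\sum_{j=1}^{n-1}\sfrac1{L^{2j}}Q_jQ_j^{*}\bigr)\bbbs=\bbbone+\sfrac{a}{L^2}Q\,\fQ_{n-1}^{-1}Q^{*}$, whence $\check\fQ_n=\sfrac{a}{L^2}\bigl(\bbbone+\sfrac{a}{L^2}Q\,\fQ_{n-1}^{-1}Q^{*}\bigr)^{-1}=\bigl(\sfrac{L^2}{a}\bbbone+Q\,\fQ_{n-1}^{-1}Q^{*}\bigr)^{-1}$ for $n\ge2$; the cases $n=1$ and $n=2$ are checked by hand. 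Keeping track of the adjoints and of the shift $j\mapsto k=j-1$ is where all the bookkeeping lives, and is the step I expect to be most error-prone.

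\emph{Part (c).} Starting from $\check A_n=A_n(\bbbs\th_*,\bbbs\th,\bbbs\check\phi_*,\bbbs\check\phi,L^2\mu,\bbbs\cV)$ (Definition~\ref{defSCacheck}), I would transform the four terms of $A_n$ in \eqref{eqnHTAndef} one at a time. From part~(b), $Q_n\bbbs=\bbbs\check Q_n=\bbbs\,QQ_{n-1}$, so $\bbbs\th-Q_n\bbbs\check\phi=\bbbs(\th-QQ_{n-1}\check\phi)$; then $\<\bbbs a,\fQ_n\bbbs b\>_0=\<a,L^{-2}\bbbs^{-1}\fQ_n\bbbs\,b\>_{-1}=\<a,\check\fQ_n b\>_{-1}$ by Remark~\ref{remSCscaling}.d and the definition of $\check\fQ_n$, which takes care of the overlap term. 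For the $D_n$ term, part~(a) gives $D_n=L^{2}\bbbs D_{n-1}\bbbs^{-1}$, so $\<\bbbs\check\phi_*,D_n\bbbs\check\phi\>_n=L^{2}\<\bbbs\check\phi_*,\bbbs D_{n-1}\check\phi\>_n=\<\check\phi_*,D_{n-1}\check\phi\>_{n-1}$ by Remark~\ref{remSCscaling}.c, and the $\mu$ term is the same computation with $D_{n-1}$ replaced by $\bbbone$, the extra $L^2$ coming from $L^2\mu$ absorbed in the same way; finally $(\bbbs\cV)(\bbbs\check\phi_*,\bbbs\check\phi)=\cV(\check\phi_*,\check\phi)$ is immediate from Definition~\ref{defSCscaling}.b. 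Adding the four transformed terms produces the asserted formula for $\check A_n$, and specializing to $n=1$ (so $Q_0=\bbbone$, $\check\fQ_1=\sfrac{a}{L^2}\bbbone$) and recalling the definition of $A_0$ gives the final identity. There is no substantial obstacle here: the accounting is driven entirely by the factor $L^{-2}$ in $\<\bbbs\,\cdot\,,\bbbs\,\cdot\,\>$ and the relation $\bbbs^{*}=L^{-2}\bbbs^{-1}$.
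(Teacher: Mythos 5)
Your proposal is correct and follows essentially the same route as the paper: part (a) via Remark~\ref{remSCscaling}.a and .h, part (b) by conjugating $\fQ_n^{-1}=\sfrac{1}{a}(\bbbone+\sum_{j=1}^{n-1}L^{-2j}Q_jQ_j^*)$ with $\bbbs$ and reindexing to pull out a factor $\sfrac{a}{L^2}Q\fQ_{n-1}^{-1}Q^*$, and part (c) by substituting and converting the four terms of $A_n$ using $Q_n\bbbs=\bbbs QQ_{n-1}$, $\fQ_n\bbbs=L^2\bbbs\check\fQ_n$, $D_n=L^2\bbbs D_{n-1}\bbbs^{-1}$, and $\<\bbbs\cdot,\bbbs\cdot\>_j=L^{-2}\<\cdot,\cdot\>_{j-1}$. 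The bookkeeping matches the paper's displayed computations.
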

\begin{proof} (a) By part (h) of Remark \ref{remSCscaling}, the kernel of $\bbbs^n\cV_0$ 
is 
\begin{equation*}
L^{14 n}\, V_0(\bbbl^n u_1,\bbbl^n u_2,\bbbl^n u_3,\bbbl^n u_4)
=V_n^{(u)}(u_1,u_2,u_3,u_4)
\end{equation*}
by Definition \ref{defHTbackgrounddomaction}.a. The remaining two 
claims follow immediately from Remark \ref{remSCscaling}.a and 
Definition \ref{defHTbackgrounddomaction}.a.

\Item (b) The first part follows immediately from part (a) and
Definition \ref{defHTbackgrounddomaction}.a. By 
Definition \ref{defHTbackgrounddomaction}.b, when $n\ge 2$,
\begin{align*}
\bbbs^{-1}\fQ_n^{-1}\bbbs&=\sfrac{1}{a}\big(\bbbone
             +\smsum_{j=1}^{n-1}\sfrac{1}{L^{2j}}QQ_{j-1} Q_{j-1}^*Q^*\big)\\
&=\sfrac{1}{a}\Big(\bbbone  +\sfrac{1}{L^2}Q
    \Big[\bbbone+\smsum_{j=1}^{n-2}\sfrac{1}{L^{2j}}Q_j Q_j^*\Big]Q^*\Big)\\
&=\sfrac{1}{L^2}\big(\sfrac{L^2}{a}\bbbone  +Q\fQ_{n-1}^{-1}Q^*\big)
\end{align*}

\Item (c) By definition
\begin{align*}
&\check A_n(\th_*,\th,\check\phi_*,\check\phi,\mu,\cV)
 =\<\bbbs\th_*-Q_n\bbbs\check\phi_*\,,\,
           \fQ_n\big(\bbbs\th-Q_n\bbbs\check\phi\big)\>_0
 +\< \bbbs\check\phi_*,\,D_n\bbbs\check\phi\>_n \\
&\hskip2in     + (\bbbs\cV)(\bbbs\check\phi_*,\bbbs\check\phi)
-L^2\mu \< \bbbs\check\phi_*,\,\bbbs\check\phi\>_n\\
&\hskip0.2in =L^{-2}\<\th_*-\bbbs^{-1}Q_n\bbbs\check\phi_*\,,\,
           \bbbs^{-1}\fQ_n\bbbs\big(\th-\bbbs^{-1}Q_n\bbbs\check\phi\big)\>_{-1}
 +L^{-2}\< \check\phi_*,\,\bbbs^{-1}D_n\bbbs\check\phi\>_{n-1} \\
&\hskip2in     + \cV(\check\phi_*,\check\phi)
-\mu \< \check\phi_*,\,\check\phi\>_{n-1}\\
&\hskip0.2in =\<\th_*-QQ_{n-1}\check\phi_*\,,\,
           \check\fQ_n\big(\th-QQ_{n-1}\check\phi\big)\>_{-1}
 +      \< \check\phi_*,\,D_{n-1}\check\phi\>_{n-1} \\
&\hskip2in     + \cV(\check\phi_*,\check\phi)
-\mu \< \check\phi_*,\,\check\phi\>_{n-1}
\end{align*}

\end{proof}

\newpage
\section{The Background Field and its Variations}\label{chapBGAbackgroundAlg}

Let $1\le n\le\np$. If $|\mu|$ and $\|\cV\|_{2m}$ are small enough,
the background fields
\begin{equation*}
\phi_{(*)n}(\,\cdot\,,\,\cdot\,,\mu,\cV):
                   \cH_0^{(n)}\times \cH_0^{(n)}\rightarrow\cH_n
\end{equation*} 
were defined in Proposition \ref{propHTexistencebackgroundfields}. They are
solutions of the background field equations
\begin{equation*}
\sfrac{\partial\hfill}{\partial\phi_*}A_n(\psi_*,\psi,\phi_*,\phi,\mu,\cV)=
\sfrac{\partial\hfill}{\partial\phi}A_n(\psi_*,\psi,\phi_*,\phi,\mu,\cV)=0
\end{equation*}
Putting in the action $A_n$ of Definition \ref{defHTbackgrounddomaction}.b, we get
\begin{equation}\label{eqnBGAbgeqns}
\begin{split}
S^{* -1}_n(\mu)\phi_* +\nabla_\phi\cV(\phi_*,\phi)
       &=Q_n^* \fQ_n\psi_*\\
S^{-1}_n(\mu)\phi+\nabla_{\phi_*}\cV(\phi_*,\phi)
              &=Q_n^* \fQ_n\psi\\
\end{split}
\end{equation}
with the $S_n(\mu)=(D_n+Q_n^*\fQ_n Q_n-\mu)^{-1}$ of 
Theorem  \ref{HTthminvertibleoperators}.  See  
\cite[Remark \remBSremarkonbackgroundfieldsB]{BlockSpin}. To evaluate 
the gradients of
$\cV$ we use

\begin{definition}\label{defBGAgradV}
Let 
\begin{equation*}
\cM(\phi_*,\phi)=\half\int_{\cX_n^4} du_1 \cdots du_4\  
                  M(u_1,u_2,u_3,u_4)\,  
                  \phi_*(u_1) \phi(u_2)\phi_*(u_3) \phi(u_4)
\end{equation*}
be a quartic monomial whose kernel $ M(u_1,u_2,u_3,u_4)$
is invariant under $u_1\leftrightarrow u_3$ and under 
$u_2\leftrightarrow u_4$. We denote its gradients by
\begin{align*}
\cM'_*(u;\ze_{*1},\ze,\ze_{*2})
&=\int du_1 du_2 du_3\  M(u_1,u_2,u_3,u)\, \ze_{*1}(u_1)\ze(u_2)\ze_{*2}(u_3)\\
\cM'(u;\ze_1,\ze_*,\ze_2)
&=\int du_2 du_3 du_4\  M(u,u_2,u_3,u_4)\, \ze_1(u_2)\ze_*(u_3)\ze_2(u_4)
\end{align*}
\end{definition}

Using this notation, the background field equations become
\begin{equation}\label{eqnBGAbgeqnsB}
\begin{split}
S^{* -1}_n(\mu)\phi_* +\cV'_*(\phi_*,\phi,\phi_*)
       &=Q_n^* \fQ_n\psi_*\\
S^{-1}_n(\mu)\phi+\cV'(\phi,\phi_*,\phi)
              &=Q_n^* \fQ_n\psi
\end{split}
\end{equation}
In \cite[Proposition \propBGEphivepssoln]{BGE} we prove that these equations
have a solution $\phi_{(*)n}(\psi_*,\psi,\mu,\cV)$ which is analytic on 
the set of all $(\psi_*,\psi)$'s obeying  $|\psi_*(x)|,|\psi(x)|< \ka(n)$.

\begin{definition}\label{defBGAphicheck}
 The scaled versions
\begin{equation*}
\check\phi_{(*)n}(\,\cdot\,,\,\cdot\,,\mu,\cV):
                   \cH_{-1}^{(n)}\times \cH_{-1}^{(n)}\rightarrow\cH_{n-1}
\end{equation*}
of $\phi_{(*)n}$ are
\begin{align*}
\check\phi_{(*)n}(\th_*,\th,\mu,\cV) 
     =\bbbs^{-1}\big[\phi_{(*)n}(\bbbs\th_*,\bbbs\th,L^2\mu,\bbbs\cV)\big]
\end{align*}
That is
\begin{align*}
\check\phi_{(*)n}(\th_*,\th,\mu,\cV)(v) 
=L^{-3/2}\phi_{(*)n}\big(\bbbs\th_*,\bbbs\th,L^2\mu,\bbbs\cV\big)(\bbbl^{-1}v)
\end{align*}
They are  analytic on  the set of all $(\th_*,\th)$'s 
obeying  $|\th_*(x)|,|\th(x)|< \sfrac{\ka(n)}{L^{3/2}}$.

\end{definition}

\begin{remark}\label{remBGAphicheck}
By Definition \ref{defSCacheck} and Remark \ref{remSCscaling}.f,
\begin{align*}
\sfrac{\partial \hfil A_n\hfil}{\partial\phi(u)}
            \big(\bbbs\th_*,\bbbs\th,\phi_*,\phi,L^2\mu,\bbbs\cV)
&=\sfrac{\partial\, \bbbs\check A_n\hfil}{\partial\phi(u)}
            \big(\bbbs\th_*,\bbbs\th,\phi_*,\phi,\mu,\cV) \\
&=L^{-3/2}\sfrac{\partial\hfil\check A_n\hfil}{\partial\check\phi(\bbbl u)}
          \big(\th_*,\th,\bbbs^{-1}\phi_*,\bbbs^{-1}\phi,\mu,\cV)
\end{align*}
Consequently, by Definition \ref{defHTbackgrounddomaction}.c,
\begin{equation*}
\check\phi_{(*)n}(\th_*,\th,\mu,\cV)
=\bbbs^{-1}\big[\phi_{(*)n}\big(\bbbs\th_*,\bbbs\th,L^2\mu,\bbbs\cV)\big]
\end{equation*}
are critical fields for $\check A_n(\th_*,\th,\check\phi_*,\check\phi,\mu,\cV)$.
\end{remark}

\begin{proposition}\label{propBGAomnibus}
Define, 
$
\psi_{(*)0}(\th_*,\th,\mu,\cV)=\check\phi_{1(*)}(\th_*,\th,\mu,\cV)
$
and, for $n\ge 1$,
$$
\psi_{(*)n}(\th_*,\th,\mu,\cV)  
   = \big(\sfrac{a}{L^2}Q^* Q+\fQ_n\big)^{-1}
        \big\{\sfrac{a}{L^2}Q^*\th_{(*)} 
             +\fQ_n\,Q_n\,\check\phi_{(*)n+1}(\th_*,\th,\mu,\cV)\big\}
$$

\begin{enumerate}[label=(\alph*), leftmargin=*]
\item 
The $\psi_{(*)n}$'s solve the critical field equations of 
Proposition \ref{propHTexistencecriticalfields}. They are  analytic on 
the set of all $(\th_*,\th)$'s obeying  
$|\th_*(x)|,|\th(x)|< \sfrac{\ka(n+1)}{L^{3/2}}$.

\item 
For $n\ge 1$, we have the composition rule
\begin{align*}
\check\phi_{(*)n+1}(\th_*,\th,\mu,\cV) 
   &= \phi_{(*)n}\big(\,\psi_{*n}(\th_*,\th,\mu,\cV)\,,\,
                       \psi_n(\th_*,\th,\mu,\cV)\,,\,\mu,\cV \big)
\end{align*}

\item 
For all $n\ge 1$, 
\begin{align*}
&\check A_{n+1}(\th_*,\th,\check\phi_{*n+1}(\th_*,\th,\mu,\cV),
                  \check\phi_{n+1}(\th_*,\th,\mu,\cV),\mu,\cV)\\
&\hskip0.1in=\sfrac{a}{L^2}\< \th_*\!-Q\psi_{*n}(\th_*,\th,\mu,\cV),
                       \th-Q\psi_n(\th_*,\th,\mu,\cV)\>_{-1}\\
&\hskip0.3in + A_n\big(\psi_{*n}(\th_*,\th,\mu,\cV),\psi_n(\th_*,\th,\mu,\cV), 
                  \check\phi_{*n+1}(\th_*,\th,\mu,\cV),
                  \check\phi_{n+1}(\th_*,\th,\mu,\cV),\mu,\cV\big)
\end{align*}
\end{enumerate}
\end{proposition}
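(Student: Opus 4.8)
The plan is to settle the case $n=0$ of (a) by hand, and for $n\ge 1$ to extract a pair of operator identities from which (a), (b) and (c) all follow after bookkeeping. For $n=0$ the last assertion of Lemma \ref{lemSCacheckOne}.c says $\check A_1(\th_*,\th,\psi_*,\psi,\mu,\cV)=\sfrac{a}{L^2}\<\th_*-Q\psi_*,\th-Q\psi\>_{-1}+A_0(\psi_*,\psi,\mu,\cV)$, so the equations $\sfrac{\partial}{\partial\check\phi_*}\check A_1=\sfrac{\partial}{\partial\check\phi}\check A_1=0$ are literally the $n=0$ critical field equations of Proposition \ref{propHTexistencecriticalfields}; since $\psi_{(*)0}=\check\phi_{1(*)}$ are critical fields for $\check A_1$ by Remark \ref{remBGAphicheck}, (a) holds for $n=0$, and analyticity on $\{|\th_*(x)|,|\th(x)|<\ka(1)/L^{3/2}\}$ is inherited from $\check\phi_{1(*)}$ via Definition \ref{defBGAphicheck}.

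For $n\ge 1$ I write $B=\sfrac{a}{L^2}$, let $Q^*$ be the transpose of $Q\colon\cH_0^{(n)}\to\cH_{-1}^{(n+1)}$, and put $M=BQ^*Q+\fQ_n$, a positive definite, in particular invertible, operator, so that the definition of $\psi_{(*)n}$ reads $\psi_{(*)n}=M^{-1}\big(BQ^*\th_{(*)}+\fQ_n Q_n\check\phi_{(*)n+1}\big)$. The lemma I would first prove is that for all $\eta\in\cH_{-1}^{(n+1)}$ and $\chi\in\cH_n$, the field $\xi=M^{-1}\big(BQ^*\eta+\fQ_n Q_n\chi\big)$ satisfies
\begin{equation*}
\fQ_n\xi=\fQ_n Q_n\chi+Q^*\check\fQ_{n+1}\big(\eta-QQ_n\chi\big)
\qquad\text{and}\qquad
\eta-Q\xi=\sfrac{L^2}{a}\,\check\fQ_{n+1}\big(\eta-QQ_n\chi\big).
\end{equation*}
Both are checked by multiplying through — the first by $M$, the second by $\bbbone+BQ\fQ_n^{-1}Q^*=B\,\check\fQ_{n+1}^{-1}$ — using the formula $\check\fQ_{n+1}=\big(\sfrac{L^2}{a}\bbbone+Q\fQ_n^{-1}Q^*\big)^{-1}$ of Lemma \ref{lemSCacheckOne}.b, self-adjointness of $\fQ_n$ and $\check\fQ_{n+1}$, and the defining relation $BQ^*(\eta-Q\xi)=\fQ_n(\xi-Q_n\chi)$.

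Part (b): by Remark \ref{remBGAphicheck} the $\check\phi_{(*)n+1}$ solve $\sfrac{\partial}{\partial\check\phi_*}\check A_{n+1}=\sfrac{\partial}{\partial\check\phi}\check A_{n+1}=0$; expanding $\check A_{n+1}$ with Lemma \ref{lemSCacheckOne}.c (applied to $n+1$) and Definition \ref{defBGAgradV}, the equation $\sfrac{\partial}{\partial\check\phi_*}\check A_{n+1}=0$ reads $(D_n-\mu)\check\phi_{n+1}+\cV'(\check\phi_{n+1},\check\phi_{*n+1},\check\phi_{n+1})=Q_n^*Q^*\check\fQ_{n+1}(\th-QQ_n\check\phi_{n+1})$, and adding $Q_n^*\fQ_n Q_n\check\phi_{n+1}$ and inserting the first identity (with $\eta=\th$, $\chi=\check\phi_{n+1}$, so $\xi=\psi_n$) turns it into $S^{-1}_n(\mu)\check\phi_{n+1}+\cV'(\check\phi_{n+1},\check\phi_{*n+1},\check\phi_{n+1})=Q_n^*\fQ_n\psi_n$; symmetrically for the other equation. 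These are exactly the background field equations \eqref{eqnBGAbgeqnsB} for $A_n$ at external fields $(\psi_{*n},\psi_n)$, so the uniqueness in Proposition \ref{propHTexistencebackgroundfields} — valid because $(\psi_{*n},\psi_n)$ lies in the relevant polydisc, a quantitative point from \cite{BGE} — gives $\check\phi_{(*)n+1}=\phi_{(*)n}(\psi_{*n},\psi_n,\mu,\cV)$, which is (b); analyticity of $\psi_{(*)n}$ on the stated polydisc follows from that of $\check\phi_{(*)n+1}$ and boundedness of $M^{-1},Q,Q_n,\fQ_n$. For (a) with $n\ge 1$: since $\phi_{(*)n}$ solves the background field equations, the chain-rule terms through $\partial\phi_{(*)n}/\partial\psi_{(*)}$ vanish, so the critical field equation of Proposition \ref{propHTexistencecriticalfields} reduces to $-BQ^*(\th-Q\psi)+\fQ_n(\psi-Q_n\phi_n(\psi_*,\psi,\mu,\cV))=0$ together with its analogue from $\nabla_\psi$; at $(\psi_{*n},\psi_n)$, using (b) to replace $\phi_n$ by $\check\phi_{n+1}$, this becomes $M\psi_n=BQ^*\th+\fQ_n Q_n\check\phi_{n+1}$, true by definition.

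Part (c): for arbitrary $\check\phi_*,\check\phi\in\cH_n$ put $G(\psi_*,\psi)=B\<\th_*-Q\psi_*,\th-Q\psi\>_{-1}+A_n(\psi_*,\psi,\check\phi_*,\check\phi,\mu,\cV)$; its dependence on $(\psi_*,\psi)$ is purely through the bilinear form $B\<\th_*-Q\psi_*,\th-Q\psi\>_{-1}+\<\psi_*-Q_n\check\phi_*,\fQ_n(\psi-Q_n\check\phi)\>_0$, whose unique critical point is $\psi^0_{(*)}=M^{-1}(BQ^*\th_{(*)}+\fQ_n Q_n\check\phi_{(*)})$. Collecting the $(\psi_*,\psi)$-independent terms of $G$ and substituting $M\psi^0=BQ^*\th+\fQ_n Q_n\check\phi$ and the second operator identity gives $G(\psi^0_*,\psi^0)=\<\th_*-QQ_n\check\phi_*,\check\fQ_{n+1}(\th-QQ_n\check\phi)\>_{-1}+\<\check\phi_*,D_n\check\phi\>_n+\cV(\check\phi_*,\check\phi)-\mu\<\check\phi_*,\check\phi\>_n$, which by Lemma \ref{lemSCacheckOne}.c is $\check A_{n+1}(\th_*,\th,\check\phi_*,\check\phi,\mu,\cV)$. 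Specializing $\check\phi_{(*)}=\check\phi_{(*)n+1}(\th_*,\th,\mu,\cV)$ makes $\psi^0_{(*)}=\psi_{(*)n}(\th_*,\th,\mu,\cV)$, and the resulting identity is precisely (c). The genuinely delicate part is this quadratic-form computation and the two operator identities: one must track the inner products $\<\,\cdot\,,\,\cdot\,\>_{-1}$, $\<\,\cdot\,,\,\cdot\,\>_0$, $\<\,\cdot\,,\,\cdot\,\>_n$, the transposes taken with respect to them, and the self-adjointness of $\fQ_n$ and $\check\fQ_{n+1}$; the rest is unwinding definitions and invoking \cite{BGE} for uniqueness and analyticity.
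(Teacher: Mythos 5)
Your proof is correct, but it takes a more explicit route than the paper does. The paper's own argument for $n\ge 1$ is essentially a citation proof: it sets up the substitution dictionary \eqref{eqnBGAblosckspinSub} and then invokes \cite[Proposition \propFormalFldSlns]{BlockSpin} for (a), the uniqueness clause of that result together with \cite[Proposition \propBSconcatbackgr.c]{BlockSpin} for (b), and \cite[Proposition \propBSconcatbackgr.b]{BlockSpin} for (c); the $n=0$ case is handled exactly as you do, via Lemma \ref{lemSCacheckOne}.c. You instead re-derive the content of those companion-paper results in situ. The crux of your argument --- that if $\xi=M^{-1}\big(\sfrac{a}{L^2}Q^*\eta+\fQ_n Q_n\chi\big)$ with $M=\sfrac{a}{L^2}Q^*Q+\fQ_n$, then $\fQ_n\xi=\fQ_n Q_n\chi+Q^*\check\fQ_{n+1}(\eta-QQ_n\chi)$ and $\eta-Q\xi=\sfrac{L^2}{a}\check\fQ_{n+1}(\eta-QQ_n\chi)$ --- is exactly the algebraic mechanism behind \cite[Proposition \propBSconcatbackgr]{BlockSpin}, and your verification via multiplication by $M$ and by $\bbbone+\sfrac{a}{L^2}Q\fQ_n^{-1}Q^*$ together with Lemma \ref{lemSCacheckOne}.b is correct. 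Your derivation of (b) by comparison with \eqref{eqnBGAbgeqnsB} and invocation of uniqueness, of (a) via the chain-rule cancellation from the background field equations, and of (c) by evaluating the quadratic form at its critical point and invoking Lemma \ref{lemSCacheckOne}.c, are all sound. What your route buys is self-containedness: a reader sees the concatenation identity worked out rather than having to chase it through \cite{BlockSpin}. What it costs is duplication across the paper series, which is precisely why the authors factored the "composition law" machinery into the separate reference. The only point you (rightly) leave to \cite{BGE} --- that $(\psi_{*n},\psi_n)$ lands in the polydisc where $\phi_{(*)n}$ is defined and uniquely characterized --- is the genuinely analytic input; everything else you do is algebra, and it holds up.
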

\begin{proof} \emph{Case $n\ge 1$:}\ \ \ 
We apply the strategy of 
\cite[Remark \remBSremarkonbackgroundfields.c]{BlockSpin} with
\refstepcounter{equation}\label{eqnBGAblosckspinSub}
\begin{equation}
\begin{aligned}
&\cH=\cH_0^{(n)}   & &\cH_-=\cH_n  & &\cH_+=\cH_{-1}^{(n+1)} 
\\
&Q_-=Q_n \qquad&  &\fQ=\fQ_n  \qquad&  &b=\sfrac{a}{L^2}
\\
&D=D_n-\mu & &P(\phi_*,\phi)=\cV(\phi_*,\phi)\hidewidth
\\
&\fA(\phi_*,\phi)=\<\phi_*,D_n\phi\>_n+ \cV(\phi_*,\phi)- \mu\<\phi_*,\phi\>_n
\hidewidth
\end{aligned}
\tag{\ref{eqnBGAblosckspinSub}.a}
\end{equation}
and the background/next scale background fields
\begin{equation}
\phi_{(*)\rm bg}(\psi_*,\psi)= \phi_{(*)n}(\psi_*,\psi,\mu,\cV)
\qquad\qquad
\check\phi_{(*)\rm bg}(\th_*,\th)= \check\phi_{(*)n+1}(\th_*,\th,\mu,\cV)
\tag{\ref{eqnBGAblosckspinSub}.b}
\end{equation}
Then \cite[Proposition \propFormalFldSlns]{BlockSpin} applies, and, in particular,
gives the proof of part (a) for $n\ge 1$. Part (b) follows by the
uniqueness provision of \cite[Proposition \propFormalFldSlns]{BlockSpin}
and part (c) of \cite[Proposition \propBSconcatbackgr]{BlockSpin}. 
Then part (c) follows by \cite[Proposition \propBSconcatbackgr.b]{BlockSpin}. 

\Item {\it Case $n=0$:}\ \ \ 
It suffices to observe that, by Lemma \ref{lemSCacheckOne}.c,  the
fields $\check\phi_{1(*)}(\th_*,\th,\mu,\cV)$
are critical (with respect to $\psi_{(*)}$) for 
$\sfrac{a}{L^2}\< \th_*\!-Q\psi_*,\th-Q\psi\>_{-1} + A_0(\psi_*,\psi,\mu)$.
\end{proof}

The main part of the action, $A_n$, is expressed in terms of the background field
$\phi_{(*)n}(\psi_*,\psi,\mu_n,\cV_n)$. 
(See Theorem \ref{thmTHmaintheorem}.) 
In the fluctuation integral we make a change of variables
$\psi_{(*)}=\psi_{(*)n}(\th_*,\th,\mu_n,\cV_n)+\de\psi_{(*)}$.
(Set $D^{(n)(*)}\ze^{(*)}=\de\psi_{(*)}$ in 
Definition \ref{defHTapproximateblockspintr}.)
So we must study the impact of this change of variables on $\phi_{(*)n}$.

\begin{definition}\label{defBGAbckgndVarn} 
Let $1\le n\le \np$, and let
$
\|\cV\|_{2m}\ka(n)^2+|\mu|
$
be sufficiently small as in Proposition \ref{propHTexistencebackgroundfields}.
\begin{enumerate}[label=(\alph*), leftmargin=*]
\item
Define $\de\phi_{*n}\big(\psi_{*},\psi,\de\psi_*,\de\psi,\mu,\cV\big)$ and 
$\de\phi_n\big(\psi_{*},\psi,\de\psi_*,\de\psi,\mu,\cV\big)$ by
\begin{align*}
\phi_{(*)n}\big(\psi_{*}+\de\psi_*,\psi+\de\psi,\mu,\cV\big)
&= \phi_{(*)n}\big(\psi_{*},\psi,\mu,\cV\big)
          +  \de\phi_{(*)n}\big(\psi_{*},\psi,\de\psi_*,\de\psi,\mu,\cV\big)
\end{align*}
and set
\begin{align*}
&\de\check\phi_{(*)n+1}\big(\th_{*},\th,\de\psi_*,\de\psi,\mu,\cV\big)\\
&\hskip1in=\de\phi_{(*)n}\big(\psi_{*n}(\th_*,\th,\mu,\cV)\,,\,
       \psi_n(\th_*,\th,\mu,\cV)\,,\,\de\psi_*\,,\,\de\psi,\mu,\cV\big)
\end{align*}

\item
Define 
$\de{\check\phi_{(*)n+1}}^{(+)}\big(\th_*,\th;\de\psi_*,\de\psi,\mu,\cV\big)$
by
\begin{align*}
\de\check\phi_{(*)n+1}\big(\th_*,\th;\de\psi_*,\de\psi,\mu,\cV\big)
&=   S_n^{(*)} Q_n^* \fQ_n\,\de\psi_{(*)}
+ \de{\check\phi_{(*)n+1}}^{(+)}\big(\th_*,\th;\de\psi_*,\de\psi,\mu,\cV\big)\\
\end{align*}
where $S_n=(D_n+Q_n^*\fQ_n Q_n)^{-1} $ as in Theorem 
\ref{HTthminvertibleoperators}.
\end{enumerate}
\end{definition}

\begin{remark}\label{remBGAbckgndVarn}
\begin{enumerate}[label=(\alph*), leftmargin=*]
\item
By the composition rule Proposition \ref{propBGAomnibus}.b,
\begin{align*}
&\phi_{(*)n}\big(\psi_{*n}(\th_*,\th,\mu,\cV)+\de\psi_*\,,\,
   \psi_n(\th_*,\th,\mu,\cV)+\de\psi,\mu,\cV\big) \\
&\hskip1in= \check\phi_{(*)n+1}(\th_*,\th,\mu,\cV)
      +\de\check\phi_{(*)n+1}\big(\th_*,\th;\de\psi_*,\de\psi,\mu,\cV\big)
\end{align*}

\item
The quantities $\de\phi_{(*)n}$, $\de\check\phi_{(*)n+1}$,
$\de{\check\phi_{(*)n+1}}^{(+)}$, $S_n(\mu)$ correspond to the quantities
$\de\phi_{(*)\rm bg}$, $\de\check\phi_{(*)\rm bg}$,
$\de{\check\phi_{(*)}}^{(+)}$, $\Su$ in \cite{BlockSpin} under the 
substitution (\ref{eqnBGAblosckspinSub}.a,b). 
Hence, by \cite[Remark \remBSdephieqn]{BlockSpin}, the fields 
$\de\check\phi_{(*)n+1}$ obey
\begin{align*}
 \de\check\phi_{*n+1}
&=S_n(\mu)^*Q_n^* \fQ_n\, \de\psi_*  
  - {S_n^*}\nabla_\phi \cV (\phi_*, \phi) 
      \bigg|^{\atop{\phi_*=\check\phi_{*n+1}(\th_*,\th,\mu,\cV)
                                         +\de\check\phi_{*n+1}}
              {\phi=\check\phi_{n+1}(\th_*,\th,\mu,\cV)+\de\check\phi_{n+1}}}
     _{\atop{\phi_*=\check\phi_{*n+1}(\th_*,\th,\mu,\cV)}
            {\phi=\check\phi_{n+1}(\th_*,\th,\mu,\cV)}}
 \\
\noalign{\vskip0.05in}
\de\check\phi_{n+1}
&= S_n(\mu) Q_n^* \fQ_n\, \de\psi 
- S_n\nabla_{\phi_*}\cV (\phi_*, \phi) 
      \bigg|^{\atop{\phi_*=\check\phi_{*n+1}(\th_*,\th,\mu,\cV)
                                             +\de\check\phi_{*n+1}}
        {\phi=\check\phi_{n+1}(\th_*,\th,\mu,\cV)+\de\check\phi_{n+1}}}
     _{\atop{\phi_*=\check\phi_{*n+1}(\th_*,\th,\mu,\cV)}
        {\phi=\check\phi_{n+1}(\th_*,\th,\mu,\cV)}}
\end{align*}

\item 
Since $S_n(\mu)=\big[\bbbone-\mu S_n\big]^{-1}S_n$, the equations
of part (b) may be rewritten
\begin{align*}
 \de\check\phi_{*n+1}
&={S_n^*}Q_n^* \fQ_n\, \de\psi_*  +\mu {S_n^*}\de\check\phi_{*n+1}
  - {S_n^*}\nabla_\phi \cV (\phi_*, \phi) 
      \bigg|^{\atop{\phi_*=\check\phi_{*n+1}(\th_*,\th,\mu,\cV)
                                         +\de\check\phi_{*n+1}}
        {\phi=\check\phi_{n+1}(\th_*,\th,\mu,\cV)+\de\check\phi_{n+1}}}
     _{\atop{\phi_*=\check\phi_{*n+1}(\th_*,\th,\mu,\cV)}
        {\phi=\check\phi_{n+1}(\th_*,\th,\mu,\cV)}}
 \\
\noalign{\vskip0.05in}
\de\check\phi_{n+1}
&= S_n Q_n^* \fQ_n\, \de\psi +\mu S_n\de\check\phi_{n+1}
- S_n\nabla_{\phi_*}\cV (\phi_*, \phi) 
      \bigg|^{\atop{\phi_*=\check\phi_{*n+1}(\th_*,\th,\mu,\cV)
                                             +\de\check\phi_{*n+1}}
        {\phi=\check\phi_{n+1}(\th_*,\th,\mu,\cV)+\de\check\phi_{n+1}}}
     _{\atop{\phi_*=\check\phi_{*n+1}(\th_*,\th,\mu,\cV)}
        {\phi=\check\phi_{n+1}(\th_*,\th,\mu,\cV)}}
\end{align*}
In particular, if $\mu=\cV=0$, then $\de\check\phi_{(*)n+1}
={S_n^{(*)}}Q_n^* \fQ_n\, \de\psi_{(*)}$. This is the motivation for the
definition of $\de{\check\phi_{(*)n+1}}^{(+)}$ in 
Definition \ref{defBGAbckgndVarn}.b.
\end{enumerate}
\end{remark}

\newpage
\section{One Block Spin Transformation --- The Algebra}\label{chapSTstrategy}

In this section we consider the output of the approximate block spin 
transformation $\bbbt_n^{(SF)}$ acting on $e^{\cA_0}$, with the $\cA_0$
of \eqref{eqnHTaZero} in the case $n=0$, and on $e^{- A_n+\cR_n+\cE_n}$, 
in the case $n\ge 1$ 
(see Theorem \ref{thmTHmaintheorem}). The main result of this section
is Proposition \ref{propSTmainProp}, which provides a representation of this
output that will be used in the (inductive) proof of 
Theorem \ref{thmTHmaintheorem}. If the conclusion of
Theorem \ref{thmTHmaintheorem} holds for some $1\le n<\np$,
then Proposition  \ref{propSTmainProp} gives a representation 
for 
\begin{align*}
&\Big(\! \bbbt_n^{(SF)}\circ(\bbbs \bbbt_{n-1}^{(SF)}) \circ
   \cdots \circ (\bbbs \bbbt_0^{(SF)})\!\Big) 
\Big(\!e^{\cA_0(\psi_*,\psi) }\! \Big) (\th_*,\th)
\end{align*}
which, up to a multiplicative constant, is of the form
\begin{equation*}
e^{\check\cC_n(\th_*,\th)}\ \check\cF_n(\th_*,\th)
\end{equation*}
where
\begin{itemize}[leftmargin=*, topsep=2pt, itemsep=2pt, parsep=0pt]
\item
the ``contribution from the critical field'' is
\begin{align*}
\check\cC_n(\th_*,\th)
&=- \check A_{n+1}(\th_*,\th,\check\phi_{*n+1}(\th_*,\th,\mu_n,\cV_n),
                  \check\phi_{n+1}(\th_*,\th,\mu_n,\cV_n),\mu_n,\cV_n)\\
&\hskip0.5in
+\cR_n\big( \check\phi_{*n+1}(\th_*,\th,\mu_n,\cV_n),
                  \check\phi_{n+1}(\th_*,\th,\mu_n,\cV_n)\big)
+\check\cE_{n+1,1}(\th_*,\th)
\end{align*}
with
\begin{align*}
\check\cE_{n+1,1}(\th_*,\th)
=\cE_n\big( \psi_{*n}(\th_*,\th,\mu_n,\cV_n),\psi_n(\th_*,\th,\mu_n,\cV_n)\big)
\end{align*}
and the $\mu_n$, $\cV_n$, $\cR_n$ and $\cE_n$ of 
Theorem \ref{thmTHmaintheorem}\ for $n\ge 1$ and 
of \S\ref{sectINTstartPoint} for $n=0$,
\item
and the ``fluctuation integral'' is
\begin{align*}
\check\cF_n(\th_*,\th)
&=\Big[\hskip-6pt
 \prod_{x\in\cX_0^{(n)}}  \int\limits_{|\ze(x)|\le r_n} \hskip -9pt
\sfrac{d\ze(x)^*\wedge d\ze(x)}{2\pi i} e^{-|\ze(x)|^2}\Big]\\
&\hskip0.25in\exp\Big\{\!\!-\de \check A_n(\th_*,\th,\de\psi_*,\de\psi) 
      +\de\check\cR_n\big(\th_*,\th,\de\psi_*,\de\psi\big) 
+\de\check\cE_n\big(\th_*,\th,\de\psi_*,\de\psi\big)\!\Big\}
\end{align*}
with $\de\psi_*= D^{(n)*}\ze^*$, 
       $\de\psi= D^{(n)}\ze$, 
$\ D^{(n)}$ being an operator square root of $C^{(n)}$, as in 
\eqref{eqnHTcn}, and
\begin{itemize}[leftmargin=*, topsep=2pt, itemsep=2pt, parsep=0pt] 
\item
for $n\ge 0$
\begin{align*}
\de\check\cE_n(\th_*,\th,\de\psi_*,\de\psi)
&=\cE_n\big(\psi_{*n}(\th_*,\th,\mu_n,\cV_n)+\de\psi_*,
               \psi_n(\th_*,\th,\mu_n,\cV_n)+\de\psi\big)\\&\hskip3cm -\cE_n\big(\psi_{*n}(\th_*,\th,\mu_n,\cV_n),\psi_n(\th_*,\th,\mu_n,\cV_n)\big)
\end{align*}
\item
for $n\ge 0$
\begin{align*}
&\de\check\cR_n(\th_*,\th,\de\psi_*,\de\psi)\\
&\hskip0.25in=\Big[\cR_n\big(\phi_*\!+\!\de\phi_*,\phi\!+\!\de\phi\big)
       \!-\!\cR_n\big(\phi_*,\phi\big)
        \Big]_{\atop{\phi_{(*)}=\check\phi_{(*)n+1}(\th_*,\th,\mu_n,\cV_n)}
                    {\de\phi_{(*)}=\de\check\phi_{(*)n+1}
                                      (\th_*,\th;\de\psi_*,\de\psi,\mu_n,\cV_n)}}
\end{align*}
where, for $n\ge 1$,  $\de\check\phi_{(*)n+1}$ was defined in 
Definition \ref{defBGAbckgndVarn}.a and, for $n=0$,
$\de\check\phi_{1(*)}=\de\psi_{(*)}$
and, 
\item
for $n\ge 1$,
\begin{align*}
\de \check A_n(\th_*,\th,\de\psi_*,\de\psi)
&= -\int_0^1 \!dt\,  \big< \de\psi_*,\fQ_n\, Q_n\,
\de{\check\phi}^{(+)}_{n+1}\big(\th_*,\th;t\,\de\psi_*,t\,\de\psi,\,
                                                 \mu_n,\cV_n\big) \big>_0 
\\
& \hskip 1cm
-\int_0^1 \!dt\,  \big< \fQ_n\, Q_n\,
\de{\check\phi_{*n+1}}^{(+)}\big(\th_*,\th;t\,\de\psi_*,t\,\de\psi,\,
                                 \mu_n,\cV_n\big)
,\, \de\psi \big>_0 
\end{align*}
and, for $n=0$,
\begin{align*}
&\de \check A_0(\th_*,\th,\de\psi_*,\de\psi) \\
&\hskip0.5in= \int_0^1\hskip-4pt  (1 - t)\,\sfrac{d^2\hfill}{dt^2} 
   \cV_0\big(\psi_{*0}(\th_*,\th,\mu_0,\cV_0)\!+\!t\de\psi_{*}\,,\,
               \psi_0(\th_*,\th,\mu_0,\cV_0)\!+\!t\de\psi\big)  \, dt\\
&\hskip1in -\mu_0 \< \de\psi_*,\,\de\psi\>_0
\end{align*}
The integral in $\de \check A_0$ is the
part of $\cV_0\big(\psi_{*0}(\th_*,\th,\mu_0,\cV_0)\!+\!\de\psi_{*}\,,\,
               \psi_0(\th_*,\th,\mu_0,\cV_0)\!+\!\de\psi\big)$
that is of degree at least two in $\de\psi_{(*)}$.
\end{itemize}
\end{itemize}
The significance of $\de\check A_n$ may be seen in 
\begin{lemma}\label{lemSTdeA} 
\ 
\begin{enumerate}[label=(\alph*), leftmargin=*]
\item For all $n\ge 1$,
\begin{align*}
&\Big[\sfrac{a}{L^2}\!\< \th_*\!-\!Q\psi_*,\th\!-\!Q\psi\>_{-1}\!
          +\!A_n\big(\psi_*,\psi,\phi_*,\phi,\mu_n,\cV_n\big)
      \Big]_{\atop{\!\psi_{(*)}=\psi_{(*)n}(\th_*,\th,\mu_n,\cV_n)+\de\psi_{(*)}\ \ \ }
            {\!\phi_{(*)}=  
                \check\phi_{n\!+\!1(*)}(\th_*,\th,\mu_n,\cV_n)+\de\check\phi_{(*)n+1}}}
\\&\hskip0.5in     
  -\Big[\sfrac{a}{L^2}\< \th_*\!-Q\psi_*,\th-Q\psi\>_{-1}
          +A_n\big(\psi_*,\psi,\phi_*,\phi,\mu_n,\cV_n\big)
           \Big]_{\atop{\psi_{(*)}=\psi_{(*)n}(\th_*,\th,\mu_n,\cV_n)\ \ \ }
                       {\!\phi_{(*)}= \check\phi_{(*)n+1}(\th_*,\th,\mu_n,\cV_n)}}\\
&= \big<\de\psi_*,{C^{(n)}}^{-1}\,\de\psi\big>_0
+\de\check A_n(\th_*,\th,\de\psi_*,\de\psi)
\end{align*}
\item
For $n =0$,
\begin{align*}
&\Big[\sfrac{a}{L^2}\!\< \th_*\!-\!Q\psi_*,\th\!-\!Q\psi\>_{-1}\!
          +\!A_0\big(\psi_*,\psi,\mu_0,\cV_0\big)
           \Big]^{\psi_{(*)}=\psi_{0(*)}(\th_*,\th,\mu_0,\cV_0)+\de\psi_{(*)}}
                _{\psi_{(*)}=\psi_{0(*)}(\th_*,\th,\mu_0,\cV_0)}\\
&\hskip0.5in= \big<\de\psi_*,{C^{(0)}}^{-1}\,\de\psi\big>_0
+\de \check A_0(\th_*,\th,\de\psi_*,\de\psi)
\end{align*}
\end{enumerate}
\end{lemma}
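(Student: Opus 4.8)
The plan is to treat the cases $n=0$ and $n\ge 1$ separately. In both, the bracketed expression, viewed as a function of $\de\psi_{(*)}$, is the increment of a function whose $\psi_{(*)}$-gradient vanishes at $\de\psi_{(*)}=0$; hence only its second- and higher-order Taylor terms survive, and these are sorted into the quadratic form $\<\de\psi_*,{C^{(n)}}^{-1}\de\psi\>_0$ and the remainder $\de\check A_n$.

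\emph{Case $n=0$.} Set $\Xi_0(\psi_*,\psi)=\sfrac{a}{L^2}\<\th_*\!-Q\psi_*,\th-Q\psi\>_{-1}+A_0(\psi_*,\psi,\mu_0,\cV_0)$, so that the left side of (b) is $\Xi_0(\psi_{*0}+\de\psi_*,\psi_0+\de\psi)-\Xi_0(\psi_{*0},\psi_0)$ with $\psi_{(*)0}=\psi_{(*)0}(\th_*,\th,\mu_0,\cV_0)$. By Definition \ref{defHTbackgrounddomaction}.b, $\Xi_0-\cV_0$ is a polynomial of degree $\le 2$ in $(\psi_*,\psi)$; expanding it exactly and applying Taylor's theorem with integral remainder to the quartic $\cV_0$, all the terms linear in $\de\psi_{(*)}$ add up to $\nabla\Xi_0(\psi_{*0},\psi_0)\cdot(\de\psi_*,\de\psi)$, which vanishes because $(\psi_{*0},\psi_0)$ solves the $n=0$ critical field equations of Proposition \ref{propHTexistencecriticalfields}. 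What remains, using $\<Q\de\psi_*,Q\de\psi\>_{-1}=\<\de\psi_*,Q^*Q\de\psi\>_0$, is $\<\de\psi_*,(\sfrac{a}{L^2}Q^*Q+D_0)\de\psi\>_0-\mu_0\<\de\psi_*,\de\psi\>_0+\int_0^1(1-t)\sfrac{d^2}{dt^2}\cV_0(\psi_{*0}+t\de\psi_*,\psi_0+t\de\psi)\,dt$. Since $\De^{(0)}=D_0$ in \eqref{eqnHTden}, equation \eqref{eqnHTcn} gives ${C^{(0)}}^{-1}=\sfrac{a}{L^2}Q^*Q+D_0$, and grouping the $-\mu_0$ term with the $\cV_0$ remainder is precisely the $\de\check A_0$ defined before the statement; this is (b).

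\emph{Case $n\ge 1$.} By the composition rule (Proposition \ref{propBGAomnibus}.b) in the incremental form of Remark \ref{remBGAbckgndVarn}.a, substituting $\phi_{(*)}=\check\phi_{(*)n+1}(\th_*,\th,\mu_n,\cV_n)+\de\check\phi_{(*)n+1}$ into $A_n$ coincides with substituting $\phi_{(*)}=\phi_{(*)n}(\psi_{*n}+\de\psi_*,\psi_n+\de\psi,\mu_n,\cV_n)$. Hence, putting $\Xi_n(\psi_*,\psi)=\sfrac{a}{L^2}\<\th_*\!-Q\psi_*,\th-Q\psi\>_{-1}+A_n(\psi_*,\psi,\phi_{(*)n}(\psi_*,\psi,\mu_n,\cV_n),\mu_n,\cV_n)$ for the reduced action, the left side of (a) is $\Xi_n(\psi_{*n}+\de\psi_*,\psi_n+\de\psi)-\Xi_n(\psi_{*n},\psi_n)$. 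Because $\partial_{\phi_*}A_n=\partial_\phi A_n=0$ at the background field (Definition \ref{defHTbackgrounddomaction}.c), the total $\psi_{(*)}$-gradient of $\Xi_n$ equals the partial $\psi_{(*)}$-gradient of $A_n$ there, so $(\psi_{*n},\psi_n)$ is a critical point of $\Xi_n$ by Proposition \ref{propHTexistencecriticalfields} and the first-order Taylor term vanishes. The substantive step is the second variation: differentiating $\Xi_n$ twice and again discarding the $\phi$-derivative terms shows that the Hessian of the $\cV$- and $\mu$-free part of $\Xi_n$ equals $\sfrac{a}{L^2}Q^*Q+\big(\fQ_n-\fQ_n Q_n S_n Q_n^*\fQ_n\big)=\sfrac{a}{L^2}Q^*Q+\De^{(n)}={C^{(n)}}^{-1}$ by \eqref{eqnHTden}, \eqref{eqnHTcn}; the remaining contributions — from $\cV_n$, and from the genuinely nonlinear part $\de{\check\phi_{(*)n+1}}^{(+)}$ of the background variation (Definition \ref{defBGAbckgndVarn}.b) entering the $\fQ_n Q_n$ cross terms — reassemble, via the relations of Remark \ref{remBGAbckgndVarn}.b--c, into the $t$-integral defining $\de\check A_n$ before the statement. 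Concretely I would obtain this by invoking \cite[Lemma \lemBSdeltaAalernew]{BlockSpin} under the dictionary (\ref{eqnBGAblosckspinSub}.a,b) of the proof of Proposition \ref{propBGAomnibus}, since Remark \ref{remBGAbckgndVarn}.b--c identifies $\de\check\phi_{(*)n+1}$, $\de{\check\phi_{(*)n+1}}^{(+)}$, $S_n$ and $C^{(n)}$ with the corresponding abstract quantities there; the identity then transcribes verbatim to (a).

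The main obstacle is the $n\ge 1$ second-variation accounting: tracking exactly how the nonlinearity of the background-field map $\phi_{(*)n}$ feeds the cross term $\<\psi_*-Q_n\phi_*,\fQ_n(\psi-Q_n\phi)\>_0$, so that the leading piece reconstitutes precisely ${C^{(n)}}^{-1}$ while the leftover is exactly the stated $t$-integral and nothing more. This is why the clean route is to appeal to the abstract block-spin lemma of \cite{BlockSpin}, where this computation was performed once and for all.
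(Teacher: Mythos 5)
Your overall approach matches the paper's: for $n=0$ a direct Taylor expansion to second order with the linear term killed by criticality, and for $n\ge1$ an appeal to the abstract block-spin identity \cite[Lemma \lemBSdeltaAalernew]{BlockSpin}. Part (b) is correct as written.

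There is, however, a real flaw in the $n\ge1$ case: you invoke \cite[Lemma \lemBSdeltaAalernew]{BlockSpin} under the dictionary (\ref{eqnBGAblosckspinSub}.a,b), and you write that Remark \ref{remBGAbckgndVarn}.b--c identifies the abstract $\Su$ and $C$ with $S_n$ and $C^{(n)}$. That is not what those items say. In (\ref{eqnBGAblosckspinSub}.a) the chemical potential is absorbed into $D=D_n-\mu$, so under that dictionary the abstract objects become $\Su\leadsto S_n(\mu)$ and $C\leadsto C^{(n)}(\mu)=\big(\sfrac{a}{L^2}Q^*Q+\De^{(n)}(\mu)\big)^{-1}$, which is precisely what Remark \ref{remBGAbckgndVarn}.b records ($S_n(\mu)$, not $S_n$, corresponds to $\Su$). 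But Lemma \ref{lemSTdeA}.a asserts the quadratic part is $\big<\de\psi_*,{C^{(n)}}^{-1}\de\psi\big>_0$ with the $\mu$-free $C^{(n)}$ of \eqref{eqnHTcn}, and the remainder $\de\check A_n$ is built from $\de\check\phi^{(+)}_{(*)n+1}$, which Definition \ref{defBGAbckgndVarn}.b defines by subtracting off $S_n^{(*)}Q_n^*\fQ_n\,\de\psi_{(*)}$ with the $\mu$-free $S_n$. Under your dictionary these do not match, and the $-\mu_n\<\de\psi_*,\de\psi\>$ piece ends up in the wrong place. The paper avoids this by using a different substitution: $D=D_n$ (not $D_n-\mu$) and $P(\phi_*,\phi)=\cV_n(\phi_*,\phi)-\mu_n\<\phi_*,\phi\>_n$, so that the abstract $C$ and $\Su$ reduce to $C^{(n)}$ and $S_n$ on the nose, and the $\mu_n$-term is swept into the nonlinear part exactly as $\de\check A_n$ requires. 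Your informal second-variation argument, where you speak of the ``$\cV$- and $\mu$-free part'' giving ${C^{(n)}}^{-1}$, shows you have the right intuition, but the explicit appeal to (\ref{eqnBGAblosckspinSub}.a,b) contradicts it; you should either cite the modified dictionary actually used in the proof of this lemma, or explicitly peel off the $-\mu_n\<\de\psi_*,\de\psi\>_0$ term using $S_n(\mu)=[\bbbone-\mu S_n]^{-1}S_n$ as in Remark \ref{remBGAbckgndVarn}.c.
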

\begin{proof} 
(a) We use \cite{BlockSpin}  with the substitutions 
\begin{align*}
&\cH=\cH_0^{(n)}   & &\cH_-=\cH_n  & &\cH_+=\cH_{-1}^{(n+1)} 
\\
&Q_-=Q_n \qquad&  &\fQ=\fQ_n  \qquad&  &b=\sfrac{a}{L^2}
\\
&D=D_n & &P(\phi_*,\phi)=\cV_n(\phi_*,\phi)- \mu_n\<\phi_*,\phi\>_n
\hidewidth\\
&\fA(\phi_*,\phi)=\<\phi_*,D_n\phi\>_n+ \cV_n(\phi_*,\phi)
           - \mu_n\<\phi_*,\phi\>_n
\hidewidth
\end{align*}
They give
\begin{align*}
\cA_{\rm eff}(\th_*,\th;\psi_*,\psi;\phi_*,\phi) 
&=\sfrac{a}{L^2}\< \th_*-Q\psi_*\,,\, \th-Q\psi \>_{-1}
   + \< \psi_*-Q_n\,\phi_*\,,\, \fQ_n(\psi-Q_n\,\phi) \>_0 \\
&\hskip0.5in + \<\phi_*,D_n\phi\>_n+ \cV_n(\phi_*,\phi)
  - \mu_n\<\phi_*,\phi\>_n\\
&=\sfrac{a}{L^2}\< \th_*-Q\psi_*\,,\, \th-Q\psi \>_{-1}
   + A_n(\psi_*,\psi,\phi_*,\phi,\mu_n,\cV_n) 
\end{align*} 
Comparing \cite[(\eqnBSdefinitionScheckS), (\eqnBSdefCascovariance) 
and Remark \remBSedA.a]{BlockSpin} with Theorem \ref{HTthminvertibleoperators}, 
\eqref{eqnHTcn} and \eqref{eqnHTden}, we have $C=C^{(n)}$ and $S=S_n$.
Also 
\begin{align*}
\phi_{(*)\rm bg}(\psi_*,\psi)&= \phi_{(*)n}(\psi_*,\psi,\mu_n,\cV_n)\\
\psi_{(*)\rm cr}(\th_*,\th)&= \psi_{(*)n}(\th_*,\th,\mu_n,\cV_n)\\
\check\phi_{(*)\rm bg}(\th_*,\th)
     &= \check\phi_{(*)n+1}(\th_*,\th,\mu_n,\cV_n)
\end{align*}
are the background, critical and next scale background fields, respectively, 
in the sense of \cite[Definition \defBSbackfld]{BlockSpin}.

\noindent The claim now follows from 
\cite[Lemma \lemBSdeltaAalernew]{BlockSpin}.

\Item (b) Observe that 
\begin{equation*}
\sfrac{a}{L^2} \< \th_*-Q\psi_*,\th-Q\psi\>_{-1}
    +A_0\big(\psi_*,\psi,\mu_0,\cV_0\big)
\end{equation*} 
is the sum of the quadratic
form $\sfrac{a}{L^2} \< \th_*-Q\psi_*,\th-Q\psi\>_{-1}
+\< \psi_*,\,D_0\psi\>_0 -\mu_0 \< \psi_*,\,\psi\>_0$ 
and $\cV_0(\psi_*,\psi)$. Now imagine substituting in 
$\psi_{(*)}=\psi_{0(*)}(\th_*,\th,\mu_0,\cV_0)+\de\psi_{(*)}$ and expanding
in powers of $\de\psi_{(*)}$. The total contribution that is of degree
precisely one in $\de\psi_{(*)}$ vanishes by the criticality of $\psi_{(*)0}$. 
By Taylor's theorem with remainder,
\begin{align*}
&\Big[\sfrac{a}{L^2}\!\< \th_*\!-\!Q\psi_*,\th\!-\!Q\psi\>_{-1}\!
          +\!A_0\big(\psi_*,\psi,\mu_0,\cV_0\big)
           \Big]^{\psi_{(*)}=\psi_{0(*)}(\th_*,\th,\mu_0,\cV_0)+\de\psi_{(*)}}
                _{\psi_{(*)}=\psi_{0(*)}(\th_*,\th,\mu_0,\cV_0)}\\
&\hskip0.5in=\sfrac{a}{L^2} \< Q\de\psi_*,Q\de\psi\>_{-1}
+\< \de\psi_*,\,D_0\de\psi\>_0 -\mu_0 \< \de\psi_*,\,\de\psi\>_0\\
&\hskip1in
+ \int_0^1  (1 - t)\,\sfrac{d^2\hfill}{dt^2} 
   \cV_0\big(\psi_{0*}(\th_*,\th,\mu_0,\cV_0)+t\de\psi_{*}\,,\,
               \psi_0(\th_*,\th,\mu_0,\cV_0)+t\de\psi\big)  \, dt 
\end{align*}
\end{proof}

\begin{proposition}\label{propSTmainProp}
\ 
\begin{enumerate}[label=(\alph*), leftmargin=*]
\item
Let $1\le n<\np$. Let $\mu_n$, $\cV_n$, $\cR_n$ and $\cE_n$ be as in 
Theorem \ref{thmTHmaintheorem}. Set
\begin{align*}
\cA_n(\psi_*,\psi)&=\Big[- A_n(\psi_*,\psi, \phi_{*n}, \phi_n,\,\mu_n,\cV_n)
+\cR_n(\phi_{*n},\phi_n)\Big]_{\phi_{(*)n}=\phi_{(*)n}(\psi_*,\psi,\mu_n,\cV_n)}\\
&\hskip0.5in+\cE_n(\psi_*,\psi)
\end{align*}
Then
\begin{align*}
&\bbbt_n^{(SF)} 
\Big(e^{\cA_n(\psi_*,\psi)}\Big) (\th_*,\th;\mu_n,\cV_n)
= \sfrac{1}{\tilde N^{(n)}_\bbbt}\ e^{\check\cC_n(\th_*,\th)}
    \ \check\cF_n(\th_*,\th)
\end{align*}

\item
For $n=0$,\ \ \ 
$
\bbbt_0^{(SF)} 
\Big(e^{\cA_0(\psi_*,\psi)}\Big) (\th_*,\th;\mu_0,\cV_0)
= \sfrac{1}{\tilde N^{(0)}_\bbbt}\ e^{\check\cC_0(\th_*,\th)}
    \ \check\cF_0(\th_*,\th)
$.
\end{enumerate}
\end{proposition}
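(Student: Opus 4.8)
The plan is to unwind the definition of the approximate block spin transform $\bbbt_n^{(SF)}$ of Definition~\ref{defHTapproximateblockspintr} applied to $e^{\cA_n}$ and to reorganize the exponent of the resulting $\ze$--integrand. By that definition, $\bbbt_n^{(SF)}\big(e^{\cA_n}\big)(\th_*,\th;\mu_n,\cV_n)$ equals $\sfrac{1}{\tilde N^{(n)}_\bbbt}$ times the integral over $\{|\ze(x)|\le r_n\}$ of $e^{-aL^{-2}\<\th_*-Q\psi_*,\th-Q\psi\>_{-1}+\cA_n(\psi_*,\psi)}$, evaluated at $\psi_{(*)}=\psi_{(*)n}(\th_*,\th,\mu_n,\cV_n)+\de\psi_{(*)}$ with $\de\psi_*=D^{(n)*}\ze^*$, $\de\psi=D^{(n)}\ze$. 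So the whole content is to show that, after this substitution, the exponent equals $\check\cC_n(\th_*,\th)-\sum_{x\in\cX_0^{(n)}}|\ze(x)|^2-\de\check A_n+\de\check\cR_n+\de\check\cE_n$; one then pulls the $\ze$--independent factor $e^{\check\cC_n}$ out of the integral and recognizes what remains as $\check\cF_n$.

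To analyze the exponent I would split $\cA_n=-A_n(\psi_*,\psi,\phi_{*n},\phi_n,\mu_n,\cV_n)+\cR_n(\phi_{*n},\phi_n)+\cE_n(\psi_*,\psi)$, with $\phi_{(*)n}=\phi_{(*)n}(\psi_*,\psi,\mu_n,\cV_n)$, and treat the three summands separately. For the first, the key observation is that substituting $\psi_{(*)}=\psi_{(*)n}+\de\psi_{(*)}$ into the background field map produces, by the composition rule (Proposition~\ref{propBGAomnibus}.b, as recorded in Remark~\ref{remBGAbckgndVarn}.a), exactly $\check\phi_{(*)n+1}+\de\check\phi_{(*)n+1}$. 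Hence $-aL^{-2}\<\th_*-Q\psi_*,\th-Q\psi\>_{-1}-A_n(\psi_*,\psi,\phi_{*n},\phi_n,\mu_n,\cV_n)$, after substitution, is minus the quantity treated in Lemma~\ref{lemSTdeA}.a; that lemma, together with Proposition~\ref{propBGAomnibus}.c for the value at the critical field, rewrites it as $-\check A_{n+1}(\th_*,\th,\check\phi_{*n+1},\check\phi_{n+1},\mu_n,\cV_n)-\<\de\psi_*,{C^{(n)}}^{-1}\de\psi\>_0-\de\check A_n$. For the remaining two summands, the definitions of $\de\check\cR_n$, $\de\check\cE_n$ and $\check\cE_{n+1,1}$ give that $\cR_n(\phi_{*n},\phi_n)$ becomes $\cR_n(\check\phi_{*n+1},\check\phi_{n+1})+\de\check\cR_n$ and $\cE_n(\psi_*,\psi)$ becomes $\check\cE_{n+1,1}+\de\check\cE_n$. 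Collecting the $\ze$--independent terms yields $\check\cC_n$, and, since $D^{(n)}$ is an operator square root of $C^{(n)}$ (so $D^{(n)}{C^{(n)}}^{-1}D^{(n)}=\bbbone$) and $D^{(n)*}$ is its transpose, on the integration contour $\ze_*=\ze^*$ one has $\<\de\psi_*,{C^{(n)}}^{-1}\de\psi\>_0=\<\ze^*,D^{(n)}{C^{(n)}}^{-1}D^{(n)}\ze\>_0=\<\ze^*,\ze\>_0=\sum_{x}|\ze(x)|^2$, which supplies the Gaussian weights $e^{-|\ze(x)|^2}$ of $\check\cF_n$. This gives part (a). Part (b) is the same argument with the simplifications appropriate to $n=0$: $\cA_0=-A_0(\psi_*,\psi,\mu_0,\cV_0)+\cR_0(\psi_*,\psi)+\cE_0(\psi_*,\psi)$ has no background--field intermediary, so one substitutes $\psi_{(*)}=\psi_{(*)0}+\de\psi_{(*)}$ directly, Lemma~\ref{lemSTdeA}.b replaces Lemma~\ref{lemSTdeA}.a, and Lemma~\ref{lemSCacheckOne}.c identifies $\sfrac{a}{L^2}\<\th_*-Q\psi_{*0},\th-Q\psi_0\>_{-1}+A_0(\psi_{*0},\psi_0,\mu_0,\cV_0)$ with $\check A_1(\th_*,\th,\check\phi_{*1},\check\phi_1,\mu_0,\cV_0)$, using $\check\phi_{1(*)}=\psi_{(*)0}$ (Proposition~\ref{propBGAomnibus}) and the convention $\de\check\phi_{1(*)}=\de\psi_{(*)}$.

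I do not expect a genuine obstacle here; as the section title advertises, this is essentially bookkeeping, and all the substantial input --- the second order expansion of the dominant action about the critical field, and the identification of the fluctuation covariance as ${C^{(n)}}^{-1}$ --- is already supplied by Lemma~\ref{lemSTdeA} and the results of \cite{BlockSpin} it rests on. The three points that simply need care are: that the change of variables $\psi_{(*)}\mapsto\ze_{(*)}$ is already built into $\bbbt_n^{(SF)}$, so no Jacobian is produced (it was absorbed once and for all into $\tilde N^{(n)}_\bbbt$); the square root identity $D^{(n)}{C^{(n)}}^{-1}D^{(n)}=\bbbone$ that collapses the cross term to the standard Gaussian weight; and the differences in bookkeeping between $n\ge 1$ and $n=0$, in particular that for $n=0$ the role of the scale--$n$ background field is played by the critical field $\psi_{(*)0}=\check\phi_{1(*)}$ itself.
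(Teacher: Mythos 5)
Your proof is correct and follows essentially the same route as the paper's: unwind Definition~\ref{defHTapproximateblockspintr}, evaluate the $\ze=0$ part of the exponent via Proposition~\ref{propBGAomnibus}.b,c (and Lemma~\ref{lemSCacheckOne}.c for $n=0$) to obtain $\check\cC_n$, and use Lemma~\ref{lemSTdeA} together with the definitions of $\de\check\cR_n$, $\de\check\cE_n$ and the square-root identity $D^{(n)}{C^{(n)}}^{-1}D^{(n)}=\bbbone$ to identify the rest as $\check\cF_n$. No gaps.
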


\begin{proof}

\noindent{Case $n\ge 1$:}\ \ \ 
By  Definition \ref{defHTapproximateblockspintr},
\begin{equation}\label{eqnSTtheSFintegral}
\begin{split}
&\bbbt_n^{(SF)} 
\Big(e^{\cA_n(\psi_*,\psi)}\Big) (\th_*,\th;\mu_n,\cV_n)
\\
&\hskip0.2in= \sfrac{1}{\tilde N^{(n)}_\bbbt} \Big[\hskip-6pt
 \prod_{x\in\cX_0^{(n)}}  \int\limits_{|\ze(x)|\le r_n} \hskip -11pt
\sfrac{d\ze(x)^*\wedge d\ze(x)}{2\pi i}\Big]
e^{-\sfrac{a}{L^2}\< \th_*\!-Q\psi_*,\th-Q\psi\>_{-1}}
e^{\cA_n(\psi_*,\psi,\mu_n,\cV_n)}
\bigg|_{\psi_{(*)} = \psi_{(*)n} + \de\psi_{(*)}}
\end{split}
\end{equation}
where $\psi_{(*)n}=\psi_{(*)n}(\th_*,\th,\mu_n,\cV_n)$,
        $\de\psi_*= D^{(n)*}\ze^*$, 
       $\de\psi= D^{(n)}\ze$, 
and
\begin{align*}
\cA_n(\psi_*,\psi,\mu_n,\cV_n)
&=- A_n(\psi_*,\psi, \phi_*, \phi,\,\mu_n,\cV_n)
        +\cR_n(\phi_*,\phi)\\
&\hskip2in+\cE_n(\psi_*,\psi)
   \Big|_{\atop{\phi_*=\phi_{*n}(\psi_*,\psi,\mu_n,\cV_n)}
               {\phi=\phi_n(\psi_*,\psi,\mu_n,\cV_n)}}
\end{align*}
When $\ze=0$ the exponent of the integral \eqref{eqnSTtheSFintegral}
reduces to
\begin{align*}
&-\sfrac{a}{L^2}\< \th_*\!-Q\psi_{*n}(\th_*,\th,\mu_n,\cV_n),
                       \th-Q\psi_n(\th_*,\th,\mu_n,\cV_n)\>_{-1}\\
  &\hskip0.25in - A_n\big(\psi_{*n},\psi_n, 
                  \check\phi_{*n+1}(\th_*,\th,\mu_n,\cV_n),
                  \check\phi_{n+1}(\th_*,\th,\mu_n,\cV_n),\mu_n,\cV_n\big)
                  \Big|_{\psi_{(*)n}=\psi_{*n}(\th_*,\th,\mu_n,\cV_n)}\\
&\hskip0.5in
+\cR_n\big( \check\phi_{*n+1}(\th_*,\th,\mu_n,\cV_n),
                  \check\phi_{n+1}(\th_*,\th,\mu_n,\cV_n)\big)\\
&\hskip0.5in
+\cE_n\big( \psi_{*n}(\th_*,\th,\mu_n,\cV_n),\psi_n(\th_*,\th,\mu_n,\cV_n)\big)\\
&=- \check A_{n+1}(\th_*,\th,\check\phi_{*n+1}(\th_*,\th,\mu_n,\cV_n),
                  \check\phi_{n+1}(\th_*,\th,\mu_n,\cV_n),\mu_n,\cV_n)\\
&\hskip0.5in
+\cR_n\big( \check\phi_{*n+1}(\th_*,\th,\mu_n,\cV_n),
                  \check\phi_{n+1}(\th_*,\th,\mu_n,\cV_n)\big)\\
&\hskip0.5in
+\cE_n\big( \psi_{*n}(\th_*,\th,\mu_n,\cV_n),\psi_n(\th_*,\th,\mu_n,\cV_n)\big)
\end{align*}
by Proposition \ref{propBGAomnibus}.b,c.

The part of the exponent of the integral \eqref{eqnSTtheSFintegral}
that is of degree at least one in $\de\psi$ is
\begin{align*}
&-\big<\de\psi_*,{C^{(n)}}^{-1}\,\de\psi\big>_0
-\de \check A_n(\th_*,\th,\de\psi_*,\de\psi)
      +\de\check\cR_n\big(\th_*,\th,\de\psi_*,\de\psi\big) \\
&\hskip3.5in    +\de\check\cE_n\big(\th_*\,,\,\th\,,\, \de\psi_*\,,\,\de\psi\big)
\end{align*}
by Lemma \ref{lemSTdeA}.a. Since $\ D^{(n)}$ is an operator square root of $C^{(n)}$
\begin{equation*}
\big<\de\psi_*,{C^{(n)}}^{-1}\,\de\psi\big>_0
    \Big|_{\de\psi_*= D^{(n)*}\ze^*
            \atop \de\psi= D^{(n)}\ze}
=\<\ze^*,\ze\>_0
\end{equation*}


\Item {Case $n=0$:}\ \ \ 
By  Definition \ref{defHTapproximateblockspintr},
\begin{equation}\label{eqnSTtheSFintegralnzero}
\begin{split}
&(\bbbt_0^{(SF)} e^{\cA_0})(\th_*,\th;\mu_0,\cV_0) \\
&\hskip0.1in=\sfrac{1}{\tilde N^{(0)}_\bbbt} \Big[\hskip-4pt
 \prod_{x\in\cX_0}\  \int_{|\ze(x)|\le r_0} \hskip -9pt
\sfrac{d\ze(x)^*\wedge d\ze(x)}{2\pi i}\Big]
e^{-a L^{-2}\< \th_*\!-Q\psi_*,\th-Q\psi\>_{-1}}\,e^{\cA_0(\psi_*,\psi)}
\bigg|_{\atop{\psi_* = \psi_{0*}(\th_*,\th,\mu_0,\cV_0) + \de\psi_*}
             {\psi = \psi_{0}(\th_*,\th,\mu_0,\cV_0) +\de\psi \ \ \ }}
\end{split}
\end{equation}
where, again, $\de\psi_*= D^{(0)*}\ze^*$, 
       $\de\psi= D^{(0)}\ze$.  
By \eqref{eqnHTaZero} and Definition \ref{defHTbackgrounddomaction}.b,
when $\ze=0$ the exponent of the integral \eqref{eqnSTtheSFintegralnzero}
reduces to
\begin{align*}
&-a L^{-2}\< \th_*-Q\psi_{0*}(\th_*,\th,\mu_0,\cV_0),
         \th-Q\psi_0(\th_*,\th,\mu_0,\cV_0)\>\\
    &\hskip0.5in - A_0\big(\psi_{0*}(\th_*,\th,\mu_0,\cV_0),\psi_0(\th_*,\th,\mu_0,\cV_0)
   ,\mu_0,\cV_0\big)\\
&\hskip1in
+\cR_0\big( \psi_{0*}(\th_*,\th,\mu_0,\cV_0),\psi_0(\th_*,\th,\mu_0,\cV_0)\big)
\\&\hskip1.5in+\cE_0\big( \psi_{0*}(\th_*,\th,\mu_0,\cV_0),\psi_0(\th_*,\th,\mu_0,\cV_0)\big)
\\
&=- \check A_1(\th_*,\th,\check\phi_{1*}(\th_*,\th,\mu_0,\cV_0),
                  \check\phi_1(\th_*,\th,\mu_0),\mu_0,\cV_0)\\
&\hskip0.5in
+\cR_0\big( \check\phi_{1*}(\th_*,\th,\mu_0,\cV_0),
                  \check\phi_1(\th_*,\th,\mu_0,\cV_0)\big)\\
&\hskip1in+\cE_0\big( \psi_{0*}(\th_*,\th,\mu_0,\cV_0),
            \psi_0(\th_*,\th,\mu_0,\cV_0)\big)
\end{align*}
by Lemma \ref{lemSCacheckOne}.c and Proposition \ref{propBGAomnibus}.b.


The part of the exponent of the integral \eqref{eqnSTtheSFintegralnzero}
that is of degree at least one in $\de\psi$ is
\begin{align*}
&-\big<\de\psi_*,{C^{(0)}}^{-1}\,\de\psi\big>_0
-\de\check A_0(\th_*,\th,\de\psi_*,\de\psi)
      +\de\check\cR_0\big(\th_*,\th, \de\psi_*,\de\psi\big) \\
&\hskip3.5in  +\de\check\cE_0\big(\th_*\,,\,\th\,,\, \de\psi_*\,,\,\de\psi\big)
\end{align*}
by Lemma \ref{lemSTdeA}.b. 

\end{proof}

Corollary \ref{corSTmainCor}, below, gives a representation for
\begin{align*}
&\Big(\!(\bbbs \bbbt_n^{(SF)}) \circ
   \cdots \circ (\bbbs \bbbt_0^{(SF)})\!\Big) 
\Big(\!e^{\cA_0 }\! \Big) (\psi_*,\psi)\\
&\hskip1in=\Big(\! \bbbt_n^{(SF)}\circ(\bbbs \bbbt_{n-1}^{(SF)}) \circ
   \cdots \circ (\bbbs \bbbt_0^{(SF)})\!\Big) 
\Big(\!e^{\cA_0}\! \Big) (\bbbs^{-1}\psi_*,\bbbs^{-1}\psi)
\end{align*}
which, up to a multiplicative constant, is of the form
\begin{equation*}
e^{\cC_n(\psi_*,\psi)}\ \cF_n(\psi_*,\psi)
\end{equation*}
where
\begin{itemize}[leftmargin=*, topsep=2pt, itemsep=2pt, parsep=0pt]
\item
the ``contribution from the critical field'' is
\begin{align*}
\cC_n(\psi_*,\psi)
&=\!- A_{n+1}(\psi_*,\psi,\phi_{*n+1}(\psi_*,\psi,L^2\mu_n,\bbbs\cV_n),
                        \phi_{n+1}(\psi_*,\psi,L^2\mu_n,\bbbs\cV_n),
                        L^2\mu_n,\bbbs\cV_n)\\
&\hskip0.1in
+(\bbbs \cR_n)\big(\phi_{*n+1}(\psi_*,\psi,L^2\mu_n,\bbbs\cV_n),
                    \phi_{n+1}(\psi_*,\psi,L^2\mu_n,\bbbs\cV_n)\big)
+\cE_{n+1,1}(\psi_*,\psi)
\end{align*}
with
\begin{align}
\cE_{n+1,1}(\psi_*,\psi)
&=(\bbbs\cE_n)\big( \hat\psi_{*n}(\psi_*,\psi,\mu_n,\cV_n),
                   \hat\psi_n(\psi_*,\psi,\mu_n,\cV_n)\big)\nonumber\\
\hat \psi_{(*)n}(\psi_*,\psi,\mu,\cV)
&=\bbbs\big[\psi_{*n}(\bbbs^{-1}\psi_*,\bbbs^{-1}\psi,\mu,\cV)\big]
\label{eqnSThatpsi}
\end{align}
and the $\mu_n$, $\cV_n$, $\cR_n$ and $\cE_n$ of 
Theorem \ref{thmTHmaintheorem}\ for $n\ge 1$ and 
of \S\ref{sectINTstartPoint} for $n=0$,
\item
and the ``fluctuation integral'' is
\begin{equation}\label{eqnOSAfluctInt}
\begin{split}
\cF_n(\psi_*,\psi)
&=\Big[\hskip-3pt
 \prod_{w\in\cX_1^{(n)}}  \int\limits_{|z(w)|\le r_n} \hskip -6pt
\sfrac{dz(w)^*\wedge dz(w)}{2\pi i} e^{-|z(w)|^2}\Big]\\
&\hskip0.6in\exp\Big\{\!-\de A_n(\psi_*,\psi,z_*,z) 
      +\de\cR_n(\psi_*,\psi,z_*,z) 
+\de\cE_n\big(\psi_*,\psi,z_*,z\big)\!\Big\}
\end{split}
\end{equation}
with 
\begin{itemize}[leftmargin=*, topsep=2pt, itemsep=2pt, parsep=0pt]
\item
for $n\ge 0$
\begin{equation}\label{eqnOSAdeEndef}
\de\cE_n(\psi_*,\psi,z_*,z)
=(\bbbs \cE_n)(\Psi_*,\Psi)\Big|
       ^{\Psi_{(*)}=\hat\psi_{(*)n}(\psi_*,\psi,\mu_n,\cV_n)
                        +L^{3/2}\bbbs D^{(n)(*)}\bbbs^{-1}z_{(*)}}
       _{\Psi_{(*)}=\hat\psi_{(*)n}(\psi_*,\psi,\mu_n,\cV_n)}
\end{equation}
\item
for $n\ge 0$
\begin{equation}\label{eqnOSAdeRndef}
\de\cR_n(\psi_*,\psi,z_*,z)
=(\bbbs\cR_n)(\Phi_*,\Phi)\Big|
       ^{\Phi_{(*)}=\phi_{(*)n+1}(\psi_*,\psi,L^2\mu_n,\bbbs\cV_n)
                     +\de\hat\phi_{(*)n+1}(\psi_*,\psi,z_*,z)}
       _{\Phi_{(*)}=\phi_{(*)n+1}(\psi_*,\psi,L^2\mu_n,\bbbs\cV_n)}
\end{equation}
where, 
\begin{equation}\label{eqnSTdehatphidef}
\begin{split}
&\de\hat\phi_{(*)n+1}(\psi_*,\psi,z_*,z)\\
&\hskip0.2in=\begin{cases}
\bbbs\big[\de\check\phi_{(*)n+1}\big(\bbbs^{-1}\psi_*\,,\,
      \bbbs^{-1}\psi_*\,,\,
      D^{(n)*}\bbbl_* z_*\,,\,
      D^{(n)}\bbbl_* z\,,\,\mu_n,\cV_n\big)\big] & \text{if $n\ge 1$}\\
\noalign{\vskip0.1in}
L^{3/2}\bbbs D^{(0)(*)}\bbbs^{-1}z_{(*)} & \text{if $n=0$}
     \end{cases}
\end{split}
\end{equation}
\refstepcounter{equation}\label{eqnOSAdeAndef}
and
\item
for $n\ge 1$,
\begin{equation}
\begin{split}
\de A_n(\psi_*,\psi,z_*,z)
&= -L^{7/2} \int_0^1 \!dt\,  \big< z_*,\bbbs D^{(n)}\fQ_n\, Q_n\bbbs^{-1}\,
    [\de{\hat\phi}^{(+)}_{n+1}\big(\psi_*,\psi; t\,z_*,t\,z\big)] \big>_1
\\
& \hskip0.2in
-L^{7/2}\int_0^1 \!dt\,  \big< \bbbs D^{(n)*}\fQ_n\, Q_n\bbbs^{-1}\,
[\de{\hat\phi_{*n+1}}^{(+)}\big(\psi_*,\psi;t\,z_*,t\,z\big)]
,\, z \big>_1 
\end{split}
\tag{\ref{eqnOSAdeAndef}.a}
\end{equation}
and, for $n=0$,
\begin{equation}
\begin{split}
&\de A_0(\psi_*,\psi,z_*,z)\\
&\hskip0.2in=  \int_0^1\hskip-4pt  (1 - t)\,\sfrac{d^2\hfill}{dt^2} 
   (\bbbs\cV_0)\big(\hat\psi_*+t\de\psi_*\,,\,
               \hat\psi+t\de\psi\big)  \, dt
         \bigg|_{\atop{\hat\psi_{(*)}=\hat\psi_{0(*)}(\psi_*,\psi,\mu_0,\cV_0)}
                  {\de\psi_{(*)}=L^{3/2}\bbbs D^{(0)(*)}\bbbs^{-1}z_{(*)}}}\\
&\hskip1in -\mu_0 L^5\big<z_*,\,
                       \bbbs C^{(0)}\bbbs^{-1}z\big>_1
\end{split}
\tag{\ref{eqnOSAdeAndef}.b}
\end{equation}
where, for $n\ge 1$,
\begin{equation}\label{eqnOSAhatphiplus}
\de\hat\phi_{(*)n+1}^{(+)}(\psi_*,\psi,z_*,z)
=\de\hat\phi_{(*)n+1}(\psi_*,\psi,z_*,z)
    -L^{3/2}\bbbs S_n^{(*)}Q_n^*\fQ_n D^{(n)(*)}\bbbs^{-1}z_{(*)}
\end{equation}
\end{itemize}
\end{itemize}

\begin{corollary}\label{corSTmainCor}
\ 
\begin{enumerate}[label=(\alph*), leftmargin=*]
\item 
Let $1\le n<\np$. Let $\mu_n$, $\cR_n$ and $\cE_n$ be as in 
Theorem \ref{thmTHmaintheorem}. Set
\begin{align*}
\cA_n(\psi_*,\psi)&=\Big[- A_n(\psi_*,\psi, \phi_{*n}, \phi_n,\,\mu_n,\cV_n)
+\cR_n(\phi_{*n},\phi_n)\Big]_{\phi_{(*)n}=\phi_{(*)n}(\psi_*,\psi,\mu_n,\cV_n)}
\\&\hskip0.5in
+\cE_n(\psi_*,\psi)
\end{align*}
Then
\begin{align*}
&\big(\bbbs \bbbt_n^{(SF)} \big)
\Big(e^{\cA_n}\Big) (\psi_*,\psi;\mu_n,\cV_n)
= \sfrac{1}{\tilde N^{(n)}_\bbbt}\ e^{\cC_n(\psi_*,\psi)}
    \ \cF_n(\psi_*,\psi)
\end{align*}

\item 
For $n=0$,\ \ \ 
$
\big(\bbbs\bbbt_0^{(SF)}\big) 
\Big(e^{\cA_0}\Big) (\psi_*,\psi;\mu_0,\cV_0)
= \sfrac{1}{\tilde N^{(0)}_\bbbt}\ e^{\cC_0(\psi_*,\psi)}
    \ \cF_0(\psi_*,\psi)
$
\end{enumerate}
\end{corollary}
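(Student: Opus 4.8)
The plan is to deduce Corollary \ref{corSTmainCor} from Proposition \ref{propSTmainProp} by applying the scaling operator $\bbbs$ and checking that it carries $\check\cC_n$ to $\cC_n$ and $\check\cF_n$ to $\cF_n$. By Definition \ref{defSCscaling}.b one has $\big(\bbbs\bbbt_n^{(SF)}(e^{\cA_n})\big)(\psi_*,\psi)=\bbbt_n^{(SF)}(e^{\cA_n})(\bbbs^{-1}\psi_*,\bbbs^{-1}\psi)$, so Proposition \ref{propSTmainProp}.a reduces part (a) to the two identities
\[
\cC_n(\psi_*,\psi)=\check\cC_n(\bbbs^{-1}\psi_*,\bbbs^{-1}\psi),\qquad
\cF_n(\psi_*,\psi)=\check\cF_n(\bbbs^{-1}\psi_*,\bbbs^{-1}\psi),
\]
and part (b) is the identical computation with the $n=0$ formulas. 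The objects appearing in $\cC_n$ and $\cF_n$ --- the rescaled background and critical fields $\phi_{(*)n+1}(\,\cdot\,,\cdot\,,L^2\mu_n,\bbbs\cV_n)$ and $\hat\psi_{(*)n}$, the rescaled interactions $\bbbs\cR_n$, $\bbbs\cE_n$, $\bbbs\cV_0$, and the rescaled variation $\de\hat\phi^{(+)}_{(*)n+1}$ --- were introduced in \eqref{eqnSThatpsi} and \eqref{eqnOSAdeEndef}--\eqref{eqnOSAhatphiplus} precisely so that these identities hold, so the proof is bookkeeping.

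For the first identity I would substitute $\th_{(*)}=\bbbs^{-1}\psi_{(*)}$ into the definition of $\check\cC_n$ preceding Proposition \ref{propSTmainProp} and treat the three terms in turn. By Definition \ref{defSCacheck} together with the identity $\bbbs\big[\check\phi_{(*)n+1}(\bbbs^{-1}\psi_*,\bbbs^{-1}\psi,\mu_n,\cV_n)\big]=\phi_{(*)n+1}(\psi_*,\psi,L^2\mu_n,\bbbs\cV_n)$ coming from Definition \ref{defBGAphicheck}, the term $\check A_{n+1}(\th_*,\th,\check\phi_{*n+1},\check\phi_{n+1},\mu_n,\cV_n)$ becomes the $A_{n+1}$-term of $\cC_n$; the same identity, combined with Definition \ref{defSCscaling}.b, turns $\cR_n(\check\phi_{*n+1},\check\phi_{n+1})$ into $(\bbbs\cR_n)(\phi_{*n+1},\phi_{n+1})$; and \eqref{eqnSThatpsi} turns $\check\cE_{n+1,1}(\bbbs^{-1}\psi_*,\bbbs^{-1}\psi)=\cE_n\big(\psi_{*n}(\bbbs^{-1}\psi_*,\bbbs^{-1}\psi,\mu_n,\cV_n),\psi_n(\cdots)\big)$ into $(\bbbs\cE_n)(\hat\psi_{*n},\hat\psi_n)=\cE_{n+1,1}(\psi_*,\psi)$. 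Adding these up gives $\cC_n=\bbbs\check\cC_n$.

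For the second identity I would first relabel the integration variable $\ze$ on $\cX_0^{(n)}$ as the variable $z$ on $\cX_1^{(n)}$ via $\ze=\bbbl_* z$; since $|\cX_0^{(n)}|=|\cX_1^{(n)}|$ and $(\bbbl_* z)(\bbbl w)=z(w)$, this is a measure- and constraint-preserving bijection of the integration variables (so $\tilde N^{(n)}_\bbbt$, the radii $r_n$, and the Gaussian weights are untouched), under which $\de\psi_*=D^{(n)*}\ze^*$, $\de\psi=D^{(n)}\ze$ become $D^{(n)*}\bbbl_* z_*$, $D^{(n)}\bbbl_* z$. Then one checks, term by term, that $\de\check\cE_n$, $\de\check\cR_n$, $\de\check A_n$ evaluated at these arguments equal the $\de\cE_n$, $\de\cR_n$, $\de A_n$ of \eqref{eqnOSAdeEndef}, \eqref{eqnOSAdeRndef}, \eqref{eqnOSAdeAndef}. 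The tools are Remark \ref{remSCscaling}.a in the form $\bbbl_*=L^{3/2}\bbbs^{-1}$, the pairing identity $\langle\bbbl_* f,\bbbl_* g\rangle_{j-1}=L^5\langle f,g\rangle_j$ (hence $\langle\bbbl_* f,g\rangle_{j-1}=L^{7/2}\langle f,\bbbs g\rangle_j$), the definition \eqref{eqnSTdehatphidef} of $\de\hat\phi_{(*)n+1}$ together with the defining equation of $\de\check\phi_{(*)n+1}$ in Definition \ref{defBGAbckgndVarn} and the definition \eqref{eqnOSAhatphiplus} of its $^{(+)}$-part, and --- for $n=0$ --- the explicit form $C^{(0)}=\big(\sfrac{a}{L^2}Q^*Q+D_0\big)^{-1}$ together with the shape of $\de\check A_0$.

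The structural content is entirely in Proposition \ref{propSTmainProp} and the composition rule Proposition \ref{propBGAomnibus}; the only real work, and the place one can most easily slip, is the faithful bookkeeping of the powers of $L$: the $L^{3/2}$ from scaling a field, the $L^2$ from the replacements $\mu_n\mapsto L^2\mu_n$ and $\cV_n\mapsto\bbbs\cV_n$, and the composites $L^{7/2}=L^5\cdot L^{-3/2}$ and $L^5$ that appear once the scaling of the inner products $\langle\,\cdot\,,\,\cdot\,\rangle_j$ is folded in, so that every term of $\de A_n$ --- in particular the $L^{7/2}$ prefactors in \eqref{eqnOSAdeAndef}.a and the term $-\mu_0 L^5\langle z_*,\bbbs C^{(0)}\bbbs^{-1}z\rangle_1$ in \eqref{eqnOSAdeAndef}.b --- comes out exactly as written.
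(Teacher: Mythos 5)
Your proposal is correct and follows essentially the same route as the paper's own proof: after reducing to $\bbbs\check\cC_n=\cC_n$ and $\bbbs\check\cF_n=\cF_n$ via Proposition \ref{propSTmainProp}, you handle the ``critical field'' piece with Definitions \ref{defSCacheck}, \ref{defBGAphicheck} and \eqref{eqnSThatpsi}, and handle the fluctuation integral by the change of variables $\ze=\bbbl_*z$ followed by the term--by--term identification of $\de\check\cE_n$, $\de\check\cR_n$, $\de\check A_n$ with $\de\cE_n$, $\de\cR_n$, $\de A_n$, exactly as in the paper (its equations \eqref{eqnSTintuncheck}--\eqref{eqnSTdeAuncheck}). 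Your bookkeeping of the scaling factors $L^{3/2}$, $L^2$, $L^{7/2}$, $L^5$ is consistent with Remark \ref{remSCscaling}.a,c,d, which is where the paper also sources them.
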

\begin{proof} By Proposition \ref{propSTmainProp}, it suffices to verify that
\begin{equation*}
(\bbbs\check\cC_n)(\psi_*,\psi)=\cC_n(\psi_*,\psi)\qquad
(\bbbs\check\cF_n)(\psi_*,\psi)=\cF_n(\psi_*,\psi)
\end{equation*}
for all $n\ge 0$. That we have $\bbbs\check\cC_n=\cC_n$ follows 
immediately from
\begin{align*}
&\check A_{n+1}(\th_*,\th,\check\phi_{*n+1}(\th_*,\th,\mu_n,\cV_n),
 \check\phi_{n+1}(\th_*,\th,\mu_n,\cV_n),\mu_n,\cV_n)
                           \Big|_{\th_{(*)}=\bbbs^{-1}\psi_{(*)}}
\\&\hskip0.5in
 =A_{n+1}(\psi_*,\psi,\phi_{*n+1}(\psi_*,\psi,L^2\mu_n,\bbbs\cV_n),
                       \phi_{n+1}(\psi_*,\psi,L^2\mu_n,\bbbs\cV_n),
                       L^2\mu_n,\bbbs\cV_n)
\end{align*}
by Definition \ref{defSCacheck} and Definition \ref{defBGAphicheck},
 and
\begin{align*}
&\cR_n\big( \check\phi_{*n+1}(\th_*,\th,\mu_n,\cV_n),
  \check\phi_{n+1}(\th_*,\th,\mu_n,\cV_n)\big)
                          \Big|_{\th_{(*)}=\bbbs^{-1}\psi_{(*)}}
\\&\hskip0.5in
=(\bbbs\cR_n)\big(\,\bbbs[\check\phi_{*n+1}(\th_*,\th,\mu_n,\cV_n)]\,,\, \bbbs[\check\phi_{n+1}(\th_*,\th,\mu_n,\cV_n)]\,\big)
                           \Big|_{\th_{(*)}=\bbbs^{-1}\psi_{(*)}}
\\&\hskip0.5in
=(\bbbs\cR_n)\big(\, \phi_{*n+1}(\psi_*,\psi,L^2\mu_n,\bbbs\cV_n)\,,\,
                       \phi_{n+1}(\psi_*,\psi,L^2\mu_n,\bbbs\cV_n)\, \big)
\end{align*}
and
\begin{align*}
&\cE_n\big( \psi_{*n}(\th_*,\th,\mu_n,\cV_n),\psi_n(\th_*,\th,\mu_n,\cV_n)\big)
         \Big|_{\th_{(*)}=\bbbs^{-1}\psi_{(*)}} 
\\&\hskip0.5in
=(\bbbs\cE_n)\big( \hat\psi_{*n}(\psi_*,\psi,\mu_n,\cV_n),
                   \hat\psi_n(\psi_*,\psi,\mu_n,\cV_n)\big)
\end{align*}
by \eqref{eqnSThatpsi}. 

Under the substitution $\ze(x)=(\bbbl_* z)(x)= z(\bbbl^{-1} x)$
\begin{equation}\label{eqnSTintuncheck}
\begin{split}
&\Big[\hskip-3pt
 \prod_{x\in\cX_0^{(n)}}  \int\limits_{|\ze(x)|\le r_n} \hskip -9pt
\sfrac{d\ze(x)^*\wedge d\ze(x)}{2\pi i} e^{-|\ze(x)|^2}\Big]\cG(\ze_*,\ze)
\\&\hskip1in
=\Big[\hskip-3pt
 \prod_{w\in\cX_1^{(n)}}  \int\limits_{|z(w)|\le r_n} \hskip -9pt
\sfrac{dz(w)^*\wedge dz(w)}{2\pi i} e^{-|z(w)|^2}\Big]\cG(\bbbl_*z_*,\bbbl_*z)
\end{split}
\end{equation}
By \eqref{eqnSThatpsi} and Remark \ref{remSCscaling}.a,
\begin{equation}\label{eqnSTdeEuncheck}
\begin{split}
\de\check\cE_n(\th_*,\th,\de\psi_*,\de\psi)
   \Big|_{\atop{\th_{(*)}=\bbbs^{-1}\psi_{(*)}}
               {\de\psi_{(*)}= D^{(n)(*)}\bbbl_*z_{(*)}}}
&=\cE_n(\hat\psi_*,\hat\psi)\Big|
       ^{\hat\psi_{(*)}=\bbbs^{-1}\hat\psi_{(*)n}(\psi_*,\psi,\mu_n,\cV_n)
                        + D^{(n)(*)}\bbbl_* z_{(*)}}
       _{\hat\psi_{(*)}=\bbbs^{-1}\hat\psi_{(*)n}(\psi_*,\psi,\mu_n,\cV_n)}\\
&\hskip-0.5in=(\bbbs \cE_n)(\Psi_*,\Psi)\Big|
       ^{\Psi_{(*)}=\hat\psi_{(*)n}(\psi_*,\psi,\mu_n,\cV_n)
                        +L^{3/2}\bbbs D^{(n)(*)}\bbbs^{-1}z_{(*)}}
       _{\Psi_{(*)}=\hat\psi_{(*)n}(\psi_*,\psi,\mu_n,\cV_n)}
\end{split}
\end{equation}
and, by Definition \ref{defBGAphicheck} and \eqref{eqnSTdehatphidef}
\begin{equation}\label{eqnSTdeRuncheck}
\begin{split}
&\de\check\cR_n(\th_*,\th,\de\psi_*,\de\psi)
   \Big|_{\atop{\th_{(*)}=\bbbs^{-1}\psi_{(*)}}
               {\de\psi_{(*)}= D^{(n)(*)}\bbbl_*z_{(*)}}}\\
&\hskip0.7in=\cR_n(\phi_*,\phi)\Big|
       ^{\phi_{(*)}=\bbbs^{-1}[\phi_{(*)n+1}(\psi_*,\psi,L^2\mu_n,\bbbs\cV_n)]
                        + \bbbs^{-1}[\de\hat\phi_{(*)n+1}(\psi_*,\psi,z_*,z)]}
    _{\phi_{(*)}=\bbbs^{-1}[\phi_{(*)n+1}(\psi_*,\psi,L^2\mu_n,\bbbs\cV_n)]}\\
&\hskip0.7in=(\bbbs\cR_n)(\Phi_*,\Phi)\Big|
       ^{\Phi_{(*)}=\phi_{(*)n+1}(\psi_*,\psi,L^2\mu_n,\bbbs\cV_n)
                     +\de\hat\phi_{(*)n+1}(\psi_*,\psi,z_*,z)}
       _{\Phi_{(*)}=\phi_{(*)n+1}(\psi_*,\psi,L^2\mu_n,\bbbs\cV_n)}
\end{split}
\end{equation}
Since, for $n\ge 1$,
\begin{align*}
&\de{\check\phi_{n+1}}^{(+)}\big(\th_*,\th;\de\psi_*,\de\psi,\mu_n,\cV_n\big)
      \Big|_{\atop{\th_{(*)}=\bbbs^{-1}\psi_{(*)}}
                  {\de\psi_{(*)}= D^{(n)(*)}\bbbl_*z_{(*)}}}\\
&\hskip0.5in
 =\de{\check\phi_{n+1}}\big(\th_*,\th;\de\psi_*,\de\psi,\mu_n,\cV_n\big)
      \Big|_{\atop{\th_{(*)}=\bbbs^{-1}\psi_{(*)}}
                  {\de\psi_{(*)}= D^{(n)(*)}\bbbl_*z_{(*)}}}
   -S_n Q^*_n\fQ_n D^{(n)}\bbbl_*z\\
&\hskip0.5in=\bbbs^{-1}[\de\hat\phi_{n+1}(\psi_*,\psi,z_*,z)]
   -L^{3/2}S_n Q^*_n \fQ_n D^{(n)}\bbbs^{-1}z\\
&\hskip0.5in=\bbbs^{-1}[\de\hat\phi_{n+1}^{(+)}(\psi_*,\psi,z_*,z)]
\end{align*}
by Definition \ref{defBGAbckgndVarn}.b, \eqref{eqnSTdehatphidef}
 and Remark \ref{remSCscaling}.a, we have
\begin{align*}
&\big< \de\psi_*,\fQ_n\, Q_n\,
\de{\check\phi}^{(+)}_{n+1}\big(\th_*,\th;\de\psi_*,\de\psi,\mu_n,
                              \cV_n\big) \big>_0
    \Big|_{\atop{\th_{(*)}=\bbbs^{-1}\psi_{(*)}}
                {\de\psi_{(*)}= D^{(n)(*)}\bbbl_*z_{(*)}}}\\
&\hskip0.5in=
L^2\big< \bbbs D^{(n)*}\bbbl_*z_*\,,\,\bbbs\fQ_n\, Q_n\,
\bbbs^{-1}[\de\hat\phi_{n+1}^{(+)}(\psi_*,\psi,z_*,z)] \big>_1\\
&\hskip0.5in=
L^{7/2}\big< \bbbs D^{(n)*}\bbbs^{-1}z_*\,,\,\bbbs\fQ_n\, Q_n\,
\bbbs^{-1}[\de\hat\phi_{n+1}^{(+)}(\psi_*,\psi,z_*,z)] \big>_1\\
&\hskip0.5in=
L^{7/2}\big< z_*\,,\,\bbbs D^{(n)}\fQ_n\, Q_n\,
\bbbs^{-1}[\de\hat\phi_{n+1}^{(+)}(\psi_*,\psi,z_*,z)] \big>_1
\end{align*}
by Remark \ref{remSCscaling}.a,c,d. Consequently, for $n\ge 1$,
\begin{equation}\label{eqnSTdeAuncheck}
\de \check A_n(\th_*,\th,\de\psi_*,\de\psi)
\Big|_{\atop{\th_{(*)}=\bbbs^{-1}\psi_{(*)}}
            {\de\psi_{(*)}= D^{(n)(*)}\bbbl_*z_{(*)}}}
=\de  A_n(\psi_*,\psi,z_*,z) 
\end{equation}
For $n=0$,
\begin{align*}
&\cV_0\big(\psi_{0*}(\th_*,\th,\mu_0,\cV_0)+\de\psi_{*}\,,\,
               \psi_0(\th_*,\th,\mu_0,\cV_0)+\de\psi\big)
\Big|_{\atop{\th_{(*)}=\bbbs^{-1}\psi_{(*)}}
            {\de\psi_{(*)}= D^{(0)(*)}\bbbl_*z_{(*)}}}\\
&\hskip0.1in=
(\bbbs\cV_0)\big(\hat\psi_{0*}(\psi_*,\psi,\mu_0,\cV_0)
          +L^{3/2}\bbbs D^{(0)*}\bbbs^{-1}z_*\,,\,
               \hat\psi_0(\psi_*,\psi,\mu_0,\cV_0)
            +L^{3/2}\bbbs D^{(0)}\bbbs^{-1}z\big)
\end{align*}
and
\begin{align*}
\< \de\psi_*,\,\de\psi\>_0\Big|_{
          \de\psi_{(*)}= D^{(0)(*)}\bbbl_*z_{(*)}}
&=L^2\big<\bbbs D^{(0)*}\bbbl_*z_*,\,\bbbs D^{(0)}\bbbl_*z\big>_1\\
&=L^5\big<\bbbs D^{(0)*}\bbbs^{-1}z_*,\,
          \bbbs D^{(0)}\bbbs^{-1}z\big>_1\\
&=L^5\big<z_*,\,
          \bbbs {C^{(0)}}\bbbs^{-1}z\big>_1
\end{align*}
by \eqref{eqnSThatpsi} and Remark \ref{remSCscaling}.a,c,d.
Therefore \eqref{eqnSTdeAuncheck} also holds for $n=0$. 
That $\bbbs\check\cF_n=\cF_n$ now follows from \eqref{eqnSTintuncheck},
\eqref{eqnSTdeEuncheck}, \eqref{eqnSTdeRuncheck} and \eqref{eqnSTdeAuncheck}.
\end{proof}

\newpage
\appendix
\section{Compendium of Definitions}\label{appDefinitions}

\subsection{Lattices}\label{appDEFlattices}

We use many different lattices. Our initial system is a finite 
volume (continuous) spin system having one (complex valued) spin 
at each site of the lattice\footnote{Of course $\cX_0$ is a finite set 
and so is perhaps more accurately described as a discrete torus, 
rather than a lattice.}
\begin{equation*}
\cX_0=\big(\bbbz/L_\tp\bbbz\big)\times
         \big(\bbbz^3/L_\sp\bbbz^3\big) 
\end{equation*}
where $L_\tp\in L^2\bbbn$ and $L_\sp\in L\bbbn$ are the temporal and 
spatial sizes of this initial, finite volume, lattice
and $L\ge 3$ is a fixed odd natural number. $\cX_0$ is a unit lattice 
in the sense that the distance between nearest neighbours in the 
lattice is $1$. 
During each renormalization group step this lattice
is scaled down. In each of the first $\np$ steps, which are the
steps considered in this paper and in \cite{ParOv,PAR2}, we use 
(anisotropic) ``parabolic scaling'' and decrease
the lattice spacing in the temporal direction 
by a factor of $L^2$ and in the spatial directions by a factor of $L$. 
So after $n$ renormalization group steps the lattice spacing 
in the spatial directions is $\veps_n=\frac{1}{L^n}$ and in the 
temporal direction is  $\veps_n^2=\frac{1}{L^{2n}}$
and the torus $\cX_0$  has been scaled down to
\begin{equation*}
\cX_n= \big(\sfrac{1}{L^{2n}}\bbbz \big/ \sfrac{L_\tp}{L^{2n}}\bbbz\big)
   \times \big(\sfrac{1}{L^n}\bbbz^3 \big/\sfrac{L_\sp}{L^n}\bbbz^3 \big)
\end{equation*}
We call $\cX_n$ the ``$\veps_n$--lattice'' and denote by
\begin{equation*}
\cH_n =\bbbc^{\cX_n}
\end{equation*}
the space of all complex valued functions on $\cX_n$. We endow
$\cH_n$ with the norm and  bilinear\footnote{Note that the form is not
sesquilinear. We will explicitly write complex conjugates when we want
them.} form
\begin{align*}
\|f\|_n&=\vol_n\sum_{x\in\cX_n} |f(x)|^2 \quad&
\<f,g\>_n&=\vol_n\sum_{x\in\cX_n} f(x)g(x)
\end{align*}
where
\begin{equation*}
\vol_n=\veps_n^5
\end{equation*}
is the volume of a  cell in $\cX_n$.
We view $\cH_n$ as the Hilbert space $L^2(\cX_n)$ with (positive
definite) inner product $\<f^*,g\>_n$ and norm  $\|f\|_n$. 
Many of the operators acting on $\cH_n$ that we consider are
periodizations  of operators acting on $L^2$ of the ``universal cover''
\begin{equation*}
\cZ_n = \veps^2_n\bbbz\times \veps_n \bbbz^3
\end{equation*}
of $\cX_n$.

We implement each renormalization group step by performing a block 
averaging of the spins. During each of the steps in the
parabolic regime, we average\footnote{Technically,
this might be modified by smoothing.} over blocks of 
$L^2\times L\times L\times L$ sites. 
So a $k$--block, that is a block of sites averaged over in each
of the first $k$ renormalization group steps, consists of
$L^{2k}\times L^k\times L^k\times L^k$ sites when $k\le\np$.
The number of sites averaged over is $\sfrac{1}{\vol_k}$.

It is necessary to repeatedly compose critical point field
configurations. For this purpose, we introduce an
array of intermediate sublattices. For each $0\le k\le n+1$, define 
the sublattice $\cX_{n-k}^{(k)}$ of (centres of) $k$--blocks in 
the $\veps_n$--lattice $\cX_n$ and the corresponding Hilbert space, 
bilinear form and norm, to be
\begin{align*}
\cX_{n-k}^{(k)}&=
     \big(\veps^2_{n-k}\bbbz/\veps_n^2L_\tp\bbbz\big)
     \times\big(\veps_{n-k}\bbbz^3/\veps_nL_\sp\bbbz^3\big)\quad
\cH_{n-k}^{(k)}=L^2(\cX_{n-k}^{(k)})
\end{align*}
and
\begin{align*}
\<f,g\>^{(k)}_{n-k}&=\sfrac{\vol_n}{\vol_k}
            \sum_{x\in\cX^{(k)}_{n-k}}\!\!\! f(x)g(x) \quad&
\|f\|^{(k)}_{n-k}&=\sfrac{\vol_n}{\vol_k}
                 \sum_{x\in\cX^{(k)}_{n-k}}\!\! |f(x)|^2
\end{align*}
The lower index gives the ``scale'' of the lattice. 
That is, the distance between nearest neighbour points of the lattice. 
The upper index gives the block size and determines the number of points 
in the sublattice (the number of points in $\cX_n$ divided by the
number of points in a $k$--block). The sum of the upper and 
lower indices gives the number of the renormalization group step.
For example, $\cX^{(k)}_j$ 
\begin{itemize}[leftmargin=*, topsep=2pt, itemsep=0pt, parsep=0pt]
\item
has the lattice spacing $\veps_j^2$ in temporal directions and 
\item
has the lattice spacing $\veps_j$ in spatial directions and 
\item
has $(\veps^2_kL_\tp)(\veps_kL_\sp)^3$ points and
\item 
has volume $\ [\text{volume of single cell}]\times[\hbox{number of points}]
= \veps_j^5 (\veps^2_kL_\tp)(\veps_kL_\sp)^3$
\end{itemize}
Observe that $\cX_n^{(0)}=\cX_n$ and $\cH_n^{(0)}=\cH_n$ and that
$\cX_0^{(n)}=\big(\bbbz/\veps_n^2L_\tp\bbbz\big)\times
         \big(\bbbz^3/\veps_nL_\sp\bbbz^3\big)$ 
is a unit lattice in $\bbbz^4$.  

\subsection{Scaling}\label{appDEFscaling}

Scaling is performed by the linear isomorphisms
\begin{equation*}
\bbbl : \cX_j^{(k)}\rightarrow \cX_{j-1}^{(k)}
    \qquad  \qquad
  (u_0,\bu) \mapsto (L^2u_0, L\bu)
\end{equation*}
 For a function $\,\al\in\cH_j^{(k)}\,$,
define  the function $\,\bbbl_*(\al) \in \cH_{j-1}^{(k)}\,$  by 
$\,
\bbbl_*(\al)(\bbbl u) = \al(u)
\,$. Set
$
\bbbs =L^{3/2}\bbbl_*^{-1}
$.
That is, for a field $\,\th\,$ on $\cX_{j-1}^{(k)} $,  
\begin{equation*}
(\bbbs\th)(x)= L^{3/2}\,\th\big(\bbbl x\big)
\end{equation*}
is a field on $\,\cX_j^{(k)}\,$.

\subsection{Block Spin Operators}\label{appDEFblockspinops}

The block spin averaging operator 
$Q:\cH_0^{(n)}\rightarrow  \cH_{-1}^{(n+1)}$, which averages
fields $\psi(x)$, indexed by points $x$ of the unit lattice $\cX_0^{(n)}$,
over blocks centered on the points of the $L$--lattice $\cX_{-1}^{(n+1)}$ 
is defined by
\begin{equation}\label{eqnDEFaveop}
(Q\psi)(y) = \smsum_{x\in \bbbz\times\bbbz^3} q(x) \psi(y+[x])
\end{equation}
where $\,[x]\,$ denotes the class of $\,x\in\bbbz\times\bbbz^3\,$ in the
quotient space $\,\cX_0^{(n)}\,$.
The averaging profile $q$ is the $\fq$--fold convolution of 
the characteristic function, $1_{\sq}(x)$, of the rectangle 
$\big[-\sfrac{L^2-1}{2}, \sfrac{L^2-1}{2} \big]
\times \big[-\sfrac{L-1}{2}, \sfrac{L-1}{2} \big]^3$, normalized
to have integral one. That is,
\begin{equation*}
q=\sfrac{1}{L^{5\fq}} 
\overbrace{ 1_{\sq}*1_{\sq}*\cdots*1_{\sq} }^{\fq\ {\rm times}}
\end{equation*}
For bounds on $Q$, see \cite[Lemma \lemPBSuplusppties]{POA}.

The block spin averaging operator 
$Q_n:\cH_n\rightarrow  \cH_0^{(n)}$, which averages
fields $\phi(u)$, indexed by points $u$ of the fine lattice $\cX_n$,
over blocks centered on the points of the unit--lattice $\cX_0^{(n)}$ 
is defined by
\begin{equation}\label{eqnDEFqn}
Q_n = \big(\bbbl_*^{-1}Q\big)^n \bbbl_*^n
\end{equation}
For bounds on $Q_n$, see \cite[Remark \remPBSqnft.a and 
Lemma \lemPBSunppties]{POA}.

The operator
\begin{equation*}
\fQ_n=a\Big(\bbbone
             +\sum_{j=1}^{n-1}\sfrac{1}{L^{2j}}Q_jQ_j^*\Big)^{-1} 
\end{equation*}
appears in the term
$\< \psi^*-Q_n\,\phi_*\,,\, \fQ_n(\psi-Q_n\,\phi) \>_0 $
of the dominant part of the action.
For bounds on $\fQ_n$, see \cite[Remark \remPBSqnft.c and 
Proposition \propPBSAnppties]{POA}.
See \cite[Remark \remBSactionRecur]{BlockSpin} for the
recursion relation that builds $\fQ_n$.

\subsection{Differential and Related Operators}\label{appDEFoperators}

The forward derivatives of $\al\in \cH_j^{(n)}$  are defined by
\begin{equation}\label{eqnDEFforwardDeriv}
(\partial_\nu \al)(x) =  
\sfrac{1}{\veps_{j,\nu}} \big[\al(x+ \veps_{j,\nu} e_\nu)-\al(x)\big]
\end{equation}
where $e_\nu$ is a unit vector in the $\nu^{\mathrm{th}}$ direction
and
\begin{equation*}
\veps_{j,\nu}=\begin{cases}
    \veps_j^2=\sfrac{1}{L^{2j}} & \text{for $\nu=0$}\\
    \veps_j=\sfrac{1}{L^j} & \text{for $\nu=1,2,3$} 
     \end{cases}
\end{equation*}
We associate to an operator $h_0$ on $L^2\big(\bbbz^3/L_\sp\bbbz^3\big)$ 
the operators
\begin{equation}\label{eqnDEFdndef}
D_n = L^{2n}\ \bbbl_*^{-n}
         \big(\bbbone - e^{-\oh_0} -e^{-\oh_0} \partial_0\big)\bbbl_*^n
\end{equation}
Here $\partial_0$ is the forward time derivative of \eqref{eqnDEFforwardDeriv}.
We assume that $h_0$ is the periodization of a translation
invariant operator $\bh_0$ on $L^2\big(\bbbz^3\big)$ whose Fourier 
transform $\hat\bh_0(\bp)$ 
\begin{itemize}[leftmargin=*, topsep=2pt, itemsep=0pt, parsep=0pt]
\item
 is entire in $\bp$ and invariant under 
$\bp_\nu\rightarrow-\bp_\nu$ for each $1\le\nu\le3$
\item
 is nonnegative when $\bp$ is real and is strictly positive
 when $\bp\in\bbbr^3\setminus{2\pi\bbbz^3}$
\item
 obeys $\hat\bh_0(\bZ)
     =\sfrac{\partial\,\hat\bh_0\ }{\partial\bp_\nu}(\bZ)=0$
 for $1\le\nu\le3$ and has strictly positive Jacobian matrix
 $\Big[ 
   \sfrac{\partial^2\,\hat\bh_0\ }{\partial\bp_\mu\partial\bp_\nu}(\bZ)   
  \Big]_{1\le\mu,\nu\le3}$. 
\end{itemize}
Think of $\bh_0$, which is (a constant times) the single particle 
``kinetic energy'' operator, as being essentially a positive
constant times the discrete spatial laplacian. The operator 
$D_n$ is studied in \cite[\S\secPOdiffOps]{POA}.

A number of important operators are built from $D_n$. 
One is the covariance for the fluctuation integral in \cite{PAR2}.
It is
\begin{equation*}
C^{(n)}=(\sfrac{a}{L^2}Q^*Q+\De^{(n)})^{-1} 
\end{equation*}
where
\begin{equation*}
\De^{(n)}=\left.
   \begin{cases}
      \big(\bbbone+\fQ_n Q_n D_n^{-1} Q_n^*\big)^{-1}\fQ_n 
                 & \text{if $n\ge 1$}\\
      \noalign{\vskip0.05in}
      D_0       & \text{if $n=0$}
    \end{cases}\right\}
    :\cH_0^{(n)}\rightarrow\cH_0^{(n)}
\end{equation*}
It is bounded in \cite[Corollary \corPOCsquareroot]{POA}.

Another family of important operators built from $D_n$ are the
Green's functions
\begin{equation*}
S_n(\mu)=\big[D_n+Q^*_n\fQ_nQ_n-\mu\big]^{-1}
\end{equation*}
They are bounded in \cite[Proposition \POGmainpos]{POA}.

\subsection{Norms}\label{appDEFnorms}

Let $\cX$ be any lattice that is equipped with a metric $d$ and a 
``cell volume'' $\vol$. As an example, the lattice $\cX_j^{(n-j)}$ has $\vol=\sfrac{1}{L^{5j}}$.
The following Definition describes how we measure the size of the kernels 
whose arguments run over $\cX$.

\begin{definition}\label{defDEFkernelnorm}
Let $\,f(u_1,\cdots,u_r)\,$ be a function on $\,\cX^r\,$. For a mass 
$\,\fm\ge 0\,$ we set
\begin{equation*}
\|f\|_\fm
=\max_{i=1\cdots,r} \max_{u_{i}}
 \int du_1 \cdots du_{i-1}\,du_{i+1} \cdots du_r\ |f(u_1,\cdots,u_r)|\, e^{\fm\tau(u_1,\cdots,u_r)}
\end{equation*}
where \emph{the tree length} $\,\tau(u_1,\cdots,u_r)\,$ is the minimal length 
of a tree in $\,\cX\,$ that has  $\,u_1,\cdots,u_r\,$ among its vertices, and
$\int du\ g(u)=\vol\sum_{u\in\cX} g(u)$.
\end{definition}

The following definitions describe how we measure the size of complex valued
analytic functions of fields. 
The following norms  are special cases of the norms
in \cite[Definition 2.6]{CPC}.

\pagebreak[2]
\begin{definition}\label{defDEFabstractnorm}
\ 
\begin{enumerate}[label=(\alph*), leftmargin=*]
\item 
For a field $\,\al\,$ on $\,\cX\,$ and 
$\,\vec x = (x_1,\cdots,x_r) \in \cX^r\,$ we set
$\ \al(\vec x) = \smprod_{i=1}^r \al(x_i)\,$.

\item 
A power series  $\,\cF\,$  in the fields $\,\al_1,\cdots,\al_s$, 
on $\cX\,$ has a unique expansion
\begin{equation*}
\cF(\al_1,\cdots,\al_s)
      = \sum_{r_1,\cdots,r_s\ge 0}\vol^{r_1+\cdots+r_s}\hskip-5pt  
         \sum_{\atop{\vec x_i\in\cX^{r_i}}{ 1\le i\le s}}
            f_{r_1,\cdots,r_s}\big(\vec x_1,\cdots,\vec x_s\big)\,
            \smprod_{i=1}^s \al_i(\vec x_i)
\end{equation*}
where the coefficients 
$f_{r_1,\cdots,r_s}\big(\vec x_1,\cdots,\vec x_s\big)$ are invariant
under permutations of the components of each vector $\vec x_i$.
\item 
For each choice of ``weights'' $\ka_1,\cdots,\ka_s>0$,
for the fields $\al_1,\cdots,\al_s$, we define \emph{the norm of $\cF$ with
mass $\fm$ and weights $\ka_1,\cdots,\ka_s>0$} to be
\begin{equation*}
 \sum_{r_1,\cdots,r_s\ge 0}
  \big\| f_{r_1,\cdots,r_s}\big(\vec x_1,\cdots,\vec x_s\big)\big\|_\fm\ \
  \smprod_{i=1}^s \ka_i^{r_i}
\end{equation*}
\end{enumerate}
\end{definition}

The following definition describes how we measure the size of analytic
maps like the background field map $(\psi_*,\psi)\mapsto
\phi_n(\psi_*,\psi,\mu_n,\cV_n)$. 
The norms in the following definition are special cases of the norms
in \cite[Definition \defSUBkrnel]{SUB}.

\begin{definition}\label{defDEFkrnel}
Let $\cX$ and $\cY$ be sublattices of a common finite lattice having metric $d$,
with $\cX$ having a ``cell volume'' $\vol$ and
with $\cY$ having a ``cell volume'' $\vol_\cY$. 
Write\footnote{If 
     $(\vec x_1,\cdots,\vec x_{s-1})\in \cX^{r_1}\times\cdots\times\cX^{r_{s-1}}$ 
     then $(\vec x_1,\cdots,\vec x_{s-1},-)$ denotes the corresponding element of 
     $\cX^{r_1}\times\cdots\times\cX^{r_{s-1}}\times\cX^0$
     with ``no $s^{\rm th}$ entry''.
     In particular, $\cX^0=\{-\}$ and $\al(-)=1$.} 
$\cX^{(s)}=\bigcup_{r_1,\cdots,r_s\ge 0}\cX^{r_1}\times\cdots\times \cX^{r_s}$.
\begin{enumerate}[label=(\alph*), leftmargin=*]
\item
An $s$--field map kernel is a function
\begin{equation*}
A:(y;\vec x_1,\cdots,\vec x_s)\in \cY\times\cX^{(s)}\mapsto
A(y;\vec x_1,\cdots,\vec x_s)\in\bbbc
\end{equation*}
which obeys $A(y;-,\cdots,-)=0$ for all $y\in \cY$.
\item 
If $A$ is an $s$--field map kernel, we define the
``field map'' 
\begin{equation*}
(\al_1,\cdots,\al_s)\rightarrow A(\al_1,\cdots,\al_s)
\end{equation*}
by
\begin{equation}
A(\al_1,\cdots,\al_s)(y)=
\sum_{r_1,\cdots,r_s\ge 0}\vol^{r_1+\cdots+r_s}\hskip-5pt  
         \sum_{\atop{\vec x_i\in\cX^{r_i}}{1\le i\le s}}
A(y;\vec x_1,\cdots,\vec x_s)\ \al_1(\vec x_1)\cdots \al_s(\vec x_s)
\end{equation}

\item 
We define the norm $\tn A\tn$, 
with mass $\fm$ and weight factors $\ka_1$, $\cdots$, $\ka_s$, 
of the $s$--field map $A$ by
\begin{equation*}
\tn A\tn=
    \sum_{ \atop{r_1,\cdots,r_s\ge 0}{r_1+\cdots+r_s\ge 1}}
    \big\|A\big\|_{r_1,\cdots,r_s}
\end{equation*}
where
\begin{align*}
\big\|A\big\|_{r_1,\cdots,r_s}
  =\max\big\{ L(A;r_1,\cdots,r_s)\,,\,R(A;r_1,\cdots,r_s)\big\}
\end{align*}
and
\begin{align*}
L(A;r_1,\cdots,r_s)&=\max_{y\in \cY}\ \vol^{r_1+\cdots+r_s}\hskip-5pt
     \sum_{ \atop{\vec x_\ell\in \cX^{r_\ell}}{1\le\ell\le s}}
     \big|A(y;\vec x_1,\cdots,\vec x_s)\big|
     \ka_1^{r_1}\cdots\ka_s^{r_s}
      e^{\fm\tau(y,\vec x_1,\cdots,\vec x_s)}\\
R(A;r_1,\cdots,r_s)&=\max_{x'\in \cX}
    \max_{ \atop{\atop{1\le j\le s}{r_j\ne 0}}
                {1\le i\le r_j}}
\vol_\cY\sum_{y\in\cY}
\vol^{r_1+\cdots+r_s-1}\hskip-5pt
\sum_{ \atop{ \atop{\vec x_\ell\in \cX^{r_\ell}}{1\le\ell\le s} }
            {{(\vec x_j)}_i= x'}}\hskip-8pt
\big|A(y;\vec x_1,\cdots,\vec x_s)\big|\\
\noalign{\vskip-0.40in}&\hskip3.2in
\ka_1^{r_1}\cdots\ka_s^{r_s}
e^{\fm\tau(y,\vec x_1,\cdots,\vec x_s)}
\end{align*}
\end{enumerate}
\end{definition}

\begin{remark}\label{remDEFkrnel}
Denote by $\bbbc^\cX$ and $\bbbc^\cY$ the spaces of fields on 
$\cX$ and $\cY$, respectively.
If $A$ is an $s$--field map kernel whose norm, $\tn A\tn$,
is finite, then $(\al_1,\cdots,\al_s)\mapsto A(\al_1,\cdots,\al_s)$
is an analytic map from the polydisc 
\begin{equation*}
\set{(\al_1,\cdots,\al_s)\in \bbbc^\cX\times\cdots\times\bbbc^\cX}
             {\|\al_j\|_{L^\infty}<\ka_j,\ 1\le j\le s}
\end{equation*}
to the polydisc 
\begin{equation*}
\set{\be\in \bbbc^\cY} {\|\be\|_{L^\infty}<\tn A\tn}
\end{equation*}

\end{remark}

Most operators we deal with are bounded with respect to a 
norm of the following kind.

\begin{definition}\label{defDEFoperatornorm}
Let $\cX$ and $\cY$ be sublattices of a common lattice having metric $d$,
with $\cX$ having a ``cell volume'' $\vol_\cX$ and
with $\cY$ having a ``cell volume'' $\vol_\cY$. 
For any operator $A:\bbbc^\cX\rightarrow\bbbc^\cY$, with kernel
$A(y,x)$, and for any mass $m\ge 0$, we define the norm
\begin{equation*}
\|A\|_m
=\max\Big\{\sup_{y\in\cY}\,\sum_{x\in\cX} \vol_{\cX}\ 
                  e^{m|y-x|}|A(y,x)|\ ,\ 
\sup_{x\in\cX}\,\sum_{y\in\cY} \vol_{\cY}\ 
                  e^{m|y-x|}|A(y,x)|
\Big\}
\end{equation*}
In the special case that $m=0$, this is just the usual $\ell^1$--$\ell^\infty$
norm of the kernel.
\end{definition}

\section{Symmetries}\label{appSYsymmetry}

Fix any integers $j\ge 0$ and $n\ge j$. We discuss the natural symmetries
of the lattice $\cX_j^{(n-j)}$ (see Definition \ref{defHTbackgrounddomaction}.a)
and the corresponding symmetries induced on $\cH_j^{(n-j)}$ and on 
functions on $\cH_j^{(n-j)}$. Define $\veps_j=\sfrac{1}{L^j}$.

\begin{definition}\label{defSYfullSymmetry}
\ 
\begin{enumerate}[label=(\alph*), leftmargin=*]
\item
We define (unit) translation and reflection operators,
acting on the $\veps_j$--lattice $\cX_j^{(n-j)}$, by
\begin{equation*}
T_x u = u+x\qquad
\big(R_{\nu}u\big)_i=\begin{cases}
                            u_i &\text{if $i\ne\nu$}\\
                           -u_i &\text{if $i=\nu$}
                     \end{cases}
\end{equation*}
for all $x\in\cX_0^{(n)}$, $u\in\cX_j^{(n-j)}$ and $0\le\nu\le3$.
\item
We next define translation operators, acting on the field 
$\al:\cX_j^{(n-j)}\rightarrow\bbbc$, by
\begin{equation*}
\big(T_x\al\big)(u)=\al\big(T_{-x}u\big)=\al(u-x)
\end{equation*}
and reflection operators, acting on the fields $\al_{(*)}$ and $\al_{\nu(*)}$,
by
\begin{align*}
&\big(R_{\nu'}\al_{(*)}\big)(u)=\al_{(*)}(R_{\nu'}u)\\
&\big(R_{\nu'}\al_{\nu(*)}\big)(u)
=\begin{cases}-\al_{\nu(*)}(R_{\nu'}u-\veps_j^2e_{\nu'})&\text{if $\nu=\nu'=0$}\\
         -\al_{\nu(*)}(R_{\nu'}u-\veps_je_{\nu'})&\text{if $\nu=\nu'\ne0$}\\ 
           \al_{\nu(*)}(R_{\nu'}u) &\text{if $\nu\ne\nu'$}
 \end{cases}
\end{align*}
For the fields $\tilde\al=\big(\al,\{\al_\nu\}\big)\in\tilde\cH_j^{(n-j)}$,
as in \eqref{eqnTHdefexpandedstates}, define
\begin{equation*}
T_x\tilde\al = \big(T_x\al,\{T_x\al_\nu\big\}\big)
\qquad
R_{\nu'}\tilde\al = \big(R_{\nu'}\al,\{R_{\nu'}\al_\nu\big\}\big)
\end{equation*}
\item 
We next define translation and reflection operators, acting 
on functions of the fields by
\begin{align*}
(T_x\cF)(\tilde\al_*,\tilde\al)
&=\cF\big(T_x^{-1}\tilde\al_*,T_x^{-1}\tilde\al\big)\\
(R_0\cF)(\tilde\al_*,\tilde\al)
&=\overline{\cF\big(R_0^{-1}\tilde\al^*,R_0^{-1}\tilde\al_*^*\big)}\\
(R_{\nu'}\cF)(\tilde\al_*,\tilde\al)
&=\cF\big(R_{\nu'}^{-1}\tilde\al_*,R_{\nu'}^{-1}\tilde\al\big),
\ 1\le\nu'\le3\\
\end{align*}
\item 
We denote by $\fS$ the symmetry group generated by translations
and reflections acting on functions  $\cF\big(\tilde\al_*,\tilde\al\big)$. 
Since
\begin{equation*}
R_\nu^2=\bbbone\qquad
R_{\nu'}R_\nu=R_\nu R_{\nu'}\qquad R_\nu T_x=T_{R_\nu x}R_\nu
\end{equation*}
the group is finite. We denote by $\fS_\spat$ the subgroup of $\fS$
generated by translations and spatial reflections. It is of index two, meaning
that every element $g\in \fS$ is of one of the forms $g=g'$ or $g=g'R_0$, 
with $g'\in\fS_\spat$.
\item
 A function $\cF(\tilde\al_*,\tilde\al)$ is said to preserve particle 
number if
\begin{align*}
\cF\big(e^{-i\th}\tilde\al_*,e^{i\th}\tilde\al\big)
&=\cF(\tilde\al_*,\tilde\al)\\
\end{align*}
for all $\th\in\bbbr$.
\end{enumerate}
\end{definition}
\begin{remark}\label{remSYfullSymmetry}
Let $\tilde\cF(\tilde\al_*,\tilde\al)$ be given and
set
\begin{equation*}
\cF\big(\al_*,\al)
=\tilde\cF\big((\al_*,\{\partial_\nu\al_*\})\,,\,
               (\al,\{\partial_\nu\al\})\big)
\end{equation*}
then 
\begin{equation*}
(g\cF)\big(\al_*,\al)
=(g\tilde\cF)\big((\al_*,\{\partial_\nu\al_*\})\,,\,
               (\al,\{\partial_\nu\al\})\big)
\end{equation*}
for all $g\in\fS$.
\end{remark}

\begin{proof} 
It suffices to consider $g$ a generator of $\fS$. If $g$ is a translation operator,
the conclusion is obvious. If $g$ is a reflection, and, for example, 
$1\le\nu\le3$, observe that
\begin{alignat*}{3}
\veps_j\partial_\nu\big(R_\nu\al\big)(u)
&=\big[\big(R_\nu\al\big)(u+\veps_j e_\nu)-\big(R_\nu\al\big)(u)&
&\hskip-20pt=\al(R_\nu u-\veps_j e_\nu)-\al(R_\nu u)\\
&=-\big[\al(R_\nu u-\veps_j e_\nu+\veps_j e_\nu)-\al(R_\nu u-\veps_j e_\nu)\big]&
&=-\veps_j\partial_\nu\al(R_\nu u-\veps_je_\nu)
\end{alignat*}
\end{proof}

\begin{example}\label{exSYfullsymmetry}
If
\begin{equation*}
\cF\big(\al_*,\al\big)
   =\int_{\cX_j^{(n-j)}} du\,du'\ \al_*(u)\, K(u,u')\, \al(u')
\end{equation*}
is invariant under $\fS$, then
\begin{equation*}
K(u+x,u'+x)=K(u,u')\quad 
\overline{K(R_0 u',R_0 u)}=K(u,u')\quad 
K(R_\nu u,R_\nu u')=K(u,u')
\end{equation*}
for all $u,u'\in\cX_j^{(n-j)}$, $x\in\cX_0^{(n)}$ and $1\le\nu\le3$.
\end{example}

\begin{lemma}\label{lemSYfullsymmetry}
Let $1\le\nu\le3$ and assume that
\begin{align*}
\cF_2&=\int_{\cX_j^{(n-j)}} du_1\,du_2\ \al_*(u_1)\, K_2(u_1,u_2)\, \al_\nu(u_2)\\
\cF_4&=\int_{\cX_j^{(n-j)}} du_1\cdots du_4\ 
   K_4(u_1,u_2,u_3,u_4)\ \al_*(u_1)\al(u_2)\al_*(u_3)\al_\nu(u_4)
\end{align*}
are invariant under $\fS_\spat$. Then
\begin{align*}
&K_2(u_1+x,u_2+x)=K_2(u_1,u_2) &
K_2(R_{\nu'} u_1,R_{\nu'} u_2)&=K_2(u_1,u_2)\\
&K_2(u_1,u_2)=-K_2(R_\nu u_1,R_\nu  u_2-\veps_j e_\nu)
\end{align*}
and
\begin{align*}
K_4(u_1+x,u_2+x,u_3+x,u_4+x)&=K_4(u_1,u_2,u_3,u_4) \\
K_4(R_{\nu'}u_1,R_{\nu'}u_2,R_{\nu'}u_3,R_{\nu'}u_4)&=K_4(u_1,u_2,u_3,u_4)\\
K_4(u_1,u_2,u_3,u_4)&=-K_4(R_\nu u_1,R_\nu u_2,R_\nu u_3,R_\nu  u_4-\veps_j e_\nu)
\end{align*}
for all $u_1,\cdots,u_4\in\cX_j^{(n-j)}$, 
$x\in\cX_0^{(n)}$ and $1\le\nu'\le 3$  with $\nu'\ne\nu$.

\end{lemma}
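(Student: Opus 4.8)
The plan is to imitate Example~\ref{exSYfullsymmetry}, translating each generator of $\fS_\spat$ into the corresponding identity on the kernels $K_2$ and $K_4$ by writing out the invariance condition $g\cF_i=\cF_i$ explicitly, changing variables in the integral, and reading off the transformation of the kernel. The subtle point, compared to Example~\ref{exSYfullsymmetry}, is that $\cF_2$ and $\cF_4$ now involve a differentiated field $\al_\nu$, so I must use Remark~\ref{remSYfullSymmetry} to pass from the action of $g$ on the function $\cF_i$ (defined through its kernel $K_i$) to the action of $g$ on the ``expanded'' function, where $\al_\nu$ really does get sent to $R_{\nu'}\al_{\nu}$ according to Definition~\ref{defSYfullSymmetry}.b, i.e. with the extra sign and the extra shift by $\veps_je_{\nu'}$ when $\nu'=\nu$.

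Concretely, I would first dispose of translations: writing $T_x\cF_2=\cF_2$ and substituting $(T_x\al)(u)=\al(u-x)$, $(T_x\al_\nu)(u)=\al_\nu(u-x)$, then shifting the integration variables $u_i\mapsto u_i+x$ (which is legitimate because $\int_{\cX_j^{(n-j)}}$ is just a sum over the finite lattice and $x\in\cX_0^{(n)}$ maps the lattice to itself), yields $K_2(u_1+x,u_2+x)=K_2(u_1,u_2)$, and identically for $K_4$. Next, for a spatial reflection $R_{\nu'}$ with $\nu'\ne\nu$: here $R_{\nu'}$ acts on $\al_*$ and $\al$ by $u\mapsto R_{\nu'}u$ with no sign, and on $\al_\nu$ by the ``$\nu\ne\nu'$'' case of Definition~\ref{defSYfullSymmetry}.b, again by $u\mapsto R_{\nu'}u$ with no sign; substituting into $R_{\nu'}\cF_i=\cF_i$ and changing variables $u_i\mapsto R_{\nu'}u_i$ (using $R_{\nu'}^2=\bbbone$ and that $R_{\nu'}$ is a lattice automorphism) gives the second displayed equation for each kernel. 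Finally, for the reflection $R_\nu$ in the \emph{same} direction $\nu$ as the differentiated field: now $\al_*$ and $\al$ transform by $u\mapsto R_\nu u$, but $\al_\nu$ transforms by the ``$\nu=\nu'\ne0$'' case, namely $(R_\nu\al_\nu)(u)=-\al_\nu(R_\nu u-\veps_je_\nu)$, which carries both a sign and the shift by $\veps_je_\nu$; feeding this in and changing variables produces $K_2(u_1,u_2)=-K_2(R_\nu u_1,R_\nu u_2-\veps_je_\nu)$ and the analogous relation for $K_4$, with the shift appearing only in the fourth argument because $\al_\nu$ sits in the fourth slot of $\cF_4$.

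The step I expect to require the most care is the bookkeeping in this last case: one must be careful that the change of variables is performed consistently (replacing $u_i$ by $R_\nu u_i$ \emph{before} reading off the kernel relation, so that the argument of $\al_\nu$ becomes $R_\nu u_4-\veps_je_\nu\mapsto$ a plain summation variable only after the appropriate relabelling), and that the lone minus sign from $R_\nu\al_\nu$ is not accidentally doubled or cancelled against the $R_\nu$ acting on the other three (undifferentiated) fields, which contribute no sign. It is also worth checking that $R_\nu u_4-\veps_je_\nu$ indeed lies in $\cX_j^{(n-j)}$, which it does since $\veps_je_\nu$ is a lattice vector. Once these three generator computations are done, invariance under all of $\fS_\spat$ follows because $\fS_\spat$ is generated by translations and spatial reflections, and no further relations need be checked since we are only claiming the listed consequences of invariance, not a converse. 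I would therefore organise the proof as three short paragraphs — translations, reflection $R_{\nu'}$ with $\nu'\ne\nu$, reflection $R_\nu$ — each a two-line change-of-variables argument, with Remark~\ref{remSYfullSymmetry} invoked once at the outset to justify passing to the expanded-field picture.
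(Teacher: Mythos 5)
Your proposal is correct and follows essentially the same route as the paper, which simply carries out the change-of-variables computation for the $K_2$, $R_\nu$ case and declares the others similar; you spell out all three generator cases, which is the right bookkeeping. One small clarification: Remark~\ref{remSYfullSymmetry} isn't actually needed here, since $\cF_2$ and $\cF_4$ are already written directly as functions of the expanded-field components $(\al_*,\al,\al_\nu)$, so the action of $\fS_\spat$ on them is given immediately by Definition~\ref{defSYfullSymmetry}.b--c without any passage between the tilde and non-tilde pictures.
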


\begin{proof} 
We prove the last $K_2$ identity. The other cases are similar.
\begin{align*}
&\int du_1\,du_2\ \al_*(u_1)\, K_2(u_1,u_2)\, \al_\nu(u_2) \\
&\hskip1in=-\int du_1\,du_2\ \al_*(R_\nu u_1)\, K_2(u_1,u_2)\, 
         \al_\nu(R_\nu u_2-\veps_j e_\nu)\\
&\hskip1in=-\int du_1\,du_2\ \al_*(u_1)\, K_2(R_\nu u_1,R_\nu  u_2)\, 
         \al_\nu(u_2-\veps_j e_\nu)\\
&\hskip1in=-\int du_1\,du_2\ \al_*(u_1)\, K_2(R_\nu u_1,R_\nu  u_2-\veps_j e_\nu)\, 
         \al_\nu(u_2)
\end{align*}

\end{proof}

\noindent
There are obvious Lemma \ref{lemSYfullsymmetry} analogs for monomials of
type $\al_{*\nu}\, \al$ and $\al_{*\nu}\al\,\al_*\,\al$.

\begin{remark}\label{remSYsymmetryNorm}
Let $\fm\ge 0$, $g\in\fS$, $0\le\nu,\nu'\le3$ and 
\begin{equation*}
\cF\big(\al_{*\nu'},\al_{\nu}\big)
             =\int du\,du'\ \al_{*\nu'}(u)\, K(u,u')\, \al_{\nu}(u')
\end{equation*}
Then, with the notation of Definition \ref{defHTbasicnorm},
\begin{equation*}
\|g\cF\|_\fm\le e^{2\veps_j \fm}\|\cF\|_\fm
\end{equation*}
\end{remark}
\begin{proof}
If $g$ is a translation, or a reflection $R_{\nu''}$ with 
$\nu''\ne\nu,\nu'$, or if $g=R_\nu$ and $\nu'=\nu$, then 
$\|g\cF\|_\fm=\|\cF\|_\fm$. If $\nu'\ne\nu$ and $g=R_\nu$, with $1\le\nu\le3$,
then
\begin{align*}
\max_u\sum_{u'}\big|-K(R_\nu  u,R_\nu u'-\veps_j e_\nu)\big|e^{\fm|u-u'|}
&=\max_u\sum_{u'}\big|K(u,u')\big|e^{\fm|R_\nu u-R_\nu u'+\veps_j e_\nu|}\\
&\le e^{\veps_j \fm}\|\cF\|_\fm
\end{align*}
Similarly, $\|g\cF\|_\fm\le e^{\veps_j^2\fm}\|\cF\|_\fm$ when 
$\nu'\ne\nu$ and $g=R_0$.
By the relations of Definition \ref{defSYfullSymmetry}.d, every $g\in\fS$
may be written as a product of a translation and reflections,
with each $R_{\nu''}$, $0\le\nu''\le3$ appearing at most once.
The claim follows.
\end{proof} 

\section{Inequalities for Parameters}\label{appIneq}

\begin{lemma}\label{lemPARcompradan}
Assume that $\eps$ is sufficiently small. 
\begin{enumerate}[label=(\alph*), leftmargin=*]
\item
We have
\begin{align*}
L^{\np} &\le \big( \sfrac{\fv_0^{5\eps}}{\mu_0-\mu_*}\big)^{\sfrac{1}{2+5\eps} }
\end{align*}
and, for $0\le n\le \np$,
\begin{align*}
 \sfrac{\log\fv_0}{\log(\mu_0-\mu_*)} \log L^n
 & \le -\half (1-6\eps)\log \fv_0 +O(\eps^2|\log \fv_0|)
  \end{align*}

\item
For $0\le n\le \np$
\refstepcounter{equation}\label{eqnPARestrad}
\begin{align}
\fe_\fl(n)  & \le \fv_0^{\frac{3}{2}\eps}
\tag{\ref{eqnPARestrad}.a}\\
\sfrac{\fv_0}{L^n} \ka(n)^2 \ka_\fl(n)^2 
&\le  \fv_0^{\eps/6} \fe_\fl(n) 
&\tag{\ref{eqnPARestrad}.b}
\end{align}

\item
The quantity 
$\, \sfrac{\ka_\fl(n)}{\ka'(n)\fe_\fl(n) }\,$ is monotonically decreasing with $n$ and bounded above by $\,\fv_0^{\eps/2} \,$.

\item 
Let $\fv_0$ be sufficiently small, depending on $\eps$. Then, for all
$C>0$, the infinite product
$\,\Pi_0^\infty(C)=
\smprod_{j=1}^\infty  \big(1 +C  \sfrac{\fe_\fl(j-1)}{\ka(j)^2}\big) 
\,$
is finite and, for $n \ge 1$ and all $\vp \in \fD$
\begin{align*}
 \fr_\vp(n,C) \le (1+C)\,\Pi_0^\infty(C)\ \fv_0^{1-5\eps} \begin{cases}
    1
    &\text{if $\vp = (1,1,0)$}\\
    L^{-(2\eta'-\eta_\fl)n}
    &\text{if $\vp =(0,1,1)$}\\
    1 
    &\text{if $\vp =(0,0,2)$}\\
   \fv_0 \,L^{-4n} 
    &\text{if $\vp = (6,0,0)$}
\end{cases}
\end{align*}
\end{enumerate}
\end{lemma}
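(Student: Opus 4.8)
\textbf{Proof proposal for Lemma \ref{lemPARcompradan}.}

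The plan is to treat the four parts in order, since each feeds into the next. For part (a), the first inequality is essentially a restatement of the defining condition for $\np$ in Definition \ref{defHTbasicnorm}.b: from $L^{(2+5\eps)\np}(\mu_0-\mu_*)\le\fv_0^{5\eps}$ one just takes $(2+5\eps)$-th roots. For the second estimate, I would set $\beta=\sfrac{\log\fv_0}{\log(\mu_0-\mu_*)}$ and observe that $\log L^n\le\log L^{\np}\le\sfrac{1}{2+5\eps}\big(5\eps\log\fv_0-\log(\mu_0-\mu_*)\big)$ by the first part. Substituting $\log(\mu_0-\mu_*)=\sfrac{1}{\beta}\log\fv_0$ and using $\tfrac34+8\eps<\beta<\tfrac98-\eps$ from Remark \ref{remHTbasicnorm}, one gets $\beta\log L^n\le\sfrac{\beta}{2+5\eps}\big(5\eps-\sfrac1\beta\big)\log\fv_0=-\sfrac{1-5\eps\beta}{2+5\eps}\log\fv_0$ (note $\log\fv_0<0$). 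Expanding $\sfrac{1-5\eps\beta}{2+5\eps}=\tfrac12-O(\eps)$ with the $O(\eps)$ controlled by the bound on $\beta$ yields $-\tfrac12(1-6\eps)\log\fv_0+O(\eps^2|\log\fv_0|)$ after absorbing a harmless $+O(\eps^2)$ slack; one only needs $\eps$ small enough that the explicit $O(\eps)$ term is dominated by $3\eps|\log\fv_0|$.

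For part (b), both bounds reduce to exponent bookkeeping in powers of $\fv_0$, using part (a) to convert powers of $L^n$ into powers of $\fv_0$. For (b.a): $\fe_\fl(n)=L^{\eta_\fl n}\fv_0^{1/3-2\eps}$ with $\eta_\fl=(\tfrac23-4\eps)\beta\,\sfrac{\log\fv_0}{\log(\mu_0-\mu_*)}$—wait, more carefully, $\eta_\fl=(\tfrac23-4\eps)\beta$ times a logarithm; the point is $L^{\eta_\fl n}=\fv_0^{-\eta_\fl n\,\log L/|\log\fv_0|\cdot(\cdots)}$, and applying part (a)'s bound on $\beta\log L^n$ shows $L^{\eta_\fl n}\le\fv_0^{-(\tfrac23-4\eps)\cdot\tfrac12(1-6\eps)+O(\eps^2)}\le\fv_0^{-1/3+O(\eps)}$, so $\fe_\fl(n)\le\fv_0^{1/3-2\eps-1/3+O(\eps)}=\fv_0^{O(\eps)}$; one then checks the implied exponent is $\ge\tfrac32\eps$ for $\eps$ small. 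For (b.b): write out $\sfrac{\fv_0}{L^n}\ka(n)^2\ka_\fl(n)^2$ with $\ka(n)=\sfrac{L^{\eta n}}{\fv_0^{1/3-\eps}}$, $\ka_\fl(n)=(\sfrac{L^n}{\fv_0})^{\eps/2}$, collect the power of $L^n$ (which is $L^{(2\eta-1+\eps)n}$) and the power of $\fv_0$, convert via part (a), and compare termwise against $\fv_0^{\eps/6}\fe_\fl(n)=\fv_0^{\eps/6}L^{\eta_\fl n}\fv_0^{1/3-2\eps}$; the inequality becomes a comparison of two affine functions of $n$ in the exponent, which I would verify at $n=0$ and check the slope sign, or simply bound both sides by their worst case using $0\le n\le\np$.

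Part (c) is a monotonicity-plus-endpoint argument: $\sfrac{\ka_\fl(n)}{\ka'(n)\fe_\fl(n)}=\fv_0^{?}\,L^{(\eps/2-\eta'-\eta_\fl)n}$ after substituting the definitions, so it suffices to show the exponent of $L^n$ is negative (giving monotone decrease) and that the $n=0$ value $\sfrac{\ka_\fl(0)}{\ka'(0)\fe_\fl(0)}=\fv_0^{\eps/2}\cdot\fv_0^{1/3-\eps}\cdot\fv_0^{-1/3+2\eps}=\fv_0^{3\eps/2}\le\fv_0^{\eps/2}$; the sign of $\eps/2-\eta'-\eta_\fl$ follows from $\eta'>\tfrac38$ and $\eta_\fl>0$ (both from Remark \ref{remHTbasicnorm} and the definition). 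Part (d) is where the real work is. First, finiteness of $\Pi_0^\infty(C)$: by (b.a), $\fe_\fl(j-1)\le\fv_0^{3\eps/2}$, and $\ka(j)^{-2}=\fv_0^{2/3-2\eps}L^{-2\eta j}$, so $\sfrac{\fe_\fl(j-1)}{\ka(j)^2}\le\fv_0^{3\eps/2+2/3-2\eps}L^{-2\eta j}=\fv_0^{2/3-\eps/2}L^{-2\eta j}$; since $\eta>\tfrac34$, the series $\sum_j L^{-2\eta j}$ converges with sum bounded independently of $L\ge3$, and the product converges with $\Pi_0^\infty(C)\le\exp\big(C\fv_0^{2/3-\eps/2}\sum_jL^{-2\eta j}\big)$, finite once $\fv_0$ is small. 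Then, for the four $\vp$-cases, I would feed these bounds into the explicit formula for $\fr_\vp(n,C)$ in Remark \ref{remHTpreciseinduction}: each $\Pi^n_\ell(C)\le\Pi_0^\infty(C)$, each $\fr_\vp(0)$ is $\fv_0^{1-4\eps}$ or $\fv_0^{2-\eps}$, and each summand $\sfrac{\fe_\fl(\ell-1)}{\ka(\ell)\ka'(\ell)}$ (resp.\ with $\ka'(\ell)^2$, or $\sfrac{L^\ell\fe_\fl(\ell-1)}{\ka'(\ell)^2}$, or $\sfrac{L^{4\ell}\fe_\fl(\ell-1)}{\ka(\ell)^6}$) must be shown to form a summable series (in $\ell$) whose total is $\le\fv_0^{1-5\eps}$ up to the prefactors $L^{-n}$, $L^{-4n}$, $L^{-(2\eta'-\eta_\fl)n}$ appearing in the statement. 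The main obstacle is precisely this last step: matching the exponents so that the $\ell$-sums telescope into the claimed $n$-dependence—in particular showing $\sum_{\ell=1}^nL^{-(2-3\eps)\ell}[\cdots]$-type geometric sums are bounded by geometric-in-$n$ factors, and that the worst term ($\ell=n$ vs.\ $\ell=1$) gives the stated power $\fv_0^{1-5\eps}$. I expect this to require the full force of part (a)'s conversion between $L^n$ and $\fv_0^{1/2}$, together with careful tracking of the $\eps$-slack; I would handle each of the four $\vp$ separately, pulling out the dominant $L^{-n}$ or $L^{-4n}$ factor first, then bounding the residual sum by a convergent geometric series with ratio bounded away from $1$ uniformly in $L$.
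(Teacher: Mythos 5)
Your strategy matches the paper's proof in all four parts: (a) is direct from Definition \ref{defHTbasicnorm}.b and Remark \ref{remHTbasicnorm}; (b) and (c) are exponent bookkeeping that converts $L^n$ powers to $\fv_0$ powers via part (a); and (d) substitutes into the formula of Remark \ref{remHTpreciseinduction}, bounds each $\Pi_\ell^n(C)$ by $\Pi_0^\infty(C)$, and estimates the $\ell$-sums geometrically.

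Three things to tighten. First, in (c) you wrote $\ka_\fl(0)=\fv_0^{\eps/2}$; in fact $\ka_\fl(0)=(1/\fv_0)^{\eps/2}=\fv_0^{-\eps/2}$, so the $n=0$ value is exactly $\fv_0^{\eps/2}$ (not $\fv_0^{3\eps/2}$) — the conclusion still stands, with equality at $n=0$, but do not rely on slack that is not there. Second, in (b.b) your fallback ``verify at $n=0$ and check the slope sign'' fails: after dividing out $\fe_\fl(n)$ the $L^n$-exponent is $2\eta-1-\eta_\fl+\eps=4\eps\,\sfrac{\log\fv_0}{\log(\mu_0-\mu_*)}+\eps>0$, so the quotient is \emph{increasing} in $n$ and $n=\np$ is the worst case; you must actually run the part (a) conversion, which lands at $\fv_0^{\eps/3+O(\eps^2)}\le\fv_0^{\eps/6}$. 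Third — and this is the real gap — your plan for (d), ``pull out the $L^{-n}$ or $L^{-4n}$ prefactor and bound the residual $\ell$-sum by a convergent geometric series with ratio bounded away from $1$'', only works for $\vp=(1,1,0),(0,0,2),(6,0,0)$, where one checks from Remark \ref{remHTbasicnorm} that $\eta_\fl-\eta-\eta'$, $\eta_\fl-2\eta'$, $4+\eta_\fl-6\eta$ are all negative. For $\vp=(0,1,1)$ the residual ratio is $L^{1+\eta_\fl-2\eta'}>1$: the series is \emph{divergent} geometric, dominated by the $\ell=n$ term, and it is precisely this domination, after multiplying by the $L^{-n}$ prefactor from the formula, that produces the $L^{-(2\eta'-\eta_\fl)n}$ in the statement — the $n$-dependent factor is not a careless weakening but an exact feature. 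Also, (d) does not actually need part (a); the geometric ratios are controlled directly by the numeric bounds on $\eta$, $\eta'$, $\eta_\fl$ in Remark \ref{remHTbasicnorm}.
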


\begin{proof} (a) follows from Remark \ref{remHTbasicnorm}.

\Item (b)
By part (a),
\begin{align*}
\log \fe_\fl(n)  
& =\big( \sfrac{1}{3} -2\eps \big) \log\fv_0 +
\big(\sfrac{2}{3}-4\eps\big)\,\sfrac{\log\fv_0}{\log(\mu_0-\mu_*)}\log L^n
\\
&\le\big( \sfrac{1}{3} -2\eps \big) 
\big( 1 -(1-6\eps)\big)\log\fv_0 +O(\eps^2 | \log\fv_0|)
\\
&=2 \eps \log\fv_0+O(\eps^2 | \log\fv_0|)
\end{align*}
and
\begin{align*}
\log \big( \sfrac{\fv_0}{L^n} \ka(n)^2 \ka_\fl(n)^2  \big) -\log\fe_\fl(n) 
& = \log\big( \fv_0^{1/3+\eps} L^{(2\eta -1+\eps )n}\big)
-\log\big(\fv_0^{\sfrac{1}{3}-2\eps } L^{\eta_\fl n}\ \big)
\\
&= 3\eps \log\fv_0 + (2\eta -1-\eta_\fl+\eps) \log L^n
\\
& \le  3\eps \log\fv_0 + \big( 4\eps \sfrac{\log\fv_0}{\log(\mu_0-\mu_*)}
+\eps\big) \log L^{n}
\\
& \le  3\eps \log\fv_0 
      + \sfrac{16}{3}\eps \sfrac{\log\fv_0}{\log(\mu_0-\mu_*)} \log L^{n}
\\
&\le  \sfrac{1}{3}\eps \log\fv_0 + O(\eps^2|\log \fv_0|)
\end{align*}

\Item (c) By Remark \ref{remHTbasicnorm},
\begin{align*}
\log \ka_\fl(n) - \log \ka'(n) -\log \fe_\fl(n)  
& =\big( \sfrac{\eps}{2} - \eta' -\eta_\fl\big) \log L^n       
- \big(\sfrac{\eps}{2} - \sfrac{1}{3}+\eps +\sfrac{1}{3} -2\eps \big) \log\fv_0
\\
&=-\big[\sfrac{3}{2}(1-\eps) - (\sfrac{1}{3}+4\eps) \sfrac{\log\fv_0}{\log(\mu_0-\mu_*)}\big]
 \log L^{n} 
+\sfrac{\eps}{2} \log\fv_0  
\end{align*}
is monotonically decreasing with $n$.

\Item (d) 
The fact that the infinite product is finite is immediate.
Clearly, 
\begin{equation*}
\fr_\vp(n,C) \le
(1+C)\,\Pi^\infty_0(C) \begin{cases}
 \fv_0^{1-4\eps}
 + \fv_0^{1-3\eps}\sum\limits_{\ell=1}^{n} 
   \sfrac{ L^{\eta_\fl\ell} }{L^{(\eta+\eta')\ell}}
     &\text{if $\vp =(1,1,0)$}\\
 \sfrac{\fv_0^{1-4\eps}}{L^n}
 +\sfrac{\fv_0^{1-3\eps}}{L^n} \sum\limits_{\ell=1}^{n} 
   L^{\ell}\sfrac{L^{\eta_\fl\ell} }{L^{2\eta'\ell}}
     &\text{if $\vp =(0,1,1)$}\\
 \fv_0^{1-4\eps}
 + \fv_0^{1-3\eps}\sum\limits_{\ell=1}^{n} 
   \sfrac{ L^{\eta_\fl\ell} }{L^{2\eta'\ell}}
     &\text{if $\vp =(0,0,2)$}\\
 \sfrac{\fv_0^{2-\eps}}{L^{4n}}
 +\sfrac{\fv_0^{7/3-7\eps}}{L^{4n}} \sum\limits_{\ell=1}^{n} 
   L^{4\ell}\sfrac{L^{\eta_\fl\ell} }{L^{6\eta\ell}}
     &\text{if $\vp =(6,0,0)$}\\
\end{cases}
\end{equation*}
In the case that $\vp =(0,1,1)$, the successive terms in the sum increase by a factor of at most $L^{1-\const\eps}$, while in the other
cases they decrease by a factor of at least $L^{\const\eps}$.
\end{proof}

As in \cite[Definition \defOSFsdf\ and Lemma \lemOSFmainlem]{PAR2}, we 
define, for $\vp = (p_u,p_0,p_\sp)$,
\begin{align*}
\De(\vp)=\sfrac{3}{2}p_u+\sfrac{7}{2}p_0 +\sfrac{5}{2}p_\sp
 \qquad
\ka^\vp(n)  = \ka(n)^{p_u} \ka'(n)^{p_0+p_\sp}
\end{align*}
With this notation
\begin{equation}\label{eqnPARformforrpnC}
 \fr_\vp(n,C) =
\fr_\vp(0)  L^{(5-\De(\vp))n} \,\Pi_0^n(C)
+C \sum_{\ell=1}^{n} L^{(5-\De(\vp))(n-\ell)}\sfrac{ \fe_\fl(\ell-1) }{\ka^\vp(\ell)} \,\Pi_\ell^n(C)
\end{equation}

\begin{lemma}\label{lemPARestrpnC}
Assume that $\eps$ is sufficiently small. Let $C>0$ and assume that $\fv_0$ is so small that
$\,\eps |\log\fv_0| \ge 2\log (1+C)\,\Pi_0^\infty(C)\,$. Let $\vp \in \fD$ 
and $1\le n\le \np$. Then
\begin{enumerate}[label=(\alph*), leftmargin=*]
\item 
\hskip0.75in
$
 \sfrac{\ka^\vp(n)}{\ka(n)^2} \fr_\vp(n,C) 
\le  \fv_0^{\frac{2}{3}-  \frac{3}{2}\eps}
$
and
\begin{align*} 
\sfrac{\ka_\fl(n)}{\ka'(n)} \ka^\vp(n) \, \fr_\vp(n,C)
& \le \fv_0^{1/8} \fe_\fl(n)
 \qquad & &\text{if }\vp\ne (6,0,0)
\\
\sfrac{\ka_\fl(n)}{\ka(n)} \ka^\vp(n) \, \fr_\vp(n,C)
& \le \fv_0^\eps \fe_\fl(n) \qquad & &\text{if }\vp = (6,0,0)
\end{align*}

\item
We have
\begin{align*}
\fr_\vp(n,C)
      \le \fv_0^{\eps}
        \min\big\{\fv_0^{\frac{4}{3}-7\eps},\ \sfrac{\fv_0}{L^n}\big\} 
           \begin{cases}
              \ka(n) \ka_\fl(n)
                &\text{if $\vp=(1,1,0),\ (0,1,1),\ (0,0,2)$}\\
                                          \noalign{\vskip0.05in}
            \sfrac{1}{\ka_\fl(n)^2}
                &\text{if $\vp=(6,0,0)$}
\end{cases}
\end{align*}
\end{enumerate}
\end{lemma}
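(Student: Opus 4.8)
\noindent
The plan is to feed the closed-form bound of Lemma~\ref{lemPARcompradan}.d into the left-hand sides of (a) and (b) and reduce every claimed inequality to an elementary comparison of exponents. The hypothesis $\eps|\log\fv_0|\ge 2\log(1+C)\Pi_0^\infty(C)$ is exactly the statement $(1+C)\Pi_0^\infty(C)\le\fv_0^{-\eps/2}$, so Lemma~\ref{lemPARcompradan}.d gives, for $1\le n\le\np$ and $\vp\in\fD$,
\begin{equation*}
\fr_\vp(n,C)\ \le\ \fv_0^{1-\frac{11}{2}\eps}
\begin{cases}
1 & \text{if }\vp\in\{(1,1,0),(0,0,2)\},\\
L^{-(2\eta'-\eta_\fl)n} & \text{if }\vp=(0,1,1),\\
\fv_0\,L^{-4n} & \text{if }\vp=(6,0,0).
\end{cases}
\end{equation*}
Since $\ka(n)=L^{\eta n}\fv_0^{-(1/3-\eps)}$, $\ka'(n)=L^{\eta'n}\fv_0^{-(1/3-\eps)}$, $\ka_\fl(n)=(L^n/\fv_0)^{\eps/2}$, $\fe_\fl(n)=L^{\eta_\fl n}\fv_0^{1/3-2\eps}$, and $\ka^\vp(n)$ equals $\ka(n)\ka'(n)$, $(\ka'(n))^2$, $(\ka'(n))^2$, $\ka(n)^6$ for the four types in $\fD$, every expression appearing in (a) and (b) is a monomial in $L^n$ and in $\fv_0$. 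Writing $\ell=\log L^n\ge 0$ and $b=-\log\fv_0>0$, taking logarithms turns each claimed inequality into one of the shape $\alpha\,\ell+\beta\,b\ge 0$, up to an additive $O(\eps^2 b)$, where $\alpha$ is an explicit affine function of $\eta,\eta',\eta_\fl,\eps$ and $\beta$ an explicit affine function of $1,\eps$.

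The second ingredient is an upper bound on $\ell$. By Lemma~\ref{lemPARcompradan}.a, $r\ell\le\frac12(1-6\eps)b+O(\eps^2 b)$ with $r:=\frac{\log\fv_0}{\log(\mu_0-\mu_*)}\in(\frac34+8\eps,\frac98-\eps)$ (Remark~\ref{remHTbasicnorm}), so $\ell\le\frac23 b+O(\eps b)$. For each sub-case I would read off the sign of $\alpha$ from $\eta=\frac12+\frac r3\in(\frac34,\frac78)$, $\eta'=\frac32-r-\eps\in(\frac38,\frac34)$ and $\eta_\fl=(\frac23-4\eps)r$: when $\alpha\le 0$ the inequality follows at once from $\ell\ge 0$ together with $\beta\ge 0$, and when $\alpha>0$ one substitutes $\ell\le\frac23 b+O(\eps b)$ and compares the numerical coefficients of $b$. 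For the first inequality of (a) all four cases collapse to ``$\fv_0$ raised to a power $\ge\frac23-\frac32\eps$'', the bound being essentially saturated by $\vp=(6,0,0)$, where $\alpha=4(\eta-1)<0$. For the second inequality of (a) the cases $(0,0,2)$ and $(0,1,1)$ are slack (one gets a power $\ge\frac13-3\eps$), while $(1,1,0)$ has $\alpha\approx\frac14>0$ and produces a power $\frac16-O(\eps)$, which beats the required $\frac18$ by the fixed margin $\frac16-\frac18=\frac1{24}$; for $\vp=(6,0,0)$ one has $\alpha<0$, so $\ell\ge 0$ already yields the power $\eps$. For part (b) I would prove separately the estimate with factor $\fv_0^{4/3-7\eps}$ and the one with factor $\fv_0/L^n$, so that $\fr_\vp\le\fv_0^\eps\min\{\,\cdot\,,\,\cdot\,\}[\cdots]$ follows, and in each resulting sub-case the coefficient of $\ell$ is either nonnegative (so $\ell\ge 0$ suffices) or negative and small enough that $\ell\le\frac23 b+O(\eps b)$ absorbs it.

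The only genuine difficulty is bookkeeping: one must carry the $O(1)$ coefficients of $\ell$ and $b$ through each sub-case of (a) and (b), check every sign, and verify that the accumulated $O(\eps)$ terms — including the $O(\eps^2 b)$ error inherited from Lemma~\ref{lemPARcompradan}.a — stay below the positive $O(1)$ margins. The binding margin is the $\frac1{24}$ appearing for $\vp=(1,1,0)$ in the second inequality of (a); it is what forces ``$\eps$ sufficiently small''. Concretely I would first record the logarithms of $\ka(n),\ka'(n),\ka_\fl(n),\fe_\fl(n)$ and of the bound on $\fr_\vp(n,C)$ as affine forms in $\ell$ and $b$, and then run through (a) and (b) case by case; the explicit formula \eqref{eqnPARformforrpnC} offers an alternative starting point, but Lemma~\ref{lemPARcompradan}.d is already in the convenient reduced form.
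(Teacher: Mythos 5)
Your plan follows the paper's own route: both proofs feed the closed-form bounds of Lemma~\ref{lemPARcompradan}.d into each left-hand side, turn the claimed inequalities into exponent comparisons in $\fv_0$ and $L^n$, and use Remark~\ref{remHTbasicnorm} together with Lemma~\ref{lemPARcompradan}.a to control $\log L^n$ by $|\log\fv_0|$, with the binding margin indeed arising in the $\vp=(1,1,0)$ case of the second inequality of part (a). One small correction to your summary: the $\vp=(0,0,2)$ case of that inequality is not slack in the sense you describe — the coefficient $\eta'-\eta_\fl+\sfrac{\eps}{2}$ of $\log L^n$ is strictly positive when $\sfrac{\log\fv_0}{\log(\mu_0-\mu_*)}$ sits near its lower end $\sfrac{3}{4}$ — but since $\eta'<\eta$ this coefficient is strictly smaller than the one in the $(1,1,0)$ case, so your $(1,1,0)$ computation already covers it, which is exactly how the paper disposes of $(0,0,2)$.
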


\begin{proof}  (a)
By Lemma \ref{lemPARcompradan}.d
\begin{equation}\label{eqnPARauxlemPARestrpnC}
\log \big(\ka^\vp(n) \,\fr_\vp(n,C)\,\big) 
\le \sfrac{\eps}{2}|\log \fv_0| +  \begin{cases}
(\sfrac{1}{3}-3\eps)\log \fv_0 +(\eta+\eta') \log L^n
    &\text{if $\vp = (1,1,0)$}\\
    (\sfrac{1}{3}-3\eps)\log \fv_0 +\eta_\fl \log L^n
    &\text{if $\vp =(0,1,1)$}\\
    (\sfrac{1}{3}-3\eps)\log \fv_0 +2\eta' \log L^n 
    &\text{if $\vp =(0,0,2)$}\\
   \eps \log \fv_0 +(6\eta-4) \log L^n
    &\text{if $\vp = (6,0,0)$}
\end{cases}
\end{equation}
Consequently
\begin{equation*}
\log \big( \sfrac{\ka^\vp(n)}{\ka(n)^2}  \fr_p(n,C) \big)
\le\begin{cases}
(1-\sfrac{11}{2}\eps)\log \fv_0 -(\eta-\eta') \log L^n
    &\hskip-3.5pt\text{if $\vp = (1,1,0)$}\\
\noalign{\vskip0.05in}
    (1-\sfrac{11}{2}\eps)\log \fv_0 -(2\eta-\eta_\fl)\log L^n
    &\hskip-3.5pt\text{if $\vp =(0,1,1)$}\\
\noalign{\vskip0.05in}
    (1-\sfrac{11}{2}\eps)\log \fv_0 -2(\eta-\eta') \log L^n 
    &\hskip-3.5pt\text{if $\vp =(0,0,2)$}\\
\noalign{\vskip0.05in}
 (\sfrac{2}{3}-  \sfrac{3}{2}\eps) \log \fv_0 -4(1-\eta) \log L^n
    &\hskip-3.5pt\text{if $\vp = (6,0,0)$}
\end{cases}
\end{equation*}
The first inequality of the Lemma is immediate. 

As
\begin{align*}
\log \sfrac{\ka_\fl(n)}{\ka'(n) } - \log \fe_\fl(n) 
& =\sfrac{\eps}{2} \log\fv_0  
-  ( \eta'+\eta_\fl-\sfrac{\eps}{2} ) \log L^n
 \\
 \log \sfrac{\ka_\fl(n)}{\ka(n) } - \log \fe_\fl(n) 
& = \sfrac{\eps}{2} \log\fv_0 
- ( \eta+\eta_\fl-\sfrac{\eps}{2} ) \log L^n
\end{align*}
the inequalities \eqref{eqnPARauxlemPARestrpnC} give 
for the case $\vp =(1,1,0)$
\begin{align*}
&\log\big( \sfrac{\ka_\fl(n)}{\ka'(n)} \ka^\vp(n)\,\fr_\vp(n,C)\,\big) 
- \log \fe_\fl(n)
\le  (\sfrac{1}{3} - 3\eps)\log \fv_0 
+(\eta-\eta_\fl +\sfrac{\eps}{2}) \log L^n
\\
&\hskip1.5in\le  (\sfrac{1}{3} -3\eps)\log \fv_0 
+\big(\sfrac{1+\eps}{2} -(\sfrac{1}{3}\! -\! 4\eps) \sfrac{\log\fv_0}{\log(\mu_0-\mu_*)} \big)
 \log L^n
\\
&\hskip1.5in\le (\sfrac{1}{3} - 3\eps)\log \fv_0 
+(\sfrac{1}{3} + 5\eps) \sfrac{\log\fv_0}{\log(\mu_0-\mu_*)} 
 \log L^n
\\
&\hskip1.5in\le  \sfrac{1}{8}\log \fv_0 
\end{align*}
again by Remark \ref{remHTbasicnorm} and Lemma \ref{lemPARcompradan}.a. 
Since $\eta'<\eta$, the same bound applies in the case $\vp =(0,0,2)$.
In the case $\vp =(0,1,1)$, the desired inequality is easy. 
Finally, in the case $\vp =(6,0,0)$
\begin{align*}
&\log\big( \sfrac{\ka_\fl(n)}{\ka(n)} \ka^\vp(n) 
                 \,\fr_\vp(n,C)\,\big) - \log \fe_\fl(n) 
\le \eps\log \fv_0
+(5\eta-4-\eta_\fl +\sfrac{\eps}{2}) \log L^n
\\
&\hskip1.5in  \le  \eps \log \fv_0
+\Big(-\sfrac{3}{2}
  +\big(1+4\veps\big)\sfrac{\log\fv_0}{\log(\mu_0-\mu_*)}  +\sfrac{\eps}{2}\Big) \log L^n
\\
&\hskip1.5in  \le  \eps \log \fv_0
\end{align*}

\Item (b) By Lemma \ref{lemPARcompradan}.d,
\begin{equation}\label{eqnPARestrpnC}
\fr_\vp(n,C)
      \le  \fv_0^\eps \begin{cases}
              \fv_0^{1-\frac{13}{2}\eps}
                &\text{if $\vp=(1,1,0),\ (0,1,1),\ (0,0,2)$}\\
                                            \noalign{\vskip0.05in}
            L^{-4n}\fv_0^{2-\frac{13}{2}\eps}
                &\text{if $\vp=(6,0,0)$}
 \end{cases}
\end{equation}
The case $(6,0,0)$ is obvious.
The remaining cases follow from
\begin{align*}
\frac{\fv_0^{1-\frac{13}{2}\eps}}{\ka(n)\ka_\fl(n)}
&=L^{(-\eta-\frac{\eps}{2})n}\fv_0^{\frac{4}{3}-7\eps}
\le\min\big\{\fv_0^{\frac{4}{3}-7\eps},\ 
               \sfrac{\fv_0}{L^n}\, L^{(1-\eta)\np}
                      \fv_0^{\frac{1}{3}-7\eps}\big\}
\le\min\big\{\fv_0^{\frac{4}{3}-7\eps},\ \sfrac{\fv_0}{L^n}\big\}
\end{align*}
since, by Remark \ref{remHTbasicnorm},\ \ \ 
$
L^{(1-\eta)\np}\fv_0^{\frac{1}{3}-7\eps}
\le L^{\frac{1}{4}\np}\fv_0^{\frac{1}{3}-7\eps}
\le \fv_0^{-\frac{1}{4}(\frac{2}{3}-3\eps)}
            \fv_0^{\frac{1}{3}-7\eps}
\le 1
$.
\end{proof}

\begin{lemma}\label{lemPARmunvn}
\ 
\begin{enumerate}[label=(\alph*), leftmargin=*]
\item
If $1\le n\le \np$,
\begin{equation*}
\fv_0^{1-8\eps}
     \smsum_{\ell=1}^n \sfrac{1}{L^{(2-3\eps)\ell}}
        \big[\fv_0^{\frac{1}{3}-6\eps} +L^{2\ell}(\mu_0-\mu_*)\big]
\le \half \fv_0^{\frac{4}{3}-15\eps}
\end{equation*}

\item \ \ \ 
$
 {\displaystyle\sum_{\ell=1}^n}\  \sfrac{L^\ell }{\ka(\ell)^4}\fe_\fl(\ell-1)
\le  \fv_0^{\frac{5}{3}-6\eps}
$.
\end{enumerate}
\end{lemma}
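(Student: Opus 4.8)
The plan is to reduce both parts to elementary geometric sums by inserting the explicit definitions of $\ka(\ell)$, $\fe_\fl(\ell)$ and of the exponents $\eta,\eta_\fl$ from Definition \ref{defHTbasicnorm}, together with the ranges recorded in Remark \ref{remHTbasicnorm}. I would treat part (b) first, since it is the routine one. Using $\ka(\ell)=\sfrac{L^{\eta\ell}}{\fv_0^{1/3-\eps}}$ and $\fe_\fl(\ell-1)=L^{\eta_\fl(\ell-1)}\fv_0^{1/3-2\eps}$, one gets, for every $\ell\ge 1$,
\[
\frac{L^\ell}{\ka(\ell)^4}\,\fe_\fl(\ell-1)
=\fv_0^{5/3-6\eps}\,L^{-\eta_\fl}\,L^{(1-4\eta+\eta_\fl)\ell}.
\]
By Remark \ref{remHTbasicnorm} we have $\eta>\sfrac{3}{4}+2\eps$ and $0<\eta_\fl=(\sfrac{2}{3}-4\eps)\sfrac{\log\fv_0}{\log(\mu_0-\mu_*)}<\sfrac{3}{4}$, so the exponent of $L$ in the last factor satisfies $1-4\eta+\eta_\fl<1-3+\sfrac{3}{4}=-\sfrac{5}{4}<0$. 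Hence the sum over $\ell$ is a convergent geometric series with $\sum_{\ell\ge 1}L^{(1-4\eta+\eta_\fl)\ell}\le\frac{L^{-5/4}}{1-L^{-5/4}}<1$ for $L\ge 3$, while $L^{-\eta_\fl}\le 1$; summing the displayed identity over $\ell=1,\dots,n$ then gives $\le\fv_0^{5/3-6\eps}$, which is (b). This uses only that $\eps$ is small and $L\ge 3$.

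For part (a) I would split the bracket $[\,\fv_0^{1/3-6\eps}+L^{2\ell}(\mu_0-\mu_*)\,]$ and bound the two resulting sums separately. The first one is $\fv_0^{4/3-14\eps}\sum_{\ell=1}^n L^{-(2-3\eps)\ell}\le\sfrac{1}{L-1}\,\fv_0^{4/3-14\eps}\le\half\,\fv_0^{4/3-14\eps}$ for $L\ge 3$, and since $\fv_0^{4/3-14\eps}=\fv_0^{\eps}\,\fv_0^{4/3-15\eps}$ this is $\le\sfrac{1}{4}\,\fv_0^{4/3-15\eps}$ once $\fv_0$ is small. For the second sum, $\sfrac{L^{2\ell}}{L^{(2-3\eps)\ell}}=L^{3\eps\ell}$, so it equals $\fv_0^{1-8\eps}(\mu_0-\mu_*)\sum_{\ell=1}^n L^{3\eps\ell}\le C_0\,\fv_0^{1-8\eps}(\mu_0-\mu_*)\,L^{3\eps n}$, where $C_0=\sfrac{L^{3\eps}}{L^{3\eps}-1}$ depends only on $L$ and $\eps$. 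Now $L^{\np}\le\fv_0^{-(2/3-8\eps)}$ (Remark \ref{remHTbasicnorm}) together with $n\le\np$ gives $L^{3\eps n}\le\fv_0^{-2\eps}$, while the hypotheses of \S\ref{sectINTstartPoint} and the estimate on $\mu_*$ in Lemma \ref{lemSZexplicitmustar} give $\mu_0-\mu_*\le\mu_0+|\mu_*|\le 2\fv_0^{8/9+\eps}$. Hence the second sum is $\le 2C_0\,\fv_0^{1-8\eps+8/9+\eps-2\eps}=2C_0\,\fv_0^{17/9-9\eps}=2C_0\,\fv_0^{5/9+6\eps}\cdot\fv_0^{4/3-15\eps}\le\sfrac{1}{4}\,\fv_0^{4/3-15\eps}$ for $\fv_0$ sufficiently small (depending on $L$ and $\eps$). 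Adding the two estimates proves (a).

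The step I expect to need the most care is the $\mu_0-\mu_*$ contribution in part (a). Using only the inequality $L^{2\ell}(\mu_0-\mu_*)\le(\fv_0/L^\ell)^{5\eps}$ that follows directly from the definition of $\np$ yields a bound of order $\fv_0^{1-3\eps}$, which is \emph{too large} --- it exceeds $\fv_0^{4/3-15\eps}$. One therefore has to feed in the stronger a priori bound $\mu_0-\mu_*\le 2\fv_0^{8/9+\eps}$ from the starting-point setup, and then control the \emph{increasing} geometric sum $\sum_\ell L^{3\eps\ell}$ through $L^{\np}\le\fv_0^{-(2/3-8\eps)}$, which costs only a harmless factor $\fv_0^{-2\eps}$. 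Everything else, and all of part (b), is then a direct computation.
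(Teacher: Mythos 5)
Your proof is correct and follows essentially the same route as the paper: for (b) a single convergent geometric sum in $\ell$ with ratio controlled by $1-4\eta+\eta_\fl<-\tfrac54$, and for (a) a split into two geometric sums, closing the $(\mu_0-\mu_*)$ term with the definition of $\np$ and the hypothesis $\mu_0-\mu_*\lesssim\fv_0^{8/9+\eps}$. The only cosmetic difference is bookkeeping: the paper absorbs the geometric-series constant into a factor $\fv_0^{-\eps}$ and passes through $(\mu_0-\mu_*)^{1-2\eps}$, while you carry an explicit $(L,\eps)$-dependent constant $C_0$ and use $L^{3\eps\np}\le\fv_0^{-2\eps}$; both lead to the same final exponent comparison.
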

\begin{proof} (a) The claim follows from
\begin{align*}
\smsum_{\ell=1}^n \sfrac{1}{L^{(2-3\eps)\ell}}
        \big[\fv_0^{\frac{1}{3}-6\eps} +L^{2\ell}(\mu_0-\mu_*)\big]
&\le \fv_0^{\frac{1}{3}-6\eps} + \fv_0^{-\eps}L^{4\eps \np}(\mu_0-\mu_*) \\
&\le \fv_0^{\frac{1}{3}-6\eps} + \fv_0^{-\eps}(\mu_0-\mu_*)^{1-2\eps} 
\quad& &\text{by Definition \ref{defHTbasicnorm}.b}\\
&\le \fv_0^{\frac{1}{3}-6\eps} 
          + \fv_0^{-\eps}\fv_0^{\sfrac{8}{9}(1-2\eps)} 
\quad& &\text{by \S\ref{sectINTstartPoint}}
\end{align*}

\Item (b) By Definition \ref{defHTbasicnorm}.a,
\begin{align*}
\smsum_{\ell=1}^n \sfrac{L^\ell }{\ka(\ell)^4}\fe_\fl(\ell-1)
&\le \smsum_{\ell=1}^\infty L^{(1-4\eta+\eta_\fl)\ell}\,
            \fv_0^{\frac{5}{3}-6\eps}
\le \fv_0^{\frac{5}{3}-6\eps}
\end{align*}
\end{proof}

\pagebreak[2]
\begin{corollary}\label{corPARmunvn}
Let $1\le n\le \np$.
\begin{enumerate}[label=(\alph*), leftmargin=*]
\item
If the real number $\mu$ obeys
$|\mu-\mu_n^*| \le L^{2n}\,\fv_0^{1-8\eps}
     \smsum_{\ell=1}^n \sfrac{1}{L^{(2-3\eps)\ell}}
        \big[\fv_0^{\frac{1}{3}-6\eps} +L^{2\ell}(\mu_0-\mu_*)\big]$
then
\begin{equation*}
\big|\mu-L^{2n}(\mu_0-\mu_*)\big|\le \fv_0^{1-\eps}
                   +L^{2n}\fv_0^{\frac{4}{3}-15\eps}
\end{equation*}
and
\begin{equation*}
|\mu|\le 2L^{2n}(\mu_0-\mu_*)+\fv_0^{1-\eps}
\le 4\min\big\{\fv_0^{5\eps}\,,\,L^{2n}\fv_0^{\frac{8}{9}+\eps}\big\}
\end{equation*}

\item
If the quartic monomial $\cV$ obeys
$\big\|\cV-\cV^{(u)}_n\big\|_{2m}
   \le \sfrac{1}{L^n\fv_0^\eps} 
    \smsum_{\ell=1}^n \sfrac{L^\ell }{\ka(\ell)^4}\fe_\fl(\ell-1)$
then
\begin{equation*}
\|\cV-\cV_n^{(u)}\|_{2m}\le \sfrac{1}{L^n}\fv_0^{\frac{5}{3}-7\eps}
\qquad\text{and}\qquad
\big\|\cV\big\|_{2m} \le\sfrac{\fv_0}{L^n}
\end{equation*}
\end{enumerate}
\end{corollary}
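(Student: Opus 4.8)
Both parts are obtained by assembling the already proven estimates of Lemma \ref{lemPARmunvn} with the constraints on the starting data of \S\ref{sectINTstartPoint} and the definition of $\np$ in Definition \ref{defHTbasicnorm}.b. The only ingredient that is not pure arithmetic is the comparison of the explicit $\mu_n^*$ of Remark \ref{remHTpreciseinduction} with $L^{2n}(\mu_0-\mu_*)$, which I would import from the identification of $\mu_*$ with (the limit of) the $\mu_n^*$'s.

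For part (a) I would split
\[
\mu-L^{2n}(\mu_0-\mu_*)=\big(\mu-\mu_n^*\big)+\big(\mu_n^*-L^{2n}(\mu_0-\mu_*)\big).
\]
By hypothesis $|\mu-\mu_n^*|$ equals $L^{2n}$ times the sum estimated in Lemma \ref{lemPARmunvn}.a, hence $|\mu-\mu_n^*|\le\sfrac{1}{2} L^{2n}\fv_0^{\sfrac{4}{3}-15\eps}$. Inserting $\mu_n^*=L^{2n}\mu_0-\sfrac{2}{|\cX_0^{(n)}|}\int_{\cX_n^4}V_n^{(u)}S_n$, the second term equals $L^{2n}\mu_*-\sfrac{2}{|\cX_0^{(n)}|}\int_{\cX_n^4}V_n^{(u)}S_n$, and I would bound it by $\fv_0^{1-\eps}$ using Lemma \ref{lemSZexplicitmustar} and \cite[Appendix \appMustar]{PAR2}; the content there is to substitute $V_n^{(u)}=L^{14n}V_0(\bbbl^n\,\cdot\,)$, use translation invariance to extract $L^{2n}$ times the kernel $\mu_*$ of \eqref{eqnINTmustardef}, and estimate the remainder from the scaling of $V_n^{(u)}$, the bound $\|S_n\|_{2m}\le\Gam_\op$ of Theorem \ref{HTthminvertibleoperators}, and the finite-volume (zero mode) corrections that distinguish $S_n$ from the free Green's function. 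Adding the two pieces gives $|\mu-L^{2n}(\mu_0-\mu_*)|\le\fv_0^{1-\eps}+L^{2n}\fv_0^{\sfrac{4}{3}-15\eps}$. The bound on $|\mu|$ then follows arithmetically: $|\mu|\le\fv_0^{1-\eps}+L^{2n}\fv_0^{\sfrac{4}{3}-15\eps}+L^{2n}(\mu_0-\mu_*)$, and since $\mu_0-\mu_*\ge\fv_0^{\sfrac{4}{3}-16\eps}$ (\S\ref{sectINTstartPoint}) the middle term is $\le\fv_0^\eps L^{2n}(\mu_0-\mu_*)\le L^{2n}(\mu_0-\mu_*)$, so $|\mu|\le\fv_0^{1-\eps}+2L^{2n}(\mu_0-\mu_*)$. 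Definition \ref{defHTbasicnorm}.b gives $L^{2n}(\mu_0-\mu_*)\le L^{2\np}(\mu_0-\mu_*)\le\fv_0^{5\eps}$ (using $L^{5\eps\np}\ge 1$), while $0<\mu_0-\mu_*\le\mu_0\le\fv_0^{\sfrac{8}{9}+\eps}$ gives $L^{2n}(\mu_0-\mu_*)\le L^{2n}\fv_0^{\sfrac{8}{9}+\eps}$; as $1-\eps>\max\{5\eps,\sfrac{8}{9}+\eps\}$ for small $\eps$, the term $\fv_0^{1-\eps}$ is dominated by each of these, whence $|\mu|\le 3\min\{\fv_0^{5\eps},L^{2n}\fv_0^{\sfrac{8}{9}+\eps}\}\le 4\min\{\fv_0^{5\eps},L^{2n}\fv_0^{\sfrac{8}{9}+\eps}\}$.

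For part (b), the hypothesis together with Lemma \ref{lemPARmunvn}.b gives immediately $\|\cV-\cV_n^{(u)}\|_{2m}\le\sfrac{1}{L^n\fv_0^\eps}\,\fv_0^{\sfrac{5}{3}-6\eps}=\sfrac{1}{L^n}\fv_0^{\sfrac{5}{3}-7\eps}$, the first claim. For $\|\cV\|_{2m}$ I would add $\|\cV_n^{(u)}\|_{2m}$ and use $\cV_n^{(u)}=\bbbs^n\cV_0$ (Lemma \ref{lemSCacheckOne}.a): each application of $\bbbs$ multiplies the quartic kernel by $L^{14}$ (Remark \ref{remSCscaling}.h) while the change of cell volume contributes $L^{-15}$ and the tree length of points in the finer lattice is no larger than that of their images in the coarser one, so the exponential weights do not grow; hence $\|\cV_n^{(u)}\|_{2m}\le L^{-n}\|\cV_0\|_{2m}\le\sfrac{\fv_0}{2L^n}$, since $\|\cV_0\|_{2m}\le\|V_0\|_{2m}\le\|\bV_0\|_{2m}=\sfrac{\fv_0}{2}$. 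Therefore $\|\cV\|_{2m}\le\sfrac{1}{L^n}\big(\fv_0^{\sfrac{5}{3}-7\eps}+\sfrac{1}{2}\fv_0\big)\le\sfrac{\fv_0}{L^n}$ once $\fv_0$ is small enough that $\fv_0^{\sfrac{2}{3}-7\eps}\le\sfrac{1}{2}$.

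\textbf{Main obstacle.} The substantive step is the $n$-uniform comparison $|\mu_n^*-L^{2n}(\mu_0-\mu_*)|\le\fv_0^{1-\eps}$; everything else is bookkeeping with the parameter inequalities. The difficulty there lies in simultaneously tracking the scaling behaviour of $V_n^{(u)}$ and $S_n$ and the difference between the regularized finite-volume Green's function $S_n$ and the (spatially infinite volume) free kernel entering \eqref{eqnINTmustardef}. A secondary point, used in part (b), is that conjugation by $\bbbs^n$ must improve $\|\cdot\|_{2m}$ by a genuine factor $L^{-n}$ rather than merely leaving it bounded; this relies on $\bbbl^{-n}$ contracting lattice distances, so that the tree lengths, hence the exponential weights in Definition \ref{defDEFkernelnorm}, do not increase.
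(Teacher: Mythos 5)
Your proposal is correct and matches the paper's argument. For (a) the paper makes the same split
$\mu-L^{2n}(\mu_0-\mu_*)=(\mu-\mu_n^*)+(\mu_n^*-L^{2n}(\mu_0-\mu_*))$,
bounds the first piece by Lemma~\ref{lemPARmunvn}.a and imports $|\mu_n^*-L^{2n}(\mu_0-\mu_*)|\le\fv_0^{1-\eps}$ from \cite[Lemma \lemOSImustar]{PAR2} (the Appendix \appMustar\ material you invoke), then derives the $|\mu|$ bound arithmetically from $\fv_0^{\sfrac{4}{3}-16\eps}\le\mu_0-\mu_*\le\fv_0^{\sfrac{8}{9}+\eps}$ and Definition~\ref{defHTbasicnorm}.b, exactly as you do. For (b) the paper simply says ``trivial''; your supplied computation — Lemma~\ref{lemPARmunvn}.b for the first claim, and $\|\cV_n^{(u)}\|_{2m}\le L^{-n}\|\cV_0\|_{2m}\le\sfrac{\fv_0}{2L^n}$ via the $L^{14n}$ kernel scaling against the $L^{-15n}$ volume factor, with the tree-length contraction guaranteeing the exponential weights do not grow — is the correct unpacking of that remark.
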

\begin{proof} (a) 
By Lemma \ref{lemPARmunvn}.a and \cite[Lemma \lemOSImustar]{PAR2},
\begin{align*}
\big|\mu-L^{2n}(\mu_0-\mu_*)\big|
&\le \big|\mu_n^*-L^{2n}(\mu_0-\mu_*)\big| \\
&\hskip1in+L^{2n}\,\fv_0^{1-8\eps}
     \smsum_{\ell=1}^n \sfrac{1}{L^{(2-3\eps)\ell}}
        \big[\fv_0^{\frac{1}{3}-6\eps} +L^{2\ell}(\mu_0-\mu_*)\big] \\
&\le \fv_0^{1-\eps}
+\sfrac{L^{2n}}{2} \fv_0^{\frac{4}{3}-15\eps} 
\end{align*}
The second inequality now follows from $\fv_0^{\frac{4}{3}-16\eps}
\le \mu_0-\mu_*$ and Definition \ref{defHTbasicnorm}.b.

\Item (b) is trivial.
\end{proof}

\begin{lemma}\label{lemPARlnmuzeromustar}
Let $0\le n\le \np$.
\begin{enumerate}[label=(\alph*), leftmargin=*]
\item 
If $0\le\al\le 1$, then
$\ \ 
L^{2n}(\mu_0-\mu_*)\le L^{2\al n}\,\fv_0^{5\eps(1-\al)}\,(\mu_0-\mu_*)^\al
$.

\item 
\hskip0.5in $L^{2n}(\mu_0-\mu_*)
\le \min\big\{\fv_0^{5\eps}\,,\,L^{2n}\fv_0^{\frac{8}{9}+\eps}\big\}$

\item 
\hskip0.5in 
  $L^{2n}(\mu_0-\mu_*) \le  \sfrac{\fv_0^{1+\eps}}{L^n} \ka(n)^2 $

\item
\hskip0.5in 
  $L^{2n}(\mu_0-\mu_*)\, \ka_\fl(n)^2\le\fv_0^\eps\, \fe_\fl(n)$
\end{enumerate}
\end{lemma}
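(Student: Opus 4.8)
Every part will be read off from the defining inequality of $\np$. By Definition \ref{defHTbasicnorm}.b the map $n\mapsto L^{(2+5\eps)n}(\mu_0-\mu_*)$ is increasing and $L^{(2+5\eps)\np}(\mu_0-\mu_*)\le\fv_0^{5\eps}$, and by \S\ref{sectINTstartPoint} $\mu_0-\mu_*\ge\fv_0^{\sfrac{4}{3}-16\eps}>0$, so for every $0\le n\le\np$
\begin{equation*}
0<L^{2n}(\mu_0-\mu_*)=L^{-5\eps n}\,L^{(2+5\eps)n}(\mu_0-\mu_*)\ \le\ \fv_0^{5\eps}\,L^{-5\eps n}\ \le\ \fv_0^{5\eps}.\tag{$\star$}
\end{equation*}
For part (a), split $L^{2n}(\mu_0-\mu_*)=\big(L^{2n}(\mu_0-\mu_*)\big)^{1-\al}\big(L^{2n}(\mu_0-\mu_*)\big)^{\al}$, bound the first factor by the first inequality in $(\star)$, and discard the harmless factor $L^{-5\eps(1-\al)n}\le1$ (recall $L>2$, $0\le\al\le1$, $n\ge0$); this is exactly the asserted bound $L^{2n}(\mu_0-\mu_*)\le\fv_0^{5\eps(1-\al)}L^{2\al n}(\mu_0-\mu_*)^{\al}$.

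Part (b): the bound $L^{2n}(\mu_0-\mu_*)\le\fv_0^{5\eps}$ is $(\star)$ (or (a) with $\al=0$). The bound $L^{2n}(\mu_0-\mu_*)\le L^{2n}\fv_0^{\sfrac{8}{9}+\eps}$ is equivalent to $\mu_0-\mu_*\le\fv_0^{\sfrac{8}{9}+\eps}$, which follows from the hypothesis $\mu_0\le\fv_0^{\sfrac{8}{9}+\eps}$ of \S\ref{sectINTstartPoint} together with $\mu_*\ge0$ (Lemma \ref{lemSZexplicitmustar}).

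Parts (c) and (d) come from (a) by choosing $\al$ so that $(\mu_0-\mu_*)^{\al}$ becomes the right power of $\fv_0$; throughout write $r=\sfrac{\log\fv_0}{\log(\mu_0-\mu_*)}$, so that $(\mu_0-\mu_*)^{\al}=\fv_0^{\al/r}$, and recall from Definition \ref{defHTbasicnorm}.a that $\eta-\half=\sfrac{r}{3}$ and $\eta_\fl=\big(\sfrac{2}{3}-4\eps\big)r$, and from Remark \ref{remHTbasicnorm} that $\sfrac{3}{4}<r<\sfrac{9}{8}$. For (c), take $\al=\eta-\half\in(0,\sfrac{3}{8})$; then $(\mu_0-\mu_*)^{\al}=\fv_0^{1/3}$ and $2\al=2\eta-1$, so (a) gives $L^{2n}(\mu_0-\mu_*)\le\fv_0^{\sfrac{1}{3}+5\eps(1-\al)}L^{(2\eta-1)n}$, while a direct computation with Definition \ref{defHTbasicnorm}.a gives $\sfrac{\fv_0^{1+\eps}}{L^n}\ka(n)^2=\fv_0^{\sfrac{1}{3}+3\eps}L^{(2\eta-1)n}$; since $\al<\sfrac{3}{8}<\sfrac{2}{5}$ we have $5\eps(1-\al)\ge3\eps$ and (c) follows. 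For (d), take $\al=\half(\eta_\fl-\eps)\in(0,\sfrac{3}{8})$; then $2\al=\eta_\fl-\eps$, so (a) gives $L^{2n}(\mu_0-\mu_*)\le\fv_0^{5\eps(1-\al)+\al/r}L^{(\eta_\fl-\eps)n}$, whereas one computes $\fv_0^{\eps}\fe_\fl(n)/\ka_\fl(n)^2=\fv_0^{1/3}L^{(\eta_\fl-\eps)n}$, so it remains only to verify $5\eps(1-\al)+\sfrac{\al}{r}\ge\sfrac13$. Plugging in $\sfrac{\al}{r}=\sfrac13-2\eps-\sfrac{\eps}{2r}$ and $\al=\sfrac{r}{3}-2\eps r-\half\eps$ turns this into $\eps\big(3-\sfrac{1}{2r}-\sfrac{5}{3}r\big)+O(\eps^2)\ge0$, which holds for $\eps$ small because $3-\sfrac{1}{2r}-\sfrac{5}{3}r$ is positive and decreasing on $\big(\sfrac34,\sfrac98\big)$ with value $\sfrac{49}{72}$ at $r=\sfrac98$.

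\textbf{Main obstacle.} The only genuine work is in (d): one must commit to the precise interpolation exponent $\al=\half(\eta_\fl-\eps)$ and then show the leftover $O(\eps)$ term has the correct sign, which reduces to the elementary observation that the window $\sfrac34<\sfrac{\log\fv_0}{\log(\mu_0-\mu_*)}<\sfrac98$ from Remark \ref{remHTbasicnorm} keeps $3-\sfrac{1}{2r}-\sfrac{5}{3}r$ positive. A secondary point is that (b) uses the sign input $\mu_*\ge0$, which I take from Lemma \ref{lemSZexplicitmustar}.
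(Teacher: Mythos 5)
Your proof is correct and follows essentially the same route as the paper: part (a) by interpolation from the defining inequality $L^{(2+5\eps)\np}(\mu_0-\mu_*)\le\fv_0^{5\eps}$, then parts (c) and (d) by choosing the interpolation exponents $\al=\eta-\tfrac12$ and $\al=\tfrac12(\eta_\fl-\eps)$ and verifying the residual $\fv_0$-powers using Remark~\ref{remHTbasicnorm}. The only presentational difference is that you package the final checks as positivity of $3-\tfrac{1}{2r}-\tfrac{5}{3}r$ on $(\tfrac34,\tfrac98)$ (plus $O(\eps^2)$) whereas the paper works line by line with the explicit bounds $\eta<\tfrac78$ and $\eta_\fl<\tfrac34$ and $\log(\mu_0-\mu_*)\ge\tfrac43\log\fv_0$; these are the same computation. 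Your explicit attention in (b) to needing $\mu_*\ge 0$ (via Lemma~\ref{lemSZexplicitmustar}) is a point the paper's proof leaves tacit.
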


\begin{proof} (a) By Definition \ref{defHTbasicnorm}.b,
\begin{align*}
L^{2n}(\mu_0-\mu_*) 
&\le L^{2\al n}\big[L^{2\np}(\mu_0-\mu_*)\big]^{1-\al}(\mu_0-\mu_*)^\al 
\le L^{2\al n}\fv_0^{5\eps(1-\al)}(\mu_0-\mu_*)^\al
\end{align*}

\Item (b) 
By part (a) with $\al=0$, $L^{2n}(\mu_0-\mu_*)\le \fv_0^{5\eps}$.
By \S\ref{sectINTstartPoint}, 
$
L^{2n}(\mu_0-\mu_*)
\le L^{2n}\fv_0^{\frac{8}{9}+\eps}
$.

\Item (c) 
By part (a) with $\al = \eta-\half$ and Definition \ref{defHTbasicnorm}.b,
\begin{align*}
&\log\big[L^{2n}(\mu_0-\mu_*)\big]
-\log\big[\sfrac{\fv_0^{1+\eps}}{L^n} \ka(n)^2\big] \\
&\hskip0.5in
  \le 5\eps\big(\sfrac{3}{2}-\eta\big)\log\fv_0 +(\eta-\half)\log(\mu_0-\mu_*)
    -\big(\sfrac{1}{3}+3\eps\big)\log\fv_0 \\
&\hskip0.5in
  \le \sfrac{25}{8}\eps\log\fv_0 +\sfrac{1}{3}\log\fv_0
    -\big(\sfrac{1}{3}+3\eps\big)\log\fv_0 \\
&\hskip0.5in
  \le 0 
\end{align*}

\Item (d) 
By part (a) with $\al = \sfrac{\eta_\fl-\eps}{2}$, 
Definition \ref{defHTbasicnorm}.b and Remark \ref{remHTbasicnorm},
\begin{align*}
&\log\big[L^{2n}(\mu_0-\mu_*)\big]
-\log\big[\fv_0^{\eps} \sfrac{\fe_\fl(n)}{\ka_\fl(n)^2}\big] \\
&\hskip0.5in
  \le 5\eps(1-\sfrac{\eta_\fl-\eps}{2})\log\fv_0 
      +\sfrac{\eta_\fl-\eps}{2}\log(\mu_0-\mu_*)
    -\sfrac{1}{3}\log\fv_0 \\
&\hskip0.5in
  \le 5\eps(1-\sfrac{\eta_\fl}{2})\log\fv_0 
      -2\eps\log\fv_0
      -\sfrac{\eps}{2}\log(\mu_0-\mu_*)\\
&\hskip0.5in
  \le 5\eps(1-\sfrac{1}{2}\times\sfrac{2}{3}\times\sfrac{9}{8})\log\fv_0 
      -2\eps\log\fv_0
      -\sfrac{2\eps}{3}\log\fv_0\\
&\hskip0.5in \le 0
\end{align*}
\end{proof}

\section{Rewriting the Output of the Ultraviolet Flow}\label{appSZrewrite}

\subsection{The Model}\label{sectSZrewriteModel}

We now give a technically complete description of the output
of \cite{UV} and, in Proposition \ref{propSZprepforblockspin}, give the mathematically precise description of the starting point \eqref{eqnHTstartingpoint} of our analysis.
The models under consideration are characterized by
\begin{itemize}[leftmargin=*, topsep=2pt, itemsep=2pt, parsep=0pt]
\item a kinetic energy operator
\begin{equation*}
\bh = \nabla^*\cH\nabla
\end{equation*}
where $\cH$ is a real, translation invariant, reflection invariant, 
strictly positive definite operator on the space, $L^2\big({(\bbbz^3)}^*)$, 
of functions  on the set, ${(\bbbz^3)}^*$, of nearest neighbor bonds of 
the lattice $\bbbz^3$. 
\item
a real, symmetric, translation invariant, reflection invariant,
strictly positive definite two--body interaction $\bv$ on $\bbbz^3$ and
\item
 a real chemical potential $\mu$. 
\end{itemize}
We denote by $h$ and $v$ the periodizations of $\bh$ and $\bv$
to the finite lattice $X=\bbbz^3/L_\sp\bbbz^3$.

The results of \cite{UV} apply under the following conditions
on the above data. Pick any mass $\mass>0$ and constants $c_v,D_\cH,K_\mu>0$,
$\half<e_\mu\le 1$ and $0<c_\cH<C_\cH$. There is a number $1\ge \bar\fv>0$, 
depending on these constants, such that for all $0<\fv\le\bar\fv$, the results
of \cite{UV} hold for all $\mu$'s, $\bv$'s and $\cH$'s that satisfy
\begin{itemize}[leftmargin=*, topsep=2pt, itemsep=2pt, parsep=0pt]
\item
  $\sum\limits_{\atop{\bx\in\bbbz^3}{_{1\le i,j\le 3}}}e^{6\,\mass d(\bx,0) }
               \big|\cH\big(\<0,e_i\>\,,\,\<\bx,\bx+e_j\>\big)\big|
      \le D_\cH$, where $e_i$ is the unit vector in the $i^{\rm th}$ direction
      and $d(\bx,\by)$ is the Euclidean distance from $\bx$ to $\by$, and
\item      
    the eigenvalues of the periodization of $\cH$ lie between $c_\cH$ 
      and $C_\cH$ and
\item 
the norm
\begin{equation*}
\tn \bv\tn = \sup_{\bx\in\bbbz^3} \smsum_{\by \in\bbbz^3} \,e^{5\mass\,d(\bx,\by)}
\,|\bv(\bx,\by)|
\end{equation*}
obeys $\sfrac{1}{4}\fv\le \tn \bv\tn\le\half\fv$ and
\item
the smallest eigenvalue of $v$ is at least $c_v \tn \bv\tn$ and
\item
 $|\mu|\le K_\mu\fv^{e_\mu}$.
\end{itemize}

\subsection{The Output of \cite{UV}}\label{sectSZrewriteOutput}

Let $H$ be the second quantized Hamiltonian with kinetic energy operator
$h$ and two--body interaction $v$, and let $N$ be the number operator.
Fix, as in \cite[Hypothesis 2.14]{UV}, strictly positive exponents 
$e_\r$, $e_\R$, and $e_{\RP}$ that obey
\begin{align*}
3e_\R+4e_\r&<1\quad && 
1&\le 4e_\R+2e_\r \quad && 
&2(e_\R+e_\r)<e_\mu \quad &&
e_{\RP}+ e_\r&<1 \quad && 
\half&\le e_{\RP} 
\end{align*}
Think of $e_\r$ as being just slightly larger than $0$, 
$e_\R$ as being slightly smaller than $\sfrac{1}{3}$,
$e_\mu$ as being slightly smaller than $\sfrac{2}{3}$,  
and $e_\RP$ as being between one half and one. 
In \cite[Theorem 2.16]{UV} (a self--contained treatment of the pure small 
field part of the argument is also given in \cite{CPS}) we prove that 
there exist constants $K,\th>0$ (we may assume that $\th\le 1$) and a function $\,I_\theta(\al_* ,\be)\,$ 
of two complex valued fields $\,\al_*\,$ and
$\,\be$ on $X$ such that
\begin{equation}\label{eqnSZuvoutput}
\Tr\, e^{-\sfrac{1}{kT}\,(H-\mu N)} =
\int \hskip-28pt\prod_{\hskip22pt\lower7pt\hbox{$\sst\tau\in\th\bbbz
\cap(0,\frac{1}{kT}]$}}\hskip-22pt 
\Big[ \smprod_{\bx\in X} 
\sfrac{ d\al_\tau(\bx)^\ast\wedge d\al_\tau(\bx)}{2\pi \imath} \,
                      e^{-\al_\tau(\bx)^\ast \al_\tau(\bx)}\Big]
I_\theta(\al_{\tau-\th}^\ast ,\al_\tau) 
\end{equation}
for all temperatures $T>0$. Here $\al_0=\al_{\frac{1}{kT}}$. 
We also proved that it is possible to write $\,I_\theta\,$ as the sum 
of a dominant part $\,I_\theta^{(SF)}\,$, called the \emph{pure small 
field contribution}\footnote{$\,I_\theta^{(SF)}\,$ is the
$\Om=X$ term in the formula given for $I_\th(\al^*,\be)$ in 
\cite[Theorem 2.16]{UV}},  and terms, indexed by proper subsets of $X$,
which are nonperturbatively small, exponentially in the size of the subsets.
The dominant part has a logarithm. More precisely
\begin{equation}\label{eqnSZspa}
I^{(\rm SF)}_\th(\al^*,\be)
=\cZ_\th^{|X|} e^{\langle \al^*,\,j(\th) \be \rangle_X  +V_\th(\al^*,\be)
+\cD_\th(\al^*,\be)} \chi_\th(\al,\be)
\end{equation}
where
\begin{itemize}[leftmargin=*, topsep=2pt, itemsep=2pt, parsep=0pt]
\item
 $\,\cZ_\th\,$is a normalization constant, 
\item
 $\<f,g\>_X=\sum\limits_{\bx\in X}f(\bx)g(\bx)$ is the 
``real'' inner product of $f,g\in L^2(X)$,
\item
 $j(t)=e^{-t(h-\mu)}$
\item
$
V_\th(\al^*,\be) 
= -\int_0^\th  \<\big[ j(t)\al^*\big] \big[ j(\th -t)\be \big]\, , \,
   v \big[ j(t)\al^*\big] \big[ j(\th -t)\be \big] \>_X\ dt
$
\item
 the function $\,\cD_\th(\al_*,\be)\,$ is
 analytic in the fields $\,\al_*\,$ and $\,\be\,$ and is invariant under
the $U(1)$ symmetry $\al_*\rightarrow e^{-i t}\al_*$, 
$\be\rightarrow e^{it}\be$. 
Furthermore, it can be decomposed in the form
\begin{equation*}
\cD_\theta(\al_*,\be) 
= \cR_\theta(\al_*,\be) +\cE_\theta(\al_*,\be) 
\end{equation*}
with 
\begin{itemize}[leftmargin=*, topsep=2pt, itemsep=2pt, parsep=0pt]
\item
a function $\,\cR_\theta(\al_*,\be)\,$ that is bilinear in 
$\al_*$ and $\be$ whose norm, as in  Definition \ref{defHTabstractnorm}, 
with mass $2\mass$ and weight $\ka=2\big(\sfrac{1}{\th\fv}\big)^{e_\R+e_\r}$, 
for both $\al_*$ and $\be$, is bounded by
$
K\,\th\,\fv^{\mass\log\frac{1}{\fv}}
$
and 
\item
 a function $\,\cE_\theta(\al_*,\be)\,$
that has degree at least two\footnote{By this we
mean that every monomial appearing in the power series expansion of these
functions contains a factor of the form 
$\,\al_*(\bx_1)\,\al_*(\bx_2)\,\be(\bx_3)\,\be(\bx_4)  \,$.} 
both in $\al_*$ and in $\be$,  whose norm with mass 
$2\mass$ and weight $\ka=2\big(\sfrac{1}{\th\fv}\big)^{e_\R+e_\r}$, is bounded by
$
K\, (\fv\th)^{2-6e_\R-8e_\r}
$.
\end{itemize}
\item
The ``small field cut off function'' $\,\chi_\th(\al,\be)\,$ is one if 
\begin{itemize}[leftmargin=*, topsep=2pt, itemsep=2pt, parsep=0pt]
\item
 $|\al(\bx)|, |\be(\bx)|
   \le \big(\sfrac{1}{\th\fv}\big)^{e_\R+e_\r}$ for all $\bx\in X$
and
\item
 $|\nabla\al(b)|, |\nabla\be(b)|
 \le\big(\sfrac{1}{\th}\big)^{e_{\RP}}\big(\sfrac{1}{\th\fv}\big)^{e_\r}$ 
for all  bonds $b$ on $X$ and
\item
 $|\al(\bx)-\be(\bx)|\le \big(\sfrac{1}{\th\fv}\big)^{e_\r}$ 
   for all $\bx\in X$
\end{itemize}
and is zero otherwise.
\end{itemize}

\subsection{The Rewriting}\label{sectSZrewriteRewrite}

\begin{proposition}\label{propSZprepforblockspin}
Make the hypotheses of \S\ref{sectSZrewriteModel} and 
\S\ref{sectSZrewriteOutput}. Then
\begin{align*}
&\int \hskip-28pt\prod_{\hskip22pt\lower5pt\hbox{$\sst\tau\in\th\bbbz
\cap(0,\sfrac{1}{kT}]$}}\hskip-22pt 
\Big[ \smprod_{\bx\in X} 
\sfrac{ d\al_\tau(\bx)^\ast\wedge d\al_\tau(\bx)}{2\pi \imath} \,
                      e^{-\al_\tau(\bx)^\ast \al_\tau(\bx)}\Big]
I_\theta^{(SF)}(\al_{\tau-\th}^\ast ,\al_\tau) 
\\
&\hskip0.1in
= \cZ_\In^{|\cX_0|}\!
   \int \Big[ \smprod_{x\in \cX_0} 
        \sfrac{ d\psi(x)^\ast\wedge d\psi(x)}{2\pi \imath}\Big] \,
  e^{-\<\psi_*,\,D_0\psi\>_0
       -\cV_0(\psi_*,\psi)
       +\mu_0 \<\psi_*,\,\psi\>_0
       +\cR_0(\psi^*,\psi) 
       +\cE_0(\psi^*,\psi) }\chi_0(\psi)
\end{align*}
where 
\begin{itemize}[leftmargin=*, topsep=2pt, itemsep=2pt, parsep=0pt]
\item
 $\cZ_\In=\cZ_\th e^{-\th\mu}$
\item
$
\cX_0 = \big( \bbbz \times \bbbz^3 \big) / 
            \big(\sfrac{1}{\th k T}\bbbz \times L_\sp \bbbz^3 \big)
$
\item
$
D_0=\bbbone - e^{-h_0} -e^{-h_0} \partial_0
\,$ with $h_0 = \th h$ and $\partial_0$ the forward time derivative.
\item
 There is a real--valued kernel $\bV_0(x_1,x_2,x_3,x_4)$
on $\big((\bbbz/\sfrac{1}{\th kT}\bbbz) \times \bbbz^3 \big)^4$ 
that is invariant under $x_1\leftrightarrow x_3$ and under 
$x_2\leftrightarrow x_4$ and under the symmetry group $\fS$,
and there is a constant $K_v$, depending only on $\th$, $\mass$, $c_v$,
$K_\mu$ and $\cH$, such that
\begin{itemize}[leftmargin=*, topsep=2pt, itemsep=2pt, parsep=0pt]
\item
$
\cV_0(\psi_*,\psi)=\half\int_{\cX_0^4} dx_1 \cdots dx_4\  
                  V_0(x_1,x_2,x_3,x_4)\,  
                  \psi_*(x_1) \psi(x_2)\psi_*(x_3) \psi(x_4)
$ where $V_0$ is the spatial periodization of $\bV_0$,
\item
$\|\bV_0\|_{\frac{2}{3}\mass}\le K_v  \fv$  and
${\|\bV_0\|}_0\ge \sfrac{1}{K_v}  \fv$
\item
 $\Big|\half\int dx_2\, dx_3\, dx_4\ V_0(x_1,\cdots,x_4)
  - \th\int_X d\bx\ v(\bZ,\bx)\Big|\le K_v\fv^{2-2e_\R-4e_\r}$
\item
$
\big\|\bV_0-\de_{x_{1,0},x_{3,0}}\,\de_{x_{2,0},x_{4,0}}\,
         \de_{x_{1,0},x_{2,0}-1} \,\bv_\th(\bx_1,\bx_2,\bx_3,\bx_4)
\big\|_{\frac{2}{3}\mass}\le K_v\fv^{2-2e_\R-4e_\r} 
$
with 
\begin{align*}
&\bv_\th(\bx_1,\cdots,\bx_4)\\
&\hskip0.15in=
\sum_{\bx,\by\in \bbbz^3}\hskip-4pt
{\tst\int_0^\th} dt\  e^{-t\bh}(\bx,\bx_1)\,e^{-(\th-t)\bh}(\bx,\bx_2)\ 
           \bv(\bx,\by)\  e^{-t\bh}(\by,\bx_3)\,e^{-(\th-t)\bh}(\by,\bx_4)\\
&\hskip0.2in +  \sum_{\bx,\by\in \bbbz^3}\hskip-4pt
{\tst\int_0^\th} dt\  e^{-t\bh}(\bx,\bx_3)\,e^{-(\th-t)\bh}(\bx,\bx_2)\ 
          \bv(\bx,\by)\ e^{-t\bh}(\by,\bx_1)\,e^{-(\th-t)\bh}(\by,\bx_4)
\end{align*}
\end{itemize}
\item
 $\big|\mu_0 -\big(1-e^{-\th\mu}\big)\big|
   \le K\th \fv^{\mass\log\sfrac{1}{\fv}}$
\item
 $\cR_0(\psi_*,\psi) = \tilde\cR_\In\big(
                 (\psi_*, \{\partial_\nu\psi_*\}), (\psi,\{\partial_\nu\psi\})
                  \big)+ \cR_0^{(6)}(\psi_*,\psi)\ $\\
where $\tilde\cR_\In((\psi_*, \{\psi_{*\nu}\}), (\psi,\{\psi_\nu\})$ 
is an $\fS$ invariant, particle--number preserving function with real valued 
kernels that    
\begin{itemize}[leftmargin=*, topsep=2pt, itemsep=2pt, parsep=0pt]
\item is of degree two in the fields, with either one $\psi_{(*)}$
field and one $\psi_{(*)0}$ field or two $\psi_{(*)\nu}$ fields, with both
having $1\le \nu\le 3$ and
\item obeys the bound 
$
\|\tilde \cR_\In\|_{\mass}\le C_r \fv^{\mass\log\sfrac{1}{\fv}}
$
with a constant $C_r$ that depends only on $\mass$, $K_\mu$ and $K$.
\end{itemize}
and $\cR_0^{(6)}(\psi_*,\psi)$ 
is an $\fS$ invariant, particle--number preserving function with real valued kernel
\begin{itemize}[leftmargin=*, topsep=2pt, itemsep=2pt, parsep=0pt]
\item
that thas degree three both in $\psi_*$ and $\psi$, and
\item
 fulfils the estimate
$
\| \cR_0^{(6)}\|_{2\mass}\le e^{-3\th\mu+6\mass} K\, (\fv\th)^{2-2e_\r}
$
\end{itemize}
\item
 $\cE_0(\psi_*,\psi)$ 
is an $\fS$ invariant, particle--number preserving function with real valued kernels that
\begin{itemize}[leftmargin=*, topsep=2pt, itemsep=2pt, parsep=0pt]
\item 
is of degree at least four both in $\psi_*$ and in $\psi$, 
and 
\item
 has norm with mass $2\mass$ and weight $2e^{\th\mu/2-\mass}\big(\sfrac{1}{\th\fv}\big)^{e_\R+e_\r}$
at most
$
K\, (\fv\th)^{2-6e_\R-8e_\r}
$.
\end{itemize}

\item
the ``small field cut off function'' $\,\chi_0(\psi)\,$ is one if 
\begin{itemize}[leftmargin=*, topsep=2pt, itemsep=2pt, parsep=0pt]
\item $|\psi(x)|
   \le e^{\th\mu/2}\big(\sfrac{1}{\th\fv}\big)^{e_\R+e_\r}$ 
       for all $x\in \cX_0$
and
\item $|\partial_\nu\psi(x)|
 \le  e^{\th\mu/2}\big(\sfrac{1}{\th}\big)^{e_{\RP}}
                \big(\sfrac{1}{\th\fv}\big)^{e_\r}$ 
for all  $1\le\nu\le 3$ and all $x \in\cX_0$ and
\item $|\partial_0\psi(x)|
 \le  e^{\th\mu/2}\big(\sfrac{1}{\th\fv}\big)^{e_\r}$ 
   for all $x\in \cX_0$
\end{itemize}
and is zero otherwise.
\end{itemize}
\end{proposition}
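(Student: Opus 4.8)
The final statement is Proposition~\ref{propSZprepforblockspin}, which rewrites the small field part of the ultraviolet output \eqref{eqnSZspa} into the concrete starting form \eqref{eqnHTstartingpoint}--\eqref{eqnHTaIn}.

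\medskip

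\noindent\textbf{Proof proposal.}

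The plan is to start from the explicit formula \eqref{eqnSZspa} for $I^{(\rm SF)}_\th$ and carry out a sequence of elementary, bookkeeping-heavy manipulations that convert the ``coherent state'' functional integral of \S\ref{sectSZrewriteOutput} into the functional integral of \eqref{eqnHTstartingpoint}. First I would rescale the temporal lattice: the variable $\tau$ runs over $\th\bbbz\cap(0,\nicefrac{1}{kT}]$, and dividing by $\th$ (equivalently, relabelling $\al_\tau$ as a field $\psi$ on $\cX_0=(\bbbz/\sfrac{1}{\th kT}\bbbz)\times\bbbz^3/L_\sp\bbbz^3$) turns the temporal lattice into a unit lattice. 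Under this relabelling the Gaussian weight $\prod_\bx e^{-\al_\tau(\bx)^*\al_\tau(\bx)}$ becomes $e^{-\<\psi_*,\psi\>_0}$ restricted to one time slice, and the factor $e^{\<\al^*,j(\th)\be\>_X}$ coming from $I^{(\rm SF)}_\th(\al^*_{\tau-\th},\al_\tau)$ couples neighbouring time slices. Combining the diagonal Gaussian term with the off-diagonal hopping term $\<\al^*_{\tau-\th},j(\th)\al_\tau\>_X$ and summing over $\tau$ produces exactly $-\<\psi_*,D_0\psi\>_0$ with $D_0=\bbbone-e^{-h_0}-e^{-h_0}\partial_0$, $h_0=\th h$; this is the standard identity relating the coherent state transfer matrix to the discrete action, and I would write $j(\th)=e^{-\th(h-\mu)}=e^{\th\mu}e^{-h_0}$, absorbing the $e^{\th\mu}$ into the normalization constant to get $\cZ_\In=\cZ_\th e^{-\th\mu}$ and $\mu_0$ close to $1-e^{-\th\mu}$ (with the error $|\mu_0-(1-e^{-\th\mu})|\le K\th\fv^{\mass\log\frac1\fv}$ coming from the bilinear piece $\cR_\th$ of $\cD_\th$, part of which is a chemical-potential-type term that gets reabsorbed).

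\medskip

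The next step is to identify $\cV_0$ with (minus) $V_\th$ of \S\ref{sectSZrewriteOutput}. Expanding $V_\th(\al^*,\be)=-\int_0^\th \<[j(t)\al^*][j(\th-t)\be],v[j(t)\al^*][j(\th-t)\be]\>_X\,dt$ and writing out the kernels of $j(t)=e^{-t h}$ gives a quartic expression in $\psi_*,\psi$. Symmetrizing under $x_1\leftrightarrow x_3$ and $x_2\leftrightarrow x_4$ defines the kernel $\bV_0$; the symmetrization also explains the sum of two terms in the formula for $\bv_\th$. The estimates on $\bV_0$ are then computed directly: the bound $\|\bV_0\|_{\frac23\mass}\le K_v\fv$ follows from the exponential decay hypothesis on $\cH$ (hence on $j(t)$) and on $\bv$, together with $\tn\bv\tn\sim\fv$; the lower bound ${\|\bV_0\|}_0\ge\frac1{K_v}\fv$ and the average estimate $|\frac12\int V_0-\th\int_X v(\bZ,\bx)d\bx|\le K_v\fv^{2-2e_\R-4e_\r}$ come from evaluating $\bv_\th$ at small total ``time'' and using that $j(t)\to\bbbone$ as $\th\to0$ up to $O(\th)$ and that the contributions of $\cR_\th,\cE_\th$ are smaller powers of $\fv$; and the comparison $\|\bV_0-\de\,\de\,\de\,\bv_\th\|_{\frac23\mass}$ is exactly the statement that the leading piece of $V_\th$ is $\bv_\th$ with the chemical potential dependence and higher-order corrections stripped off. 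The $\fS$-invariance of $\bV_0$ (Definition~\ref{defSYfullSymmetry}) follows from the translation and reflection invariance of $h$ and $\bv$, via Example~\ref{exSYfullsymmetry}.

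\medskip

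What remains is to handle $\cD_\th=\cR_\th+\cE_\th$ and $\chi_\th$. The field $\al_*$ in $I^{(\rm SF)}_\th(\al^*_{\tau-\th},\al_\tau)$ is evaluated at time $\tau-\th$ while $\be=\al_\tau$; after the temporal relabelling this means a monomial in $\al_*$'s and $\al$'s becomes a monomial in $\psi_*$'s at one time and $\psi$'s at the neighbouring time. I would Taylor-expand $\psi_*$ at time $\tau-\th$ around time $\tau$, producing the undifferentiated fields plus the discrete time derivative $\partial_0\psi_*$; this is the origin of the ``expanded field'' structure $\tilde\cH_0$ of \eqref{eqnTHdefexpandedstates} and of the appearance of $\{\partial_\nu\psi_{(*)}\}$ in $\cR_0$ and $\cE_0$ (the spatial derivatives $\partial_\nu\psi$, $1\le\nu\le3$, enter because $\cR_\th$ already contains $\nabla$-type terms and because localizing $\cR_\th,\cE_\th$ on the lattice produces nearest-neighbour differences). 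The bilinear piece $\cR_\th$, after splitting off the part that renormalizes $\mu_0$, becomes $\tilde\cR_\In$; the weight $\ka=2(\sfrac1{\th\fv})^{e_\R+e_\r}$ and mass $2\mass$ transform under the rescaling into the stated weight $e^{\th\mu/2}(\cdots)$ and mass $\mass$ up to constants, giving $\|\tilde\cR_\In\|_\mass\le C_r\fv^{\mass\log\frac1\fv}$; the degree-six part of $\cD_\th$ (which exists because $\cR_\th$ is ``bilinear in $\al_*$ and $\be$'' can still contain more fields, and more importantly because $\cE_\th$ has a degree-$(2,2)$ leading term that I peel off) becomes $\cR_0^{(6)}$ with the bound $\|\cR_0^{(6)}\|_{2\mass}\le e^{-3\th\mu+6\mass}K(\fv\th)^{2-2e_\r}$, the exponential prefactors again coming from the rescaling of the coherent state Gaussians. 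Everything of degree $\ge(4,4)$ after these extractions is $\cE_0$, whose norm bound $K(\fv\th)^{2-6e_\R-8e_\r}$ is inherited from the corresponding bound on $\cE_\th$. Finally the cutoff $\chi_\th(\al,\be)$ in terms of $|\al|,|\be|,|\nabla\al|,|\nabla\be|,|\al-\be|$ translates directly into $\chi_0(\psi)$ in terms of $|\psi|,|\partial_\nu\psi|,|\partial_0\psi|$ once one notes that $\al-\be$ at neighbouring times is, to leading order, $\th\partial_0\psi$ and that $\nabla$ on $X$ becomes the spatial $\partial_\nu$; the $e^{\th\mu/2}$ factors come from the change of normalization of the coherent state measure. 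The main obstacle is bookkeeping rather than conceptual: carefully tracking how each weight, mass, and exponential normalization factor transforms under the temporal rescaling and the reorganization of $\cD_\th$ into $\tilde\cR_\In$, $\cR_0^{(6)}$ and $\cE_0$, and in particular verifying that the piece absorbed into $\mu_0$ really does produce only the claimed error $K\th\fv^{\mass\log\frac1\fv}$ and that the degree-six extraction leaves a genuine degree-$\ge(4,4)$ remainder with the stated norm. I would do the degree count and the norm transformations slice by slice, appealing to the norm definitions of \S\ref{sectINTnorms} and to the $\fS$-covariance bounds of Appendix~\ref{appSYsymmetry} (Remark~\ref{remSYsymmetryNorm}) to control the symmetrizations.
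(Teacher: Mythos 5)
Your outline matches the paper's route — the change of variables $\psi(x_0,\bx)=e^{\th\mu/2}\al_{\th x_0}(\bx)$, combining the Gaussian and hopping terms into $-\<\psi_*,D_0\psi\>_0$, identifying $V_\th$ with $\bv_\th$, localizing $\cR_\th$ into a chemical-potential piece plus $\tilde\cR_\In$, and splitting $\cE_\th$ by degree. Two substantive points are wrong or missing, however.

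First, you misattribute the source of $\|\bV_0-\de\,\de\,\de\,\bv_\th\|_{\frac23\mass}\le K_v\fv^{2-2e_\R-4e_\r}$. You claim this measures ``the chemical potential dependence and higher-order corrections'' of $V_\th$ being stripped off, but in fact $V_\th$ maps \emph{exactly} onto the $\de\,\de\,\de\,\bv_\th$ kernel: the $e^{-2\th\mu}$ coming from $\psi=e^{\th\mu/2}\al$ (degree four) precisely cancels the $(e^{\th\mu})^2$ from the four factors of $j(t)=e^{t\mu}e^{-th}$ in $V_\th$, leaving only the plain $e^{-th}$ propagators of $\bv_\th$, with no correction at all. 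In the paper the quartic $\cV_0$ is defined as $\cV_\In-\cE_4$, where $\cE_4$ is the degree-$(2,2)$ part of $\cE_\In$ (coming from $\cE_\th$), and \emph{that} is the entire difference $\bV_0-\de\,\de\,\de\,\bv_\th$, with the stated bound inherited from the $\cE_\th$ estimate after peeling off four powers of the weight $\ka$. Your proposal peels off the degree-$(2,2)$ piece without ever placing it anywhere, and without this identification the comparison bound has no justification.

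Second, you do not address that $\bV_0$ must be a kernel on the \emph{infinite} lattice $(\bbbz/L_\tp\bbbz)\times\bbbz^3$, not on the torus $\cX_0$: $\cE_4$ only lives on $\cX_0^4$, so one must first choose a representative $\bE_4$ supported on a fundamental domain $F\subset\big((\bbbz/L_\tp\bbbz)\times\bbbz^3\big)^4$, set $\bV_0=\bV_\In-\text{(sym.)}\bE_4$, and control $\|\bE_4\|_{\frac23\mass}$ by $\|E_4\|_{2\mass}$ via a tree-length comparison $\tilde\tau(x_1,\dots,x_4)\le 3\tau([x_1],\dots,[x_4])$ valid on $F$. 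This de-periodization step, with its attendant loss from mass $2\mass$ to $\frac23\mass$, is exactly where the factor $\frac23$ in the norm index comes from, and nothing in your proposal accounts for it.

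There is also a small confusion in the discussion of $\cR_0^{(6)}$: $\cR_\th$ is bilinear, i.e.\ degree $(1,1)$, so it ``cannot contain more fields''; $\cR_0^{(6)}$ comes entirely from the degree-$(3,3)$ part of $\cE_\th$. This does not affect the argument but the phrasing suggests a structural misunderstanding.
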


\begin{proof} We start by defining a field $\psi$ on the lattice 
$\cX_0$ by
\begin{equation*}
\psi(x_0,\bx) = e^{\th\mu/2}\al_{\th x_0}(\bx)
\end{equation*}
Making a change of variables from $\al_\tau(\bx)$ to $\psi(x)$
converts the integral on the left hand side to
\begin{align*}
 \cZ_\th^{|\cX_0|} e^{-\th\mu|\cX_0|}\int \Big[ \smprod_{x\in \cX_0} 
        \sfrac{ d\psi(x)^\ast\wedge d\psi(x)}{2\pi \imath}\Big] \,
  e^{\cF(\psi^*,\psi) }\chi_0(\psi)
\end{align*}
where
\begin{align}
\cF(\psi_*,\psi)
  &=\hskip-10pt\sum_{\tau\in\th\bbbz\cap(0,\frac{1}{kT}]} \hskip-17pt
    \big\{\!-\< \al_{*\tau},\, \al_\tau \>_X \!
   +\< \al_{*\tau-\th},\,j(\th) \al_\tau \>_X  
    +V_\th(\al_{*\tau-\th},\al_\tau)
    +\cD_\th(\al_{*\tau-\th},\al_\tau)\big\}\nonumber\\
  &=-e^{-\th\mu}\<\psi_*,\psi\>_{\cX_0}
   +\sum_{x_0\in\bbbz\cap(0,\frac{1}{\th kT}]}  \Big\{
          \<\psi_*(x_0-1,\,\cdot\,)\ ,\ 
             \big(e^{-\th h}\psi\big)(x_0,\,\cdot\,)\>_X\nonumber\\ &\hskip2in
      + e^{-2\th\mu}V_\th\big(\psi_*(x_0-1,\,\cdot\,),\psi(x_0,\,\cdot\,)\big)
            \nonumber\\ &\hskip2in
    +\cD_\th\big(e^{-\th\mu/2}\psi_*(x_0-1,\,\cdot\,),
                     e^{-\th\mu/2}\psi(x_0,\,\cdot\,)\big)\Big\}\nonumber\\
  &=-\sum_{x\in\cX_0} \psi_*(x)\, 
       \big([\bbbone - e^{-\th h} -e^{-\th h} \partial_0]\psi\big)(x)
       +(1-e^{-\th\mu})\<\psi_*,\psi\>_{\cX_0} \nonumber\\
   &\hskip0.5in
   -\cV_\In(\psi_*,\psi) 
    +\cR_\In'(\psi_*,\psi)+\cE_\In(\psi_*,\psi)
\label{eqnSZfirstA}
\end{align}
with
\begin{align*}
\cV_\In(\psi_*,\psi) 
&= -\sum_{x_0\in\bbbz\cap(0,\frac{1}{\th kT}]} 
e^{-2\th\mu}\,V_\th\big(\psi_*(x_0-1,\,\cdot\,),\psi(x_0,\,\cdot\,)\big)
\\ 
\cR_\In'(\psi_*,\psi)
&=  \sum_{x_0\in\bbbz\cap(0,\frac{1}{\th kT}]}
    e^{-\th\mu}\,\cR_\th\big(\psi_*(x_0-1,\,\cdot\,),\psi(x_0,\,\cdot\,)\big) 
\\  
\cE_\In(\psi_*,\psi)
&=  \sum_{x_0\in\bbbz\cap(0,\frac{1}{\th kT}]} 
\cE_\th\big(e^{-\th\mu/2}\psi_*(x_0-1,\,\cdot\,),
                     e^{-\th\mu/2}\psi(x_0,\,\cdot\,)\big)
\end{align*}
All of $\cV_\In$, $\cR_\In'$, $\cE_\In$ are invariant under $\fS$
and have real--valued kernels.

Observe that
\begin{align*}
\cV_\In(\psi_*,\psi)
&= e^{-2\th\mu}\hskip-15pt
   \sum_{\atop{x_0\in\bbbz\cap(0,\frac{1}{\th kT}]}
              {\atop{\bx,\by\in X}{\bx_1,\bx_2,\bx_3,\bx_4\in X}}
        }\hskip-15pt
{\tst\int_0^\th} dt\hskip5pt j(t)(\bx,\bx_1)\psi_*(x_0-1,\bx_1) \ 
               j(\th-t)(\bx,\bx_2)\psi(x_0,\bx_2) 
               \\ \noalign{\vskip-0.35in}&\hskip1.3in
   v(\bx,\by)\   j(t)(\by,\bx_3)\psi_*(x_0-1,\bx_3) \ 
               j(\th-t)(\by,\bx_4)\psi(x_0,\bx_4) \\\noalign{\vskip0.05in}
&=\half\int_{\cX_0^4}dx_1\cdots dx_4\ V_\In(x_1,\cdots,x_4)\ \psi_*(x_1)\psi(x_2)
               \psi_*(x_3)\psi(x_4)
\end{align*}
where $V_\In$ is the spatial periodization of
\begin{equation*}
\begin{split}
\bV_\In(x_1,\!\cdots\!,x_4)\!
&=\de_{x_{1,0},x_{3,0}}\,\de_{x_{2,0},x_{4,0}}\,
         \de_{x_{1,0},x_{2,0}-1} \,\bv_\th(\bx_1,\bx_2,\bx_3,\bx_4)
\end{split}
\end{equation*}
As in \cite[Lemma 3.21]{UV}, 
\begin{equation}\label{eqnSZvinupper}
\|\bV_\In\|_{5\mass}\le 2 \th\, e^{2K_j\th}e^{5\mass} 
                      \tn \bv\tn
\end{equation}
By translation invariance
\begin{align*}
&\half\int dx_2\, dx_3\, dx_4\ V_\In(x_1,\cdots,x_4)\\
&\hskip1in=\Big[{\tst\int_X} d\bx\ v(\bZ,\bx)\Big]
       \int_0^\th dt\ \Big[{\tst\int_X} d\bx\ e^{-t h}(\bZ,\bx)\Big]^2
                    \Big[{\tst\int_X} d\bx\ e^{-(\th-t) h}(\bZ,\bx)\Big]^2\\
&\hskip1in=\th \int_X d\bx\ v(\bZ,\bx)
\end{align*}
since, using $\hat h$ to denote the Fourier transform of $ h$, 
\begin{equation*}
\int_X d\bx\ e^{-\tau h}(\bZ,\bx) =e^{-\tau\hat h (\bZ)}=1
\end{equation*}
Similarly
\begin{equation}\label{eqnSZvinlower}
\begin{split}
{\|\bV_\In\|}_0
&\ge \int dx_2\, dx_3\, dx_4\ \bV_\In(x_1,\cdots,x_4)
=2\th \int_{\bbbz^3} d\bx\ \bv(\bZ,\bx)
=2\th \int_X d\bx\ v(\bZ,\bx)\\
&=2\th\sfrac{\<1, v 1\>_{L^2(X)} }    {\<1,1\>_{L^2(X)}} \\
&\ge \sfrac{\th c_v}{2}\fv
\end{split}
\end{equation}

We now move onto a discussion of $\cR_\In'$. By the bound on $\cR_\th$
following \eqref{eqnSZspa}
\begin{align*}
\|\cR_\In'\|_{2\mass}
&\le e^{2\mass} e^{-\th\mu} \|\cR_\th\|_{2\mass}
\le e^{2\mass} e^{-\th\mu}\sfrac{(\th\fv)^{2(e_\R+e_\r)}}{4}\ 
        K\th\fv^{\mass\log\sfrac{1}{\fv}}
\end{align*}
By \cite[Lemma \lemLlocalize.c]{PAR2},
\begin{align*}
\cR_\In'(\psi_*,\psi)
&=\bigg[\sum_{x_0\in\bbbz\cap(0,\frac{1}{\th kT}]} e^{-\th\mu}\,\cR_\th\big(\psi_*(x_0,\,\cdot\,),\psi(x_0,\,\cdot\,)\big)\bigg]\\
&\hskip1in
+\bigg[\sum_{x_0\in\bbbz\cap(0,\frac{1}{\th kT}]} e^{-\th\mu}\,\cR_\th\big(\psi_*(x_0,\,\cdot\,),(\partial_0\psi)(x_0,\,\cdot\,)\big)\bigg]
\\
&=\De\mu\int_{\cX_0}dx\ \psi_*(x)\psi(x) 
    \ +\ \tilde\cR_\In\big(
                 (\psi_*, \{\partial_\nu\psi_*\}), (\psi,\{\partial_\nu\psi\})
                  \big)
\end{align*}
with a real number $\De\mu$ obeying 
             $|\De\mu|\le K\th\fv^{\mass\log\frac{1}{\fv}}$ 
and a function $\tilde\cR_\In$
that has the properties specified in the statement of the proposition.
(The contribution with one time derivative and one space derivative 
that is allowed by \cite[Lemma \lemLlocalize.c]{PAR2} vanishes in this case 
since the time arguments of the two fields in  
$\cR_\th\big(\psi_*(x_0,\,\cdot\,),\psi(x_0,\,\cdot\,)\big)$ are always equal.)

Next, we discuss $\cE_\In$. If the part of $\cE_\th$ that is homogeneous
of degree $2n$ has kernel $E_{\th,n}(\bx_1,\cdots,\bx_n;\by_1,\cdots,\by_n)$,
then the part of $\cE_\In$  that is homogeneous of degree $2n$ has kernel
\begin{align*}
E_{\In,n}(x_1,\cdots,x_n;y_1,\cdots,y_n)
&=E_{\th,n}(\bx_1,\cdots,\bx_n;\by_1,\cdots,\by_n) e^{-n\th\mu}\\ &\hskip1in
\Big[\smprod_{i=1}^n \de{x_{1,0},x_{i,0}}\Big]
\Big[\smprod_{i=1}^n \de{x_{i,0},y_{i,0}-1}\Big]
\end{align*}
Let $\cT$ be a shortest tree on $X$ having $\bx_1$, $\cdots$, $\bx_n$, $\by_1$,
$\cdots$, $\by_n$ among its vertices. Then the tree on $\cX_0$
whose edges are
\begin{itemize}[leftmargin=*, topsep=2pt, itemsep=2pt, parsep=0pt]
\item
 $\{(x_{1,0},\bz)\,,\,(x_{1,0},\bz')\}$ if $\{\bz,\bz'\}$ is an edge
of $\cT$
\item
 and an edge from $(x_{1,0},\by_i)$ to $y=(x_{1,0}+1,\by_i)$ for 
each $1\le i\le n$
\end{itemize}
has $x_1$, $\cdots$, $x_n$, $y_1$, $\cdots$, $y_n$ among its vertices
so that
\begin{equation*}
\tau\big(x_1,\cdots,x_n,y_1,\cdots,y_n)
\le  \tau\big(\bx_1,\cdots,\bx_n,\by_1,\cdots,\by_n) +n
\end{equation*}
Consequently,
\begin{equation*}
\|E_{\In,n}\|_{2\mass}\le \|E_{\th,n}\|_{2\mass} e^{-n\th\mu} e^{2\mass n}
\end{equation*}
and the norm of $\cE_\In$ with mass $2\mass$ and weight $\ka_\In$ is bounded
by the norm of $\cE_\th$ with mass $2\mass$ and weight $\ka$ provided
$\ka_\In e^{-\th\mu/2+\mass}\le\ka$. So the norm,
with mass $2\mass$ and weight $\ka_\In$, of $\cE_\In$, obeys the bound
quoted for $\cE_\th$ (but with mass $2\mass$ and weight $\ka$)
following \eqref{eqnSZspa}, if we choose 
$\ka_\In = e^{\th\mu/2-\mass}\ka
     =e^{\th\mu/2-\mass}2\big(\sfrac{1}{\th\fv}\big)^{e_\R+e_\r}.$

Denote by $\cE_4$ and $\cR_0^{(6)}$ the two monomials in $\cE_\In$ that are of 
degree four and six, respectively, and set
$$
\mu_0=\big(1-e^{-\th\mu}\big) +\De\mu\qquad
\cV_0=\cV_\In-\cE_4\qquad
\cE_0=\cE_\In-\cE_4-\cR_0^{(6)}
$$
and
$$
\cR_0(\psi_*,\psi) = \tilde\cR_\In\big(
                 (\psi_*, \{\partial_\nu\psi_*\}), (\psi,\{\partial_\nu\psi\})
                  \big)+ \cR_0^{(6)}(\psi_*,\psi)
$$
Obviously
\begin{equation*}
-\cV_\In+\cR_\In+\cE_\In=-\cV_0+\cR_0+\cE_0
\end{equation*}
and $\cR_0$ and $\cE_0$ have the desired properties.
Except for the definition and properties of $\bV_0$,
the Proposition now follows from \eqref{eqnSZfirstA} and the discussion above.

Set
\begin{align*}
F=\Big\{\ x_1,x_2,x_3,x_4\in (\bbbz/\sfrac{1}{\th kT}\bbbz) \times \bbbz^3
   \ \Big|\ 
        &\sfrac{L_\sp}{2}< x_{i,j}-x_{1,j} <\sfrac{L_\sp}{2} \\
   \noalign{\vskip-0.05in}
        &\text{\ for all $i=2,3,4$ and $j=1,2,3$}
   \ \Big\}
\end{align*}
Here $x_{i,j}$ is the $j^{\rm th}$ (spatial) coordinate of $x_i$.
 Set,
for $x_1,x_2,x_3,x_4\in (\bbbz/\sfrac{1}{\th kT}\bbbz) \times \bbbz^3$
\begin{equation*}
\bE_4(x_1,x_2,x_3,x_4)
=\begin{cases}E_4\big([x_1],\cdots,[x_4]\big)
      &\text{if $(x_1,x_2,x_3,x_4)\in F$}\\
           0 & \text{otherwise}
      \end{cases}
\end{equation*}
where $E_4$ is the kernel of $\cE_4$.
Then $E_4$ is the spatial periodization of $\bE_4$. 
Define
\begin{equation*}
\bV_0 = \bV_\In - \text{symmetrization of }\bE_4
\end{equation*}
It remains only to
prove that $\|\bE_4\|_{\frac{2}{3}\mass}\le \|E_4\|_{2\mass}$.
The desired bounds on $\bV_0$ will then follow from \eqref{eqnSZvinupper}
and \eqref{eqnSZvinlower}.

Denote by $\tilde\tau$ the tree length in 
$(\bbbz/\sfrac{1}{\th kT}\bbbz) \times \bbbz^3$. See 
Definition \ref{defHTkernelnorm}. If we have $(x_1,x_2,x_3,x_4)\in F$, 
then
\begin{align*}
\tilde\tau(x_1,x_2,x_3,x_4)
&\le |x_2-x_1| + |x_3-x_1| + |x_4-x_1| 
\le 3\tau([x_1],[x_2],[x_3],[x_4])
\end{align*}
It follows that $\|\bE_4\|_{\frac{2}{3}\mass}\le \|E_4\|_{2\mass}$.
\end{proof}

In the setting of Proposition \ref{propSZprepforblockspin},
the $\mu_*$ of \eqref{eqnINTmustardef} has a particularly simple form.
We thank Martin Lohmann for pointing this out.

\begin{lemma}\label{lemSZexplicitmustar}
We have
\begin{align*}
\mu_*
&= 2\th  \int_{\bbbr^3 /2\pi\bbbz^3}  \sfrac{d^3\bk}{(2\pi)^3} 
\sfrac{\hat\bv(\bZ)+\hat \bv(\bk) }{ e^{\be\hat\bh(\bk)}-1}
\ +\ O(\fv^{2-2e_\R-4e_\r} )
\end{align*}
where $\hat\bh(\bk)$ and $\hat \bv(\bk)$ are the Fourier transforms 
of $\bh(\bx,\bZ)$ and $\bv(\bx,\bZ)$ and $\be=\sfrac{1}{kT}$.
\end{lemma}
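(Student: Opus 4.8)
The plan is to start from the explicit definition \eqref{eqnINTmustardef} of $\mu_*$, substitute in the structure of $\bV_0$ and $\bD_0$ established in Proposition \ref{propSZprepforblockspin}, and pass to Fourier space. Recall from Proposition \ref{propSZprepforblockspin} that $\bV_0 = \bV_\In - (\text{symmetrized }\bE_4)$, with $\|\bE_4\|_{\frac23\mass}\le\|E_4\|_{2\mass}$ being of size $O(\fv^{2-6e_\R-8e_\r})$, and that $\bV_\In$ itself differs from $\de_{x_{1,0},x_{3,0}}\,\de_{x_{2,0},x_{4,0}}\,\de_{x_{1,0},x_{2,0}-1}\,\bv_\th$ by a kernel of norm $O(\fv^{2-2e_\R-4e_\r})$. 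Since $\bD_0^{-1}$ is bounded uniformly (it is the inverse of $\bbbone-e^{-\bh_0}-e^{-\bh_0}\partial_0$, controlled as in \cite{POA}), the first step is to absorb both error kernels into the claimed $O(\fv^{2-2e_\R-4e_\r})$ remainder, leaving
\begin{equation*}
\mu_* = 2\th\hskip-3pt\int dx_1\,dx_2\,dx_3\ \de_{x_{1,0},x_{3,0}}\,\de_{x_{2,0},x_{4,0}}\,\de_{x_{1,0},x_{2,0}-1}\,\bv_\th(0,x_1,x_2,x_3)\ \bD_0^{-1}(x_3,0) + O(\fv^{2-2e_\R-4e_\r})
\end{equation*}
up to relabeling of indices; here one must be careful that the argument $x_4$ of $\bv_\th$ is the repeated ``$0$'' slot dictated by \eqref{eqnINTmustardef}.

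Next I would compute the integral over the temporal deltas. The factor $\bD_0^{-1}(x_3,0)$ carries the time dependence; writing $\bD_0^{-1}$ in the temporal frequency representation and using $\bD_0 = \bbbone - e^{-\bh_0}-e^{-\bh_0}\partial_0$ one gets, after summing the geometric series in the time direction, a Bose-type factor. Concretely, in spatial Fourier variables the temporal part of $\bD_0^{-1}$ summed against the delta constraints collapses to $\sum_{k\ge 0} e^{-(k+1)\th\hat\bh(\bp)} = \frac{1}{e^{\th\hat\bh(\bp)}-1}$ — this is exactly where the $\frac{1}{e^{\be\hat\bh(\bk)}-1}$ in the statement comes from, with $\be = 1/(kT)$ and $\th$ a fixed scaling. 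The two terms in the definition of $\bv_\th$ (the two orderings $\bx\leftrightarrow\by$, i.e. the ``direct'' and ``exchange'' contractions) produce respectively the $\hat\bv(\bZ)$ and the $\hat\bv(\bk)$ in the numerator: one ordering forces the $\bv$-momentum to be zero, the other lets it equal the loop momentum $\bk$. The $\int_0^\th dt$ in $\bv_\th$ together with $\hat h_0(\bZ)=0$ (so $\int_X e^{-t h_0}(\bZ,\cdot)=1$) makes the $t$-integral produce just a factor of $\th$, matching the $2\th$ prefactor in the claim after combining with the overall $2$ from \eqref{eqnINTmustardef}.

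The main obstacle I expect is bookkeeping rather than analysis: carefully tracking which of the four arguments of $\bv_\th$ are tied to which external point and which to the internal loop, so that the direct term genuinely yields $\hat\bv(\bZ)$ (zero momentum transfer) and the exchange term yields $\hat\bv(\bk)$ (loop momentum transfer), and verifying that the combinatorial factor works out to exactly $2\th$ and not, say, $\th$ or $4\th$. A secondary technical point is justifying the interchange of the (finite, but large as $L_\sp,L_\tp\to\infty$) lattice sums with the passage to the continuum momentum integral $\int_{\bbbr^3/2\pi\bbbz^3}\frac{d^3\bk}{(2\pi)^3}$; since everything here is at fixed finite volume this is really just the statement that $\frac1{|X|}\sum_{\bk\in X^*}$ equals the stated integral, and the exponential decay of $\bh$ and $\bv$ (built into the hypotheses of \S\ref{sectSZrewriteModel}) ensures all sums converge with the error terms uniformly controlled. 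Once the leading term is identified, the remainder estimate is immediate from the norm bounds already quoted in Proposition \ref{propSZprepforblockspin} and the boundedness of $\bD_0^{-1}$.
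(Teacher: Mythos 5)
Your overall plan — replace $\bV_0$ by the $\bv_\th$ kernel via the norm estimate from Proposition~\ref{propSZprepforblockspin}, use translation invariance to collapse the $t$--integral, and pass to Fourier variables to produce the Bose factor and separate the direct ($\hat\bv(\bZ)$) and exchange ($\hat\bv(\bk)$) contractions — is essentially the paper's argument. But there is a concrete slip in your first step that, as written, would break the absorption. You claim $\|\bE_4\|_{\frac{2}{3}\mass}$ is $O(\fv^{2-6e_\R-8e_\r})$ and propose to absorb it into the remainder $O(\fv^{2-2e_\R-4e_\r})$. Since $e_\R,e_\r>0$, the exponent $2-6e_\R-8e_\r$ is \emph{smaller} than $2-2e_\R-4e_\r$, so $\fv^{2-6e_\R-8e_\r}$ is the \emph{larger} of the two and cannot be absorbed. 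The number $\fv^{2-6e_\R-8e_\r}$ is the bound on the \emph{weighted} norm of $\cE_\th$ (with weight $\ka=2(\sfrac{1}{\th\fv})^{e_\R+e_\r}$ baked in); to pass to the unweighted kernel norm of the degree-four piece you must strip off $\ka^4$, which gives $\|E_4\|_{2\mass}=O(\fv^{2-2e_\R-4e_\r})$. The paper bypasses this bookkeeping entirely by quoting the single packaged bound $\|\bV_0-\de_{x_{1,0},x_{3,0}}\de_{x_{2,0},x_{4,0}}\de_{x_{1,0},x_{2,0}-1}\bv_\th\|_{\frac{2}{3}\mass}\le K_v\fv^{2-2e_\R-4e_\r}$ from Proposition~\ref{propSZprepforblockspin}, which already sweeps in both $\bE_4$ and the $e^{-2\th\mu}$ correction, together with the boundedness of $\bD_0^{-1}$.

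Two further minor points. First, your explanation that the $t$-integral gives $\th$ because $\hat\bh_0(\bZ)=0$ is not quite the mechanism: the paper first observes that $e^{-t\bh}$, $e^{-(\th-t)\bh}$ and $\bD_0^{-1}$ are all translation invariant, hence commute, so $e^{-(\th-t)\bh}\bD_0^{-1}e^{-t\bh}=\bD_0^{-1}e^{-\th\bh}$ is $t$-independent and only then does $\int_0^\th dt$ trivialize to $\th$; the identity $\sum_{\by}e^{-\tau\bh}(\bx,\by)=1$ is used at a later step to kill the residual spatial sums attached to the $\bZ$ slot. Second, the ``geometric series in the time direction'' is actually a finite sum over a primitive root of unity, since $\partial_0$ lives on the finite torus $\bbbz/L_\tp\bbbz$; this is exactly Lemma~\ref{lemSZprimsum}. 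Neither of these affects the conclusion, but they are the precise places where your sketch would need filling in. Your displayed reduction formula also prematurely inserts a factor $\th$ and leaves $x_4$ dangling in the deltas, but those are clearly the bookkeeping errors you already anticipated.
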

\begin{proof}
As a preliminary calculation, we evaluate
\begin{equation*}
\begin{split}
&\sum_{\bx_2,\bx_3\in\bbbz^3}
          e^{-t\bh}(\bx,\bx_2)\bD_0^{-1}(x_3,x_2)e^{-(\th-t)\bh}(\by,\bx_3)\\
&\hskip1in=\big(e^{-(\th-t)\bh} \bD_0^{-1} e^{-t\bh} \big)
                          \big((x_{3,0},\by)\,,\,(x_{2,0},\bx)\big)\\
&\hskip1in=\big(\bD_0^{-1} e^{-\th\bh}\big)
                          \big((x_{3,0},\by)\,,\,(x_{2,0},\bx)\big)\\
&\hskip1in=\big(e^{\th\bh} -\bbbone - \partial_0\big)^{-1}
                          \big((x_{3,0},\by)\,,\,(x_{2,0},\bx)\big)
\end{split}
\end{equation*}
where
\begin{itemize}[leftmargin=*, topsep=2pt, itemsep=2pt, parsep=0pt]
\item
 we have used that $\bh$ is a symmetric operator
\item
 we are thinking of $e^{-\tau\bh}(\bx,\by)$ as being tensored with
an identity operator in the temporal arguments $x_0,y_0$ 
\item
 we have used that $e^{-\tau \bh}$ and  $\bD_0^{-1}$ are both
translation invariant operators and hence commute with each other
\item
the operator $\big(e^{\th\bh} -\bbbone - \partial_0\big)^{-1}$
is the inverse of $e^{\th\bh} -\bbbone - \partial_0=e^{\th\bh}\bD_0$
acting on the space 
$L^2\big((\bbbz/\sfrac{1}{\th kT}\bbbz) \times \bbbz^3\big)$.
\end{itemize}
Since
$$
\sum_{\by'\in\bbbz^3} e^{-\tau\bh}_X(\bx',\by')=e^{-\tau\hat\bh (\bZ)}=1
$$
we have (recalling the definition of $\bv_\th$ from 
Proposition \ref{propSZprepforblockspin}
\begin{align*}
&\sum_{\bx_1,\bx_2,\bx_3\in\bbbz^3} \bv_\th(\bZ,\bx_1,\bx_2,\bx_3)\ \bD_0^{-1}(x_3,x_2) \\
&\hskip0.2in
=  \int_0^\th dt\sum_{\bx,\by\in\bbbz^3}
        v(\bx,\by)\,\Big\{
         e^{-t\bh}(\bx,\bZ)
           \Big[\big(e^{\th\bh}\!-\!\bbbone\!-\! \partial_0\big)^{-1}
                          \big((x_{3,0},\by)\,,\,(x_{2,0},\by)\big)\Big]
         \\&\hskip1.7in +
        e^{-t\bh}(\by,\bZ)
       \Big[\big(e^{\th\bh}\!-\!\bbbone\!-\! \partial_0\big)^{-1}
                          \big((x_{3,0},\by)\,,\,(x_{2,0},\bx)\big)\Big]
          \Big\}
      \\
&\hskip0.2in
= \th\sum_{\bx\in\bbbz^3} v(\bZ,\bx)\,\Big\{
         \big(e^{\th\bh} -\bbbone - \partial_0\big)^{-1}\!
                   \big((x_{3,0},\bZ)\,,\,(x_{2,0},\bZ)\big)
\\&\hskip1.7in
      +\big(e^{\th\bh} \!-\!\bbbone \!-\! \partial_0\big)^{-1}
                          \big((x_{3,0},\bx)\,,\,(x_{2,0},\bZ)\big)\Big\}
\end{align*}
Recall from Proposition \ref{propSZprepforblockspin} that
\begin{equation*}
\big\|\bV_0-\de_{x_{1,0},x_{3,0}}\,\de_{x_{2,0},x_{4,0}}\,
         \de_{x_{1,0},x_{2,0}-1} \,\bv_\th(\bx_1,\bx_2,\bx_3,\bx_4)
\big\|_{\frac{2}{3}\mass}\le K_v\fv^{2-2e_\R-4e_\r} 
\end{equation*}
As $\bD_0^{-1}(x_3,x_2)$ is bounded, we have
\begin{equation}\label{eqnSZmustarformB}
\begin{split}
\mu_*
&= 2\th\sum_{\bx\in\bbbz^3} v(\bZ,\bx)\Big\{
              \big(e^{\th\bh} -\bbbone - \partial_0\big)^{-1}\!
                   \big((1,\bZ)\,,\,0\big)
      +         \big(e^{\th\bh} - \bbbone - \partial_0\big)^{-1}
                          \big((1,\bx)\,,\,0\big)\Big\}\\
&\hskip3.7in+O(\fv^{2-2e_\R-4e_\r} )
\end{split}
\end{equation}

On the other hand
\begin{equation}\label{eqnSZmustarformC}
\begin{split}
\big(e^{\th\bh} -\bbbone - \partial_0\big)^{-1}\big((1,\bx),0\big) 
&= \int_{\bbbr^3 /2\pi\bbbz^3}  \sfrac{d^3\bk}{(2\pi)^3} 
   e^{ i\bk \bx}e^{-\th\bh(\bk)}  \
 \sfrac{1}{L_\tp}\!\!\!\!\!\!
   \sum_{k_0\in \frac{2\pi}{L_\tp} \bbbz/2\pi \bbbz}  
\sfrac{e^{ik_0 } }{1-  e^{ik_0}e^{-\hat\bh_0(\bk)} }  
\\
&=  \int_{\bbbr^3 /2\pi\bbbz^3}  \sfrac{d^3\bk}{(2\pi)^3}  
  e^{ i\bk \bx}e^{-\th\bh(\bk)}\
\sfrac{ e^{-(L_\tp-1)\hat\bh_0(\bk)} }{1- e^{-L_\tp\hat\bh_0(\bk)}} 
\\
&=  \int_{\bbbr^3 /2\pi\bbbz^3}  \sfrac{d^3\bk}{(2\pi)^3}  e^{ i\bk \bx}\
\sfrac{ e^{-\be\hat\bh(\bk)} }{1- e^{-\be\hat\bh(\bk)}} 
\end{split}
\end{equation}
Here, we applied Lemma \ref{lemSZprimsum}, below, with $p=L_\tp$,
$\ze = e^{\frac{2\pi i}{L_\tp}}$ and $w= e^{-\hat\bh_0(\bk)}$,
to the $k_0$ sum. The Lemma now follows by combining \eqref{eqnSZmustarformB}
and \eqref{eqnSZmustarformC}.

\end{proof}

\begin{corollary}\label{corSZstepzeroConditions}
With $\bH=\th\cH$, the data of Proposition \ref{propSZprepforblockspin} fulfill
the conditions of \S \ref{sectINTstartPoint} provided $\fv$ is small enough 
(depending on $\eps$ and $\th$), $\mass\ge 3m $ and
\begin{equation*}
2  \int_{\bbbr^3 /2\pi\bbbz^3}  \sfrac{d^3\bk}{(2\pi)^3} 
\sfrac{\hat\bv(\bZ)+\hat \bv(\bk) }{ e^{\be\hat\bh(\bk)}-1}
    + \fv^{\sfrac{4}{3}-2\eps} 
<  \mu < \fv^{\sfrac{8}{9}+2\eps} 
\end{equation*}
and 
$
\sfrac{1}{3}-\eps < e_\R < \sfrac{1}{3} -\sfrac{5}{6}\eps$
and
$
e_\r < \sfrac{\eps}{2}
$. 
Observe that
\begin{equation*}
\lim_{\be\rightarrow\infty} \int_{\bbbr^3 /2\pi\bbbz^3}  \sfrac{d^3\bk}{(2\pi)^3} 
\sfrac{\hat\bv(\bZ)+\hat \bv(\bk) }{ e^{\be\hat\bh(\bk)}-1}
=0
\end{equation*}
\end{corollary}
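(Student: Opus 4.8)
The plan is to verify, one item at a time, every condition in the \emph{Starting Point Setup} of \S\ref{sectINTstartPoint} for the data produced by Proposition \ref{propSZprepforblockspin}. We take $\bh_0=\nabla^*\bH\nabla$ with $\bH=\th\cH$, so that $\bh_0=\th\bh$ and the operator $D_0=\bbbone-e^{-h_0}-e^{-h_0}\partial_0$ ($h_0=\th h$) of the Proposition is exactly the $D_0$ of \S\ref{sectINTstartPoint}; we set $\fv_0=2\|\bV_0\|_{2m}$ and adopt the $\cV_0$, $\mu_0$, $\cR_0$ and $\cE_0$ of the Proposition. The hypothesis $\mass\ge 3m$ is used repeatedly to replace the mass $\sfrac{2}{3}\mass$ and mass $2\mass$ norm bounds supplied by the Proposition by the weaker mass $m$ and mass $2m$ bounds required in \S\ref{sectINTstartPoint}, since $\|f\|_{\fm'}\le\|f\|_{\fm}$ when $\fm'\le\fm$ and a weighted function norm is monotone in its weights. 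The structural requirements are then immediate: $\bH=\th\cH$ inherits reality, translation and reflection invariance and strict positive definiteness from $\cH$ (\S\ref{sectSZrewriteModel}), with $\|\bh_0\|_{2m}<\infty$ coming from $\sum e^{6\mass d(\bx,0)}|\cH(\cdots)|\le D_\cH$ and $6\mass\ge 2m$; the kernel $\bV_0$ is real and invariant under $x_1\leftrightarrow x_3$, $x_2\leftrightarrow x_4$ and $\fS$; the bounds $\|\bV_0\|_{\sfrac{2}{3}\mass}\le K_v\fv$ and $\|\bV_0\|_0\ge\sfrac{1}{K_v}\fv$ show $\fv_0$ is comparable to $\fv$, hence as small as wanted; and comparing $\sfrac{1}{2}\int V_0(x_1,\cdots,x_4)\,dx_2dx_3dx_4$ with $\th\int_X v(\bZ,\bx)\,d\bx\ge c_v\tn\bv\tn\ge\sfrac{1}{4}c_v\fv$ (error $O(\fv^{2-2e_\R-4e_\r})=o(\fv)$, as $2-2e_\R-4e_\r>1$) gives $\rv_0>0$ for $\fv$ small.

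For $\mu_0$: by the Proposition $|\mu_0-(1-e^{-\th\mu})|\le K\th\fv^{\mass\log(1/\fv)}$, so $\mu_0=\th\mu+O((\th\mu)^2)+O(\fv^{\mass\log(1/\fv)})$, while Lemma \ref{lemSZexplicitmustar} gives $\mu_*=2\th I+O(\fv^{2-2e_\R-4e_\r})$ with $I=\int_{\bbbr^3/2\pi\bbbz^3}\sfrac{d^3\bk}{(2\pi)^3}\,\sfrac{\hat\bv(\bZ)+\hat\bv(\bk)}{e^{\be\hat\bh(\bk)}-1}$. Thus $\mu_0-\mu_*=\th(\mu-2I)+E$ where, using $\mu<\fv^{\sfrac{8}{9}+2\eps}$, $e_\R<\sfrac{1}{3}$, $e_\r<\sfrac{\eps}{2}$ and $\mass\log(1/\fv)\to\infty$, $|E|\le\fv^{\sfrac{4}{3}-2\eps}\,o(1)$ as $\fv\to0$ (every error exponent, namely $\sfrac{16}{9}+4\eps$, $\mass\log(1/\fv)$ and $2-2e_\R-4e_\r$, strictly exceeds $\sfrac{4}{3}-2\eps$); since the hypothesis is $\mu-2I>\fv^{\sfrac{4}{3}-2\eps}$ we get $\mu_0-\mu_*\ge\sfrac{\th}{2}\fv^{\sfrac{4}{3}-2\eps}$ for $\fv$ small. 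As $\fv_0\le 2K_v\fv$ and $\fv^{14\eps}\to0$ as $\fv\to0$, we have $\sfrac{\th}{2}\fv^{\sfrac{4}{3}-2\eps}\ge(2K_v\fv)^{\sfrac{4}{3}-16\eps}\ge\fv_0^{\sfrac{4}{3}-16\eps}$ once $\fv$ is small enough (depending on $\eps$ and $\th$), which is the lower inequality of \S\ref{sectINTstartPoint}; similarly $\mu_0\le 2\th\fv^{\sfrac{8}{9}+2\eps}\le\fv_0^{\sfrac{8}{9}+\eps}$ for $\fv$ small, using $\fv_0\ge\sfrac{2}{K_v}\fv$ and $\th\fv^\eps\to0$. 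Reality of $\mu_0$ is part of the Proposition.

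For $\tilde\cR_0$ and $\cE_0$: assemble $\tilde\cR_0$ from the $\tilde\cR_\In$ and $\cR_0^{(6)}$ of the Proposition --- $\tilde\cR_\In$ gives the types $(1,1,0)$ (one undifferentiated field, one time derivative) and $(0,0,2)$ (two space derivatives), $\cR_0^{(6)}$ is a monomial of type $(6,0,0)$, and the type $(0,1,1)$ component is $0$ --- all pieces being $\fS$ invariant, particle--number preserving, with real kernels, and each $\vp$--component separately $\fS$ invariant since $\fS$ preserves monomial type. Then $\|\tilde\cR_0^{(\vp)}\|_m\le\|\tilde\cR_\In\|_\mass\le C_r\fv^{\mass\log(1/\fv)}\le\fv_0^{1-4\eps}$ for $\fv$ small if $\vp\ne(6,0,0)$, and $\|\cR_0^{(6)}\|_m\le\|\cR_0^{(6)}\|_{2\mass}\le e^{-3\th\mu+6\mass}K(\fv\th)^{2-2e_\r}\le\fv_0^{2-\eps}$ for $\fv$ small, using $\eps-2e_\r>0$. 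The $\cE_0$ of the Proposition is $\fS$ invariant, particle--number preserving, with real kernels, of degree $\ge 4$ in each of $\psi_*,\psi$ (so $\cE_0(0,0)=0$); its norm $\|\cE_0\|^{(0)}$, with mass $m$ and weight $\ka(0)=\fv_0^{-(1/3-\eps)}$, is dominated by the norm with mass $2\mass$ and weight $2e^{\th\mu/2-\mass}(\th\fv)^{-(e_\R+e_\r)}$ bounded in the Proposition, because $m\le 2\mass$ and $\ka(0)$ is at most that weight --- the exponent of $\fv$ in the ratio being $(e_\R+e_\r)-(\sfrac{1}{3}-\eps)>0$ since $e_\R>\sfrac{1}{3}-\eps$. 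That bound equals $K(\fv\th)^{2-6e_\R-8e_\r}$, and $2-6e_\R-8e_\r>\eps$ by $e_\R<\sfrac{1}{3}-\sfrac{5}{6}\eps$ and $e_\r<\sfrac{\eps}{2}$, whence $\|\cE_0\|^{(0)}\le\fv_0^\eps$ for $\fv$ small. Finally $\lim_{\be\to\infty}I=0$ by dominated convergence: the integrand tends to $0$ pointwise off $2\pi\bbbz^3$ as $\be\to\infty$, while $\sfrac{1}{e^{\be\hat\bh(\bk)}-1}\le\sfrac{1}{\hat\bh(\bk)}$ for $\be\ge1$ with $\sfrac{\hat\bv(\bZ)+\hat\bv(\bk)}{\hat\bh(\bk)}$ integrable, $\hat\bh$ vanishing only at $\bZ$ and there to second order with positive definite Hessian.

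The only genuine work is the verification of the two $\mu_0$ inequalities and the $\cE_0$ weight comparison, where one must keep track of the various $\eps$--dependent exponents and use, in several places, that $\fv^{c\eps}\to 0$ as $\fv\to 0$ for $c>0$; I expect this bookkeeping, rather than any conceptual difficulty, to be the main obstacle.
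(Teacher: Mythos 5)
Your proposal is correct and follows the same route as the paper, just spelled out in considerably more detail: the paper's own proof is extremely terse, recording only the two-sided comparability $\sfrac{2}{K_v}\fv\le\fv_0\le 2K_v\fv$ and the remark that $\th\mu=\mu_0+O(\mu^2)+O(\fv^{\mass\log\frac{1}{\fv}})$ (from which the $\mu_0$ condition follows via Lemma~\ref{lemSZexplicitmustar}), and treats the monotonicity of the mass/weight norms under $\mass\ge 3m$ and $e_\R+e_\r>\sfrac{1}{3}-\eps$, the assembly of $\tilde\cR_0$ from $\tilde\cR_\In$ and $\cR_0^{(6)}$, and the dominated-convergence argument as evident. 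Your bookkeeping of the $\eps$-dependent exponents and the error terms $O(\fv^{16/9+4\eps})$, $O(\fv^{\mass\log(1/\fv)})$, $O(\fv^{2-2e_\R-4e_\r})$ is all exactly what underlies the paper's one-line remark.
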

\begin{proof} By the definition of $\fv_0$ in \S \ref{sectINTstartPoint} and
Proposition \ref{propSZprepforblockspin},
\begin{equation*}
\sfrac{2}{K_v}\fv
\le 2{\|\bV_0\|}_0
\le\fv_0
=2\|\bV_0\|_{2m}
\le 2\|\bV_0\|_{\frac{2}{3}\mass}
\le 2 K_v\fv
\end{equation*}
The condition on $\mu_0$ in \S \ref{sectINTstartPoint} is satisfied since
$\th\mu= \mu_0+O(\mu^2) +O(\fv^{\mass\log\frac{1}{\fv}})$.

\end{proof}

\begin{lemma}\label{lemSZprimsum}
Let $\ze$ be a primitive $p^{th}$ root of unity and $w\in \bbbc$ 
not be a $p^{th}$ root of unity. Then
\begin{equation*}
\sfrac{1}{p}\smsum_{k=0}^{p-1} \sfrac{\ze^k}{1- w\ze^k} =  \sfrac{ w^{p-1}}{1-w^p}
\end{equation*}
\end{lemma}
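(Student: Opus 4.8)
The final statement is Lemma~\ref{lemSZprimsum}: for $\ze$ a primitive $p^{\rm th}$ root of unity and $w$ not a $p^{\rm th}$ root of unity,
$$
\sfrac{1}{p}\smsum_{k=0}^{p-1} \sfrac{\ze^k}{1-w\ze^k}=\sfrac{w^{p-1}}{1-w^p}.
$$
The plan is to compute the sum by a partial-fractions / geometric-series manipulation. First I would observe that, since $|w\ze^k|$ need not be less than $1$, I should not expand $\frac{1}{1-w\ze^k}$ as a geometric series in $w\ze^k$ directly; instead I would rewrite each term by brute force over a common denominator. The cleanest route: consider the rational function $f(z)=\frac{z}{1-wz}$ and note that $\sum_{k=0}^{p-1} f(\ze^k)$ is, up to the factor $p$, the ``discrete average'' of $f$ over the $p^{\rm th}$ roots of unity, which picks out exactly the terms of a power-series expansion whose exponent is divisible by $p$. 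Concretely, for $|z|$ small, $\frac{z}{1-wz}=\sum_{n\ge 1} w^{n-1}z^{n}$, so $\frac{1}{p}\sum_{k=0}^{p-1}\frac{\ze^k}{1-w\ze^k}=\sum_{p\mid n,\ n\ge 1} w^{n-1}=\sum_{m\ge 1} w^{mp-1}=\frac{w^{p-1}}{1-w^p}$, valid for $|w|<1$; then extend to all admissible $w$ by analytic continuation, since both sides are rational functions of $w$ with poles only at the $p^{\rm th}$ roots of unity.

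Alternatively, and perhaps more in keeping with an elementary lemma, I would give a finite algebraic derivation avoiding convergence issues entirely. Put the left-hand sum over the common denominator $\prod_{k=0}^{p-1}(1-w\ze^k)=1-w^p$ (using that the $\ze^k$ are precisely the roots of $X^p-1$, so $\prod_k(1-w\ze^k)=\prod_k\ze^k\cdot\prod_k(\ze^{-k}-w)=\pm\prod_k(w-\ze^{-k})$; tracking the sign and the product $\prod_k\ze^k$ gives $1-w^p$). The numerator becomes $\sum_{k=0}^{p-1}\ze^k\prod_{j\ne k}(1-w\ze^j)$, a polynomial in $w$ of degree $p-1$. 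I would then identify this polynomial: evaluating the identity I want to prove, it should equal $p\,w^{p-1}$. To see this, note $\sum_{k}\ze^k\prod_{j\ne k}(1-w\ze^j)=\frac{d}{d(-w)}\big[\sum_k \ze^k \cdot \text{(something)}\big]$ is awkward; cleaner is to use the logarithmic-derivative identity $\frac{P'(X)}{P(X)}=\sum_k\frac{1}{X-\ze^k}$ for $P(X)=X^p-1=\prod_k(X-\ze^k)$, rewrite $\frac{\ze^k}{1-w\ze^k}=\frac{1}{w}\cdot\frac{1}{1/(w\ze^{k})-1}=-\frac{1}{w}+\frac{1}{w}\cdot\frac{1}{1-w\ze^k}$ and $\frac{1}{1-w\ze^k}=\frac{-1/w\cdot \ze^{-k}}{\,\ze^{-k}-1/w\,}$; substituting $X=1/w$ into $\sum_k \frac{1}{X-\ze^k}=\frac{pX^{p-1}}{X^p-1}$ (using $\ze^{-k}$ runs over the same root set) yields the result after clearing the factors of $w$. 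The main obstacle is purely bookkeeping: keeping the signs and the factor-of-$w$ substitutions straight when converting between $\sum\frac{\ze^k}{1-w\ze^k}$ and $\sum\frac{1}{X-\ze^k}$ with $X=1/w$.

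I expect the logarithmic-derivative approach to be the shortest to write rigorously, so I would carry out the steps in this order: (1) set $P(X)=X^p-1$, recall $\prod_{k=0}^{p-1}(X-\ze^k)=P(X)$ and $P'(X)=pX^{p-1}$, hence $\sum_{k=0}^{p-1}\frac{1}{X-\ze^k}=\frac{pX^{p-1}}{X^p-1}$ for $X$ not a $p^{\rm th}$ root of unity; (2) since $w$ is not a $p^{\rm th}$ root of unity, $X=1/w$ is not either (and $w\ne 0$, else the claim is trivial as $0=0$), so apply (1) at $X=1/w$, also using that $\{\ze^{-k}\}=\{\ze^k\}$; (3) manipulate $\frac{1}{1/w-\ze^{-k}}=\frac{w\,\ze^{k}}{\ze^{k}-w\,}$... — here I would instead directly compute $\frac{1}{1/w - \ze^k} = \frac{w}{1-w\ze^k}$, so $\sum_k\frac{w}{1-w\ze^k}=\frac{p(1/w)^{p-1}}{(1/w)^p-1}=\frac{pw}{1-w^p}$, giving $\sum_k\frac{1}{1-w\ze^k}=\frac{p}{1-w^p}$; (4) then $\sum_k\frac{\ze^k}{1-w\ze^k}=\frac{1}{w}\sum_k\frac{w\ze^k}{1-w\ze^k}=\frac{1}{w}\sum_k\Big(\frac{1}{1-w\ze^k}-1\Big)=\frac{1}{w}\Big(\frac{p}{1-w^p}-p\Big)=\frac{p}{w}\cdot\frac{w^p}{1-w^p}=\frac{pw^{p-1}}{1-w^p}$; (5) divide by $p$. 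No further estimates are needed, and the only thing to watch is the trivial case $w=0$, handled separately.
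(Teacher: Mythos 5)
Your proposal is correct, and in fact you describe two distinct routes. Your first sketch (expand $\frac{\ze^k}{1-w\ze^k}=\sum_{n\ge 1}w^{n-1}\ze^{nk}$, average over the $p^{\rm th}$ roots of unity to extract the terms with $p\mid n$, and then extend from $|w|<1$ to all admissible $w$ by analytic continuation) is \emph{exactly} the paper's proof. The route you then settle on is genuinely different: it is a finite algebraic computation that never invokes a power series. You use $P(X)=X^p-1=\prod_{k=0}^{p-1}(X-\ze^k)$ and the logarithmic-derivative identity $\sum_{k}\frac{1}{X-\ze^k}=\frac{P'(X)}{P(X)}=\frac{pX^{p-1}}{X^p-1}$, evaluate it at $X=1/w$ (legitimate since $w\ne 0$ and $1/w$ is not a $p^{\rm th}$ root of unity) via $\frac{1}{1/w-\ze^k}=\frac{w}{1-w\ze^k}$, and conclude $\sum_k\frac{1}{1-w\ze^k}=\frac{p}{1-w^p}$; then the elementary rearrangement $\frac{\ze^k}{1-w\ze^k}=\frac{1}{w}\big(\frac{1}{1-w\ze^k}-1\big)$ gives the result, with $w=0$ disposed of by inspection. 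I have checked each step and the algebra closes. What each approach buys: the paper's argument is quick and natural when one already has geometric series in hand, but it needs the detour through $|w|<1$ plus analytic continuation (i.e.\ recognizing both sides as rational functions agreeing on an open set). Your logarithmic-derivative argument is entirely within finite polynomial algebra — no convergence, no continuation — at the cost of the one change-of-variable $X=1/w$ and a separately handled degenerate case. Either is an acceptable proof of the lemma; yours is arguably the more self-contained.
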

\begin{proof} First consider the case $0<w<1$.
Expanding the geometric series and interchanging sums
\begin{equation*}
\sfrac{1}{p}\smsum_{k=0}^{p-1} \sfrac{\ze^k}{1- w\ze^k} 
 = \smsum_{n=0}^\infty \sfrac{1}{p} \smsum_{k=0}^{p-1} w^n \ze^{(n+1)k}
\end{equation*}
Now
\begin{equation*}
\sfrac{1}{p} \smsum_{k=0}^{p-1}  \ze^{(n+1)k}
= \begin{cases} 0 & \text{if $n+1$ is not an integer multiple of $p$}\\
             \noalign{\vskip0.05in}
         1 & \text{if $n= mp-1 $ for some integer $m$}
  \end{cases}
\end{equation*}
If $n\ge 0$, the integer $m$ above has to be at least one. Therefore
\begin{equation*}
\sfrac{1}{p}\smsum_{k=0}^{p-1} \sfrac{\ze^k}{1- w\ze^k} 
 = \smsum_{m=1}^\infty   w^{mp-1}
=  \sfrac{ w^{p-1}}{1-w^p}
\end{equation*}
The claim now follows by analytic continuation in $w$.
\end{proof}

\newpage
\bibliographystyle{plain}
\bibliography{refs}

\end{document}